\documentclass[11pt,twoside]{amsart}

\usepackage{mathrsfs}
\usepackage{amssymb}
\usepackage{amsmath}
\usepackage{amsthm}
\usepackage{tikz}
\usepackage{amsfonts}
\usepackage{amsmath,amssymb,mathtools,booktabs,array,longtable,graphicx,pdflscape}
\usepackage[T1]{fontenc}
\usepackage{lmodern}
\usepackage[colorlinks=true,linkcolor=blue,citecolor=blue,urlcolor=blue]{hyperref}
\usepackage{color}

\usepackage[all]{xy}

\allowdisplaybreaks

\def\bint{{\ifinner\rlap{\bf\kern.30em--}
\int\else\rlap{\bf\kern.35em--}\int\fi}\ignorespaces}

\def\sbint{{\ifinner\rlap{\bf\kern.32em--}
\hspace{0.078cm}\int\else\rlap{\bf\kern.45em--}\int\fi}\ignorespaces}

\def\C{\mathbb C}
\def\F{\mathbb F}

\def\Cl{\operatorname{Cliff}}
\def\Tr{\operatorname{Tr}}
\def\End{\operatorname{End}}
\def\Sym{\operatorname{Sym}}
\def\sgn{\operatorname{sgn}}
\def\HH{\mathcal H}
\def\KK{\mathcal K}

\def\Dtr{D_{\mathrm{tr}}}

\newtheorem{theorem}{Theorem}[section]
\newtheorem{lemma}[theorem]{Lemma}
\newtheorem{corollary}[theorem]{Corollary}
\newtheorem{proposition}[theorem]{Proposition}

\newtheorem{example}[theorem]{Example}
\theoremstyle{definition}
\newtheorem{remark}[theorem]{Remark}
\newtheorem{definition}[theorem]{Definition}
\numberwithin{equation}{section}

\numberwithin{equation}{section}

\numberwithin{equation}{section}

\makeatletter
\newcommand{\charactertableheading}{%
  \@startsection{subsection}{2}{\z@}%
    {.5\linespacing\@plus.7\linespacing}{.5\linespacing}%
    {\normalfont\bfseries\centering}*}
\makeatother

\begin{document}

\title[Clifford Invariants and Three-Copy Quantum Tests]{Clifford Invariants and Three-Copy Quantum Tests}
\author[X. Fan]{Xingya Fan}
\address{College of Mathematics and System Sciences, Xinjiang University, Urumqi 830017, China}
\email{fanxingya@xju.edu.cn}
\author[Y. Luan]{Yongzhi Luan}
\address{School of Mathematics, Shandong University, Jinan 250100, China}
\email{luanyongzhi@email.sdu.edu.cn}

\subjclass[2020]{Primary 81P45; Secondary 20C15, 13A50, 81P68}

\keywords{Clifford group, qutrits, tensor power representations,
polynomial invariants, stabilizer states, quantum Boolean functions,
quantum property testing, complex projective designs}

\date{\today}

\begin{abstract}
We give an explicit treatment of the third tensor power of the Clifford
representation on $n$ qutrits, for $n\geq2$. This representation has five
pairwise inequivalent irreducible constituents, with multiplicities
$1,1,1,1,2$. The decomposition reduces the Clifford twirl of a tensor cube to
four real parameters.  Two of them are spectral moments of the input.  The
other two are cubic invariants expressed as sums of cubes of the Weyl
characteristic function and the discrete Wigner function.  On trace-one
Hermitian matrices, the invariant polynomials of degree at most three
are spanned by the constant function and four algebraically independent
functions.  These functions do not separate Clifford orbits.  For pure
inputs a single parameter suffices. We recover its range and the
characterization of its maximum by stabilizer states. We bound the
stabilizer fidelity, determine full separability across the three copies
in the symmetric family, and recover the criterion for a complex
projective $3$-design.
We classify the jointly invariant quantum Boolean functions.  For
stabilizer testing and for testing unknown observables we give explicit
bounds on copies and queries. We also give character formulas for the
finite projective Clifford groups and record the complete character
table for two qutrits.
\end{abstract}
\maketitle

\tableofcontents
\section{Introduction}
\label{sec:intro}

For $n$ qudits of prime local dimension $d$, the Clifford group
$\Cl(n,d)$ is the unitary normalizer of the Heisenberg--Weyl group.
Its tensor powers describe the statistics accessible to measurements
that are invariant under a common Clifford transformation of the input
copies. In this paper, we study the third tensor power of the Clifford
representation on $n$ qutrits, with $d=3$ and $n\geq2$, and its
applications to quantum information processing.

The background for this study comes from three related directions.

(i) The Clifford group plays an important role in quantum error
correction \cite{gottesman97,cs96,s96b,s96a}. Clifford circuits with
stabilizer inputs and Pauli measurements admit efficient classical
simulation by the Gottesman--Knill theorem \cite{gottesman98}.
Aaronson and Gottesman developed efficient tableau algorithms for
qubit stabilizer circuits \cite{aaronson2004improved}. Universal quantum
computation requires resources outside this class, as illustrated by
magic-state distillation \cite{bravyi05}. In odd prime local dimension,
discrete Wigner functions provide a phase-space description of the
stabilizer formalism \cite{wootters87,gross06}. The finite-dimensional
Hudson theorem characterizes pure stabilizer states as precisely those
pure states with nonnegative discrete Wigner functions \cite{gross06}.
Wigner negativity is necessary for a computational speed-up in the
stabilizer computation model considered in \cite{veitch12}.

(ii) Random unitary methods underlie fidelity estimation and randomized
benchmarking \cite{dce09,emerson05,magesan11}. For prime $d$, the
Clifford group $\Cl(n,d)$ is a unitary $2$-design, so its second-moment
twirl agrees with the Haar twirl \cite{dce09,zhu17}. For qubits, the
Clifford group is also a unitary $3$-design \cite{webb16,zhu17}.
Related results on low tensor powers appear in \cite{gt05}, while
\cite{zkgg16} studies the failure of the qubit Clifford group to be a
unitary $4$-design. The $n$-qubit stabilizer states form a single
Clifford orbit and a complex projective $3$-design \cite{kueng15}.
For qutrits, the Clifford group is a unitary $2$-design but not a
unitary $3$-design \cite{zhu17}. Nevertheless, the third moment of any
pure-state Clifford orbit is determined by one real parameter
$q\in[0,3^{-n}]$, defined in \eqref{eq:pure}; see \cite{zmy24}.
This gives a starting point for studying third moments of qutrit
Clifford orbits.

(iii) Gross, Nezami and Walter \cite{gnw21} established a Schur--Weyl
duality for tensor powers of the Clifford group using finite geometry.
Its applications include stabilizer testing, de Finetti theorems for
stabilizer states, a robust finite-dimensional Hudson theorem, and
constructions of complex projective designs of arbitrary order.
Their Example~4.26 gives the five irreducible constituents and their
multiplicities in the three-copy qutrit case.

We give a self-contained derivation of this decomposition, with explicit
projectors and an intertwiner between the two equivalent copies; see
Theorem~\ref{rep:five-irreducibles}. The commutant calculation specializes
the symplectic-orbit method of \cite[Lemmas~1--2 and Appendix~A]{zhu17}
and \cite[Lemma~4.8]{gnw21} to ordered pairs of Weyl labels.
We use the resulting projectors to study mixed states. On the affine
space of trace-one Hermitian matrices, the Clifford invariant
polynomials of degree at most three have basis $1,t_2,t_3,u,v$, with
the four nonconstant functions defined in \eqref{eq:four}.
These four functions are algebraically independent, as shown by an
explicit Jacobian calculation in Appendix~\ref{app:independence}, but
they do not separate Clifford orbits. In the pure-state case, the range
of $q$, the family attaining that range and the criterion for a complex
projective $3$-design recover results of Zhu, Mao and Yi
\cite{zmy24}; precise comparisons are given in Section~\ref{sec:pure}.

We next investigate the information accessible through three-copy
measurements. Using the definition of quantum Boolean functions in
\cite{mo10}, we classify those that are invariant under both Clifford
transformations and permutations of the three copies. All of them can
be measured using a five-outcome sector measurement. Under the access
assumptions specified with each result, we derive explicit copy and
query complexity bounds. We also prove a lower bound for repeated
three-copy batches when every retained outcome effect is Clifford
invariant. The impossibility of a nontrivial perfectly complete
three-copy stabilizer test is a consequence of
\cite[Theorem~5.6 and Example~4.26]{gnw21}; we give a direct proof in
Proposition~\ref{prop:no-perfect}. The four-copy stabilizer test of
\cite[Theorem~3.11]{gnw21} uses a different measurement.

Our analysis requires care with scalar phases. The Clifford group
$\Cl(n,3)$ contains all scalar unitaries and is therefore infinite.
Its finite projective quotient is
$G_n=V_n\rtimes\operatorname{Sp}(2n,3)$ in \eqref{EG1}. We use the
finite lift $\widetilde G_n$ in \eqref{EG2} to define a tensor-cube
representation of $G_n$. The finite Weil representation provides
compatible symplectic implementers \cite[Theorem~1.2]{gh09}.
Since the scalar kernel of this lift has exponent three, its tensor
cube descends to $G_n$; see Propositions~\ref{rep:lift-properties} and
\ref{rep:cube-descent}. The structure of Clifford groups in odd
characteristic goes back to Bolt, Room and Wall \cite{bolt61}; the
extended Clifford group is treated in \cite{appleby05}.
Character tables for the projective Clifford groups on one, two and
three qubits were computed in \cite{peng25}. For two qutrits, the
ordinary character table is already available in the GAP Character
Table Library \cite{ctbllib25}. The five tensor-cube constituents
account for only five irreducible characters; for example, $G_2$ has
$58$ irreducible characters. Section~\ref{sec:finite-tables} gives the
little-group parametrization and character formulas for general $n$,
and Appendix~\ref{app:character-table-n2} records the complete
two-qutrit character table.

The paper is organized as follows.
Section~\ref{sec:foundations} proves the decomposition and the commutant
computation, including the explicit projectors, the comparison with the
ordinary three-copy Schur--Weyl duality, and the character formulas.
Section~\ref{sec:finite-tables} specifies the finite projective groups,
their orders, the parametrization by little groups and recursive
formulas for irreducible character degrees. The complete character table of $G_2$ is
printed in Appendix~\ref{app:character-table-n2}.
Sections~\ref{sec:arbitrary} to \ref{sec:pure} derive the consequences
for density matrices and for pure states, namely the twirl and the
four-parameter description of third moments, the linear and polynomial invariants together
with their algebraic independence, the range of $q$, the robust fidelity
bound, separability and the criterion for a complex projective
$3$-design. Section~\ref{sec:measurements} gives the physical
measurements and the statistically valid bounds, and
Section~\ref{sec:operators} treats unknown quantum Boolean functions.
The two appendices contain the algebraic independence calculation and
the character table of $G_2$.
\section{Representation-theoretic foundations}
\label{sec:foundations}
Unless stated otherwise, $n\geq2$,
$\HH=(\C^3)^{\otimes n}$ and $\KK=\HH^{\otimes3}$. Let $\End(\KK)$ be
the algebra of linear operators on $\KK$, and $\Tr$ the (unnormalized)
trace on $\End(\KK)$, so that $\Tr(I_{\KK})=\dim\KK$.
\subsection{Weyl operators, symplectic covariance, and scalar phases}
\label{rep:weyl-subsection}

Let
\(
\omega=e^{2\pi i/3}\) and
\(V_n=\mathbb F_3^n\oplus\mathbb F_3^n,\)
where $\mathbb F_3=\{0,1,2\}$ and $\mathbb F_3^n$ is the
$n$-dimensional vector space over $\mathbb F_3$. The computational basis
of the $n$-qutrit Hilbert space $\mathcal H$ is
\(
\{|x\rangle:x\in\mathbb F_3^n\}.
\)
For $a,b,x\in\mathbb F_3^n$, define
\[
X^b|x\rangle=|x+b\rangle,
\quad
Z^a|x\rangle=\omega^{a\cdot x}|x\rangle,
\]
where $a\cdot x$ is the standard bilinear form on $\mathbb F_3^n$.
For
\(u=(a,b)\) and \(v=(c,d)\) in $V_n$,
define the standard symplectic form by
\([u,v]=a\cdot d-b\cdot c.\)
We define the Weyl operator by
\(W_{(a,b)}=\omega^{-a\cdot b/2}Z^aX^b,\)
where $1/2$ is the inverse of $2$ in $\mathbb F_3$.
The relation
\begin{align}\label{E2.1}
X^bZ^c=\omega^{-b\cdot c}Z^cX^b
\end{align}
implies
\begin{align}\label{rep:weyl-identities}
W_uW_v=\omega^{[u,v]/2}W_{u+v},\quad
W_u^\dagger=W_{-u},
\quad
W_vW_uW_v^\dagger=\omega^{[v,u]}W_u,
\end{align}
where $\dagger$ denotes the adjoint. The Weyl operators are orthogonal with respect to the Hilbert--Schmidt
inner product:
\begin{align*}
\operatorname{Tr}(W_u)=3^n\delta_{u,0},
\quad
\operatorname{Tr}(W_u^\dagger W_v)=3^n\delta_{u,v},
\end{align*}
where $\delta_{u,v}$ denotes the Kronecker delta for $u,v\in V_n$. There are exactly $3^{2n}$ Weyl operators, and
\(\dim\operatorname{End}(\mathcal H)=3^{2n}.\)
Hence they form an orthogonal basis of
$\operatorname{End}(\mathcal H)$. Thus every operator $\mathcal{A}$ has the expansion
\[\mathcal{A}=3^{-n}\sum_{u\in V_n}\operatorname{Tr}(W_u^\dagger \mathcal{A})W_u.\]

Let
\(
\mathcal P_n=\{\omega^kW_u:k\in\mathbb F_3,\ u\in V_n\}
\)
be the Pauli group. Let $U(3^n)$ be the group of all unitary operators acting on $\mathcal{H}$. The Clifford group is the normalizer of
$\mathcal P_n$ defined by
\[\Cl(n,3)=\{\mathcal{C}\in U(3^n):\mathcal{C}\mathcal P_n\mathcal{C}^\dagger=\mathcal P_n\}.\]
Every scalar unitary operator $zI$, with $z\in U(1)$, belongs to
$\Cl(n,3)$.

Let $\operatorname{Sp}(2n,3)$ denote the group of linear transformations
preserving the symplectic form:
\[
[Mu,Mv]=[u,v]
\quad
\text{for all }u,v\in V_n.
\]
The following covariance statement is \cite[Lemma~2.1]{gnw21}, specialized to qutrits.
\begin{lemma}
\label{rep:covariance}
For every $M\in\operatorname{Sp}(2n,3)$ there exists a unitary operator
$U_M$ such that
\(U_MW_uU_M^\dagger=W_{Mu}\) for every $u\in V_n$.
Conversely, every Clifford unitary operator can be written as
\(\mathcal{C}=zW_vU_M,\)
where $z\in U(1)$, $v\in V_n$ and $M\in\operatorname{Sp}(2n,3)$.
\end{lemma}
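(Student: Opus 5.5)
The plan is to prove the existence part by building a second irreducible representation of the Heisenberg--Weyl relations and invoking Stone--von Neumann uniqueness. Then I reduce the converse to the existence part by a Schur-lemma argument.

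\emph{Existence.} Fix $M\in\operatorname{Sp}(2n,3)$ and put $W'_u=W_{Mu}$. Because $M$ is linear and preserves $[\cdot,\cdot]$, the first identity in \eqref{rep:weyl-identities} gives $W'_uW'_v=\omega^{[u,v]/2}W'_{u+v}$. So $u\mapsto W'_u$ and $u\mapsto W_u$ are two unitary projective representations of $V_n$ on $\HH$ with the same cocycle.

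Both are irreducible. Any operator commuting with every $W_u$ has, in the orthogonal expansion, only the coefficient at $u=0$ surviving, because the conjugation identity $W_vW_uW_v^\dagger=\omega^{[v,u]}W_u$ and nondegeneracy of the form kill every $u\neq0$ term. The same argument applies to the $W'_u$.

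To get an explicit intertwiner, consider the averaging operator
\[
T_A=\sum_{u\in V_n}W_{Mu}AW_u^\dagger
\]
for an arbitrary operator $A$. Reindexing shows $W_{Mv}T_A=T_AW_v$. By Schur's lemma for cocycle representations, $T_A^\dagger T_A$ commutes with all $W_v$, hence is a scalar. So $T_A$ is a multiple of a unitary whenever it is nonzero.

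The main obstacle is to choose $A$ with $T_A\neq0$. I plan to show that $A\mapsto T_A$ is not the zero map by summing $\Tr(B\,T_A)$ over Weyl basis choices of $A,B$ and using $\Tr(W_u^\dagger W_v)=3^n\delta_{u,v}$. Alternatively, if $T_{W_w}=0$ for every $w$, then expanding an arbitrary operator and using that $\{W_w\}$ spans $\End(\HH)$ forces every $T_A=0$. That contradicts $T_I\neq0$, whose trace is $\sum_u\Tr(W_{Mu}W_u^\dagger)$; this sum is at least the $u=0$ term plus nonnegative contributions from the fixed points of $M$, since there $\Tr(W_{Mu}W_u^\dagger)=3^n$.

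Indeed, $\Tr(T_I)=3^n\,\#\{u:Mu=u\}\geq3^n>0$. So $T_I\neq0$, and normalizing gives $U_M$ with $U_MW_uU_M^\dagger=W_{Mu}$.

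\emph{Converse.} Let $\mathcal C$ be Clifford. Then $\mathcal CW_u\mathcal C^\dagger=\omega^{f(u)}W_{\phi(u)}$ for some functions $f$ and $\phi$. Here $\phi$ is a bijection, because conjugation is an automorphism of $\mathcal P_n$ that preserves the center $\{\omega^k I\}$.

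The next step is to show that $\phi$ is symplectic and additive. Conjugating $W_uW_v=\omega^{[u,v]/2}W_{u+v}$ and comparing the Weyl labels (which are independent in the basis) gives $\phi(u+v)=\phi(u)+\phi(v)$. Comparing the commutation phases then gives $[\phi u,\phi v]=[u,v]$, so $\phi=M\in\operatorname{Sp}(2n,3)$.

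Now set $\mathcal C'=U_M^\dagger\mathcal C$. Then $\mathcal C'W_u\mathcal C'^\dagger=\omega^{f(u)}W_u$, so $u\mapsto f(u)$ is a character of $V_n$. Using the relation $W_uW_v=\omega^{[u,v]/2}W_{u+v}$, $f$ is additive, so $f(u)=[v_0,u]$ for a unique $v_0$ by nondegeneracy.

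Then $W_{v_0}^\dagger\mathcal C'$ commutes with all $W_u$, hence is a scalar $z$ by irreducibility. This gives $\mathcal C=U_MW_{v_0}z$. Rewriting with $U_MW_{v_0}=W_{Mv_0}U_M$, we obtain $\mathcal C=zW_{v}U_M$ with $v=Mv_0$.
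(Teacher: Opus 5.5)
Your proof is correct. It also cannot follow the paper's route, because the paper gives no proof of this lemma: it imports it as \cite[Lemma~2.1]{gnw21}.

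\textbf{Existence.} Your argument is a concrete Stone--von Neumann/Schur argument.
\begin{itemize}
\item The averaged operator $T_I=\sum_u W_{Mu}W_u^\dagger$ satisfies $W_{Mv}T_I=T_IW_v$. This holds because the phases $\omega^{[v,u]/2}$ from \eqref{rep:weyl-identities} cancel exactly when $M$ is symplectic.
\item $T_I^\dagger T_I=c\,I$ by the irreducibility you verify from Weyl orthogonality.
\item $\Tr T_I=3^n\,\#\{u:Mu=u\}\geq 3^n$, so $T_I\neq0$ and $U_M=c^{-1/2}T_I$ works.
\end{itemize}
The detour through $\Tr(BT_A)$ and $T_{W_w}$ is unnecessary, since this trace already settles nonvanishing.

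\textbf{Converse.} You read off $\phi$ and $f$ from conjugation of $\mathcal P_n$. You get additivity and the symplectic property of $\phi$ by comparing labels and commutation phases. You then strip off $U_M$, identify $f=[v_0,\cdot\,]$ by nondegeneracy, and finish with Schur's lemma. This is sound. The only cosmetic slip is calling $f$ a character one sentence before proving it additive.

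\textbf{Comparison.} Your route gives a self-contained proof using only the Weyl identities already displayed in the paper, plus an explicit formula for an implementer. The citation buys brevity.

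Neither approach gives multiplicativity. Your normalized $T_I$ is one specific implementer for each $M$, but nothing in the construction makes $M\mapsto U_M$ a homomorphism. So the paper's separate appeal to the Weil representation \cite{gh09}, used before Proposition~\ref{rep:cube-descent}, is still needed. As the paper itself notes, that step does not follow from this lemma.
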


The implementer $U_M$ is not uniquely determined. If $U_M$ implements
$M$, then so does $zU_M$ for every $z\in U(1)$. Therefore, an arbitrary
choice of implementers gives only a projective representation:
\[
U_MU_N=c(M,N)U_{MN},
\quad
|c(M,N)|=1.
\]
The function $c$ is a $2$-cocycle. Lemma \ref{rep:covariance} alone does not
imply that the cohomology class of this cocycle is trivial.
Nevertheless, the conjugation action on Weyl operators is unambiguous. If
\(\mathcal{C}=zW_vU_M,\)
then
\[
 \mathcal{C}W_u\mathcal{C}^\dagger=\omega^{[v,Mu]}W_{Mu}.
\]
The Weyl label transforms linearly by $M$; the translation parameter
$v$ appears in the phase. The label permutation determines $M$, and
nondegeneracy of the symplectic form then determines $v$ from the phases,
whereas the scalar $z$ is irrelevant.
It follows that there is a well-defined surjective homomorphism
\begin{align}\label{EG1}
\pi:\Cl(n,3)\to
G_n:=V_n\rtimes\operatorname{Sp}(2n,3),
\end{align}
given by
\(\pi(zW_vU_M)=(v,M).\) Here $G_n$ is the
projective qutrit Clifford group.
The multiplication rule in $G_n$ is
\((v,M)(w,N)=(v+Mw,MN).\)
Indeed, for arbitrary implementers,
\[
(W_vU_M)(W_wU_N)
=c(M,N)\omega^{[v,Mw]/2}W_{v+Mw}U_{MN},
\]
and the scalar phase disappears after applying $\pi$.
The kernel of $\pi$ consists precisely of scalar unitaries. Hence
\(
\Cl(n,3)/U(1)
\cong
V_n\rtimes\operatorname{Sp}(2n,3).
\)
The quotient group $G_n$ is finite, whereas $\Cl(n,3)$
contains the whole circle of scalar unitaries and is therefore infinite.
At this point, one must distinguish carefully between the quotient action
and a linear representation. The conjugation action on the Weyl basis
depends only on the image in $G_n$:
\[
(zW_vU_M)W_u(zW_vU_M)^\dagger=\omega^{[v,Mu]}W_{Mu}.
\]
Thus this conjugation action is a genuine action of the finite group $G_n$.
By contrast, the operators $U_M$ need not form a genuine representation
for an arbitrary choice of implementers. Over $\mathbb{F}_3$, the
finite Weil representation provides a choice for which the cocycle is
trivial; see \cite[Theorem~1.2]{gh09}.
We therefore choose the implementers so that
\(U_MU_N=U_{MN}\) and $U_MW_uU_M^{-1}=W_{Mu}$.
This choice uses the linearization theorem for the Weil representation;
it does not follow from Lemma~\ref{rep:covariance} alone.
These operators are unitary. Indeed, the covariance relation implies
that $U_M^\dagger U_M$ commutes with every Weyl operator.
Since the Weyl representation is irreducible, Schur's lemma gives
\(
U_M^\dagger U_M=\lambda_MI
\)
for some $\lambda_M>0$. Multiplicativity implies
\(
\lambda_{MN}=\lambda_M\lambda_N.
\)
Because $\operatorname{Sp}(2n,3)$ is finite, the only homomorphism from
$\operatorname{Sp}(2n,3)$ to $\mathbb R_{>0}$ is the trivial one. Hence
$\lambda_M=1$ for every $M$.

For this fixed choice the multiplication formula becomes
\begin{align}\label{E2.2J}
 (W_vU_M)(W_wU_N)=\omega^{[v,Mw]/2}W_{v+Mw}U_{MN}.
\end{align}
We now define the finite lift
\begin{align}\label{EG2}
\widetilde G_n
=
\{\omega^kW_vU_M:
k\in\mathbb F_3,\ v\in V_n,\ M\in\operatorname{Sp}(2n,3)\}.
\end{align}

\begin{proposition}
\label{rep:lift-properties}
Let $G_n$ and $\widetilde G_n$ be as in \eqref{EG1} and \eqref{EG2}, respectively. Then the map
\[\widetilde G_n\to G_n,\quad
\omega^kW_vU_M\mapsto(v,M)\]
is a surjective homomorphism with kernel
\(\{\omega^kI:k\in\mathbb F_3\}.\)
\end{proposition}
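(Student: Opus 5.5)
First I will show that $\widetilde G_n$ is a group and that the stated map is well defined. By \eqref{E2.2J},
\[
(\omega^kW_vU_M)(\omega^lW_wU_N)=\omega^{k+l+[v,Mw]/2}W_{v+Mw}U_{MN}.
\]
Hence $\widetilde G_n$ is closed under multiplication. It is a finite set of invertible operators, so it is a subgroup of $U(3^n)$; alternatively, inverses can be checked directly from $W_u^\dagger=W_{-u}$ and $U_M^{-1}=U_{M^{-1}}$.

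The map needs more care, because a priori the same operator might have two expressions $\omega^kW_vU_M=\omega^{k'}W_{v'}U_{M'}$ with $(v,M)\neq(v',M')$. I will rule this out using the conjugation action on Weyl operators. Conjugation by $\omega^kW_vU_M$ sends $W_u$ to $\omega^{[v,Mu]}W_{Mu}$. Since the $W_u$ are linearly independent, the label map $u\mapsto Mu$ is determined by the operator, and so $M$ is determined. Nondegeneracy of the symplectic form then determines $v$ from the phases $\omega^{[v,Mu]}$, $u\in V_n$. Equivalently, the map is the restriction of $\pi$ in \eqref{EG1} to $\widetilde G_n\subset\Cl(n,3)$, and $\pi$ was already shown to be well defined.

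The remaining claims follow quickly.

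\emph{Homomorphism.} By the product formula above, the image of the product is $(v+Mw,MN)=(v,M)(w,N)$, which is the multiplication in $G_n$.

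\emph{Surjectivity.} The element $(v,M)$ is the image of $W_vU_M$.

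\emph{Kernel.} If $\omega^kW_vU_M\mapsto(0,I)$, then $v=0$ and $M=I$. Multiplicativity gives $U_I=U_IU_I$, so $U_I=I$ since $U_I$ is invertible; also $W_0=I$. The element is therefore $\omega^kI$. Conversely, each $\omega^kI$ lies in $\widetilde G_n$ and maps to the identity. So the kernel is $\{\omega^kI:k\in\F_3\}$.

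The only real obstacle is the phase factor $\omega^{[v,Mw]/2}$. It must be a cube root of unity for closure; this holds because $[v,Mw]/2\in\F_3$ and $\omega$ has order three. The argument also relies on the chosen implementers being exactly multiplicative, so that no cocycle $c(M,N)$ appears in \eqref{E2.2J}. That exact multiplicativity comes from the Weil-representation choice fixed before \eqref{E2.2J}.
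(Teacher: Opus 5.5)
Your proof is correct and follows essentially the same route as the paper. Closure comes from \eqref{E2.2J}, the homomorphism property is read off the product formula, and the conjugation action on Weyl operators together with nondegeneracy of $[\cdot,\cdot]$ pins down $(v,M)$. The difference is only bookkeeping: you use the conjugation argument to make well-definedness of the map explicit, then read off the kernel directly and justify $U_I=I$ from multiplicativity. The paper leaves well-definedness implicit and uses the same argument to show that a scalar element $\omega^kW_vU_M$ must have $M=I$ and $v=0$.
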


\begin{proof}
Since the phase is a power of $\omega$, the product in \eqref{E2.2J} belongs to
$\widetilde G_n$. The inverse also belongs to $\widetilde G_n$, because
\(U_M^{-1}=U_{M^{-1}}.\)
Thus $\widetilde G_n$ is a subgroup. It is finite because it is parametrized
by the finite sets $\mathbb F_3$, $V_n$ and $\operatorname{Sp}(2n,3)$.
The displayed map is a surjective homomorphism. If
$\omega^kW_vU_M$ is scalar, then it fixes every Weyl operator under
conjugation. Comparing labels gives $Mu=u$ for every $u$, hence $M=I$.
The remaining phases satisfy $\omega^{[v,u]}=1$ for every $u$, so
nondegeneracy gives $v=0$. Since $U_I=I$, the scalar element is
$\omega^kI$. Conversely, every $\omega^kI$ belongs to the kernel.
\end{proof}

The finite lift is useful because its scalar kernel has exponent three:
\((\omega^kI)^{\otimes3}=I.\)
This is exactly the reason that the tensor cube descends from
$\widetilde G_n$ to $G_n$.
\begin{proposition}\label{rep:cube-descent}
Define
\(\rho_3:G_n\to U(\mathcal K)\)
by \(\rho_3(v,M)=(W_vU_M)^{\otimes3}.\) The map $\rho_3$ is a well-defined genuine representation of $G_n$.
\end{proposition}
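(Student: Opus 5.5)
The plan is to factor $\rho_3$ through the finite lift $\widetilde G_n$. First, $\widetilde G_n$ is a group and the map $\widetilde G_n\to G_n$, $\omega^kW_vU_M\mapsto(v,M)$, is a surjective homomorphism with kernel $\{\omega^kI\}$; both facts are Proposition~\ref{rep:lift-properties}. The map $\sigma:\widetilde G_n\to U(\mathcal K)$, $g\mapsto g^{\otimes3}$, is a genuine unitary representation. It is a homomorphism because $(gh)^{\otimes3}=g^{\otimes3}h^{\otimes3}$, and each value is unitary because every $g\in\widetilde G_n$ is unitary: $W_v$ is unitary and $U_M$ is unitary by the Schur-lemma argument preceding \eqref{E2.2J}. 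Since $(\omega^kI)^{\otimes3}=\omega^{3k}I=I$, the kernel of $\widetilde G_n\to G_n$ lies in $\ker\sigma$. Hence $\sigma$ descends to a homomorphism $\bar\sigma:G_n\to U(\mathcal K)$.

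Next I identify $\bar\sigma$ with $\rho_3$. The element $W_vU_M$ belongs to $\widetilde G_n$ with $k=0$ and maps to $(v,M)$, so $\bar\sigma(v,M)=(W_vU_M)^{\otimes3}=\rho_3(v,M)$. Well-definedness needs a little care. The formula for $\rho_3$ is stated using the fixed Weil implementers $U_M$, so for given $(v,M)$ the operator $W_vU_M$ is uniquely specified. Any other preimage has the form $\omega^kW_vU_M$, and its cube equals $(W_vU_M)^{\otimes3}$. So $\rho_3$ does not depend on the chosen lift.

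For a direct check of the homomorphism property, use \eqref{E2.2J}:
\[
\rho_3(v,M)\rho_3(w,N)=\bigl(\omega^{[v,Mw]/2}W_{v+Mw}U_{MN}\bigr)^{\otimes3}=\omega^{3[v,Mw]/2}\rho_3\bigl((v,M)(w,N)\bigr).
\]
The prefactor is $1$ because $[v,Mw]/2\in\mathbb F_3$, so $\omega^{3[v,Mw]/2}=1$. The identity element $(0,I)$ goes to $I$, since $W_0=I$ and $U_I=I$.

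The only delicate point is that $\omega^{[v,Mw]/2}$ is a genuine cube root of unity, with exponent in $\mathbb F_3$. This holds only because the implementers were chosen with trivial cocycle, $U_MU_N=U_{MN}$. For an arbitrary choice, the factor $c(M,N)^3$ would survive and $\rho_3$ would fail to be a homomorphism. I would therefore state explicitly that the argument relies on the Weil linearization \cite[Theorem~1.2]{gh09}.
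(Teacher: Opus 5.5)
Your proof is correct and follows essentially the same route as the paper. The paper also gets well-definedness from the fact that any other representative in $\widetilde G_n$ differs by some $\omega^k$, whose tensor cube is trivial, and checks multiplicativity directly from \eqref{E2.2J}, where the phase $\omega^{[v,Mw]/2}$ disappears on cubing; your framing as a descent of $g\mapsto g^{\otimes3}$ along $\widetilde G_n\to G_n$ (via Proposition~\ref{rep:lift-properties}) and your explicit unitarity and identity checks only refine that argument.
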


\begin{proof}
Suppose that $(v,M)$ is represented in $\widetilde G_n$ by $W_vU_M$.
If another representative in this finite lift is chosen, it differs by a scalar $\omega^k$, namely,
\(\omega^kW_vU_M.\)
Its tensor cube is unchanged because
\[(\omega^kW_vU_M)^{\otimes3}=(\omega^k)^3(W_vU_M)^{\otimes3}
=(W_vU_M)^{\otimes3}.\]
Thus $\rho_3$ is well-defined.
Together with \eqref{E2.2J}, the tensor cube eliminates the phase as follows:
\[
\begin{aligned}
(W_vU_M)^{\otimes3}(W_wU_N)^{\otimes3}
=
\bigl((W_vU_M)(W_wU_N)\bigr)^{\otimes3}=
\bigl(W_{v+Mw}U_{MN}\bigr)^{\otimes3}.
\end{aligned}
\]
Therefore,
\[
\rho_3(v,M)\rho_3(w,N)
=
\rho_3(v+Mw,MN).
\]
Hence $\rho_3$ is a genuine representation of $G_n$.
\end{proof}

Finally, every Clifford unitary has the form
\(\mathcal{C}=zW_vU_M.\)
Therefore,
\[
\mathcal{C}^{\otimes3}
=
z^3(W_vU_M)^{\otimes3}
=
z^3\rho_3(v,M).
\]
Let
\begin{align}\label{EG3}
\End_{\Cl(n,3)}(\KK):=
\left\{\mathcal{A}\in \End(\KK):\,\mathcal{A}\mathcal{C}^{\otimes 3}=\mathcal{C}^{\otimes 3}\mathcal{A},\ \forall \mathcal{C}\in \Cl(n,3)\right\}.
\end{align}
Since multiplication by a nonzero scalar does not change commutants,
invariant subspaces, or intertwiner spaces, we obtain
\[
\End_{\Cl(n,3)}(\mathcal K)
=
\End_{\rho_3(G_n)}(\mathcal K).
\]
Thus, the decomposition and irreducibility statements proved for
the finite group representation $\rho_3$ apply equally to the tensor-cube
action of the Clifford group.

\subsection{An eight-dimensional Clifford commutant from Weyl pairs}
\label{rep:pair-subsection}
The following calculation is the $d=t=3$ specialization of the
symplectic-orbit method in \cite[Lemmas~1--2 and Appendix~A]{zhu17}
and \cite[Lemma~4.8]{gnw21}. We include the argument to make the
decomposition self-contained.
The projective line $\mathbb P^1(\mathbb F_3)$ is the set of
one-dimensional linear subspaces of $\mathbb F_3^2$, i.e.\ the quotient
\[
  \mathbb P^1(\mathbb F_3)=\bigl(\mathbb F_3^2\setminus\{0\}\bigr)\big/\mathbb F_3^\times ,
  \quad \mathbb F_3^\times=\{1,2\}=\{\pm1\},
\]
of the eight nonzero vectors by scalar multiplication.
\begin{lemma}\label{rep:pair-orbits}
The diagonal action of $\operatorname{Sp}(2n,3)$ on ordered pairs
$(u,v)\in V_n^2$ has exactly eight orbits, consisting of the zero
pair, four nonzero rank-one types indexed by $\mathbb P^1(\mathbb F_3)$,
and three rank-two types indexed by the value $[u,v]\in\mathbb F_3$.
\end{lemma}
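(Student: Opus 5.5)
The plan is to classify pairs $(u,v)$ by the isomorphism type of the labelled subspace $\operatorname{span}(u,v)$ together with the restriction of the symplectic form, and then use Witt's theorem for the nondegenerate symplectic space $V_n$ to show that each type is a single orbit. The invariants are clearly preserved by $\operatorname{Sp}(2n,3)$. First, the linear relation module $R(u,v)=\{(\alpha,\beta)\in\F_3^2:\alpha u+\beta v=0\}$ is invariant, hence so is the rank of the pair. Second, when the rank is two, the value $[u,v]$ is invariant. The possible invariants are as follows.
\begin{itemize}
\item Rank zero: only the pair $(0,0)$.
\item Rank one: $R(u,v)$ is a line in $\F_3^2$, i.e.\ a point of $\mathbb P^1(\F_3)$, so there are four choices. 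Equivalently, $(u,v)=(\lambda w,\mu w)$ with $w\neq0$ and $[\lambda:\mu]\in\mathbb P^1(\F_3)$ determined by the pair.
\item Rank two: the value $[u,v]\in\F_3$ gives three possibilities.
\end{itemize}
This shows there are at least eight orbits, provided each type is nonempty. Nonemptiness is immediate once $n\ge1$; for rank two with $[u,v]=0$ one needs $\dim V_n\ge 4$, or at least two independent isotropic vectors, which holds for $n\ge2$.

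For transitivity within each type I will invoke Witt's extension theorem. Any linear isometry between subspaces of a nondegenerate symplectic space extends to an element of $\operatorname{Sp}(V_n)$.

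In the rank-one case, given $(\lambda w,\mu w)$ and $(\lambda' w',\mu' w')$ with the same point of $\mathbb P^1$, I first rescale so that $(\lambda,\mu)=(\lambda',\mu')$. This works because $R$ fixes $(\lambda,\mu)$ up to a unit, and the unit can be absorbed into $w$. The map $w\mapsto w'$ is then an isometry of one-dimensional subspaces, since every line is totally isotropic. It extends by Witt, and the extension sends the first pair to the second.

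In the rank-two case, $u\mapsto u'$, $v\mapsto v'$ defines a linear isomorphism between two-dimensional subspaces. It preserves the form because the Gram matrix is determined by $[u,v]=[u',v']$ and antisymmetry. Witt again extends it.

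If one prefers to avoid citing Witt, the alternative is to do the extension by hand. For $[u,v]=c\neq0$, normalize to a hyperbolic pair, split off $\langle u,v\rangle^\perp$, and pick symplectic bases. For the isotropic cases, complete $u$ (and $v$) to a symplectic basis by choosing dual vectors $f_1,f_2$ with $[u,f_i]$, $[v,f_i]$ prescribed, adjusting them to be mutually isotropic, and then taking the orthogonal complement.

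The main obstacle is the isotropic rank-two case $[u,v]=0$. Here one must produce a hyperbolic completion $\{u,v,f_1,f_2\}$, which requires $n\ge2$. I would handle this step explicitly via the standard Gram--Schmidt-type symplectic basis construction, and that also confirms where the hypothesis $n\ge2$ enters. Combining the invariants with transitivity gives exactly $1+4+3=8$ orbits.
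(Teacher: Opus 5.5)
Your proof is correct and follows essentially the paper's route: the same invariants (rank of the pair, the point of $\mathbb P^1(\F_3)$ in rank one, and $[u,v]$ in rank two), with transitivity on each type obtained from Witt's extension theorem, whereas the paper completes to a symplectic basis by hand exactly as in your fallback paragraph. One wording fix: every vector is isotropic for an alternating form, so nonemptiness of the type with $[u,v]=0$ really requires a totally isotropic plane, i.e.\ two independent mutually orthogonal vectors, and that is what forces $n\geq2$.
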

\begin{proof}
The dimension of $\operatorname{span}\{u,v\}$, and the stated data
within each dimension, are invariant. Any nonzero vector extends to a
symplectic basis: choose a partner with pairing one, split off their
nondegenerate plane, and continue in its symplectic orthogonal
complement. Thus the symplectic group is transitive on nonzero vectors.
A rank-one pair is $(ru,su)$ with $u\ne0$, and its invariant type is
$[r:s]\in\mathbb P^1(\F_3)$. There are four such types, represented
by $(u,0),(0,u),(u,u),(u,-u)$.
For an independent pair with $k=[u,v]\ne0$, the pair $(u,v/k)$ is
symplectic. Extending it to a symplectic basis proves transitivity for
each fixed $k$. For an independent pair with $[u,v]=0$, the functionals
$[u,\cdot]$ and $[v,\cdot]$ are independent. Choose $f_1,f_2$ satisfying
\[
 [u,f_1]=1,\quad[v,f_1]=0,\quad
 [u,f_2]=0,\quad[v,f_2]=1.
\]
If $c=[f_1,f_2]$, replacing $f_2$ by $f_2+cu$ makes its pairing with
$f_1$ zero without changing the other displayed pairings. These vectors
form two symplectic pairs and extend to a symplectic basis. Thus the
independent isotropic pairs also constitute one orbit; they exist
because $n\geq2$. The count is $1+4+3=8$.
\end{proof}

\begin{proposition}\label{rep:commutant-dimension}
Let $\operatorname{End}_{\Cl(n,3)}(\mathcal K)$ be as in \eqref{EG3}.
Then
\(
 \dim\End_{\Cl(n,3)}(\KK)=8.
\)
\end{proposition}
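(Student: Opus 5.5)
The plan is to expand an arbitrary operator on $\KK=\HH^{\otimes3}$ in the product Weyl basis $\{W_{u_1}\otimes W_{u_2}\otimes W_{u_3}:u_i\in V_n\}$. This family is an orthogonal basis of $\End(\KK)$ by the orthogonality relations of Section~\ref{rep:weyl-subsection}. By the reduction at the end of that subsection, it suffices to compute the commutant of $\rho_3(G_n)$. I would handle the Weyl translations first and the symplectic part second.

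\emph{Step 1: commuting with $W_v^{\otimes3}$.} Using \eqref{rep:weyl-identities},
\[
W_v^{\otimes3}(W_{u_1}\otimes W_{u_2}\otimes W_{u_3})(W_v^{\otimes3})^\dagger=\omega^{[v,u_1+u_2+u_3]}\,W_{u_1}\otimes W_{u_2}\otimes W_{u_3}.
\]
Conjugation by $W_v^{\otimes 3}$ is therefore diagonal in the product basis. An operator $\mathcal A=\sum c_{u_1u_2u_3}W_{u_1}\otimes W_{u_2}\otimes W_{u_3}$ commutes with every $W_v^{\otimes3}$ exactly when $c$ is supported on triples with $[v,u_1+u_2+u_3]=0$ for all $v$. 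By nondegeneracy this means $u_1+u_2+u_3=0$. Writing $u_3=-u_1-u_2$, the translation commutant has basis $T_{(u,v)}:=W_u\otimes W_v\otimes W_{-u-v}$, indexed by ordered pairs $(u,v)\in V_n^2$.

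\emph{Step 2: the symplectic part.} Since $U_MW_uU_M^\dagger=W_{Mu}$ for the chosen Weil implementers, we get $U_M^{\otimes3}T_{(u,v)}U_M^{\dagger\otimes3}=T_{(Mu,Mv)}$, with no phase. So $\operatorname{Sp}(2n,3)$ acts on the $T$'s by permuting the labels through the diagonal action on pairs. An operator $\sum c_{(u,v)}T_{(u,v)}$ is then invariant under all $U_M^{\otimes3}$ if and only if $c$ is constant on $\operatorname{Sp}(2n,3)$-orbits of pairs, by linear independence of the $T$'s. Since $\rho_3(v,M)=(W_vU_M)^{\otimes3}$ is generated by the two types of elements, the full commutant is spanned by the orbit sums $\sum_{(u,v)\in\mathcal O}T_{(u,v)}$. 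These are linearly independent because their supports are disjoint. By Lemma~\ref{rep:pair-orbits} there are exactly eight orbits, which gives dimension $8$.

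\emph{Main point of care.} The delicate point is the phase bookkeeping in Step 2. One must confirm that the action is a pure permutation with no residual phases; a phase character on an orbit could kill some orbit sums. With the linearized Weil implementers, $U_MW_uU_M^{-1}=W_{Mu}$ holds exactly, so no phases arise.

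Also, the commutant of $\Cl(n,3)$ includes the scalar factor $z^3$, which is harmless as already noted. Finally, I would check that conjugation by $W_v^{\otimes3}$ really gives the single phase $\omega^{[v,\sum u_i]}$. This follows because the phases $\omega^{[v,u_i]}$ from the three factors multiply.
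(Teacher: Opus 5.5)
Your proposal is correct and follows the paper's proof essentially step for step. You reduce to the translation commutant spanned by $W_u\otimes W_v\otimes W_{-u-v}$ via nondegeneracy, note that $\operatorname{Sp}(2n,3)$ permutes these labels without phases, and count the eight orbit sums from Lemma~\ref{rep:pair-orbits}. One minor remark: the phase-freeness you single out as the delicate point does not depend on the Weil linearization, because conjugation by any implementer $zU_M$ already gives $W_{Mu}$ exactly by Lemma~\ref{rep:covariance}.
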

\begin{proof}
Every $\mathcal{A}\in\End(\KK)$ has an expansion
\[
\mathcal{A}=
\sum_{p,q,r\in V_n}
a_{p,q,r}\,
W_p\otimes W_q\otimes W_r,
\]
where the \(a_{p,q,r}\) are its coefficients in the Weyl tensor basis
\(
\{W_p\otimes W_q\otimes W_r:\, p,q,r\in V_n\}.
\)
We require that \(\mathcal{A}\) commute with every \(W_z^{\otimes 3}\), i.e.
\([W_z^{\otimes 3},\mathcal{A}]=0\), for any $z\in V_n$.
By \eqref{rep:weyl-identities}, we have
\[
W_z^{\otimes 3}
\left(
W_p\otimes W_q\otimes W_r
\right)
(W_z^\dagger)^{\otimes3}
=
\omega^{[z,p+q+r]}
\,
W_p\otimes W_q\otimes W_r.
\]
Whenever $a_{p,q,r}\ne0$, commutation requires
\(
\omega^{[z,p+q+r]}=1\)
for any $z\in V_n$.
By nondegeneracy of the symplectic form \([\cdot,\cdot]\), this is equivalent to
 \(r=-p-q.\)
Thus every commuting operator has the form
\begin{equation}\label{eq:pair-expansion}
\mathcal{A}=
\sum_{u,v\in V_n}
a_{u,v}\,
W_u\otimes W_v\otimes W_{-u-v}.
\end{equation}
So the three indices \(p,q,r\) are not independent; only two indices \(u,v\)
are free.
The commutant of the Pauli tensor-cube action is therefore parametrized by
coefficients \(a_{u,v}\), with \((u,v)\in V_n\times V_n\).
Clifford covariance imposes further constraints.
For every $M\in\operatorname{Sp}(2n,3)$, exact covariance sends the
basis operator indexed by $(u,v)$ to the one indexed by $(Mu,Mv)$,
with no additional phase (see Lemma~\ref{rep:covariance}). Hence an
operator $\mathcal{A}$ of the form \eqref{eq:pair-expansion} commutes with the
full Clifford action if and only if
\[
 a_{u,v}=a_{Mu,Mv} \quad\text{for every }M\in\operatorname{Sp}(2n,3).
\]
Thus its coefficients are constant on the
$\operatorname{Sp}(2n,3)$-orbits on $V_n\times V_n$. The orbit sums
have disjoint supports in the Weyl basis and are linearly independent.
Therefore
\[
 \dim\End_{\Cl(n,3)}(\KK)
 =\#\{\operatorname{Sp}(2n,3)\text{-orbits on }V_n\times V_n\}=8
\]
by Lemma~\ref{rep:pair-orbits}. This finishes the proof of
Proposition~\ref{rep:commutant-dimension}.
\end{proof}

\subsection{Ordinary three-copy Schur--Weyl duality}
\label{rep:classical-subsection}

We first recall the ordinary Schur--Weyl decomposition for three copies.
This provides a comparison with the decomposition of $\mathcal K$ under
the Clifford action considered above.
The symmetric group $S_3$ acts on $\mathcal K$ by permuting tensor
factors:
\(V_\pi(v_1\otimes v_2\otimes v_3)=v_{\pi^{-1}(1)}\otimes
v_{\pi^{-1}(2)}\otimes v_{\pi^{-1}(3)}.\)
The group $U(3^n)$ acts diagonally by
\(U\cdot(v_1\otimes v_2\otimes v_3)=Uv_1\otimes Uv_2\otimes Uv_3.\)
These two actions commute with each other, that is, for every
$\pi\in S_3$ and every $U\in U(3^n)$,
\(V_\pi U^{\otimes 3}=U^{\otimes 3}V_\pi.\)
Let
$$
\operatorname{End}_{U(3^n)}(\mathcal K)
=
\left\{
\mathcal{A}\in\operatorname{End}(\mathcal K):
\mathcal{A} U^{\otimes3}=U^{\otimes3}\mathcal{A},\ \forall U\in U(3^n)
\right\}.
$$
Then
\(\operatorname{End}_{U(3^n)}(\KK)=\operatorname{span}_{\mathbb C}\{V_\pi:\pi\in S_3\}.\)
In particular,
\begin{align}\label{E6}
\dim \operatorname{End}_{U(3^n)}(\KK)=6.
\end{align}
Define the central projections of the permutation algebra on $\mathcal K$ by
\begin{align}\label{eq:mixed-central-projections}
\Pi_{\mathrm{s}}
=
\frac{1}{6}\sum_{\pi\in S_3}V_\pi,
\quad
\Pi_{\mathrm{a}}
=
\frac{1}{6}\sum_{\pi\in S_3}
\operatorname{sgn}(\pi)V_\pi,
\quad
\Pi_{\mathrm{m}}
=
I-\Pi_{\mathrm{s}}-\Pi_{\mathrm{a}}.
\end{align}
Their ranges are the trivial, sign, and standard $S_3$-isotypic components:
\(\operatorname{Sym}^3\mathcal H\),
\(\bigwedge^3\mathcal H\) and
\(\mathbb S_{(2,1)}\mathcal H\otimes\mathrm{Std}\), respectively.

The irreducible representations of \(S_3\) consist of the trivial representation \(\mathbf{1}\), the sign representation \(\mathrm{sgn}\), and the 2-dimensional standard representation \(\mathrm{Std}\)  (see \cite{s04,v10} for background).
The double commutant theorem gives the usual three-copy
Schur--Weyl decomposition as follows:
\begin{align}\label{EDS}
\mathcal K
\cong
\left(\operatorname{Sym}^3\mathcal H\otimes\mathbf 1\right)
\oplus
\left(\wedge^3\mathcal H\otimes\operatorname{sgn}\right)
\oplus
\left(\mathbb S_{(2,1)}\mathcal H\otimes\operatorname{Std}\right).
\end{align}
Here the three spaces
\(
\operatorname{Sym}^3\mathcal H\),
\(\wedge^3\mathcal H\) and
\(\mathbb S_{(2,1)}\mathcal H\)
are irreducible and pairwise inequivalent $U(3^n)$-modules; see, for example, \cite{gnw21}.
For completeness, the mixed multiplicity space can be realized using
the transposition of the last two tensor factors. Let
\(
\mathcal H\otimes\operatorname{Sym}^2\mathcal H
\)
be the subspace fixed by the transposition of the last two tensor
factors. That transposition has fixed-space dimensions $1$, $0$, and $1$
in $\mathbf1$, $\sgn$, and $\mathrm{Std}$, respectively.
The full symmetrization map
\[
\Pi_{\mathrm{s}}:
\mathcal H\otimes\operatorname{Sym}^2\mathcal H
\to
\operatorname{Sym}^3\mathcal H
\]
is the projection onto the fully symmetric summand. Hence
\[
\mathbb S_{(2,1)}\mathcal H
=
\ker\left(
\Pi_{\mathrm{s}}:
\mathcal H\otimes\operatorname{Sym}^2\mathcal H
\to
\operatorname{Sym}^3\mathcal H
\right).
\]
The dimensions are
\begin{equation}\label{rep:classical-dimensions}
\left\{
  \begin{aligned}
  & \dim\Sym^3\HH=\binom{3^n+2}{3},\\
 &\dim\wedge^3\HH=\binom {3^n}3,\\
 &\dim\mathbb S_{(2,1)}\HH
 =3^n\binom{3^n+1}{2}-\binom{3^n+2}{3}.
  \end{aligned}
\right.
\end{equation}
Since the standard representation has dimension $2$, the mixed isotypic
component has dimension
\(2\dim\mathbb S_{(2,1)}\mathcal H.\)

\subsection{The Calderbank--Shor--Steane code projector}
\label{rep:css-subsection}
The Clifford commutant has dimension eight, whereas the permutation
algebra has dimension six. Thus the three permutation symmetry sectors
in \eqref{EDS} do not determine the Clifford decomposition. We use the
Calderbank--Shor--Steane (CSS) code projector of \cite[Example~4.26]{gnw21} to obtain the further
splitting.

Let
\begin{equation}\label{rep:css-definition}
|{\phi_x}\rangle:=\frac1{\sqrt3}\sum_{y\in\F_3}
                 |{x+y,y-x,y}\rangle,\quad
 p:=\sum_{x\in\F_3}|\phi_x\rangle\langle\phi_x|.
\end{equation}
The pairs $(x,y)$ give distinct triples, so $\phi_x$ are
orthonormal. They are fixed by $X^{\otimes3}$ and $Z^{\otimes3}$.
These order-three operators commute, and the orthogonal projector onto
their joint $+1$ eigenspace is
\[
 \frac19\sum_{a,b\in\F_3}(Z^{\otimes3})^a(X^{\otimes3})^b.
\]
This is a  CSS  code projector (see \cite{cs96,s96a,s96b} for more details).
Its trace is $3$, since all nonidentity terms have trace zero;
hence it equals $p$. After regrouping tensor factors by physical site, let $P:=p^{\otimes n}$. By the tensor product relation
$(a\otimes b)(c\otimes d)=(ac)\otimes (bd)$, \eqref{E2.1} and \eqref{rep:weyl-identities}, we obtain
\begin{equation}\label{rep:css-weyl-average}
 P=3^{-2n}\sum_{u\in V_n}W_u^{\otimes3},\quad \Tr P=3^n.
\end{equation}
The Weyl scalar phases disappear in tensor cubes. Under Clifford
conjugation each $W_u$ acquires a cube root phase and its label is
permuted, so \eqref{rep:css-weyl-average} proves that $P$ commutes with
every $\mathcal{C}^{\otimes3}$. It also commutes with every $V_\pi$.

Put $S=V_{(12)}$ and $T=V_{(123)}$. Swapping the first two entries
sends $\phi_x$ to $\phi_{-x}$. The cycle sends a summand of
\eqref{rep:css-definition} to $|{y,x+y,y-x}\rangle$, which is the same
sum after replacing $y$ by $y-x$. Thus every even permutation fixes
the code basis, and every odd permutation negates its label. On the
basis $|{\phi_{\boldsymbol x}}\rangle=\bigotimes_i|{\phi_{x_i}}\rangle$ of
$P\KK$, negation fixes exactly the zero label. Therefore
\begin{align}\label{eq:mixed-P-splitting}
 P\Pi_m=0,\ P\Pi_s=\tfrac12P(I+S),\ P\Pi_a=\tfrac12P(I-S).
\end{align}
Moreover,
$$\Tr(PS)=1,\  \operatorname{rank}(P\Pi_s)=\frac{3^n+1}{2},\
\operatorname{rank}(P\Pi_a)=\frac{3^n-1}{2}.$$
In the notation of the five sectors, these identities give
\begin{equation}\label{rep:five-projectors}
 Q_A=P\Pi_s,\quad Q_B=\Pi_s-Q_A,\quad
 Q_C=P\Pi_a,\quad Q_D=\Pi_a-Q_C,\quad Q_E=\Pi_m.
\end{equation}
They are orthogonal projections summing to $I$. Subtracting the first
and third ranks from \eqref{rep:classical-dimensions} yields
\begin{equation}\label{rep:rank-calculation}
\left\{
  \begin{aligned}
   & \operatorname{rank}Q_B
 =\frac{(3^n-1)(3^n+1)(3^n+3)}6,\\
    &\operatorname{rank}Q_D
 =\frac{(3^n-1)(3^n+1)(3^n-3)}6,\\
   &\operatorname{rank}Q_E=\frac{2\cdot3^n(3^{2n}-1)}3.
  \end{aligned}
\right.
\end{equation}
The following decomposition is given in
\cite[Example~4.26]{gnw21}. We derive it here from the orbit count
and the explicit projectors above.
\begin{theorem}\label{rep:five-irreducibles}
There are five pairwise inequivalent irreducible $\Cl(n,3)$-modules
$A$, $B$, $C$, $D$, $E$ with
\begin{equation}\label{rep:joint-decomposition}
 \KK\cong
 (A\otimes\mathbf1)\oplus(B\otimes\mathbf1)
 \oplus(C\otimes\sgn)\oplus(D\otimes\sgn)
 \oplus(E\otimes\mathrm{Std}).
\end{equation}
For the first four modules the dimensions equal the corresponding
sector ranks, while $\dim E=3^n(3^{2n}-1)/3$. Their multiplicities
as $\Cl(n,3)$-modules are $1,1,1,1,2$. Moreover,
\begin{align}
 \End_{\Cl(n,3)}(\KK)
 &=\operatorname{alg}_{\C}(P,V_\pi:\pi\in S_3)
   \cong\C^4\oplus M_2(\C),\label{rep:clifford-commutant}\\
 \End_{\Cl(n,3)\times S_3}(\KK)
 &=\bigoplus_{j=A,B,C,D,E}\C Q_j.\label{rep:joint-commutant}
\end{align}
Each summand in \eqref{rep:joint-decomposition} is irreducible for $\Cl(n,3)\times S_3$, and these $\Cl(n,3)\times S_3$-modules are inequivalent.
\end{theorem}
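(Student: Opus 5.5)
The plan is to combine the dimension count of Proposition~\ref{rep:commutant-dimension} with the explicit projectors \eqref{rep:five-projectors}. First, the algebra $\mathcal B:=\operatorname{alg}_{\C}(P,V_\pi:\pi\in S_3)$ is contained in $\End_{\Cl(n,3)}(\KK)$. The $V_\pi$ commute with every $U^{\otimes3}$, and $P$ commutes with every $\mathcal C^{\otimes3}$ by \eqref{rep:css-weyl-average}. Moreover $P$ commutes with each $V_\pi$, so $P$ is central in $\mathcal B$.

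Next I will compute the structure of $\mathcal B$. Since $P$ is a central projection commuting with $S_3$, we have $\mathcal B=P\mathcal B\oplus(I-P)\mathcal B$. By \eqref{eq:mixed-P-splitting}, $P\mathcal B$ is spanned by $P\Pi_s=Q_A$ and $P\Pi_a=Q_C$. Both are nonzero, since their ranks $(3^n\pm1)/2$ are positive, so $P\mathcal B\cong\C^2$.

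On $(I-P)\KK$ the group algebra $\C S_3$ acts through the central projections $Q_B=(I-P)\Pi_s$, $Q_D=(I-P)\Pi_a$ and $Q_E=\Pi_m$ (note $P\Pi_m=0$). The block on $\Pi_m$ is a copy of $M_2(\C)$ because the Std-isotypic component is nonzero. The image of $\C S_3$ is isomorphic to $\bigoplus$ of the matrix blocks of those irreps that occur, and $Q_B,Q_D\neq0$ by \eqref{rep:rank-calculation}; for $Q_D$ this uses $n\ge2$, so that $3^n>3$. Hence $(I-P)\mathcal B\cong\C\oplus\C\oplus M_2(\C)$, and $\dim\mathcal B=2+2+4=8$. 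Comparing with Proposition~\ref{rep:commutant-dimension} gives \eqref{rep:clifford-commutant}. This dimension match is the crux of the proof. The main care needed is to check that no summand collapses, which reduces exactly to the nonvanishing of all five ranks.

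With the commutant identified as $\C^4\oplus M_2(\C)$, the double commutant theorem, applied to the finite group $\rho_3(G_n)$, gives the rest. The minimal central idempotents $Q_A,\dots,Q_E$ are the isotypic projections. The Clifford representation has four inequivalent constituents of multiplicity one, with ranges $Q_j\KK$ for $j=A,B,C,D$, and one constituent $E$ of multiplicity $2$, with $Q_E\KK\cong E\otimes\C^2$. Thus $\dim E=\operatorname{rank}Q_E/2=3^n(3^{2n}-1)/3$.

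To obtain the $S_3$ labels, I will observe that $S_3$ acts on each $Q_j\KK$ through the commutant. On $Q_A,Q_B$ every $V_\pi$ acts by $1$, since these lie under $\Pi_s$, and on $Q_C,Q_D$ each $V_\pi$ acts by $\sgn\pi$. On $Q_E\KK$ the permutations act through the $M_2(\C)$ factor, that is, on the multiplicity space, via the Std representation, because $\Pi_m$ is the Std-isotypic projection. This proves \eqref{rep:joint-decomposition}.

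The joint commutant is the center of $\mathcal B$ intersected with the $S_3$-commutant. Inside $M_2(\C)$, commuting with the image of $S_3$ (all of $M_2$) forces scalars, so the joint commutant is $\bigoplus_j\C Q_j$. This gives \eqref{rep:joint-commutant}. Since this algebra has dimension five and is commutative, each summand of \eqref{rep:joint-decomposition} is irreducible for $\Cl(n,3)\times S_3$, and the summands are pairwise inequivalent.
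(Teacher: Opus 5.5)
Your proposal is correct and follows the paper's proof. Both show that $\operatorname{alg}_{\C}(P,V_\pi)\cong\C^4\oplus M_2(\C)$ lies in the Clifford commutant, match its dimension $8$ against Proposition~\ref{rep:commutant-dimension}, and read off the multiplicities, the $S_3$-labels and the joint commutant. The differences are only organizational: you split first by the central projection $P$ and invoke the double commutant theorem, whereas the paper splits by the $S_3$-sectors and argues irreducibility directly from the scalar corners of the commutant.
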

\begin{proof}
Let $\mathcal B=\operatorname{alg}_{\C}(P,V_\pi:\pi\in S_3)$.
It is contained in the Clifford commutant. On the symmetric sector,
$P$ separates the nonzero ranges of $Q_A$ and $Q_B$, while all
permutations act as the identity; thus this part of $\mathcal B$ is
$\C\oplus\C$. The same argument on the alternating sector gives
another $\C\oplus\C$. On the mixed sector $P$ vanishes, and the
permutations generate the full $M_2(\C)$ on the standard factor.
The central projections $\Pi_s,\Pi_a,\Pi_m$ belong to $\mathcal B$;
within the first two sectors $P$ supplies both further projections.
These blocks are therefore independent, so
\(
 \mathcal B\cong\C^4\oplus M_2(\C)\) and \(\dim\mathcal B=8.\)
Proposition~\ref{rep:commutant-dimension} proves that the entire
Clifford commutant has this same dimension. Inclusion and equality of
dimensions yield \eqref{rep:clifford-commutant}.
We spell out why this proves irreducibility, rather than only
invariance of the displayed subspaces. A unitary representation is
completely reducible because orthogonal complements of invariant
subspaces are invariant. On each of $Q_A\KK,Q_B\KK,Q_C\KK,Q_D\KK$
the commutant consists of scalar operators. A proper invariant subspace would
supply a nonscalar orthogonal projection in that corner, which is
impossible. On $Q_E\KK$ the Clifford group acts on the multiplicity factor
of its standard $S_3$-type. If this factor were reducible, its proper
invariant projection tensored with $I_2$ would enlarge the displayed
commutant beyond $I_E\otimes M_2(\C)$. Thus this factor is also
irreducible. If two distinct listed factors were equivalent, an
intertwiner, extended by zero, would create a nonzero off-diagonal
block in the commutant, contradicting its computed form. This proves
all five irreducibility and inequivalence assertions for the Clifford group and
the stated multiplicities.
Finally, imposing commutation with $S_3$ on the algebra
$\C^4\oplus M_2(\C)$ makes its last block scalar and leaves the first
four unchanged. This proves \eqref{rep:joint-commutant}. Its scalar
corners and absence of off-diagonal intertwiners prove the corresponding
product-group assertions by the same argument.
\end{proof}

\begin{remark}
By \eqref{E6} and Proposition~\ref{rep:commutant-dimension}, the
commutant of the Clifford action strictly contains
\(\operatorname{span}_{\mathbb C}\{V_\pi:\pi\in S_3\}\) for every $n\geq2$.
\end{remark}
\subsection{Primitive projectors and an explicit equivalence}
\label{rep:primitive-subsection}

The copy permutations are
\[
 S=V_{(12)}:\ |z_1,z_2,z_3\rangle\mapsto|z_2,z_1,z_3\rangle,
 \quad
 T=V_{(123)}:\ |z_1,z_2,z_3\rangle\mapsto|z_3,z_1,z_2\rangle,
\]
so that $S_3=\langle S,T\rangle$ with $S^2=T^3=I$ and $STS=T^{-1}$.
The sum of the spectral projections of $T$ for the eigenvalues
$\omega$ and $\omega^2$ is
\begin{equation}\label{rep:Pi_m-from-T}
 \Pi_m
 =\frac{I+\omega^2T+\omega T^2}{3}+\frac{I+\omega T+\omega^2T^2}{3}
 =\frac{2I-T-T^2}{3}.
\end{equation}

\begin{lemma}\label{lem:primitive-projectors}
Define $R_+=\tfrac12\Pi_m(I+S)$ and $
 R_-=\tfrac12\Pi_m(I-S).$
Then $R_+$ and $R_-$ are orthogonal projections with
\begin{equation}\label{eq:rank-Rpm}
 \operatorname{rank}R_+=\operatorname{rank}R_-
 =\frac{3^n(3^{2n}-1)}{3},
 \quad
 R_++R_-=Q_E .
\end{equation}
Each is primitive in the Clifford commutant, and $Q_A,Q_B,Q_C,Q_D,
R_+$ and $R_-$ are pairwise orthogonal projections summing to $I$.
\end{lemma}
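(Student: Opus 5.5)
The plan is to work entirely inside the permutation algebra, using that $\Pi_m$ is a central projection commuting with every $V_\pi$. The ingredients are Theorem~\ref{rep:five-irreducibles} for the commutant structure and \eqref{rep:Pi_m-from-T} for the explicit form of $\Pi_m$.

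\textbf{Projection property and sum.} Since $S$ is a self-adjoint unitary involution, $\tfrac12(I\pm S)$ are complementary orthogonal projections. Because $\Pi_m$ is central in $\operatorname{span}\{V_\pi\}$, it commutes with $S$. Hence the products $R_\pm=\Pi_m\cdot\tfrac12(I\pm S)$ are self-adjoint idempotents. They satisfy $R_+R_-=\tfrac14\Pi_m(I-S^2)=0$ and $R_++R_-=\Pi_m=Q_E$, the last equality by \eqref{rep:five-projectors}.

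\textbf{Orthogonality with the other sectors.} By \eqref{rep:five-projectors}, each of $Q_A,Q_B,Q_C,Q_D$ is dominated by $\Pi_s$ or $\Pi_a$. These are orthogonal to $\Pi_m$, so $Q_jR_\pm=Q_j\Pi_mR_\pm=0$. Together with $Q_A+Q_B+Q_C+Q_D+Q_E=I$, this gives that the six projections are pairwise orthogonal and sum to $I$.

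\textbf{Ranks.} One route is to compute $\operatorname{Tr}(\Pi_m S)$ using \eqref{rep:Pi_m-from-T}. We have $\Pi_mS=\tfrac13(2S-TS-T^2S)$, and $TS$, $T^2S$ are transpositions. Every transposition has trace $\dim\HH^2=3^{2n}$, so $\operatorname{Tr}(\Pi_mS)=\tfrac13(2-1-1)3^{2n}=0$. Therefore $\operatorname{rank}R_\pm=\tfrac12\operatorname{Tr}\Pi_m=\tfrac12\operatorname{rank}Q_E$. Equation \eqref{rep:rank-calculation} then gives $3^n(3^{2n}-1)/3$.

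Alternatively, one can argue via the decomposition $Q_E\KK\cong E\otimes\mathrm{Std}$. On this space $S$ acts as $I_E\otimes s$, where $s$ is a reflection on $\mathrm{Std}$ with eigenvalues $\pm1$. Hence $R_\pm=I_E\otimes(\text{rank-one projection})$, which gives rank $\dim E$ directly.

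\textbf{Primitivity.} By \eqref{rep:clifford-commutant}, the corner $Q_E\End_{\Cl(n,3)}(\KK)Q_E$ is $I_E\otimes M_2(\C)\cong M_2(\C)$. The previous description shows that $R_\pm$ correspond to rank-one idempotents $e_\pm$ of $M_2(\C)$. Now let $R'\le R_+$ be a nonzero projection in the commutant. Then $R'$ lies in $R_+\End_{\Cl(n,3)}(\KK)R_+\cong e_+M_2(\C)e_+=\C e_+$, so $R'=R_+$. The same argument applies to $R_-$.

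The main obstacle is making the identification $S|_{Q_E\KK}=I_E\otimes s$ rigorous. This follows from the Schur--Weyl structure \eqref{EDS}, since $S_3$ acts only on the $\mathrm{Std}$ tensor factor there. Once that is in place, the trace computation already confirms the rank independently, and primitivity is immediate.
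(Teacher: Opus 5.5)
Your proposal is correct and follows essentially the same route as the paper. The projection property comes from $\Pi_m$ commuting with $S$. The ranks come from $\operatorname{Tr}(\Pi_m S)=\tfrac13(2\operatorname{Tr}S-\operatorname{Tr}(TS)-\operatorname{Tr}(T^2S))=0$, since all three are transpositions. Primitivity comes from $R_\pm$ acting as rank-one projections on the $M_2(\mathbb C)$ multiplicity factor of the $E$-sector. The only differences are that you take $\operatorname{Tr}\Pi_m$ from \eqref{rep:rank-calculation} rather than recomputing it from $\operatorname{Tr}T=\operatorname{Tr}T^2=3^n$, and that you write out the orthogonality to $Q_A,\dots,Q_D$ and the corner argument $R_+\,\mathcal B\,R_+=\mathbb C R_+$ explicitly, where the paper leaves them implicit.
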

\begin{proof}
By \eqref{rep:Pi_m-from-T}, $\Pi_m$ is a projection commuting with
$S$; hence so are $R_\pm$, and
$\frac{I+S}{2}+\frac{I-S}{2}=I$ gives $R_++R_-=\Pi_m=Q_E$.
For the ranks, $\operatorname{Tr}\Pi_m=\frac13(2\cdot3^{3n}-\operatorname{Tr}T
-\operatorname{Tr}T^2)$; and $T,T^2$ are $3$-cycles, whose fixed computational basis vectors
are exactly the $3^n$ computational basis vectors with $z_1=z_2=z_3$, so
$\operatorname{Tr}T=\operatorname{Tr}T^2=3^n$. Therefore
$\operatorname{Tr}\Pi_m=\frac{2\cdot3^n(3^{2n}-1)}{3}$ and
$S\Pi_m=\Pi_mS$ gives
\begin{equation*}
 \operatorname{Tr}(\Pi_mS)=\tfrac13\bigl(2\operatorname{Tr}S
 -\operatorname{Tr}(TS)-\operatorname{Tr}(T^2S)\bigr)=0,
\end{equation*}
because $S$, $TS$, and $T^2S$ are transpositions, each fixing
$3^{2n}$ computational basis vectors. Thus
$\operatorname{Tr}S=\operatorname{Tr}(TS)=\operatorname{Tr}(T^2S)=3^{2n}$.
Hence
\[
 \operatorname{Tr}R_\pm
 =\tfrac12\bigl(\operatorname{Tr}\Pi_m\pm\operatorname{Tr}(\Pi_mS)\bigr)
 =\frac{3^n(3^{2n}-1)}{3}.
\]
On the $E$-sector the Clifford commutant acts on the
multiplicity factor as $M_2(\mathbb C)$, and each of $R_+$ and $R_-$
acts there as a rank-one projection. The last assertion follows
from $Q_A+Q_B+Q_C+Q_D+Q_E=I$ and \eqref{eq:rank-Rpm}.
\end{proof}
By Lemma~\ref{lem:primitive-projectors}, the ranges of
$Q_A,Q_B,Q_C,Q_D,R_+,R_-$ give an orthogonal decomposition into minimal
Clifford-invariant subspaces.

\begin{lemma}\label{lem:K-intertwiner}
Set $K=\frac{T-T^2}{\sqrt3}.$
Then
\begin{equation}\label{rep:intertwiner}
 K^\dagger=-K,\quad K^2=-\Pi_m,\quad K^\dagger K=\Pi_m,
 \quad SKS=-K,\quad KR_+=R_-K .
\end{equation}
Moreover, $K$ commutes with every $\mathcal{C}^{\otimes3}$, and its
restriction $R_+\KK\to R_-\KK$ is a unitary Clifford intertwiner with
inverse $-K$.
\end{lemma}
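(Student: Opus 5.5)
The plan is to treat $K$ as an element of the group algebra of $S_3$ acting on $\KK$, and to do all computations with the relations $S^2=T^3=I$, $STS=T^{-1}=T^2$.

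\emph{Adjoint.} Each $V_\pi$ is a permutation of computational basis vectors, hence unitary, so $T^\dagger=T^{-1}=T^2$. This gives $K^\dagger=(T^2-T)/\sqrt3=-K$.

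\emph{Square.} Next we expand $K^2=\frac13(T^2-2T^3+T^4)=\frac13(T^2-2I+T)$. By \eqref{rep:Pi_m-from-T} this equals $-\Pi_m$. Combining with the adjoint identity, $K^\dagger K=-K^2=\Pi_m$.

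\emph{Conjugation by $S$.} From $STS=T^2$ we get $ST^2S=(STS)^2=T^4=T$. Hence $SKS=(T^2-T)/\sqrt3=-K$, that is, $SK=-KS$.

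\emph{Commutation with $\Pi_m$.} Since $K$ is a polynomial in $T$ and $\Pi_m$ is a polynomial in $T$, they commute. Moreover $K\Pi_m=-K^3$, and $K^3=\frac{1}{3\sqrt3}(T-T^2)^3$. Expanding with $T^3=I$ gives
\[
(T-T^2)^3=T^3-3T^4+3T^5-T^6=I-3T+3T^2-I=-3(T-T^2),
\]
so $K^3=-K$. Therefore $K\Pi_m=\Pi_mK=K$.

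\emph{Intertwining of $R_\pm$.} Using $K\Pi_m=\Pi_mK$ and $K(I+S)=(I-S)K$, we get
\[
KR_+=\tfrac12 K\Pi_m(I+S)=\tfrac12\Pi_m(I-S)K=R_-K .
\]

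\emph{Clifford equivariance.} Every $V_\pi$ commutes with every $U^{\otimes3}$, in particular with each $\mathcal C^{\otimes3}$. Hence $K$, being a linear combination of permutations, commutes with every $\mathcal C^{\otimes3}$.

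\emph{Restriction to $R_+\KK$.} Since $KR_+=R_-K$, the operator $K$ maps $R_+\KK$ into $R_-\KK$. Symmetrically, $KR_-=R_+K$ (same computation with the signs of $S$ reversed), so $-K$ maps $R_-\KK$ into $R_+\KK$. On $R_+\KK\subseteq\Pi_m\KK$ we have
\[
(-K)K=-K^2=\Pi_m=I ,
\]
and likewise $K(-K)=I$ on $R_-\KK$. So the restriction is bijective with inverse $-K$. It is isometric because $K^\dagger K=\Pi_m$ acts as the identity on $\Pi_m\KK$. The restriction is therefore a unitary Clifford intertwiner.

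The only nonroutine point is the cube computation giving $K\Pi_m=K$; everything else is direct substitution of the relations $S^2=T^3=I$ and $STS=T^{-1}$.
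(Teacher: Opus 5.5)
Your proof is correct and follows the same sequence of steps as the paper's. The only real difference is in verification: you check $K^2=-\Pi_m$ and $K\Pi_m=K$ by expanding polynomials in $T$ with $T^3=I$, whereas the paper diagonalizes $T$, so that $K=0$ on the fixed space of $T$ and $K=\frac1{\sqrt3}\mathrm{diag}(\omega-\omega^2,\omega^2-\omega)$ on the standard part. Two further remarks: your intertwining step actually needs only $K\Pi_m=\Pi_mK$, so the cube identity is not used there, and your explicit check that $KR_-=R_+K$ shows $-K$ is a two-sided inverse onto $R_-\KK$ more completely than the paper's one-line argument.
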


\begin{proof}
On the $1$-eigenspace of $T$, $T=T^2=I$ and $K=0=\Pi_m$; on the
standard factor, diagonalize $T=\mathrm{diag}(\omega,\omega^2)$ and
compute $K=\frac{1}{\sqrt3}\,\mathrm{diag}(\omega-\omega^2,
\omega^2-\omega)$, so $K^\dagger=-K$ and
$K^2=\frac13(T-T^2)^2=-\Pi_m$. Hence $K^\dagger K=\Pi_m$, and $K$ is
unitary from $R_+\KK$ onto its image. The relation $SKS=-K$ is exactly
$STS^{-1}=T^{-1}$. Finally
\[
 KR_+=\tfrac12K\Pi_m(I+S)=\tfrac12K(I+S)
 =\tfrac12(I-S)K=R_-K,
\]
using $KS=-SK$ and $K\Pi_m=K$. Since both $T$ and $S$ commute with all
$\mathcal{C}^{\otimes3}$, so does $K$, and $\|Kv\|^2=\langle v,K^\dagger Kv
\rangle=\langle v,\Pi_mv\rangle=\|v\|^2$ on $\operatorname{ran}\Pi_m$.
The inverse of $K:\operatorname{ran}R_+\to\operatorname{ran}R_-$ is
$-K$, because $(-K)K=-K^2=\Pi_m$.
\end{proof}
\begin{corollary}\label{cor:standard-factor}
Put $E=R_+\KK$. Then
\begin{equation}\label{eq:En-tensor-C2}
 E\otimes\mathbb C^2\to Q_E\KK,
 \quad
 v\otimes e_+\mapsto v,\quad
 v\otimes e_-\mapsto Kv
\end{equation}
is a Clifford-module isomorphism, and in the basis $\{e_+,e_-\}$ the
copy permutations act only on the second factor, as
\begin{equation*}
 S=\begin{pmatrix}1&0\\0&-1\end{pmatrix},
 \quad
 T=\begin{pmatrix}-\tfrac12&-\tfrac{\sqrt3}{2}\\[2pt]
                  \tfrac{\sqrt3}{2}&-\tfrac12\end{pmatrix}.
\end{equation*}
\end{corollary}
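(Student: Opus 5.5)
The plan is to write the map in \eqref{eq:En-tensor-C2} as $\Phi(v\otimes e_++w\otimes e_-)=v+Kw$ for $v,w\in E=R_+\KK$. We then check three things: $\Phi$ is Clifford-equivariant, it is a bijection onto $Q_E\KK$, and the copy permutations act through the stated $2\times2$ matrices. Everything should follow from Lemma~\ref{lem:primitive-projectors}, Lemma~\ref{lem:K-intertwiner} and the formula \eqref{rep:Pi_m-from-T}.

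\emph{Equivariance and bijectivity.} $R_+$ lies in the Clifford commutant, so $E$ is a Clifford submodule. By Lemma~\ref{lem:K-intertwiner}, $K$ commutes with every $\mathcal C^{\otimes3}$. Hence $\Phi\bigl((\mathcal C^{\otimes3}\otimes I_2)x\bigr)=\mathcal C^{\otimes3}\Phi(x)$, where $\mathcal C^{\otimes3}$ acts on the first factor only. The relation $KR_+=R_-K$ shows that $K$ maps $E$ into $R_-\KK$. By Lemma~\ref{lem:K-intertwiner} this restriction is unitary onto $R_-\KK$ with inverse $-K$. Since $R_+\KK\perp R_-\KK$ and $R_++R_-=Q_E$ by Lemma~\ref{lem:primitive-projectors}, $\Phi$ is the direct sum of the identity $E\to R_+\KK$ and the isomorphism $K:E\to R_-\KK$. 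It is therefore a Clifford-module isomorphism onto $R_+\KK\oplus R_-\KK=Q_E\KK$.

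\emph{Action of $S$.} For $v\in E$ we have $Sv=v$, because $SR_+=\tfrac12\Pi_m(S+I)=R_+$ (using $S^2=I$ and $S\Pi_m=\Pi_mS$). From $SKS=-K$ we get $SK=-KS$, so $SKv=-KSv=-Kv$. Thus $S=\mathrm{diag}(1,-1)$ in the basis $\{e_+,e_-\}$.

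\emph{Action of $T$.} This is the only step needing a small computation, and I expect it to be the main point to get right. By \eqref{rep:Pi_m-from-T}, $\operatorname{ran}\Pi_m$ is the sum of the $\omega$- and $\omega^2$-eigenspaces of $T$. On this range $T$ therefore satisfies $T^2+T+I=0$, so $T^2=-I-T$ there. Substituting into the definition of $K$ gives
\[
K=\frac{T-T^2}{\sqrt3}=\frac{2T+I}{\sqrt3},\qquad\text{that is,}\qquad T=\frac{\sqrt3\,K-I}{2}\quad\text{on }\operatorname{ran}\Pi_m .
\]
For $v\in E\subset\operatorname{ran}\Pi_m$ this gives $Tv=-\tfrac12 v+\tfrac{\sqrt3}{2}Kv$. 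Using $K^2=-\Pi_m$ and $\Pi_m v=v$,
\[
TKv=\frac{\sqrt3\,K^2v-Kv}{2}=-\frac{\sqrt3}{2}v-\frac12 Kv .
\]
So $Te_+=-\tfrac12e_++\tfrac{\sqrt3}{2}e_-$ and $Te_-=-\tfrac{\sqrt3}{2}e_+-\tfrac12e_-$. These are exactly the columns of the displayed matrix. Since $S$ and $T$ generate $S_3$, all copy permutations act only on the second tensor factor. This action commutes with the Clifford action on the first factor, which is consistent with Theorem~\ref{rep:five-irreducibles}.
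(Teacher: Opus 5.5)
Your proposal is correct and follows the same route as the paper. Bijectivity and Clifford equivariance come from Lemma~\ref{lem:K-intertwiner} together with $KR_+=R_-K$ and $R_++R_-=Q_E$; the $S$-action comes from $SKS=-K$; and the $T$-action comes from $T^2+T+I=0$ on $\operatorname{ran}\Pi_m$, where your formula $T=(\sqrt3K-I)/2$ is just an explicit form of the paper's identities $(T+\tfrac12I)^2=-\tfrac34I$ and $(T+\tfrac12I)K=-\tfrac{\sqrt3}{2}I$.
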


\begin{proof}
The map \eqref{eq:En-tensor-C2} is bijective and $\mathbb C$-linear by
Lemma~\ref{lem:K-intertwiner}. By \eqref{rep:intertwiner},
$S(v)=v$ and $S(Kv)=-Kv$. On $\operatorname{ran}\Pi_m$, the identity
$T+T^2=-I$ and the definition of $K$ give
$T(v)=-\frac12v+\frac{\sqrt3}{2}Kv$,
$T(Kv)=-\frac{\sqrt3}{2}v-\frac12Kv$; equivalently, on
$\operatorname{ran}\Pi_m$,
\[
 (T+\tfrac12 I)^2=-\tfrac34 I, \quad
 (T+\tfrac12 I)K=-\tfrac{\sqrt3}{2}I.
\]
These identities give the displayed matrices. The Clifford group acts on
the first factor and $S_3$ on the second, so the two actions commute.
\end{proof}

\begin{remark}\label{rem:compare-CSS}
The projectors $R_\pm$ are distinct from the symmetric and alternating
subprojections $P\Pi_s$ and $P\Pi_a$ of the CSS code projector $P$: by
$\operatorname{Tr}(P\Pi_s)=\frac{\operatorname{Tr}P+\operatorname{Tr}(PS)}{2}
=\frac{3^n+1}{2}$ and $\operatorname{Tr}(P\Pi_a)=\frac{3^n-1}{2}$, those
ranges have dimensions $\frac{3^n+1}{2}$ and $\frac{3^n-1}{2}$, whereas
each of $R_+\KK$ and $R_-\KK$ has dimension $\frac{3^n(3^{2n}-1)}{3}$; the two families of
Clifford modules are inequivalent, and their ranges are orthogonal.
\end{remark}
\begin{remark}\label{rem:small-modules}
Choose one representative from each pair
$\{\boldsymbol x,-\boldsymbol x\}$ in
$\mathbb F_3^n\setminus\{0\}$; there are $\frac{3^n-1}{2}$ such pairs.
Since
$S|\phi_{\boldsymbol x}\rangle=|\phi_{-\boldsymbol x}\rangle$ and
$T|\phi_{\boldsymbol x}\rangle=|\phi_{\boldsymbol x}\rangle$, the code
space $P\KK$ splits orthogonally as
\begin{equation*}
 Q_A\KK
 =\operatorname{span}\Bigl\{|\phi_{\boldsymbol 0}\rangle,\
   \frac{|\phi_{\boldsymbol x}\rangle+|\phi_{-\boldsymbol x}\rangle}{\sqrt2}\Bigr\},
 \quad
 Q_C\KK
 =\operatorname{span}\Bigl\{\frac{|\phi_{\boldsymbol x}\rangle
   -|\phi_{-\boldsymbol x}\rangle}{\sqrt2}\Bigr\}.
\end{equation*}
Thus the two code sectors have dimensions
\begin{equation*}
 \dim(Q_A\KK)=\frac{3^n+1}{2},
 \quad
 \dim(Q_C\KK)=\frac{3^n-1}{2}.
\end{equation*}
The modules $B$ and $D$ are the orthogonal complements of $A$
and $C$ in $\Sym^3\HH$ and $\bigwedge^3\HH$, respectively, with
dimensions given in \eqref{rep:rank-calculation}. In particular,
$\dim D=0$ occurs exactly when $n=1$; for $n\geq1$ one has the
consistency identity
\begin{equation*}
 \dim A+\dim B+\dim C+\dim D+2\dim E
 =3^{3n},
 \quad
 \dim E=\frac{3^n(3^{2n}-1)}{3}.
\end{equation*}
\end{remark}
The identities $\operatorname{Tr}(P\Pi_s)=\frac{3^n+1}{2}$ and
$\operatorname{Tr}(P\Pi_a)=\frac{3^n-1}{2}$ are forced by
$\operatorname{Tr}P=3^n$ and $\operatorname{Tr}(PS)=1$. They give the dimensions
$\frac{3^n\pm1}{2}$ of the intersections of the CSS code with the symmetric
and alternating tensor powers. The ambient tensor powers have the larger
dimensions $\binom{3^n+2}{3}$ and $\binom{3^n}{3}$, respectively.
\subsection{Characters and numerical instances}
\label{rep:characters-subsection}
The projectors give character formulas for unitary Clifford matrices;
module dimensions alone do not specify rows of an entire character
table. For $\mathcal{C}\in\Cl(n,3)$ set
\[
 t_k(\mathcal{C})=\Tr(\mathcal{C}^k),\quad
 \alpha(\mathcal{C})=\Tr(P\mathcal{C}^{\otimes3}),\quad
 \beta(\mathcal{C})=\Tr(PS\mathcal{C}^{\otimes3}).
\]
For every operator $\mathcal{B}$ on $\HH$ and every $\pi\in S_3$,
\begin{equation}\label{eq:mixed-cycle-trace}
 \Tr\bigl(\mathcal{B}^{\otimes3}V_\pi\bigr)
 =\prod_{d=1}^{3}\bigl(\Tr \mathcal{B}^{d}\bigr)^{m_d(\pi)},
\end{equation}
where $m_d(\pi)$ is the number of cycles of $\pi$ of length $d$.
Indeed, choose an orthonormal basis $\{|e_a\rangle\}$ of $\HH$.  For
basis vectors, the definition of $V_\pi$ gives
\[
 \langle e_{a_1}e_{a_2}e_{a_3}|\,\mathcal{B}^{\otimes3}V_\pi\,|e_{a_1}e_{a_2}e_{a_3}\rangle
 =\prod_{i=1}^{3}\langle e_{a_i}|\mathcal{B}|e_{a_{\pi^{-1}(i)}}\rangle .
\]
Summation over $a_1,a_2,a_3$ factorizes over the cycles of $\pi$: a cycle
$(i_1i_2\cdots i_d)$ of $\pi^{-1}$ contributes
\[
 \sum_{a_{i_1},\dots,a_{i_d}}
 \langle e_{a_{i_1}}|\mathcal{B}|e_{a_{i_2}}\rangle
 \langle e_{a_{i_2}}|\mathcal{B}|e_{a_{i_3}}\rangle\cdots
 \langle e_{a_{i_d}}|\mathcal{B}|e_{a_{i_1}}\rangle=\Tr(\mathcal{B}^{d}),
\]
and distinct cycles involve disjoint sets of summation indices.  This
proves \eqref{eq:mixed-cycle-trace}.
Applying this identity to the sector projections yields the irreducible
characters
\begin{equation*}
\left\{\begin{aligned}
    \chi_A(\mathcal{C})&=\tfrac12\bigl(\alpha(\mathcal{C})+\beta(\mathcal{C})\bigr),\\
     \chi_B(\mathcal{C})&=\tfrac16(t_1^3+3t_1t_2+2t_3)-\chi_A(\mathcal{C}),\\
    \chi_C(\mathcal{C})&=\tfrac12\bigl(\alpha(\mathcal{C})-\beta(\mathcal{C})\bigr),\\
    \chi_D(\mathcal{C})&=\tfrac16(t_1^3-3t_1t_2+2t_3)-\chi_C(\mathcal{C}),\\
     \chi_E(\mathcal{C})&=\tfrac13(t_1^3-t_3).
  \end{aligned}
\right.
\end{equation*}
Here $t_k=t_k(\mathcal{C})$. For the last formula, the trace on $Q_E\KK$ is
$2(t_1^3-t_3)/3$, and each of its two equivalent constituents contributes
half. More precisely, $\Tr(\mathcal{C}^{\otimes3}\Pi_mS)=0$ shows directly that
$R_+$ and $R_-$ have these identical characters.
Equation~\eqref{rep:css-weyl-average} further gives formulas involving
only $3^n\times 3^n$ matrices:
\begin{equation*}
 \begin{aligned}
 \alpha(\mathcal{C})&=3^{-2n}\sum_{u\in V_n}\bigl(\Tr(W_u\mathcal{C})\bigr)^3,\\
 \beta(\mathcal{C})&=3^{-2n}\sum_{u\in V_n}
           \Tr\bigl((W_u\mathcal{C})^2\bigr)\Tr(W_u\mathcal{C}).
 \end{aligned}
\end{equation*}
The square in the second formula is $W_u\mathcal{C}W_u\mathcal{C}$. In particular
\[
 (\Tr \mathcal{C})^3=\chi_A(\mathcal{C})+\chi_B(\mathcal{C})+
\chi_C(\mathcal{C})+\chi_D(\mathcal{C})+2\chi_E(\mathcal{C}).
\]
All five characters satisfy $\chi_j(z\mathcal{C})=z^3\chi_j(\mathcal{C})$. This records
the scalar action explicitly. For the specified Weil lift in
Proposition~\ref{rep:cube-descent}, substituting $\mathcal{C}=W_vU_M$ gives the genuine
irreducible characters of $G_n$. This substitution is independent of the
remaining cube-root scalar in $\widetilde G_n$; an arbitrary unitary
phase change multiplies each character by its cube.

For $n=2,3,4$ the irreducible degrees are
\begin{equation*}
\begin{array}{c|rrrrr|r}
 n&\dim A&\dim B&\dim C&\dim D&\dim E&\dim\KK\\\hline
 2&5&160&4&80&240&729\\
 3&14&3640&13&2912&6552&19683\\
 4&41&91840&40&85280&177120&531441\\\hline
 \text{multiplicity}&1&1&1&1&2&
\end{array}
\end{equation*}
In each row the first four entries plus twice the fifth equal $3^{3n}$.
The sector $Q_E\KK$ has twice the fifth displayed dimension. At $n=1$
the orbit of linearly independent isotropic pairs is absent and $Q_D=0$, which
explains why all five sectors are nonzero only when $n\geq2$.

\section{Finite projective Clifford groups and exact character data}
\label{sec:finite-tables}

Write
\(\mathsf S_n:=\operatorname{Sp}(2n,3)\) and \(G_n=V_n\rtimes\mathsf S_n\)
with $G_0=\mathsf S_0=1$. The group $G_n$ acts on phase space by
affine symplectic transformations. The tensor-cube representation
$\mathcal{C}\mapsto \mathcal{C}^{\otimes3}$ of the chosen finite unitary lift descends to
$G_n$ by Proposition~\ref{rep:cube-descent}. Its irreducible constituents
therefore correspond to ordinary irreducible characters of $G_n$.
A larger central unitary lift has a different character table and may
have a different order and number of irreducible characters.

As in the qubit character-table computations of
Lee, Yu, Peng, and Lai~\cite{peng25}, we quotient out scalar phases;
here the underlying field is $\mathbb F_3$. Their conjugation action
on a matrix space and the tensor-power action on a Hilbert space are
different representations. The row and column labels in
Appendix~\ref{app:character-table-n2} are fixed by the accompanying
GAP data.

\subsection{Orders and the little-group construction}

For $n\geq1$, counting symplectic bases gives
\[
 |\mathsf S_n|=(3^{2n}-1)\,3^{2n-1}\,|\mathsf S_{n-1}|,
 \quad
 |G_n|=3^{2n}|\mathsf S_n|.
\]
In particular,
$|\mathsf S_1|=24$.
Write $\langle u,v\rangle=[u,v]$ for the symplectic form and fix a nontrivial additive character $\psi$ of $\mathbb F_3$. The
symplectic form identifies $V_n$ with its character group
$\widehat{V_n}$ by $
 v\mapsto\lambda_v$, $ \lambda_v(x)=\psi\bigl(\langle v,x\rangle\bigr).$
For $n\geq1$, the group $\mathsf S_n$ has exactly two orbits on $\widehat{V_n}$,
namely $\{1\}$ and the set of nontrivial characters; equivalently, its
action on $V_n$ has the two orbits $\{0\}$ and
$V_n\setminus\{0\}$, since every nonzero vector extends to a symplectic
basis.

\begin{lemma}\label{lem:little-group}
Let $H_n$ be the stabilizer in $\mathsf S_n$ of a fixed nonzero vector
$v\in V_n$, so that $[\mathsf S_n:H_n]=3^{2n}-1$. The character
$\lambda_v$ extends to the linear character
$\widetilde\lambda_v(x,h)=\lambda_v(x)$
of $V_n\rtimes H_n$. Let $\operatorname{Inf}_{\mathsf S_n}^{G_n}$ be the inflation along
$G_n\twoheadrightarrow G_n/V_n\cong\mathsf S_n$, i.e.\
$\operatorname{Inf}\sigma(x,M)=\sigma(M)$. Then the irreducible characters are parametrized by the disjoint union
\begin{equation}\label{eq:little-group}
 \operatorname{Irr}(G_n)
 =\operatorname{Inf}_{\mathsf S_n}^{G_n}\operatorname{Irr}(\mathsf S_n)
 \;\sqcup\;
 \Bigl\{
 \operatorname{Ind}_{V_n\rtimes H_n}^{G_n}
 \bigl(\widetilde\lambda_v\otimes
 \operatorname{Inf}_{H_n}^{V_n\rtimes H_n}\tau\bigr):
 \tau\in\operatorname{Irr}(H_n)\Bigr\},
\end{equation}
where $\tau$ ranges over the irreducible characters of the stabilizer
$H_n$ and is inflated to $V_n\rtimes H_n$. The induced character corresponding to
$\tau$ has degree
\begin{equation}\label{eq:induced-degree}
 (3^{2n}-1)\,\tau(1).
\end{equation}
\end{lemma}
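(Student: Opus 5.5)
This is the Mackey--Wigner ``little group'' method for a semidirect product $G_n=V_n\rtimes\mathsf S_n$ with abelian normal subgroup $V_n$; see Serre's Linear Representations of Finite Groups, \S8.2. I will verify its hypotheses in this setting and then read off the parametrization and the degrees.

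\textbf{Orbits and stabilizers.} First I identify the action of $\mathsf S_n$ on $\widehat{V_n}$. For $M\in\mathsf S_n$, the conjugate character ${}^M\lambda_v(x)=\lambda_v(M^{-1}x)=\psi(\langle v,M^{-1}x\rangle)=\psi(\langle Mv,x\rangle)=\lambda_{Mv}(x)$. Hence $v\mapsto\lambda_v$ is an $\mathsf S_n$-equivariant bijection $V_n\to\widehat{V_n}$; it is a bijection by nondegeneracy. The orbits on $\widehat{V_n}$ are therefore $\{1\}$ and the nontrivial characters, because $\mathsf S_n$ is transitive on $V_n\setminus\{0\}$ (proof of Lemma~\ref{rep:pair-orbits}). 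The stabilizer of $\lambda_v$ equals the stabilizer $H_n$ of $v$. The orbit--stabilizer theorem gives $[\mathsf S_n:H_n]=3^{2n}-1$. For the trivial orbit the stabilizer is all of $\mathsf S_n$.

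\textbf{Extension of the characters.} Next I check that $\widetilde\lambda_v(x,h)=\lambda_v(x)$ is a linear character of $V_n\rtimes H_n$. Using $(x,h)(y,k)=(x+hy,hk)$, we get $\lambda_v(x+hy)=\lambda_v(x)\lambda_{h^{-1}v}(y)=\lambda_v(x)\lambda_v(y)$, because $h^{-1}v=v$. The trivial character of $V_n$ extends in the same way to $G_n$. This $H_n$-invariance is the only nonformal point, and it holds because $h$ fixes $v$.

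\textbf{Applying Mackey's theorem.} Serre's Proposition~25 states the following. Fix one character $\chi_i$ in each $\mathsf S_n$-orbit, with stabilizer $H_i$, and extend $\chi_i$ to $V_n\rtimes H_i$. Then the characters $\operatorname{Ind}_{V_n\rtimes H_i}^{G_n}(\chi_i\otimes\operatorname{Inf}\tau)$, with $\tau\in\operatorname{Irr}(H_i)$, are irreducible, and they are pairwise distinct across all pairs $(i,\tau)$. Every irreducible character of $G_n$ arises this way.

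For the trivial orbit, the induction is trivial and the characters are the inflations $\operatorname{Inf}_{\mathsf S_n}^{G_n}\operatorname{Irr}(\mathsf S_n)$. For the nontrivial orbit we obtain exactly the second family in \eqref{eq:little-group}. The two families are disjoint: restricted to $V_n$, the first family is a multiple of the trivial character. The second family restricts by Mackey's formula to a multiple of the sum of the nontrivial $\lambda_w$.

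\textbf{Degrees.} The degree of the induced character is $[G_n:V_n\rtimes H_n]\,\tau(1)=[\mathsf S_n:H_n]\,\tau(1)=(3^{2n}-1)\tau(1)$, which is \eqref{eq:induced-degree}.

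If I wanted to avoid citing Serre, the step needing the most care would be completeness of the list. I would prove it by the degree-sum count:
\[
\sum_{\sigma}\sigma(1)^2+(3^{2n}-1)^2\sum_{\tau}\tau(1)^2=|\mathsf S_n|+(3^{2n}-1)^2|H_n|=|\mathsf S_n|\bigl(1+(3^{2n}-1)\bigr)=3^{2n}|\mathsf S_n|=|G_n|.
\]
Together with irreducibility and distinctness, this shows the list exhausts $\operatorname{Irr}(G_n)$. Irreducibility and distinctness themselves follow from Mackey's irreducibility criterion, since the induced character restricts to $V_n$ as a multiple of a single orbit sum.
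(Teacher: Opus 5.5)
Your proposal is correct and follows essentially the paper's route: check that $\widetilde\lambda_v$ is multiplicative because $H_n$ fixes $v$, identify the two $\mathsf S_n$-orbits on $\widehat{V_n}$, invoke the little-group theorem (your citation is Serre, Proposition~25; the paper uses Clifford correspondence), and read off the degree $[\mathsf S_n:H_n]\,\tau(1)$. One caveat concerns only your optional citation-free aside: restriction to $V_n$ separates the two families but does not separate different $\tau$ over the same orbit, so distinctness there needs Frobenius reciprocity together with Mackey's double-coset formula (only the trivial double coset contributes a $\lambda_v$-component), which is exactly what the paper supplies by noting that the $\lambda_v$-isotypic subspace recovers the inducing representation.
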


\begin{proof}
Since $H_n$ fixes $v$, the map
$\widetilde\lambda_v(x,h)=\lambda_v(x)$ is multiplicative:
\[
 \widetilde\lambda_v\bigl((x,h)(y,g)\bigr)
 =\lambda_v(x+hy)=\lambda_v(x)\lambda_v(hy)
 =\lambda_v(x)\lambda_v(y)
 =\widetilde\lambda_v(x,h)\widetilde\lambda_v(y,g),
\]
where the third equality holds because $h\in H_n$ fixes $\lambda_v$. The orbit of
$\lambda_v$ under $\mathsf S_n$ is $\Lambda=\widehat{V_n}\setminus\{1\}$,
of size $3^{2n}-1$. The general little-group theorem parametrizes
$\operatorname{Irr}(G_n)$ by the $\mathsf S_n$-orbits on
$\widehat{V_n}$ together with the irreducible characters of the
corresponding stabilizer subgroups: over the orbit $\{1\}$ the stabilizer
is all of $\mathsf S_n$ and the irreducibles are the inflations of
$\operatorname{Irr}(\mathsf S_n)$, while over $\Lambda$ they are the
inductions in \eqref{eq:little-group}. The degree in
\eqref{eq:induced-degree} is
$[G_n:V_n\rtimes H_n]\cdot\tau(1)=(3^{2n}-1)\tau(1)$. Distinct
$\tau\in\operatorname{Irr}(H_n)$ give distinct induced irreducibles
by Clifford correspondence: the $\lambda_v$-isotypic subspace
recovers the inducing representation. This is the standard little-group
construction; see also \cite[Lemma~2.2]{fg21}.
\end{proof}

For explicit character evaluation, let $\Lambda$ be the orbit of
$\lambda_v$ in $\widehat{V_n}$. For each $\lambda\in\Lambda$ choose
$t_\lambda\in\mathsf S_n$ with
\begin{equation*}
 t_\lambda\cdot\lambda_v=\lambda,
 \quad
 (M\cdot\lambda)(x)=\lambda(M^{-1}x).
\end{equation*}
Then the character induced from $\tau\in\operatorname{Irr}(H_n)$ is
\begin{equation*}
 \chi_\tau(x,M)
 =\sum_{\substack{\lambda\in\Lambda\\ M\cdot\lambda=\lambda}}
 \lambda(x)\,\tau\bigl(t_\lambda^{-1}Mt_\lambda\bigr),
\end{equation*}
where $t_\lambda^{-1}Mt_\lambda\in H_n$ whenever
$M\cdot\lambda=\lambda$. A different choice of $t_\lambda$ changes
this element by conjugation inside $H_n$, so the character value is independent of this choice. For $\theta\in\operatorname{Irr}(\mathsf S_n)$, its inflation satisfies
$\bigl(\operatorname{Inf}_{\mathsf S_n}^{G_n}\theta\bigr)(x,M)=\theta(M)$. Thus the little-group construction gives both
a parametrization of all irreducible characters of $G_n$ and an exact
formula for their values.

\subsection{The vector stabilizer and a recursive degree formula}

Choose $e$ to be the fixed nonzero vector defining $H_n$, choose $f$
with $\langle e,f\rangle=1$, and put
$W=(\mathbb F_3e\oplus\mathbb F_3f)^\perp$. Then
\begin{equation*}
 V_n=\mathbb F_3e\oplus W\oplus\mathbb F_3f,
 \quad \langle e,f\rangle=1,\quad \dim W=2n-2 .
\end{equation*}
For $v\in W$ and $t\in\mathbb F_3$, define
\begin{equation*}
 u(v,t)(a,w,b)
 =\bigl(a+\langle v,w\rangle+tb,\;w+bv,\;b\bigr).
\end{equation*}

\begin{lemma}\label{lem:Heisenberg}
Each $u(v,t)$ is an element of $\mathsf S_n$ fixing $e$, and
\begin{equation}\label{eq:Heisenberg-law}
 u(v,t)\,u(v',t')
 =u\bigl(v+v',\,t+t'+\langle v,v'\rangle\bigr).
\end{equation}
Thus the $u(v,t)$ form a Heisenberg group $N_n$ of order
$3^{2n-1}$ with centre $Z=\{u(0,t):t\in\mathbb F_3\}$ of order three.
\end{lemma}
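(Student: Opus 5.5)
The plan is to verify everything in the coordinates $x=ae+w+bf$ with $a,b\in\F_3$ and $w\in W$, where $u(v,t)$ acts by the displayed formula.

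\textbf{Step 1: the form in coordinates.} Since $\langle e,f\rangle=1$, $\langle e,e\rangle=\langle f,f\rangle=0$ and $W\perp e,f$, the symplectic form becomes
\[
\langle (a,w,b),(a',w',b')\rangle=ab'-ba'+\langle w,w'\rangle .
\]

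\textbf{Step 2: $u(v,t)\in\mathsf S_n$ and it fixes $e$.} Linearity of $u(v,t)$ is clear from the formula. Write $x_1=u(v,t)x$ and $x_1'=u(v,t)x'$, and expand $\langle x_1,x_1'\rangle$ with the formula of Step 1. After expansion the extra terms cancel in pairs:
\begin{itemize}
\item $t\,bb'$ against $-t\,bb'$;
\item $b'\langle v,w\rangle$ against $b'\langle w,v\rangle$, by antisymmetry;
\item $-b\langle v,w'\rangle$ against $b\langle v,w'\rangle$;
\item $bb'\langle v,v\rangle$, which vanishes by alternation.
\end{itemize}
What remains is $\langle x,x'\rangle$. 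Preservation of a nondegenerate form forces injectivity, so $u(v,t)$ is invertible and lies in $\mathsf S_n$. Taking $(a,w,b)=(1,0,0)$ shows that $e$ is fixed.

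\textbf{Step 3: the composition law.} Apply first $u(v',t')$ to $(a,w,b)$, obtaining $(a+\langle v',w\rangle+t'b,\;w+bv',\;b)$. Then apply $u(v,t)$. The first coordinate becomes
\[
a+\langle v+v',w\rangle+\bigl(t+t'+\langle v,v'\rangle\bigr)b,
\]
and the middle coordinate becomes $w+b(v+v')$. This is exactly $u\bigl(v+v',t+t'+\langle v,v'\rangle\bigr)$, which proves \eqref{eq:Heisenberg-law}.

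\textbf{Step 4: group structure, order and centre.} By \eqref{eq:Heisenberg-law}, $u(0,0)=I$ and $u(v,t)^{-1}=u(-v,-t)$, so the $u(v,t)$ form a subgroup $N_n$. The map $(v,t)\mapsto u(v,t)$ is injective: the image of $f$ is $(t,v,1)$, which recovers both $v$ and $t$. Hence $|N_n|=3^{2n-2}\cdot3=3^{2n-1}$.

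Comparing $u(v,t)u(v',t')$ with $u(v',t')u(v,t)$, the two products differ only in the central parameter, by $\langle v,v'\rangle-\langle v',v\rangle=2\langle v,v'\rangle$. So $u(v,t)$ is central exactly when $\langle v,\cdot\rangle$ vanishes on $W$. Because $W$ is the orthogonal complement of a nondegenerate plane, the form restricted to $W$ is nondegenerate, which forces $v=0$. Therefore the centre is $Z=\{u(0,t)\}$, of order three. For $n=1$ we have $W=0$ and $N_1=Z$, consistent with the statement.

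The only real difficulty is keeping the signs consistent in the Step 2 cancellation and the order of composition in Step 3; everything else is direct bookkeeping.
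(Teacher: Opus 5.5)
Your proof is correct and follows essentially the same route as the paper. The steps match: the same coordinate expansion shows the form is preserved, the composition is computed the same way, injectivity comes from $u(v,t)f=(t,v,1)$, and the centre comes from the discrepancy $2\langle v,v'\rangle$ in the central parameter together with nondegeneracy of the form on $W$. In that last step you implicitly use that $2$ is invertible in $\mathbb F_3$, which is worth stating.
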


\begin{proof}
For $x=(a,w,b)$ and $y=(a',w',b')$,
\begin{align*}
 \langle u(v,t)x,u(v,t)y\rangle
 &=\bigl(a+\langle v,w\rangle+tb\bigr)b'
 -\bigl(a'+\langle v,w'\rangle+tb'\bigr)b
 +\langle w+bv,\,w'+b'v\rangle\\
 &=ab'-a'b+\langle w,w'\rangle ,
\end{align*}
so $u(v,t)$ preserves the symplectic form and fixes $e$.
Composition gives
\begin{align*}
 u(v,t)u(v',t')(a,w,b)
 &=u(v,t)\bigl(a+\langle v',w\rangle+t'b,\;w+bv',\;b\bigr)\\
 &=u(v+v',t+t'+\langle v,v'\rangle)(a,w,b),
\end{align*}
which is \eqref{eq:Heisenberg-law}. Since $u(v,t)f=(t,v,1)$, distinct pairs $(v,t)$ give distinct
transformations. They therefore form a group of order
$|W|\cdot3=3^{2n-1}$. Its commutators are
\[
 [u(v,t),u(v',t')]=u(0,2\langle v,v'\rangle),
 \quad [g,h]=ghg^{-1}h^{-1}.
\]
Nondegeneracy of the symplectic form on $W$ shows that the centre is
$Z=\{u(0,t)\}$, of order three. For $n=1$, $W=0$ and $N_1=Z$ is abelian.
\end{proof}

The embedding
\begin{equation*}
 \mathsf S_{n-1}\hookrightarrow H_n,\quad
 M(a,w,b)=(a,Mw,b),
\end{equation*}
satisfies $Mu(v,t)M^{-1}=u(Mv,t)$ and gives
\begin{equation}\label{eq:H-structure}
 H_n=N_n\rtimes\mathsf S_{n-1},
 \quad
 H_n/Z\cong G_{n-1}.
\end{equation}
Indeed the displayed subgroup of $H_n$ has order
\[
 3^{2n-1}|\mathsf S_{n-1}|
 =\frac{|\mathsf S_n|}{3^{2n}-1}=|H_n|,
\]
and hence equals $H_n$; the quotient by $Z$ is
$W\rtimes\mathsf S_{n-1}\cong V_{n-1}\rtimes\mathsf S_{n-1}=G_{n-1}$.
For $n=1$, this notation means $N_1=Z\cong C_3$.

The following degree formula refines the class-count recursion obtained
from the same stabilizer construction in \cite[Lemma~5.4]{fg21}.

\begin{theorem}\label{thm:degree-recursion}
Let $\mathcal S_n$ denote the multiset of irreducible character degrees
of $\mathsf S_n$, and $\mathcal D_n$ that of $G_n$. Let $\mathcal H_n$
denote the multiset of irreducible character degrees of the stabilizer
$H_n$ in Lemma~\ref{lem:little-group}.
For $n\geq1$,
\begin{equation}\label{eq:H-recursion}
 \mathcal H_n
 =\mathcal D_{n-1}\ \sqcup\
  \bigsqcup_{\nu=1}^{2}\,3^{n-1}\mathcal S_{n-1},
\end{equation}
and
\begin{equation}\label{eq:D-recursion}
 \mathcal D_n
 =\mathcal S_n\ \sqcup\ (3^{2n}-1)\,\mathcal H_n .
\end{equation}
Here multiplication of a multiset by an integer scales each of its entries,
and $\sqcup$ retains multiplicities. In particular,
$\mathcal H_1=\mathcal D_0\sqcup\mathcal S_0\sqcup\mathcal S_0
=\{1,1,1\}$ and
$\mathcal S_0=\mathcal D_0=\{1\}$.
\end{theorem}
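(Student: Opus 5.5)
The recursion \eqref{eq:D-recursion} follows directly from Lemma~\ref{lem:little-group}. The irreducible characters of $G_n$ split into two families. The first is the inflations of $\operatorname{Irr}(\mathsf S_n)$, which keep their degrees and contribute $\mathcal S_n$. The second is the induced characters, one for each $\tau\in\operatorname{Irr}(H_n)$, of degree $(3^{2n}-1)\tau(1)$ by \eqref{eq:induced-degree}; they contribute $(3^{2n}-1)\mathcal H_n$. The parametrization in \eqref{eq:little-group} is a disjoint union and distinct $\tau$ give distinct characters, so multiplicities are preserved. The real work is therefore \eqref{eq:H-recursion}, which I would prove by a second application of Clifford theory, this time to the normal subgroup $N_n\trianglelefteq H_n=N_n\rtimes\mathsf S_{n-1}$ from \eqref{eq:H-structure}.

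For $H_n$ I would first split $\operatorname{Irr}(H_n)$ according to the central character on $Z$, which is well defined on each irreducible by Schur's lemma because $Z$ is central in $H_n$. To see this, note that $Mu(0,t)M^{-1}=u(0,t)$ for $M$ in the embedded $\mathsf S_{n-1}$, and that $Z$ is central in $N_n$. Characters trivial on $Z$ are exactly the inflations from $H_n/Z\cong G_{n-1}$, so they contribute $\mathcal D_{n-1}$. For each of the two nontrivial characters $\zeta_\nu$ of $Z$ ($\nu=1,2$), the Stone--von Neumann theorem for the Heisenberg group $N_n$ (Lemma~\ref{lem:Heisenberg}, with $W$ symplectic of dimension $2n-2$) gives a unique irreducible $\sigma_\nu$ of $N_n$ with central character $\zeta_\nu$, and $\sigma_\nu(1)=3^{n-1}$. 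Since $\mathsf S_{n-1}$ fixes $Z$ pointwise, it fixes $\zeta_\nu$, hence it fixes the isomorphism class of $\sigma_\nu$. So $\sigma_\nu$ is $H_n$-invariant.

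The key step, and the one I expect to be the main obstacle, is to \emph{extend} $\sigma_\nu$ to a genuine representation of $H_n$. In general an invariant representation only extends projectively, with an obstruction cocycle. Here I would take the extension to be the finite Weil representation of $\mathsf S_{n-1}=\operatorname{Sp}(2n-2,3)$ acting on the Schrödinger model of $\sigma_\nu$. The linearization theorem \cite[Theorem~1.2]{gh09}, already used in Section~\ref{sec:foundations} for the same purpose, makes this an honest representation $\widehat\sigma_\nu$ of $N_n\rtimes\mathsf S_{n-1}$. Concretely, I would identify $N_n$ with the Heisenberg group whose nontrivial-central-character representation is the Weyl representation $W_v\mapsto$ (rescaled) $W_v$ on $(\mathbb C^3)^{\otimes(n-1)}$, twisted by $\nu$ via a Galois or dual choice. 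The conjugation rule $Mu(v,t)M^{-1}=u(Mv,t)$ then matches the covariance $U_MW_vU_M^{-1}=W_{Mv}$. As a fallback I would use a coprimality or Sylow argument: the Schur multiplier of $\operatorname{Sp}(2m,3)$ is trivial except in small cases. I would not lean on that, however, because of the exceptional multipliers for $\operatorname{Sp}(2,3)$ and $\operatorname{Sp}(4,3)$.

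Given the extension, Gallagher's theorem says that the irreducibles of $H_n$ lying over $\sigma_\nu$ are exactly $\widehat\sigma_\nu\otimes\operatorname{Inf}\theta$ for $\theta\in\operatorname{Irr}(\mathsf S_{n-1})$, that these are distinct, and that they have degree $3^{n-1}\theta(1)$. This contributes $3^{n-1}\mathcal S_{n-1}$ for each $\nu$. Every irreducible of $H_n$ lies over exactly one central character of $Z$, so the three families are disjoint and exhaust $\operatorname{Irr}(H_n)$, which proves \eqref{eq:H-recursion}. As a sanity check I would verify $\sum d^2=|H_n|$ using $|G_{n-1}|+2\cdot3^{2n-2}|\mathsf S_{n-1}|=3^{2n-2}|\mathsf S_{n-1}|\cdot3=|H_n|$, which holds since $|H_n|=3^{2n-1}|\mathsf S_{n-1}|$. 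For $n=1$, $N_1=Z\cong C_3$ and $\mathsf S_0=1$, so Stone--von Neumann is trivial and $\mathcal H_1=\{1,1,1\}$.
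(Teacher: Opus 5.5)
Your proposal is correct and follows essentially the same route as the paper. The recursion for $\mathcal D_n$ is read off from Lemma~\ref{lem:little-group}, and the recursion for $\mathcal H_n$ comes from splitting $\operatorname{Irr}(H_n)$ by central character on $Z$: the trivial character gives the inflations from $H_n/Z\cong G_{n-1}$, and each of the two nontrivial characters is handled by Stone--von Neumann, the Weil-representation extension and Gallagher's theorem. Your fallback remark about exceptional multipliers is inaccurate but unused. Both $\operatorname{Sp}(2,3)$ and $\operatorname{Sp}(4,3)$ have trivial Schur multiplier; the latter is the full covering group of $\operatorname{PSp}(4,3)\cong U_4(2)$.
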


\begin{proof}

\textbf{Step 1: the recursion for $\mathcal H_n$.} Let
$\mathrm{Inf}:\operatorname{Irr}(H_n/Z)\to\operatorname{Irr}(H_n)$ be
inflation along $H_n\to H_n/Z$. Its image consists exactly of the
irreducibles of $H_n$ that are trivial on $Z$, and
$H_n/Z\cong G_{n-1}$ by \eqref{eq:H-structure}; hence this part
contributes the multiset $\mathcal D_{n-1}$.
For the remaining irreducibles, let $\zeta$ be one of the two
nontrivial characters of $Z$. By the finite Stone--von Neumann
theorem, $N_n$ has, up to equivalence, a unique irreducible representation $\eta_\zeta$
with central character $\zeta$, of degree
$[N_n:Z]^{1/2}=3^{n-1}$. Since the base field has odd characteristic,
the finite Weil representation gives an extension of $\eta_\zeta$
to a linear representation
$\widetilde\eta_\zeta$ of $N_n\rtimes\mathsf S_{n-1}$; see~\cite[Theorems~1.1 and~1.2]{gh09}. For $n=1$, this extension is
simply the character $\zeta$ of $N_1=H_1=Z$. Since $\mathsf S_{n-1}$
fixes $Z$ pointwise, $\eta_\zeta$ is invariant up to equivalence.
Clifford theory for the normal subgroup $N_n$ therefore gives
\begin{equation*}
 \operatorname{Irr}(H_n\mid\zeta)
 =\Bigl\{\widetilde\eta_\zeta\otimes
 \operatorname{Inf}_{\mathsf S_{n-1}}^{H_n}\theta:
 \theta\in\operatorname{Irr}(\mathsf S_{n-1})\Bigr\},
\end{equation*}
where $\operatorname{Irr}(H_n\mid\zeta)$ denotes the irreducible
characters whose restrictions to $Z$ contain $\zeta$. Their degrees are
$3^{n-1}\theta(1)$. Summing over the two choices of $\zeta$
gives the second term of \eqref{eq:H-recursion}.

\textbf{Step 2: the recursion for $\mathcal D_n$.} Insert
\eqref{eq:H-recursion} into Lemma~\ref{lem:little-group}: the
trivial orbit contributes $\mathcal S_n$, and the nontrivial orbit,
of size $3^{2n}-1$, contributes $(3^{2n}-1)\mathcal H_n$ by
\eqref{eq:induced-degree}. This is \eqref{eq:D-recursion}.

\textbf{Step 3: consistency checks.} The multiset
\eqref{eq:H-recursion} satisfies
\[
 \sum_{d\in\mathcal H_n}d^2
 =\sum_{d\in\mathcal D_{n-1}}d^2+2\cdot3^{2n-2}\sum_{d\in\mathcal S_{n-1}}d^2
 =|G_{n-1}|+2\cdot3^{2n-2}|\mathsf S_{n-1}|
 =3^{2n-1}|\mathsf S_{n-1}|=|H_n|,
\]
and \eqref{eq:D-recursion} satisfies
\[
 \sum_{d\in\mathcal D_n}d^2
 =|\mathsf S_n|+(3^{2n}-1)^2|H_n|
 =|\mathsf S_n|+(3^{2n}-1)|\mathsf S_n|
 =3^{2n}|\mathsf S_n|=|G_n|,
\]
using $[\mathsf S_n:H_n]=3^{2n}-1$ and
$|G_n|=3^{2n}|\mathsf S_n|$. Both identities hold for every $n\geq1$;
as a check, for $n=1$ one obtains
$\mathcal H_1=\{1,1,1\}$ and
\[
 \mathcal D_1=\{1,1,1,2,2,2,3\}\ \sqcup\ 8\{1,1,1\}
 =\{1,1,1,2,2,2,3,8,8,8\},
\]
whose squared degrees sum to $1+1+1+4+4+4+9+3\cdot64=216=|G_1|$.
\end{proof}

\begin{remark}\label{rem:recursion-warning}
Substituting \eqref{eq:H-recursion} into \eqref{eq:D-recursion} gives
\[
\begin{aligned}
 \mathcal D_n={}&\mathcal S_n\sqcup(3^{2n}-1)\mathcal D_{n-1}\sqcup\bigsqcup_{\nu=1}^2((3^{2n}-1)3^{n-1})\mathcal S_{n-1}.
\end{aligned}
\]
This equivalent form satisfies the same degree-square identity.
\end{remark}

\section{Density operators and invariants}\label{sec:arbitrary}
\subsection{Density operators}
A density operator on a finite-dimensional Hilbert space is a linear
operator $\sigma$ satisfying
\begin{align}\label{eq:r}
\sigma=\sigma^\dagger,\quad \sigma\succeq0,\quad \Tr(\sigma)=1.
\end{align}
Here $\sigma\succeq0$ means that
$\langle x,\sigma x\rangle\geq0$ for every vector $x$.
The set of density operators on $\mathcal{K}$ will be denoted by
\[
\mathcal D(\mathcal{K})
 =
\{\sigma\in\End(\mathcal{K}):\sigma=\sigma^\dagger,\ \sigma\succeq0,\
\Tr(\sigma)=1\}.
\]
Let $Q_A,Q_B,Q_C,Q_D$ and $Q_E$ be the five mutually orthogonal
projectors from \eqref{rep:five-projectors}. Then by Theorem \ref{rep:five-irreducibles}, we have the irreducible $\Cl(n,3)\times S_3$-module decomposition
\[
\mathcal{K}=\bigoplus_{j=A}^E Q_j\mathcal{K}.
\]
A density operator invariant under the joint action of $\Cl(n,3)$ and $S_3$ has the form
\begin{equation}\label{eq:invariant-states}
 \sigma_{\boldsymbol w}
 =
 \sum_{j=A}^E \frac{w_j}{r_j}Q_j,\quad \boldsymbol w=(w_A,w_B,w_C,w_D,w_E),
 \quad
 w_j\geq0,\quad
 \sum_{j=A}^Ew_j=1.
\end{equation}
Indeed, Schur's lemma implies that the restriction of $\sigma$ to
each irreducible joint component is a scalar multiple of the identity.
The trace condition gives the normalization factor $r_j^{-1}$.
Conversely, every probability vector $\boldsymbol w$
defines an invariant density operator through
\eqref{eq:invariant-states}.

For a general density operator, the only restrictions are as in \eqref{eq:r}.
All diagonal and off-diagonal blocks may occur, subject to these
conditions; in particular the invariant states \eqref{eq:invariant-states}
form a proper subset of $\mathcal D(\mathcal K)$.
\subsection{The joint twirl and linear invariants}
For $\sigma\in\mathcal D(\mathcal K)$, define $w_j(\sigma)=\Tr(Q_j\sigma)$ for $j\in\{A,B,C,D,E\}$. These weights satisfy
\[
w_j(\sigma)\geq0,\quad
\sum_{j=A}^Ew_j(\sigma)=1.
\]
The joint twirl is the trace-preserving conditional expectation
\begin{equation}\label{eq:twirl}
 \mathcal E(\sigma)
 =
 \mathbb E_{\mathcal{C},\pi}
 \left[
 \mathcal{C}^{\otimes3}V_\pi\sigma V_\pi^\dagger
 \mathcal{C}^{\dagger\otimes3}
 \right]
 =
 \sum_{j=A}^E
 \frac{w_j(\sigma)}{r_j}Q_j.
\end{equation}
Here $\pi$ is uniform on $S_3$, independently of a uniform element of
the finite projective Clifford group. The choice of unitary
representative $\mathcal{C}$ does not affect conjugation.
Thus the five weights determine the joint twirl completely.
The weights are linear polynomial invariants of the joint conjugation
action.
\begin{remark}\label{RV}
The five linear invariants $w_j(\sigma)$ for
$j\in\{A,B,C,D,E\}$ do not determine an arbitrary state
$\sigma\in\mathcal D(\KK)$ up to unitary conjugation, even though they do
determine its joint twirl.  The simplest obstruction occurs inside the
$A$ sector as in Theorem \ref{rep:five-irreducibles}.  Recall the one-site code vector of
\eqref{rep:css-definition},
\[
 |\phi_0\rangle=\frac{1}{\sqrt3}\bigl(|000\rangle+|111\rangle
 +|222\rangle\bigr),
\]
which is fixed by every permutation of the three copies, and set
\[
 |v\rangle
 =\bigotimes_{i=1}^{n}|\phi_0\rangle,
 \quad
 \sigma_1=|v\rangle\langle v|,
 \quad
 \sigma_2=\frac{Q_A}{r_A}.
\]
The vector $|v\rangle$ is a unit vector in
the range of $Q_A$. Indeed, $p|\phi_0\rangle=|\phi_0\rangle$ gives
$P|v\rangle=|v\rangle$, and $V_\pi|v\rangle=|v\rangle$ for every
$\pi\in S_3$ gives $\Pi_s|v\rangle=|v\rangle$, so that
$Q_A|v\rangle=P\Pi_s|v\rangle=|v\rangle$.  Thus $\sigma_1$ is a pure state
concentrated on the $A$ sector, while $\sigma_2$ is the maximally mixed
state of that sector, that is, the normalized projection $Q_A$. In
particular, $\sigma_2$ is invariant under the joint action.  The two states
have the same five weights,
\begin{equation}\label{eq:collision-weights}
 w_A(\sigma_1)=w_A(\sigma_2)=1,\quad
 w_B=w_C=w_D=w_E=0,
\end{equation}
because $Q_A|v\rangle=|v\rangle$, $Q_j\succeq0$ and $\sum_jQ_j=I$ force
$Q_j|v\rangle=0$ for $j\neq A$, while $\Tr(Q_jQ_A)=r_A\delta_{jA}$.
Thus, the two states have the same joint twirl,
\[
 \mathcal E(\sigma_1)=\mathcal E(\sigma_2)=\frac{Q_A}{r_A}=\sigma_2 ,
\]
as \eqref{eq:twirl} and \eqref{eq:collision-weights} show.  Since the
twirl is the Hilbert--Schmidt orthogonal projection onto the joint
commutant, every measurement whose effects commute with the joint action
therefore yields the same outcome distribution on $\sigma_1$ and on
$\sigma_2$.  Their spectra differ, as reflected in their purities:
\begin{equation}\label{eq:collision-spectra}
 \Tr(\sigma_1^2)=1,\quad
 \Tr(\sigma_2^2)=\frac1{r_A}=\frac{2}{3^n+1}.
\end{equation}
Indeed, $\sigma_1$ has rank one, whereas $\sigma_2$ is maximally mixed on a subspace of dimension $r_A=\tfrac12(3^n+1)$, with every nonzero eigenvalue equal to $1/r_A$.  Since unitary conjugation preserves the spectrum, there is
no unitary $U$ on $\KK$ with $U\sigma_1U^\dagger=\sigma_2$, and hence no Clifford
operator $\mathcal{C}$ on $\HH$ with $\mathcal{C}^{\otimes3}\sigma_1\mathcal{C}^{\dagger\otimes3}=\sigma_2$ either.  In particular, the
degree-two invariant polynomial $\Tr(\sigma^2)$ takes the two different
values \eqref{eq:collision-spectra} on $\sigma_1$ and $\sigma_2$. Hence the Clifford-invariant
polynomials on $\mathcal D(\KK)$ are not all polynomials in the five
linear weights.
\end{remark}
\subsection{Spectral quantities and distances for invariant
states}\label{sec:spectral}
For $\mathcal{A}\in\End(\KK)$, let $|\mathcal{A}|:=\sqrt{\mathcal{A}^\dagger\mathcal{A}}$ be its absolute value and
$\|\mathcal{A}\|_1:=\Tr|\mathcal{A}|$ its trace norm.
Define the trace distance on $\mathcal D(\KK)$ by
\begin{equation*}
 \Dtr(\sigma,\tau):=\tfrac12\|\sigma-\tau\|_1,
 \quad \sigma,\tau\in\mathcal D(\KK).
\end{equation*}

For an invariant state $\sigma_{\boldsymbol w}$ as in \eqref{eq:invariant-states}, the following formulas follow from the orthogonality relations $Q_jQ_\ell=\delta_{j\ell}Q_j$ and the rank identities $\Tr(Q_j)=r_j$. In particular,
$$
Q_j\sigma_{\boldsymbol w}Q_\ell
=\delta_{j\ell}\frac{w_j}{r_j}Q_j,
\quad
\sigma_{\boldsymbol w}^{\,k}
=\sum_{j=A}^E\left(\frac{w_j}{r_j}\right)^kQ_j,
\quad k\geq1.
$$
Taking traces term by term gives
\begin{align*}
 \Tr(\sigma_{\boldsymbol w}^k)
=\sum_{j=A}^E\frac{w_j^k}{r_j^{\,k-1}},
   \quad k\geq1
\end{align*}
and
\begin{align}\label{eq:tv}
\Dtr(\sigma_{\boldsymbol w},\sigma_{\boldsymbol v})
=\frac12\sum_{j=A}^E|w_j-v_j|.
\end{align}
Identity \eqref{eq:tv} follows from the additivity of the trace norm over orthogonal direct sums. Indeed, the operator
$$
\sigma_{\boldsymbol w}-\sigma_{\boldsymbol v}
=\sum_{j=A}^E\frac{w_j-v_j}{r_j}Q_j
$$
is block diagonal with respect to the orthogonal decomposition of $\KK$, so
$$
\|\sigma_{\boldsymbol w}-\sigma_{\boldsymbol v}\|_1
=\sum_{j=A}^E\frac{|w_j-v_j|}{r_j}\|Q_j\|_1
=\sum_{j=A}^E|w_j-v_j|.
$$
\begin{remark}\label{rem:tv-vs-weights}
Let
$$
\mathcal I
=
\left\{
\sigma_{\boldsymbol w}
=
\sum_{j=A}^{E}\frac{w_j}{r_j}Q_j
\;\middle|\;
w_j\geq0,\ \sum_{j=A}^{E}w_j=1
\right\}
$$
be the family of jointly invariant density operators defined in \eqref{eq:invariant-states}. This family is the simplex with vertices $Q_A/r_A,\dots,Q_E/r_E$. Identity \eqref{eq:tv} holds on $\mathcal I$ but does not extend to arbitrary states in $\mathcal D(\KK)$. Indeed, the states $\sigma_1=|v\rangle\langle v|$ and $\sigma_2=Q_A/r_A$ from Remark~\ref{RV} have the same weight vector $(1,0,0,0,0)$, yet
$\Dtr(\sigma_1,\sigma_2)=1-r_A^{-1}>0.$ Here $\sigma_1\notin\mathcal I$ and $\sigma_2\in\mathcal I$, while their joint twirls coincide
$
\mathcal E(\sigma_1)=\mathcal E(\sigma_2)=Q_A/r_A,
$
where $\mathcal E$ is defined in \eqref{eq:twirl}.
For every $\sigma\in\mathcal D(\KK)$, the twirl satisfies $\mathcal E(\sigma)\in\mathcal I$ and preserves the sector weights. Thus, for any $\sigma,\tau\in\mathcal D(\KK)$,
$$
\Dtr\bigl(\mathcal E(\sigma),\mathcal E(\tau)\bigr)
=
\frac12\sum_{j=A}^{E}|w_j(\sigma)-w_j(\tau)|
\leq
\Dtr(\sigma,\tau),
$$
where the inequality follows from contractivity of the trace distance. Thus the sector weights determine the trace distance between the twirled states and provide a lower bound for the trace distance between the original states, but do not determine the latter in general.
\end{remark}

\subsection{Tensor cubes of density operators}

Now let
\(\sigma=\rho^{\otimes3}\), \(\rho\succeq0\) and \(\Tr(\rho)=1.\)
Since $\rho^{\otimes3}$ commutes with every copy permutation, its
Clifford twirl equals the joint twirl $\mathcal E(\rho^{\otimes3})$ as in
\eqref{eq:twirl}.  Let
\begin{equation}\label{eq:four}
 t_2=\Tr(\rho^2),\quad
 t_3=\Tr(\rho^3),\quad
 u=\Tr(P\rho^{\otimes3}),\quad
 v=\Tr(PS\rho^{\otimes3}).
\end{equation}
The operator $PS$ is Hermitian, so $v$ is real, although it need not be
nonnegative.

\begin{theorem}\label{thm:mixed}
For $j\in\{A,B,C,D,E\}$, let $w_j=\Tr(Q_j\rho^{\otimes3})$.
Then
\begin{equation}\label{eq:weights}
\left\{
\begin{aligned}
 w_A&=\frac{u+v}{2},\\
 w_B&=\frac{1+3t_2+2t_3}{6}-\frac{u+v}{2},\\
 w_C&=\frac{u-v}{2},\\
 w_D&=\frac{1-3t_2+2t_3}{6}-\frac{u-v}{2},\\
 w_E&=\frac{2(1-t_3)}{3}.
\end{aligned}
\right.
\end{equation}
Moreover, the third moment of the Clifford orbit of $\rho$ is determined by
the four scalar quantities $(t_2,t_3,u,v)$.
\end{theorem}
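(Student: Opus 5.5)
The plan is to expand each sector projector from \eqref{rep:five-projectors} in terms of $P$, $S$ and the permutation operators $V_\pi$, and then evaluate $\Tr(Q_j\rho^{\otimes3})$ with the cycle-trace identity \eqref{eq:mixed-cycle-trace} applied to $\mathcal B=\rho$, using $\Tr\rho=1$.

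\textbf{Step 1: the permutation traces.} From \eqref{eq:mixed-cycle-trace}, the identity gives $1$. Each of the three transpositions gives $t_2\cdot1=t_2$, and each of the two $3$-cycles gives $t_3$. Averaging over $S_3$ as in \eqref{eq:mixed-central-projections} yields
\[
\Tr(\Pi_s\rho^{\otimes3})=\tfrac16(1+3t_2+2t_3),\qquad
\Tr(\Pi_a\rho^{\otimes3})=\tfrac16(1-3t_2+2t_3).
\]
For $\Pi_m$ we use \eqref{rep:Pi_m-from-T}, which gives $\Tr(\Pi_m\rho^{\otimes3})=\tfrac13(2-2t_3)$. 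This is already $w_E$, since $Q_E=\Pi_m$.

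\textbf{Step 2: the code sectors.} By \eqref{eq:mixed-P-splitting}, $Q_A=P\Pi_s=\tfrac12P(I+S)$ and $Q_C=\tfrac12P(I-S)$. Hence
\[
w_A=\tfrac12\bigl(\Tr(P\rho^{\otimes3})+\Tr(PS\rho^{\otimes3})\bigr)=\tfrac{u+v}{2},
\qquad
w_C=\tfrac{u-v}{2}.
\]
The remaining weights follow by subtraction, $w_B=\Tr(\Pi_s\rho^{\otimes3})-w_A$ and $w_D=\Tr(\Pi_a\rho^{\otimes3})-w_C$, which gives \eqref{eq:weights}.

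A small point to check here is that $v$ is real, so that these expressions are real weights. Since $P$ commutes with $S$ and both are Hermitian, $PS$ is Hermitian, as noted before the theorem.

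\textbf{Step 3: the third moment.} The third moment of the Clifford orbit is the average of $(\mathcal C\rho\mathcal C^\dagger)^{\otimes3}$ over the orbit. This equals the Clifford twirl of $\rho^{\otimes3}$, where the scalar phase is irrelevant and the average is over the finite group $G_n$ via $\rho_3$. Because $\rho^{\otimes3}$ commutes with every $V_\pi$, averaging additionally over $S_3$ changes nothing, so the Clifford twirl coincides with the joint twirl $\mathcal E(\rho^{\otimes3})$. By \eqref{eq:twirl}, $\mathcal E(\rho^{\otimes3})=\sum_j \frac{w_j}{r_j}Q_j$, and Step 2 expresses each $w_j$ through $(t_2,t_3,u,v)$.

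The only genuinely subtle step is the identification of the Clifford twirl with the joint twirl. One needs the twirl over $\Cl(n,3)$ alone to be the $S_3$-invariant part of the projection onto the Clifford commutant. It then lies in $\End_{\Cl(n,3)\times S_3}(\KK)$, which is spanned by the $Q_j$ by \eqref{rep:joint-commutant}. This holds because the Clifford twirl commutes with conjugation by $V_\pi$, and $\rho^{\otimes3}$ is $V_\pi$-invariant. The rest is routine bookkeeping.
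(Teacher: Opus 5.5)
Your proposal is correct and follows the paper's proof essentially step for step. You evaluate the $S_3$ central projections by the cycle-trace identity and split the code sectors as $Q_A=\tfrac12P(I+S)$, $Q_C=\tfrac12P(I-S)$ via \eqref{eq:mixed-P-splitting}. You then obtain $w_B,w_D$ by subtraction and identify the Clifford twirl of $\rho^{\otimes3}$ with the joint twirl \eqref{eq:twirl} using $V_\pi\rho^{\otimes3}V_\pi^\dagger=\rho^{\otimes3}$. The only cosmetic differences are these: you read off $w_E$ directly from $\Pi_m=(2I-T-T^2)/3$ rather than as $1-\Pr(\mathrm{sym})-\Pr(\mathrm{alt})$, and you cite \eqref{eq:mixed-P-splitting} instead of re-verifying the permutation action on the code vectors.
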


\begin{proof}
Let $\Pi_s$, $\Pi_a$ and $\Pi_m$ be as in \eqref{eq:mixed-central-projections}.
They are pairwise orthogonal
projections with $\Pi_s+\Pi_a+\Pi_m=I$, corresponding to the trivial,
sign, and two-dimensional irreducible representations of $S_3$.  Hence
$\{\Pi_s,\Pi_a,\Pi_m\}$ is a projective measurement on
$\KK=\HH^{\otimes3}$, and we write
$\Pr(\mathrm{sym})$, $\Pr(\mathrm{alt})$, $\Pr(\mathrm{mix})$ for the
probabilities of its three outcomes on the state $\rho^{\otimes3}$, i.e.
\begin{equation*}
 \Pr(\mathrm{sym}):=\Tr\bigl(\Pi_s\rho^{\otimes3}\bigr), \quad
 \Pr(\mathrm{alt}):=\Tr\bigl(\Pi_a\rho^{\otimes3}\bigr), \quad
 \Pr(\mathrm{mix}):=\Tr\bigl(\Pi_m\rho^{\otimes3}\bigr),
\end{equation*}
so that the three numbers are nonnegative and sum to one.  Then
\begin{equation}\label{eq:mixed-blocksums}
 w_A+w_B=\Pr(\mathrm{sym}), \quad
 w_C+w_D=\Pr(\mathrm{alt}), \quad
 w_E=\Pr(\mathrm{mix}).
\end{equation}

The proof has five steps.

{\bf{Step 1.}}
By \eqref{eq:mixed-cycle-trace} and $\Tr\rho=1$, the three cycle types give
\begin{equation*}
\begin{array}{c|ccc}
 \pi & e & \text{trans\-position} & \text{$3$-cycle}\\ \hline
 m_d(\pi) & m_1=3 & m_1=m_2=1 & m_3=1\\[2pt]
 \Tr\bigl(\rho^{\otimes3}V_\pi\bigr) & 1 & t_2 & t_3
\end{array}
\end{equation*}
where $t_2=\Tr(\rho^2)$ and $t_3=\Tr(\rho^3)$ as in \eqref{eq:four}.
Taking the two averages that define $\Pi_s$ and $\Pi_a$ in
\eqref{eq:mixed-central-projections}, and using that $\sgn$ equals $+1$
on $e$ and on $3$-cycles and $-1$ on transpositions, we obtain
\begin{equation}\label{eq:mixed-sym-alt}
 \Pr(\mathrm{sym})=\frac{1+3t_2+2t_3}{6}, \quad
 \Pr(\mathrm{alt})=\frac{1-3t_2+2t_3}{6},
\end{equation}
and then, from
$\Pr(\mathrm{mix})=1-\Pr(\mathrm{sym})-\Pr(\mathrm{alt})$, it follows that
\begin{equation}\label{eq:mixed-mix}
 \Pr(\mathrm{mix})=\frac{2(1-t_3)}{3}.
\end{equation}

{\bf{Step 2.}}
Recall from Subsection~\ref{rep:css-subsection} that the
CSS projector $P$ of \eqref{rep:css-weyl-average} is
the orthogonal projection onto the span of the orthonormal vectors
\[
 |\phi_{\boldsymbol x}\rangle
 =\bigotimes_{i=1}^{n}|\phi_{x_i}\rangle,
 \quad
 |\phi_x\rangle
 =3^{-1/2}\sum_{y\in\F_3}|x+y,\,y-x,\,y\rangle,
 \quad \boldsymbol x\in\F_3^n ,
\]
and that $P$ commutes with every $V_\pi$.  On this code space the
permutations act as
\begin{equation}\label{eq:mixed-code-action}
 V_\pi\bigl|\phi_{\boldsymbol x}\bigr\rangle
 =\left\{
   \begin{aligned}
    &\bigl|\phi_{\boldsymbol x}\bigr\rangle,\ \  \pi\ \text{even},\\
    &\bigl|\phi_{-\boldsymbol x}\bigr\rangle, \ \ \pi\ \text{odd}.
   \end{aligned}
 \right.
\end{equation}
Let us verify
the two cases on a single site.  Let $S=V_{(12)}$ and $T=V_{(123)}$.  Interchanging the first two entries of
the summands of $|\phi_x\rangle$ and then using the definition of
$|\phi_{-x}\rangle$ gives
\[
 S|\phi_x\rangle
 =3^{-1/2}\sum_{y\in\F_3}|y-x,\,y+x,\,y\rangle
 =|\phi_{-x}\rangle ,
\]
and for the cycle $T$ the same computation gives
\[
 T|\phi_x\rangle
 =3^{-1/2}\sum_{y\in\F_3}|y,\,x+y,\,y-x\rangle
 =|\phi_x\rangle ,
\]
where the last equality follows by setting $y'=y-x$, since
$-2x\equiv x$ in $\F_3$.  Tensoring over the $n$ sites yields
\eqref{eq:mixed-code-action}.
Thus, for every $\boldsymbol x$,
\[
 \Pi_s\bigl|\phi_{\boldsymbol x}\bigr\rangle
 =\frac16\Bigl(3\bigl|\phi_{\boldsymbol x}\bigr\rangle
 +3S\bigl|\phi_{\boldsymbol x}\bigr\rangle\Bigr)
 =\tfrac12(I+S)\bigl|\phi_{\boldsymbol x}\bigr\rangle,
 \quad
 \Pi_a\bigl|\phi_{\boldsymbol x}\bigr\rangle
 =\tfrac12(I-S)\bigl|\phi_{\boldsymbol x}\bigr\rangle,
 \quad
 \Pi_m\bigl|\phi_{\boldsymbol x}\bigr\rangle=0 ,
\]
because three of the six permutations are even and three are odd, while
$\Pi_m=I-\Pi_s-\Pi_a$.  The vectors $|\phi_{\boldsymbol x}\rangle$ span
the range of $P$, so multiplying by the projection $P$ turns these
identities into the operator identities \eqref{eq:mixed-P-splitting}.
With $Q_A=P\Pi_s$ and $Q_C=P\Pi_a$, and with the definitions of $u$ and
$v$ in \eqref{eq:four}, this gives
\[
 w_A=\Tr\bigl(Q_A\rho^{\otimes3}\bigr)
 =\frac12\Bigl(\Tr(P\rho^{\otimes3})+\Tr(PS\rho^{\otimes3})\Bigr)
 =\frac{u+v}{2},
 \quad
 w_C=\Tr\bigl(Q_C\rho^{\otimes3}\bigr)=\frac{u-v}{2}.
\]
As a check on the normalization, \eqref{eq:mixed-P-splitting} together
with $\Tr(PS)=1$ yields
$\operatorname{rank}Q_A=\Tr(P\Pi_s)=\tfrac12(3^n+1)$,
$\operatorname{rank}Q_C=\Tr(P\Pi_a)=\tfrac12(3^n-1)$ in agreement with \eqref{rep:rank-calculation}.

{\bf{Step 3.}}
Because $Q_B=\Pi_s-Q_A$ and $Q_D=\Pi_a-Q_C$, the block sums
\eqref{eq:mixed-blocksums} and \eqref{eq:mixed-sym-alt} give
\[
 w_B=\Pr(\mathrm{sym})-w_A
 =\frac{1+3t_2+2t_3}{6}-\frac{u+v}{2}, \quad
 w_D=\Pr(\mathrm{alt})-w_C
 =\frac{1-3t_2+2t_3}{6}-\frac{u-v}{2}.
\]

{\bf{Step 4.}}
By \eqref{eq:mixed-blocksums} and
\eqref{eq:mixed-mix}, we have
\[
 w_E=\Tr\bigl(\Pi_m\rho^{\otimes3}\bigr)=\Pr(\mathrm{mix})
 =\frac{2(1-t_3)}{3},
\]
which is also what the last line of \eqref{eq:weights} gives.
The five numbers $w_j=\Tr(Q_j\rho^{\otimes3})$ are nonnegative, because
$Q_j\succeq0$ and $\rho^{\otimes3}\succeq0$, and
\[
 \sum_{j=A}^{E}w_j
 =\Pr(\mathrm{sym})+\Pr(\mathrm{alt})+\Pr(\mathrm{mix})=1.
\]
This
identity is a consistency check on \textbf{Steps 1--4}.

{\bf{Step 5.}}
The third moment of the uniform Clifford orbit of $\rho$ is
\[
 \mathbb E_{\mathcal{C}}\bigl[(\mathcal{C}\rho \mathcal{C}^\dagger)^{\otimes3}\bigr]
 =\mathbb E_{\mathcal{C}}\bigl[\mathcal{C}^{\otimes3}\rho^{\otimes3}\mathcal{C}^{\dagger\otimes3}\bigr],
\]
and since $V_\pi$ commutes with $\mathcal{C}^{\otimes3}$ and
$V_\pi\rho^{\otimes3}V_\pi^\dagger=\rho^{\otimes3}$ for every $\pi$, this
equals the joint twirl
\[
 \mathbb E_{\mathcal{C},\pi}\Bigl[\mathcal{C}^{\otimes3}V_\pi\rho^{\otimes3}
 V_\pi^\dagger \mathcal{C}^{\dagger\otimes3}\Bigr]
 =\sum_{j=A}^{E}\frac{w_j}{r_j}Q_j ,
\]
the last equality being \eqref{eq:twirl} and
$w_j=\Tr(Q_j\rho^{\otimes3})$.  The projections $Q_A,\dots,Q_E$ are the
central projections of the joint commutant
$\End_{\Cl(n,3)\times S_3}(\KK)=\bigoplus_j\C Q_j$ of
\eqref{rep:joint-commutant}. They are linearly independent when $n\geq2$,
and for $n=1$ the projection $Q_D$ vanishes, so the zero sector is
omitted from the twirl sum and $w_D=0$. Hence in either case the moment operator is determined by
the weights, and \textbf{Steps 1--4} express those weights in terms of
$(t_2,t_3,u,v)$. This proves the second assertion.
\end{proof}

\subsection{Polynomial invariants of degree at most three}
Throughout this subsection, assume that $n\geq2$. 
Let $V$ be the real vector space of Hermitian
$3^n\times 3^n$ matrices, with Clifford conjugation action
$X\mapsto CXC^\dagger$.
\begin{theorem}\label{thm:polynomials}
The space of homogeneous cubic real polynomials on $V$ invariant
under Clifford conjugation is spanned by
\[
(\Tr X)^3,\quad
(\Tr X)\Tr(X^2),\quad
\Tr(X^3),\quad
\Tr(PX^{\otimes3}),\quad
\Tr(PSX^{\otimes3}).
\]
These five polynomials are linearly independent.
On the affine space of trace-one Hermitian matrices, the restrictions
of polynomials invariant under Clifford conjugation and of total degree at most
three have basis $1,t_2,t_3,u,v$, with $t_2,t_3,u,v$ as in
\eqref{eq:four}.
Moreover, the four functions
$t_2,t_3,u,v$ are algebraically independent for $n\geq2$.
\end{theorem}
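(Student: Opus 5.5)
The plan is to identify invariant cubic polynomials with the symmetric part of the Clifford commutant and then read off everything from Theorem~\ref{rep:five-irreducibles}. A homogeneous cubic polynomial on $V$ can be written as $X\mapsto\Tr(\mathcal A X^{\otimes3})$ with $\mathcal A\in\End(\KK)$. Replacing $\mathcal A$ by its average over $S_3$ (conjugation by $V_\pi$) does not change the polynomial, since $X^{\otimes3}$ commutes with every $V_\pi$. Over $\C$, polarization shows that the polynomial determines the symmetrized $\mathcal A$ uniquely. We can pass between real and complex polynomials because Hermitian matrices span $\End(\HH)$ over $\C$, so a real polynomial on $V$ extends uniquely to a complex polynomial on all matrices. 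If the polynomial is Clifford invariant, averaging $\mathcal A$ over the finite group $G_n$ via $\rho_3$ (Proposition~\ref{rep:cube-descent}; scalar phases cancel in $\mathcal C^{\otimes3}\cdot\mathcal C^{\dagger\otimes3}$) shows that we may take $\mathcal A$ in the joint commutant $\End_{\Cl(n,3)\times S_3}(\KK)=\bigoplus_j\C Q_j$ of \eqref{rep:joint-commutant}. Hence the complex space of invariant cubics has dimension at most $5$ and is spanned by $\Tr(Q_jX^{\otimes3})$. By \eqref{rep:five-projectors}, \eqref{eq:mixed-P-splitting} and the cycle formula \eqref{eq:mixed-cycle-trace}, this span equals the span of the five displayed polynomials. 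Since $Q_A,Q_B,Q_C,Q_D,Q_E$ are nonzero for $n\ge2$, they are linearly independent; by injectivity of $\mathcal A\mapsto$ polynomial on symmetric $\mathcal A$, the five polynomials are independent over $\C$, hence over $\R$. The polynomials are real-valued on $V$ because $P$, $PS$ and $X^{\otimes3}$ are Hermitian and commute appropriately, so real and imaginary parts give the real statement.

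For degree at most three on the affine slice $\Tr X=1$, the plan is to homogenize. An invariant polynomial $f$ of degree $\le3$ restricted to trace one equals the restriction of $\tilde f(X)=\sum_k f_k(X)(\Tr X)^{3-k}$, where $f_k$ is the degree-$k$ component. Because Clifford conjugation is linear and preserves the trace, each $f_k$ is itself invariant, so $\tilde f$ is an invariant cubic. By the first part, its restriction lies in the span of $1,t_2,t_3,u,v$. Linear independence of these five functions on the slice will follow from the algebraic independence statement; it is also weaker than that statement together with non-constancy.

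The main obstacle is algebraic independence of $t_2,t_3,u,v$. I would prove it by exhibiting a point of the trace-one slice where the Jacobian of $(t_2,t_3,u,v)$ has rank $4$, so that the map is locally a submersion onto an open set and no nonzero polynomial can vanish identically. I would restrict to an explicit low-dimensional family and compute the $4\times4$ determinant there. A natural choice is $\rho$ diagonal in the computational basis, perturbed along a few Weyl directions, and chosen as a product across sites so that $u$ and $v$ factor site by site through $p$ (using $P=p^{\otimes n}$); this keeps the computation to a one- or two-qutrit problem. Which parameters actually yield a nonzero determinant is the delicate point. For the first attempt I would vary the diagonal entries of one site (affecting $t_2$, $t_3$ and $u$) together with an off-diagonal $X+X^\dagger$ component (affecting $v$ through $\Tr((W_u\rho)^2)$-type terms), and check the determinant symbolically. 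I expect this explicit Jacobian calculation, presumably what Appendix~\ref{app:independence} contains, to be where the real work lies.
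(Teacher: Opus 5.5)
Your reduction of invariant cubics to the joint commutant $\bigoplus_j\C Q_j$ (complexify, symmetrize, average over $G_n$, then translate via \eqref{eq:F-summary}--\eqref{eq:F-BD}) is sound, and so is your homogenization step; together they are essentially the paper's Steps~1--5. The gap is algebraic independence. You only outline a Jacobian strategy, and both concrete choices you suggest provably give rank at most three.

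A one-qutrit reduction cannot work. For $n=1$ one has $Q_D=0$, so $w_D\equiv0$, and \eqref{eq:weights} becomes the affine identity $3(u-v)=1-3t_2+2t_3$ for every single-qutrit state; both sides equal $6\det\rho$, e.g.\ $2/9$ at $I/3$. Now take $\rho=\rho_1\otimes\sigma$ with $\sigma$ fixed. By Remark~\ref{rem:multiplicativity}, $(t_2,t_3,u,v)(\rho)$ is a fixed diagonal rescaling of the quadruple of $\rho_1$. That quadruple lies on the hyperplane above, whatever directions you vary at the one site.

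A diagonal base point fails as well. The only nonzero diagonal entries of $PS$ equal $3^{-n}$ and sit at $(\mathbf z,\mathbf z,\mathbf z)$, so $v=3^{-n}t_3$ on diagonal matrices. Moreover, a diagonal $\rho_0$ is fixed by conjugation with the Clifford operators $Z^a$. Hence for any invariant $f$ one has $df_{\rho_0}(H)=df_{\rho_0}(\mathbb E_aZ^aHZ^{-a})$, which vanishes whenever $H$ has zero diagonal. So at such a point all four differentials kill the off-diagonal directions and $dv=3^{-n}dt_3$. Your off-diagonal perturbation helps only if the Jacobian is evaluated at a genuinely non-diagonal point.

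The missing ingredient is to let two sites vary independently at a non-diagonal point; this is exactly where $n\geq2$ enters. Appendix~\ref{app:independence} takes $\rho=\rho_1\otimes\rho_2$ with $\rho_i=\frac{1-\lambda_i}{3}I_3+\lambda_i|\psi_i\rangle\langle\psi_i|$ and $|\psi_i\rangle=\cos\theta_i|0\rangle+\sin\theta_i|1\rangle$. It then trades $\theta_i$ for $\kappa_i=\Tr(p|\psi_i\rangle\langle\psi_i|^{\otimes3})$. In these coordinates $t_2=x_1x_2$ and $t_3=y_1y_2$ depend only on the $\lambda_i$, while $u=A_1A_2$ and $v=B_1B_2$ with $\partial_\kappa A=\partial_\kappa B=\lambda^3$. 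The Jacobian is therefore block triangular, and its determinant at $(1/2,1/3,1/4,1/4)$ is $71/774840978\neq0$. Tensoring with $(I_3/3)^{\otimes(n-2)}$ rescales the four invariants by nonzero constants, which gives all $n\geq2$.

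Finally, you deduce the linear independence of $1,t_2,t_3,u,v$ on the slice from algebraic independence, so that claim is also left open in your proposal. It has a direct proof: an invariant cubic vanishing on $\{\Tr X=1\}$ vanishes on $\{\Tr X\neq0\}$ by homogeneity, hence identically. So the independence of the five cubics passes to their restrictions.
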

\begin{proof}
Write $h(X)=\Tr X$, $p_2(X)=\Tr(X^2)$, $p_3(X)=\Tr(X^3)$,
$u(X)=\Tr(PX^{\otimes3})$, $v(X)=\Tr(PSX^{\otimes3})$, and
$F_j(X)=\Tr(Q_jX^{\otimes3})$ for $j\in\{A,B,C,D,E\}$.

The proof has six steps.

{\bf{Step 1.}} Let $p$ be a homogeneous cubic real polynomial on $V$.
Its polarization is the symmetric real trilinear form
$$
B(X,Y,Z)
=
\frac16
\left.
\partial_r\partial_s\partial_t
p(rX+sY+tZ)
\right|_{r=s=t=0},
$$
which satisfies $B(X,X,X)=p(X)$.
The trace pairing
$
\langle X,Y\rangle=\operatorname{Tr}(XY)
$
is a real inner product on $V$. Hence there is a unique tensor
$\mathcal{A}\in V^{\otimes_{\mathbb R}3}$ such that
$
B(X,Y,Z)
=
\operatorname{Tr}\bigl(\mathcal{A}(X\otimes Y\otimes Z)\bigr)
$
for all $X,Y,Z\in V$. Here we identify
$V^{\otimes_{\mathbb R}3}$ with the real vector space of
Hermitian operators on $\mathcal K$.
Indeed, tensor products of a real orthonormal basis of $V$
form a real orthonormal basis of that space. In particular,
$\mathcal{A}$ is already Hermitian.
Since $B$ is symmetric, uniqueness of its representing tensor
implies
$V_\pi \mathcal{A} V_\pi^\dagger=\mathcal{A}$ for every $\pi\in S_3.$
Evaluating on the diagonal gives
$
p_{\mathcal{A}}(X)=\operatorname{Tr}(\mathcal{A}X^{\otimes3}).
$
Conversely, every permutation-invariant Hermitian operator
$\mathcal{A}$ defines a homogeneous cubic real polynomial by this
formula. Its associated trilinear form
$
B_{\mathcal{A}}(X,Y,Z)
=
\operatorname{Tr}\bigl(\mathcal{A}(X\otimes Y\otimes Z)\bigr)
$
is symmetric, and is therefore the polarization of $p_{\mathcal{A}}$.
Thus, the assignment $\mathcal{A}\mapsto p_{\mathcal{A}}$ is a real linear
bijection between permutation-invariant Hermitian operators
on $\mathcal K$ and homogeneous cubic real polynomials on $V$.
The permutation invariance condition is essential for uniqueness.
For an arbitrary Hermitian operator $\mathcal{A}$, define
$\operatorname{Sym}(\mathcal{A})=\frac16\sum_{\pi\in S_3}V_\pi \mathcal{A} V_\pi^\dagger.$
Then
$
p_{\mathcal{A}}(X)
=
\operatorname{Tr}\bigl(\operatorname{Sym}(\mathcal{A})X^{\otimes3}\bigr),
$
so the diagonal polynomial determines only
$\operatorname{Sym}(\mathcal{A})$, rather than the original operator $\mathcal{A}$.

{\bf{Step 2.}}
Let $\mathcal{A}$ be Hermitian and invariant under the copy permutations, and
suppose that $\Tr(\mathcal{A}X^{\otimes3})=0$ for every $X\in V$.  Let
$H_1,H_2,H_3\in V$ and put $X=t_1H_1+t_2H_2+t_3H_3$.  Expanding
$X^{\otimes3}$, the coefficient of $t_1t_2t_3$ in
$\Tr(\mathcal{A}X^{\otimes3})$ is
\[
 \sum_{\pi\in S_3}\Tr\bigl(\mathcal{A}(H_{\pi(1)}\otimes H_{\pi(2)}\otimes
 H_{\pi(3)})\bigr)
 =6\,\Tr\bigl(\mathcal{A}(H_1\otimes H_2\otimes H_3)\bigr),
\]
by permutation invariance of $\mathcal{A}$.  Hence
$\Tr\bigl(\mathcal{A}(H_1\otimes H_2\otimes H_3)\bigr)=0$ for all Hermitian $H_i$.
These tensor products span the real space of Hermitian operators on
$\KK$, so $\mathcal{A}=0$.

{\bf{Step 3.}}
Let $p=p_{\mathcal A}$ and $B$ be its polarization as in {\bf{Step 1}}.

(i)  Since $B$ is symmetric,
$\Tr(\mathcal{A}(X_1\otimes X_2\otimes X_3))
=\Tr(\mathcal{A}(X_{\pi(1)}\otimes X_{\pi(2)}\otimes X_{\pi(3)}))$ for all
Hermitian $X_1,X_2,X_3$ and all $\pi\in S_3$.  Since
$V_\pi(Y_1\otimes Y_2\otimes Y_3)V_\pi^\dagger
=Y_{\pi^{-1}(1)}\otimes Y_{\pi^{-1}(2)}\otimes Y_{\pi^{-1}(3)}$, this
identity reads $\Tr(\mathcal{A}Y)=\Tr(V_\pi^\dagger \mathcal{A}V_\pi Y)$ for the monomials
$Y=X_1\otimes X_2\otimes X_3$, and these span $\End(\KK)$ over $\C$, and hence
$V_\pi \mathcal{A}V_\pi^\dagger=\mathcal{A}$.

(ii) Suppose that $p$ is invariant under Clifford conjugation. Then, for every Clifford operator $\mathcal{C}$ and every Hermitian $X$,
\[
 \Tr(\mathcal{A}X^{\otimes3})=\Tr\bigl(\mathcal{A}\,(\mathcal{C}X\mathcal{C}^\dagger)^{\otimes3}\bigr)
 =\Tr\bigl(\mathcal{C}^{\dagger\otimes3}\mathcal{A}\mathcal{C}^{\otimes3}X^{\otimes3}\bigr),
\]
so $\Tr\bigl((\mathcal{C}^{\dagger\otimes3}\mathcal{A}\mathcal{C}^{\otimes3}
-\mathcal{A})X^{\otimes3}\bigr)=0$
for all $X\in V$.  The operator
$\mathcal{C}^{\dagger\otimes3}\mathcal{A}\mathcal{C}^{\otimes3}-\mathcal{A}$ is Hermitian, and it commutes with
every $V_\pi$ because $\mathcal{A}$ does and because $\mathcal{C}^{\otimes3}$ commutes with
$V_\pi$. {\bf{Step 2}} therefore yields
$\mathcal{C}^{\otimes3}\mathcal{A}\mathcal{C}^{\dagger\otimes3}=\mathcal{A}$.

Conversely, if $\mathcal A$ lies in the joint commutant, cyclicity of the
trace shows that $p_{\mathcal A}$ is invariant under Clifford conjugation.
Hence invariance of $p_{\mathcal{A}}$ is equivalent to
$\mathcal{A}\in\operatorname{End}_{\Cl(n,3)\times S_3}(\KK)$, i.e.\ to $\mathcal{A}$ lying
in the Hermitian part of the joint commutant.

{\bf{Step 4.}}
From \eqref{rep:joint-commutant}, it follows that
$\operatorname{End}_{\Cl(n,3)\times S_3}(\KK)=\bigoplus_{j}\C Q_j$, and
each $Q_j$ is an orthogonal projection.  By the proof of \textbf{Steps 1--3}, we know that the
Hermitian part of the joint commutant is the real span of
$Q_A,\dots,Q_E$.  All five projections are nonzero for $n\geq2$ by
\eqref{rep:rank-calculation}.  They are pairwise orthogonal and
independent, so this real span has dimension five.  By {\bf{Steps 1--3}} the
map $\mathcal A\mapsto p_{\mathcal A}$ is a linear bijection from that space onto the space
of invariant cubics, which therefore has dimension five as well.
Under this bijection, the projections $Q_j$ correspond to the
polynomials $F_j$ for $j\in\{A,B,C,D,E\}$, which therefore form a basis of the invariant cubics.
We now express this basis in terms of the five displayed polynomials. Recall that for every operator $\mathcal{B}$ on $\HH$ and every
$\pi\in S_3$ one has \eqref{eq:mixed-cycle-trace}. Applied to $\mathcal{B}=X\in V$ and to
the three cycle types of $S_3$, this gives
$\Tr(V_\pi X^{\otimes3})=h^3$ for $\pi=e$, $hp_2$ for a
transposition, and $p_3$ for a $3$-cycle.
Taking the averages that define $\Pi_s$ and $\Pi_a$, and using
$\Pi_m=(2I-T-T^2)/3$ from \eqref{rep:Pi_m-from-T}, we get
\begin{equation}\label{eq:F-summary}
\left\{
  \begin{aligned}
 &F_A+F_B=\Tr(\Pi_sX^{\otimes3})
   =\frac{h^3+3hp_2+2p_3}{6},\\
 &F_C+F_D=\Tr(\Pi_aX^{\otimes3})
   =\frac{h^3-3hp_2+2p_3}{6},\\
 &F_E=\Tr(\Pi_mX^{\otimes3})=\frac{2(h^3-p_3)}{3}.
\end{aligned}
\right.
\end{equation}
By the operator identities \eqref{eq:mixed-P-splitting}, we have
\begin{equation}\label{eq:F-AC}
 F_A=\frac{\Tr(PX^{\otimes3})+\Tr(PSX^{\otimes3})}{2}=\frac{u+v}{2},
 \quad
 F_C=\frac{u-v}{2},
\end{equation}
and subtracting \eqref{eq:F-AC} from the first two lines of
\eqref{eq:F-summary},
\begin{equation}\label{eq:F-BD}
 F_B=\frac{h^3+3hp_2+2p_3}{6}-\frac{u+v}{2},\quad
 F_D=\frac{h^3-3hp_2+2p_3}{6}-\frac{u-v}{2}.
\end{equation}
Equations \eqref{eq:F-summary}--\eqref{eq:F-BD} express the $F_j$
linearly in terms of $h^3,hp_2,p_3,u,v$.  Conversely, $\sum_jQ_j=I$ gives $h^3=\sum_jF_j$. Subtracting
the second line of \eqref{eq:F-summary} from the first gives
$hp_2=F_A+F_B-F_C-F_D$. The third line of \eqref{eq:F-summary} gives
$p_3=F_A+F_B+F_C+F_D-\tfrac12F_E$, and
\eqref{eq:F-AC} gives $u=F_A+F_C$ and $v=F_A-F_C$.  The two systems are
therefore mutually inverse linear maps, so the five displayed
polynomials $h^3,hp_2,p_3,u,v$ span the invariant cubics.  Since that
space has dimension five and there are five of them, they are linearly
independent, and they form a basis.  This proves the first two
assertions of Theorem \ref{thm:polynomials}.

{\bf{Step 5.}}
Let $f$ be an invariant polynomial of total degree at most three and let
$f=\sum_{k=0}^{3}f_k$ be its decomposition into homogeneous components.
Each $f_k$ is invariant, because the action is linear and preserves
degree. Hence
\[
 g(X)=f_0(X)(\Tr X)^3+f_1(X)(\Tr X)^2+f_2(X)\Tr X+f_3(X)
\]
is a homogeneous cubic invariant polynomial, and $g$ agrees with $f$ on
the affine hyperplane $\{\Tr X=1\}$. Thus every restriction of an invariant
polynomial of total degree at most three is the restriction of an invariant
cubic. The restriction map on invariant cubics is also injective: if a homogeneous cubic invariant $g$
vanishes on all positive-definite trace-one matrices, then for
$X\succ0$ with $\lambda=\Tr X>0$ the matrix $\lambda^{-1}X$ is again
positive definite with trace one, so
$g(X)=\lambda^3g(\lambda^{-1}X)=0$. Thus $g$ vanishes on the open cone
of positive definite matrices, and a polynomial vanishing on a nonempty
open set vanishes identically.  Thus the restriction map is an
isomorphism onto its image, and the images of the basis
$h^3,hp_2,p_3,u,v$ are the five functions
\[
 1,\quad t_2=\Tr(X^2),\quad t_3=\Tr(X^3),\quad
 u=\Tr(PX^{\otimes3}),\quad v=\Tr(PSX^{\otimes3}),
\]
which therefore form a basis of the space of
restrictions of invariant polynomials of total degree at most three.
In particular no additional linear relation appears after restriction.

{\bf{Step 6.}}
This is the content of Appendix~\ref{app:independence}, where an
explicit four-parameter family of two-qutrit states is exhibited whose
Jacobian determinant for the map
$\rho\mapsto(t_2,t_3,u,v)$ equals
$71/774840978\neq0$. By the inverse function theorem, we know that the image contains
a nonempty open subset of $\mathbb{R}^4$, so no nonzero polynomial relation
among $t_2,t_3,u,v$ can hold.  This proves the last assertion for $n=2$,
and the appendix reduces $n>2$ to $n=2$ by tensoring with $I_3/3$.
\end{proof}

These four invariants do not separate Clifford orbits. Indeed,
Appendix~\ref{app:independence} gives a positive-definite state at which
the map $\rho\mapsto(t_2,t_3,u,v)$ has rank four. The submersion theorem
then gives a local level set of dimension $3^{2n}-5>0$ within the
trace-one positive-definite matrices. Each projective Clifford orbit is
finite, so this level set contains infinitely many distinct orbits.

\subsection{Weyl and Wigner representations of the nonspectral invariants}

Let $\rho$ be a density operator on $\HH$. Recall that the CSS code projector $P$ is as in \eqref{rep:css-weyl-average}. The Weyl characteristic function of $\rho$ is
\(\chi_\rho(x)=\Tr(W_x\rho)\) for all \(x\in V_n\).
Let $\{|z\rangle:z\in\mathbb F_3^n\}$ be the computational basis and
let the parity operator $J$  be defined by $J|z\rangle=|-z\rangle$,
negation being taken in $\mathbb F_3^n$.

\begin{lemma}\label{lem:J-traces}
The parity operator $J$ is a Hermitian involution with $\Tr(J)=1$, and
\( \Tr(W_xJ)=1\) for \(x\in V_n\).
\end{lemma}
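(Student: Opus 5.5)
The plan is a direct computation in the computational basis. Since $J$ permutes the basis vectors $\{|z\rangle\}$ by the bijection $z\mapsto -z$ of $\mathbb F_3^n$, it is a permutation matrix, hence unitary. Because $z\mapsto -z$ is an involution, $J^2=I$, and then $J^\dagger=J^{-1}=J$, so $J$ is Hermitian. Its trace is the number of fixed points of $z\mapsto -z$, namely the solutions of $2z=0$ in $\mathbb F_3^n$. Since $2$ is invertible in $\mathbb F_3$, the only solution is $z=0$, which gives $\Tr(J)=1$.

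For $\Tr(W_xJ)$ I write $x=(a,b)$ and use $W_{(a,b)}=\omega^{-a\cdot b/2}Z^aX^b$. The first step computes the diagonal matrix elements:
\[
 \langle z|W_xJ|z\rangle=\omega^{-a\cdot b/2}\langle z|Z^aX^b|-z\rangle
 =\omega^{-a\cdot b/2}\,\omega^{a\cdot z}\,\delta_{z,\,b-z}.
\]
The Kronecker delta forces $2z=b$, so $z=b/2=-b$ in $\mathbb F_3^n$, and exactly one basis vector contributes. The trace is therefore $\omega^{-a\cdot b/2}\omega^{a\cdot b/2}=1$, using $a\cdot z=a\cdot b/2$.

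The main thing to check is the bookkeeping of the half-phase convention, namely that $1/2$ denotes the inverse of $2$ in $\mathbb F_3$ consistently in both exponents. Here $1/2=2$, so the exponents are $-2a\cdot b$ and $2a\cdot b$, and they cancel. A cross-check that avoids coordinates uses the covariance $JW_xJ=W_{-x}$. It makes $\Tr(W_xJ)$ constant on Weyl-conjugation orbits: $W_yJW_y^\dagger=W_yW_y J=\omega^{0}W_{2y}J$ up to phase, so the translates $W_{2y}J$ sweep all $W_xJ$. This reduces the claim to the case $x=0$, i.e.\ to $\Tr(J)=1$. I would present the direct computation and mention this as a remark only.
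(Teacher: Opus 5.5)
Your direct computation is correct and is essentially the paper's proof. Both arguments treat $J$ as the permutation matrix of $z\mapsto -z$, read off $\operatorname{Tr}J$ as the number of solutions of $2z=0$, and observe that the only diagonal entry of $W_{(a,b)}J$ occurs at $z=b/2$, with phase $\omega^{-a\cdot b/2}\omega^{a\cdot b/2}=1$. In your optional remark, the identity $W_yJW_y^\dagger=W_{2y}J$ holds exactly, not merely up to phase, and that exactness is what the cross-check needs.
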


\begin{proof}
The operator $J$ is the permutation matrix of $z\mapsto-z$ on $\mathbb F_3^n$, so it is unitary.
Since negation is an involution, $J^2=I$. Therefore $J^\dagger=J^{-1}=J$.
Its diagonal entries are $\delta_{z,-z}$, so
\[
 \Tr J=\#\{z\in\F_3^n:2z=0\}=1 ,
\]
since $2$ is invertible in $\F_3$ and therefore $2z=0$ forces $z=0$. For $x=(a,b)\in V_n$ the Weyl convention gives
\[
 W_{a,b}J|z\rangle=\omega^{-a\cdot z+a\cdot b/2}|b-z\rangle.
\]
A diagonal matrix element requires $z=b/2$, and its phase is then
one. There is exactly one such computational label, so $\Tr(W_xJ)=1$.
\end{proof}
For each phase-space point
$x\in V_n$, the phase-point operator and the Wigner function are defined, respectively, by (see \cite{gnw21})
\[
  A_x=W_xJW_x^\dagger,\quad
  \mathcal W_\rho(x)=3^{-n}\Tr(A_x\rho).
\]
By Lemma~\ref{lem:J-traces}, $J$ is Hermitian with trace one, and hence so is $A_x$.
Expanding an arbitrary operator $\mathcal{B}$ in the Weyl basis and applying the character orthogonality relation
$$
\sum_{x\in V_n}\omega^{[x,y]}=3^{2n}\delta_{y,0}
$$
yields the depolarizing identity
$$
3^{-2n}\sum_{x\in V_n}W_x\mathcal{B}W_x^\dagger
=3^{-n}\Tr(\mathcal{B})I.
$$
Taking $\mathcal{B}=J$ yields
\[
  \sum_{x\in V_n}A_x=3^n I.
\]
Thus $\mathcal W_\rho$ is real and
$\sum_{x\in V_n}\mathcal W_\rho(x)=1$. Its values need not be
nonnegative, so it is a quasiprobability distribution; see, for example, \cite{gross06,gnw21}.
\begin{proposition}\label{prop:weyl-wigner-moments}
Let $u$ and $v$ be as in \eqref{eq:four}. For every density operator $\rho$ and every $n\geq1$,
\begin{equation}\label{eq:weyl-moments}
  u=3^{-2n}\sum_{x\in V_n}\chi_\rho(x)^3,\ v=3^{-2n}\sum_{x\in V_n}
     \Tr(W_x\rho W_x\rho)\,\chi_\rho(x).
\end{equation}
Moreover, the following operator identity and moment formula
\begin{equation}\label{eq:wigner}
  PS=3^{-2n}\sum_{x\in V_n}A_x^{\otimes3},
  \quad
  v=3^n\sum_{x\in V_n}\mathcal W_\rho(x)^3
\end{equation}
hold, where $S=V_{(12)}$ exchanges the first two copies of $\HH$.
\end{proposition}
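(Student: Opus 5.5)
The plan is to derive everything from the Weyl-average form \eqref{rep:css-weyl-average} of the CSS projector, $P=3^{-2n}\sum_{x}W_x^{\otimes3}$, together with a slight generalization of the cycle-trace identity \eqref{eq:mixed-cycle-trace} to non-identical tensor factors.

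\textbf{The formula for $u$.} Substituting the Weyl average into $u=\Tr(P\rho^{\otimes3})$ gives
\[
u=3^{-2n}\sum_x\Tr\bigl((W_x\rho)^{\otimes3}\bigr)=3^{-2n}\sum_x\chi_\rho(x)^3 .
\]

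\textbf{The formula for $v$.} I will use the mixed version of the cycle-trace argument: for operators $B_1,B_2,B_3$ on $\HH$,
\[
\Tr\bigl((B_1\otimes B_2\otimes B_3)S\bigr)=\Tr(B_1B_2)\,\Tr(B_3).
\]
The index computation proving \eqref{eq:mixed-cycle-trace} gives this verbatim. Applying it with $B_i=W_x\rho$ yields
\[
v=3^{-2n}\sum_x\Tr(W_x\rho W_x\rho)\,\chi_\rho(x),
\]
which is \eqref{eq:weyl-moments}.

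\textbf{The operator identity $PS=3^{-2n}\sum_xA_x^{\otimes3}$.} I expect this to be the main step. First, expand $J$ in the Weyl basis. By Lemma~\ref{lem:J-traces}, $\Tr(W_u^\dagger J)=\Tr(W_{-u}J)=1$, so $J=3^{-n}\sum_uW_u$. The conjugation rule \eqref{rep:weyl-identities} then gives
\[
A_x=3^{-n}\sum_u\omega^{[x,u]}W_u .
\]
Cubing and summing over $x$, the character orthogonality $\sum_x\omega^{[x,y]}=3^{2n}\delta_{y,0}$ kills every term except those with $u+v+w=0$. Hence
\[
3^{-2n}\sum_xA_x^{\otimes3}=3^{-3n}\sum_{u,v}W_u\otimes W_v\otimes W_{-u-v}.
\]
It remains to show that $PS$ has the same Weyl expansion. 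Its coefficients $3^{-3n}\Tr\bigl((W_{u}\otimes W_{v}\otimes W_{w})^\dagger PS\bigr)$ can be computed by inserting the Weyl average for $P$ and using the mixed trace formula: the factor $\Tr(W_yW_{w}^\dagger)$ forces $y=w$, and for $u+v+w=0$ the remaining factor $\Tr(W_yW_u^\dagger W_yW_v^\dagger)$ is a phase times $3^n$. Support on $u+v+w=0$ is automatic, because $PS$ commutes with all $W_z^{\otimes3}$, exactly as in the proof of Proposition~\ref{rep:commutant-dimension}.

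The delicate point is verifying that all these phases (with the $\omega^{\cdot/2}$ conventions) equal $1$. To avoid that bookkeeping I would first try a factorization argument. Both sides factor over sites, since $P=p^{\otimes n}$, $S$ factors after regrouping by site, $W_x$, $J$ and hence $A_x$ factor, and the sum over $V_n$ is a product of one-site sums. So it suffices to check $n=1$. There one can verify the identity by applying both sides to the code basis: both sides are supported on the code space (the right side via the computed expansion, which equals $3^{-n}\sum_u W_u\otimes(\cdots)$ averaged form), and on $|\phi_x\rangle$ the operator $S$ sends $\phi_x\mapsto\phi_{-x}$. Alternatively, in the worst case, one can do a direct one-qutrit phase check, which is the concrete fallback.

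\textbf{The Wigner moment formula.} Given the operator identity, the last step is immediate:
\[
v=\Tr(PS\rho^{\otimes3})=3^{-2n}\sum_x\Tr(A_x\rho)^3=3^{-2n}\sum_x\bigl(3^n\mathcal W_\rho(x)\bigr)^3=3^n\sum_x\mathcal W_\rho(x)^3 .
\]
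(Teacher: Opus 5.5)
Your route is correct, and it differs from the paper's at the one substantive step, the identity $PS=3^{-2n}\sum_xA_x^{\otimes3}$. The paper proves it only for $n=1$, by applying both sides to computational basis vectors $|\boldsymbol z\rangle$. Averaging $A_{(a,b)}^{\otimes3}$ over $a$ kills everything off the hyperplane $z_1+z_2+z_3=0$. On that hyperplane $P$ averages over the shifts $\boldsymbol z+c\boldsymbol 1$, and $S\boldsymbol z=-\boldsymbol z-z_3\boldsymbol 1$. The general case then follows by tensoring over sites.

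You instead use Lemma~\ref{lem:J-traces} to write $J=3^{-n}\sum_yW_y$, which gives $3^{-2n}\sum_xA_x^{\otimes3}=3^{-3n}\sum_{\xi,\eta}W_\xi\otimes W_\eta\otimes W_{-\xi-\eta}$, and then you compare Weyl coefficients. This works for all $n$ at once and stays inside the Weyl-basis formalism of Proposition~\ref{rep:commutant-dimension}. It also exhibits $3^{3n}PS$ as the uniform sum of all Weyl triples in the Pauli commutant. The paper's computation, by contrast, avoids half-integer phase bookkeeping by working with monomial matrices.

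You did leave that phase bookkeeping undone, but it is not delicate and closes in one line. With $w=-\xi-\eta$, your remaining factor is $\Tr(W_wW_{-\xi}W_wW_{-\eta})=3^n\omega^{c/2}$, where $c=[w,-\xi]+[w,-\eta]+[w-\xi,w-\eta]=0+3[\xi,\eta]=0$. So every coefficient of $PS$ on the support $\xi+\eta+w=0$ equals $3^{-3n}$.

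More directly, inserting $S=3^{-n}\sum_zW_z\otimes W_z^\dagger\otimes I$ gives
\[
 PS=3^{-3n}\sum_{y,z\in V_n}W_yW_z\otimes W_yW_{-z}\otimes W_y
 =3^{-3n}\sum_{y,z\in V_n}W_{y+z}\otimes W_{y-z}\otimes W_y ,
\]
since the phases $\omega^{\pm[y,z]/2}$ cancel. The map $(y,z)\mapsto(y+z,y-z)$ is a bijection of $V_n^2$, and $y=-\bigl((y+z)+(y-z)\bigr)$ because $2=-1$ in $\F_3$, so this is exactly your expansion.

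The site reduction and the code-basis check are therefore unnecessary. The code-basis check is also incomplete as written, because you never evaluate the right-hand side on $|\phi_x\rangle$. Its support on the code space does hold, since $W_y^{\otimes3}$ only shifts $(\xi,\eta)\mapsto(\xi+y,\eta+y)$ in your expansion, but you did not say so.

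Finally, in the formula for $v$, state that $S$ commutes with $\rho^{\otimes3}$. That is what lets you rewrite $\Tr(W_x^{\otimes3}S\rho^{\otimes3})$ as $\Tr\bigl((W_x\rho)^{\otimes3}S\bigr)$ before applying the swap trace identity.
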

\begin{proof}
Expanding the $x$-sum in
\eqref{rep:css-weyl-average} and evaluating traces in
$\rho^{\otimes3}$ gives
\[
  u=3^{-2n}\sum_{x\in V_n}\Tr(W_x^{\otimes3}\rho^{\otimes3})
   =3^{-2n}\sum_{x\in V_n}\chi_\rho(x)^3 .
\]
 Since $S$ commutes with
$W_x^{\otimes3}$ and with $\rho^{\otimes3}$, it follows  that
\[
  v=3^{-2n}\sum_{x\in V_n}\Tr\bigl(S(W_x\rho)^{\otimes3}\bigr)
   =3^{-2n}\sum_{x\in V_n}\Tr(W_x\rho W_x\rho)\,\chi_\rho(x).
\]
Here, for the second identity, use the swap identity
\begin{equation*}
  \Tr\bigl(S(B_1\otimes B_2\otimes B_3)\bigr)=\Tr(B_1B_2)\Tr(B_3),
\end{equation*}
valid for all operators $B_1,B_2,B_3$ on $\HH$.

We first take $n=1$, so $P=p$ and $a,b,z\in\mathbb F_3$.
For $x=(a,b)$ the Weyl convention and
$J|z\rangle=|-z\rangle$ give
\begin{equation*}
  A_{(a,b)}|z\rangle=\omega^{2a(b-z)}|2b-z\rangle .
\end{equation*}
Write $\boldsymbol z=(z_1,z_2,z_3)$ and
$\boldsymbol1=(1,1,1)$. Averaging
$A_{(a,b)}^{\otimes3}$ with respect to $a$ and using
$$\sum_{a\in\mathbb F_3}\omega^{-2a(z_1+z_2+z_3)}
=3\delta_{z_1+z_2+z_3,0}$$
annihilates every
basis vector with $z_1+z_2+z_3\neq0$, and on the hyperplane
$z_1+z_2+z_3=0$ one obtains
\begin{equation}\label{eq:A-average}
  \frac19\sum_{a,b\in\mathbb F_3}A_{(a,b)}^{\otimes3}|\boldsymbol z\rangle
  =
  \begin{cases}
    \displaystyle\frac13\sum_{b\in\mathbb F_3}
      |-\boldsymbol z+2b\boldsymbol1\rangle,&z_1+z_2+z_3=0,\\[8pt]
    0,&z_1+z_2+z_3\neq0.
  \end{cases}
\end{equation}
Likewise, expanding $p$ for a single qutrit and using the same character
orthogonality yields
\begin{equation}\label{eq:P-action}
  P|\boldsymbol z\rangle
  =
  \begin{cases}
    \displaystyle\frac13\sum_{c\in\mathbb F_3}
      |\boldsymbol z+c\boldsymbol1\rangle,&z_1+z_2+z_3=0,\\[8pt]
    0,&z_1+z_2+z_3\neq0.
  \end{cases}
\end{equation}
On the hyperplane, we have
\begin{equation*}
  S\boldsymbol z=(z_2,z_1,z_3)=-\boldsymbol z-z_3\boldsymbol1 ,
  \quad z_1+z_2+z_3=0 .
\end{equation*}
From this, it follows that
\[
  PS|\boldsymbol z\rangle
  =\frac13\sum_{c\in\mathbb F_3}
     |-\boldsymbol z+(c-z_3)\boldsymbol1\rangle
  =\frac13\sum_{b\in\mathbb F_3}
     |-\boldsymbol z+2b\boldsymbol1\rangle,
\]
the last step using that both $c-z_3$ and $2b$ run through
$\mathbb F_3$. Outside the hyperplane both
\eqref{eq:A-average} and \eqref{eq:P-action} vanish, and $S$ preserves
the coordinate sum, so the two sides of the identity agree on every
basis vector. This proves $PS=3^{-2n}\sum_xA_x^{\otimes3}$ for $n=1$.
After regrouping the tensor factors by site, the Weyl operators,
phase point operators, and copy swap factorize as
$W_x=\bigotimes_jW_{x_j}$, $A_x=\bigotimes_jA_{x_j}$ and
$S=\bigotimes_jS_j$, where $S_j$ is the swap of the first two copies
at site $j$. Tensoring the one-site identity over the $n$ sites
therefore gives the operator identity in \eqref{eq:wigner}. It is an
operator identity and requires no product state assumption on $\rho$.
Taking expectations in $\rho^{\otimes3}$ and using
$\Tr(A_x^{\otimes3}\rho^{\otimes3})=(\Tr(A_x\rho))^3$
together with $\Tr(A_x\rho)=3^n\mathcal W_\rho(x)$ gives
\[v=3^{-2n}\sum_{x\in V_n}\bigl(\Tr(A_x\rho)\bigr)^3
   =3^n\sum_{x\in V_n}\mathcal W_\rho(x)^3.\]
This completes the proof of Proposition \ref{prop:weyl-wigner-moments}.
\end{proof}

\begin{remark}\label{rem:weyl-cubes}
\begin{itemize}
  \item[\rm(i)] The cubes in the Weyl expression for $u$ are ordinary
complex cubes, not absolute cubes, and individual summands need not be
real. Using $\rho^\dagger=\rho$ and $W_x^\dagger=W_{-x}$ one
first gets $\chi_\rho(-x)=\overline{\chi_\rho(x)}$, hence
$\chi_\rho(-x)^3=\overline{\chi_\rho(x)^3}$, and the terms at $x$ and
$-x$ combine into $2\operatorname{Re}\chi_\rho(x)^3$. The sum is
nonnegative because $u=\Tr(P\rho^{\otimes3})$ is the trace of the
product of the positive operators $P$ and $\rho^{\otimes3}$.
  \item[\rm(ii)] Note that $u$ and $v$ are invariant under Clifford conjugation. In fact, if $\mathcal{C}\in\Cl(n,3)$ acts by
$\mathcal{C}W_x\mathcal{C}^\dagger=\omega^{k(x)}W_{Mx}$ with $M\in\operatorname{Sp}(2n,3)$,
then, writing $\rho'=\mathcal{C}\rho \mathcal{C}^\dagger$,
\[
  \chi_{\rho'}(Mx)=\omega^{-k(x)}\chi_\rho(x),
\]
the cube absorbs the phase and conjugation by $\mathcal{C}$ merely permutes the
labels in the sum defining $u$. For $v$, the quadratic trace transforms as
\[
  \Tr(W_{Mx}\rho'W_{Mx}\rho')
  =\omega^{-2k(x)}\Tr(W_x\rho W_x\rho).
\]
Multiplication by $\chi_{\rho'}(Mx)=\omega^{-k(x)}\chi_\rho(x)$
cancels the phase, since $\omega^{-3k(x)}=1$.
Hence $v$ is invariant as well.
\end{itemize}
\end{remark}
The following inequalities give necessary conditions on the
quadruple $(t_2,t_3,u,v)$ associated with a density operator.
\begin{corollary}\label{cor:moment-constraints}
Let $t_2$, $t_3$, $u$ and $v$ be as in \eqref{eq:four}. For every density operator $\rho$ and every $n\geq1$, we have
\begin{equation*}
  3^{-n}\leq t_2\leq1,\quad
  t_2^2\leq t_3\leq t_2^{\frac32},\quad
  |v|\leq u\leq\frac{t_2}{3^n}.
\end{equation*}
If in addition $n\geq2$, then all five sector weights in
\eqref{eq:weights} are nonnegative, and $|v|\leq u$ is equivalent to
the pair of inequalities $w_A\geq0$, $w_C\geq0$.
\end{corollary}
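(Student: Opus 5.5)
The plan is to prove the three chains of inequalities separately from spectral data and from the Weyl/Wigner formulas of Proposition~\ref{prop:weyl-wigner-moments}, and then read off the sector statements from \eqref{eq:weights}.

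\emph{Spectral bounds.} Let $\lambda_1,\dots,\lambda_{3^n}\ge0$ be the eigenvalues of $\rho$, with $\sum\lambda_i=1$. Cauchy--Schwarz gives $t_2=\sum\lambda_i^2\ge3^{-n}$, and $\lambda_i\le1$ gives $t_2\le1$. For $t_3$, apply Cauchy--Schwarz to $\sum\lambda_i^{1/2}\lambda_i^{3/2}$; this yields $t_2^2\le(\sum\lambda_i)(\sum\lambda_i^3)=t_3$. For the other side, use $\ell^p$ monotonicity, $\|\lambda\|_3\le\|\lambda\|_2$, so $t_3\le t_2^{3/2}$.

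\emph{Bounds on $u,v$.} The inequality $|v|\le u$ follows from $u\pm v=2\Tr(Q_{A/C}\rho^{\otimes3})\ge0$, which is \eqref{eq:mixed-P-splitting}. Equivalently, $|v|\le u$ is exactly $w_A,w_C\ge0$ via the first and third lines of \eqref{eq:weights}. This argument uses only $P\Pi_s$ and $P\Pi_a$ and so holds for all $n\ge1$.

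For $u\le t_2/3^n$, I would use the Weyl formula $u=3^{-2n}\sum_x\chi_\rho(x)^3$ together with $|\chi_\rho(x)|\le1$ (since $W_x$ is unitary). This gives $u\le3^{-2n}\sum_x|\chi_\rho(x)|^2$, and Parseval, $\sum_x|\chi_\rho(x)|^2=3^n\Tr\rho^2$ from Weyl orthogonality, turns this into $3^{-n}t_2$. The step I expect to need care is the realness of the sum: $u$ is real and nonnegative by Remark~\ref{rem:weyl-cubes}(i), so $u\le\sum|\cdot|$ is legitimate.

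\emph{Nonnegativity of all weights.} Every weight is $w_j=\Tr(Q_j\rho^{\otimes3})\ge0$ because it is the trace of a product of two positive operators, exactly as noted in Step~4 of Theorem~\ref{thm:mixed}. For $n\ge2$ all $Q_j$ are nonzero, so \eqref{eq:weights} applies with five genuine sectors, and the claimed equivalence with $w_A,w_C\ge0$ is the linear rewriting above. I expect no real obstacle here; the only delicate point is the $u\le t_2/3^n$ estimate, and if the bound $|\chi|\le1$ plus Parseval turned out insufficient, I would instead bound $\Tr(P\rho^{\otimes3})$ by writing $P=3^{-2n}\sum_xW_x^{\otimes3}$ and applying Hölder in that form.
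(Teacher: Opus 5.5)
Your proposal is correct and follows the paper's proof essentially step for step. The paper derives the spectral bounds by Cauchy--Schwarz and $\ell^p$ monotonicity, as you do. It writes $|v|\le u$ as $P(I\pm S)\succeq0$, which by \eqref{eq:mixed-P-splitting} is exactly your positivity of $P\Pi_s$ and $P\Pi_a$. It obtains $u\le t_2/3^n$ from realness of $u$, $|\chi_\rho|\le1$ and Weyl--Parseval, and reads off the $n\ge2$ sector statement from \eqref{eq:weights}, just as you propose.
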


\begin{proof}
Let $\lambda_i$ be the eigenvalues of $\rho$. Since $\lambda_i\geq0$
and $\sum_i\lambda_i=1$, Cauchy--Schwarz gives
$1\leq 3^n\sum_i\lambda_i^2=3^nt_2$, while $t_2\leq1$ is immediate. Next,
\[
  t_2^2=\Bigl(\sum_i\lambda_i^{\frac12}\lambda_i^{\frac32}\Bigr)^2
  \leq\Bigl(\sum_i\lambda_i\Bigr)\Bigl(\sum_i\lambda_i^3\Bigr)=t_3 ,
\]
and monotonicity of the $\ell^p$ norms of the eigenvalue vector gives
$t_3\leq t_2^{3/2}$.
Since $P$ is an orthogonal projection and $S$ is a Hermitian
involution commuting with $P$, both $I\pm S$ have spectrum in
$\{0,2\}$, thus
\(
  P\pm PS=P(I\pm S)\succeq0 .
\)
Taking expectations in $\rho^{\otimes3}$ yields $u\pm v\geq0$, that
is, $|v|\leq u$.
Finally, Weyl orthogonality and unitarity give
\[
  \sum_{x\in V_n}|\chi_\rho(x)|^2=3^nt_2,
  \quad |\chi_\rho(x)|\leq1 .
\]
By Remark~\ref{rem:weyl-cubes}, $u$ is real and
$u=3^{-2n}\sum_x\chi_\rho(x)^3$, whence
\[
  0\leq u
  =\Bigl|3^{-2n}\sum_{x\in V_n}\chi_\rho(x)^3\Bigr|
  \leq 3^{-2n}\sum_{x\in V_n}|\chi_\rho(x)|^3
  \leq 3^{-2n}\sum_{x\in V_n}|\chi_\rho(x)|^2
  =\frac{t_2}{3^n}.
\]
For $n\geq2$ the sector inequalities are
$w_j=\Tr(Q_j\rho^{\otimes3})\geq0$, and by
\eqref{eq:weights},
\[
  w_A=\frac{u+v}{2},\quad w_C=\frac{u-v}{2},
\]
so $|v|\leq u$ is precisely the pair $w_A\geq0$, $w_C\geq0$.
\end{proof}

\begin{remark}\label{rem:n1-exception}
The Weyl moment formulas \eqref{eq:weyl-moments}, the identities \eqref{eq:wigner},
and the polynomial sector weight formulas \eqref{eq:weights} remain
valid for $n=1$. In that case $Q_D=0$ and the other sector ranks are
$r_A=2$, $r_B=8$, $r_C=1$, and $r_E=16$.
The Clifford commutant has dimension seven because the orbit of linearly independent
isotropic pairs is absent. The algebra generated by $P$ and the
permutations is $\C^3\oplus M_2(\C)$, also of dimension seven, so it
is the full Clifford commutant. The joint commutant is therefore
$\C^4$, and the twirl formula holds after omitting the zero sector.
For $\rho=I/3$ the weights are
$(2/27,8/27,1/27,0,16/27)$, in agreement with \eqref{eq:weights}.
The decomposition into five nonzero sectors and formulas that divide by
$r_D$ require $n\geq2$; the polynomial weight identities do not.
\end{remark}

The Wigner moment formula gives an interpretation of the sign of
$v$. If $\mathcal W_\rho(x)\geq0$ for all $x\in V_n$, then
\[
  v=3^n\sum_{x\in V_n}\mathcal W_\rho(x)^3\geq0 .
\]
Therefore,
\[
  v<0
  \quad\Longrightarrow\quad
  \mathcal W_\rho(x)<0\ \text{ for some }x\in V_n .
\]
Thus a negative value of $v$ witnesses Wigner negativity. The converse
does not follow: positive contributions to the cubic moment may
outweigh the negative ones. The quantities $u$ and $v$ therefore
connect the sector description of the third Clifford moment with the
Weyl characteristic function and the Wigner function.
\subsection{Explicit density matrices}\label{rep:explicit-checks}
We first consider the maximally mixed state $\rho_{\mathrm{mix}}=I/3^n$. Then
\[
 \Tr(\rho_{\mathrm{mix}}^2)=3^{-n},\quad
 \Tr(\rho_{\mathrm{mix}}^3)=3^{-2n},\quad
 \Tr(P)=3^n,\quad \Tr(PS)=1,
\]
where the last two identities follow from
\eqref{rep:css-weyl-average} and the swap trace identity. Hence
\begin{equation*}
 (t_2,t_3,u,v)=\bigl(3^{-n},\,3^{-2n},\,3^{-2n},\,3^{-3n}\bigr).
\end{equation*}
Since $\rho_{\mathrm{mix}}^{\otimes3}=3^{-3n}I$ is proportional to the
identity on the tensor cube and $Q_j$ is an orthogonal projection, it follows that
\begin{equation*}
 w_j=\Tr\bigl(Q_j\rho_{\mathrm{mix}}^{\otimes3}\bigr)
 =\frac{r_j}{3^{3n}},
\end{equation*}
thus the sector weights are exactly the normalized sector dimensions.
\begin{proposition}\label{prop:rho-minus}
Let $\rho_-=\frac{I-J}{3^n-1}$,
where $J$ is as in Lemma \ref{lem:J-traces}. Then $\rho_-$ is a density operator of rank $(3^n-1)/2$ whose nonzero
eigenvalues all equal $2/(3^n-1)$. Its Weyl characteristic function and
Wigner function are
\begin{equation}\label{eq:rho-minus-chi}
 \chi_{\rho_-}(0)=1,\quad
 \chi_{\rho_-}(x)=-\frac1{3^n-1}\quad (x\neq0),
\end{equation}
\begin{equation}\label{eq:rho-minus-W}
 \mathcal W_{\rho_-}(0)=-\frac1{3^n},\quad
 \mathcal W_{\rho_-}(x)=\frac1{3^n(3^n-1)}\quad (x\neq0).
\end{equation}
Then
\begin{equation}\label{eq:negative-parity}
 t_2=\frac2{3^n-1},\quad
 t_3=\frac4{(3^n-1)^2},\quad
 u=-v=\frac{3^n-3}{3^n(3^n-1)^2}.
\end{equation}
\end{proposition}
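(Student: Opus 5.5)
The plan is to read everything off the spectral structure of $J$ and the trace identities in Lemma~\ref{lem:J-traces}, and then feed the results into Proposition~\ref{prop:weyl-wigner-moments}.

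\textbf{Spectrum and normalization.} Since $J$ is a Hermitian involution with $\Tr J=1$, its eigenvalues are $\pm1$. The multiplicities $m_\pm$ satisfy $m_++m_-=3^n$ and $m_+-m_-=1$, so $m_-=(3^n-1)/2$. Hence $(I-J)/2$ is the projection onto the $-1$ eigenspace, of rank $(3^n-1)/2$. This gives $\rho_-=\frac{2}{3^n-1}\cdot\frac{I-J}{2}$, which is positive with trace one and has the stated nonzero eigenvalues. The spectral moments follow at once:
\[
t_2=\frac{3^n-1}{2}\cdot\frac{4}{(3^n-1)^2}=\frac{2}{3^n-1},\qquad t_3=\frac{3^n-1}{2}\cdot\frac{8}{(3^n-1)^3}=\frac{4}{(3^n-1)^2}.
\]

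\textbf{Characteristic and Wigner functions.} For $x\neq0$ we have $\Tr(W_x)=0$ and $\Tr(W_xJ)=1$ by Lemma~\ref{lem:J-traces}, so $\chi_{\rho_-}(x)=-1/(3^n-1)$; trivially $\chi_{\rho_-}(0)=1$.

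For the Wigner function we use $\Tr(A_x\rho_-)=\Tr(A_x)-\Tr(A_xJ)$ divided by $3^n-1$, with $\Tr A_x=\Tr J=1$. We need $\Tr(A_xJ)=\Tr(W_xJW_x^\dagger J)$. Since $JW_yJ=W_{-y}$ (negation of both labels in the Weyl convention), we get $JW_x^\dagger J=W_x$ and hence $\Tr(W_xJW_x^\dagger J)=\Tr(W_x W_x J)$. Then $W_xW_x=W_{2x}$ by \eqref{rep:weyl-identities}, and $\Tr(W_{2x}J)=1$ by Lemma~\ref{lem:J-traces}.

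That would give $\Tr(A_x\rho_-)=0$ for every $x$, which contradicts the stated values. So one must be careful with ordering: $\Tr(W_xJW_x^\dagger J)=\Tr(JW_x^\dagger J\,W_x)=\Tr(W_xW_x)$, which is $3^n\delta_{x,0}$. This is the corrected computation. It gives $\Tr(A_x\rho_-)=(1-3^n\delta_{x,0})/(3^n-1)$, which reproduces \eqref{eq:rho-minus-W} after multiplying by $3^{-n}$. This sign and ordering bookkeeping for $JW_yJ=W_{-y}$ (checked directly on basis vectors) is the step I expect to need the most care.

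\textbf{Cubic invariants.} Substitute into \eqref{eq:weyl-moments} and \eqref{eq:wigner}. For $u$:
\[
u=3^{-2n}\Bigl(1-\frac{3^{2n}-1}{(3^n-1)^3}\Bigr)=3^{-2n}\Bigl(1-\frac{3^n+1}{(3^n-1)^2}\Bigr)=3^{-2n}\,\frac{3^{2n}-3\cdot3^n}{(3^n-1)^2}=\frac{3^n-3}{3^n(3^n-1)^2}.
\]
For $v$:
\[
v=3^n\Bigl(-3^{-3n}+\frac{3^{2n}-1}{3^{3n}(3^n-1)^3}\Bigr)=3^{-2n}\Bigl(-1+\frac{3^n+1}{(3^n-1)^2}\Bigr)=-u.
\]
This confirms \eqref{eq:negative-parity}.
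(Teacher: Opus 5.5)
Your proof is correct and follows the paper's argument essentially step for step (the spectrum of $J$ from $\Tr J=1$, $\chi_{\rho_-}$ from Lemma~\ref{lem:J-traces}, $\Tr(A_xJ)=\Tr(W_{2x})=3^n\delta_{x,0}$, then the moment formulas). The one difference is in $v$: you take it from the Wigner cubic moment in \eqref{eq:wigner}, whereas the paper computes $\Tr(W_x\rho_-W_x\rho_-)=(3^n\delta_{x,0}+3^n-2)/(3^n-1)^2$, applies the Weyl formula \eqref{eq:weyl-moments}, and uses the Wigner sum only as a cross-check.

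In the write-up, delete your first, retracted computation of $\Tr(A_xJ)$. The slip there was a stray extra $J$: since $JW_x^\dagger J=W_x$ already absorbs both parity factors, you get $A_xJ=W_x(JW_x^\dagger J)=W_x^2=W_{2x}$ directly.
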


\begin{proof}
By Lemma~\ref{lem:J-traces}, $I-J$ is twice the projection onto the
negative parity subspace, so it has rank $(3^n-1)/2$. Normalizing by
$3^n-1$ gives a density operator whose nonzero eigenvalues are
$2/(3^n-1)$. Then
\[
 t_2=\frac{3^n-1}{2}\Bigl(\frac{2}{3^n-1}\Bigr)^2=\frac2{3^n-1},
 \quad
 t_3=\frac{3^n-1}{2}\Bigl(\frac{2}{3^n-1}\Bigr)^3=\frac4{(3^n-1)^2}.
\]
Next, $\chi_{\rho_-}(x)=\bigl(\Tr W_x-\Tr(W_xJ)\bigr)/(3^n-1)$. Since
 $\Tr W_x=3^n\delta_{x,0}$ and $\Tr(W_xJ)=1$, we obtain
\[
 \chi_{\rho_-}(0)=1,\quad
 \chi_{\rho_-}(x)=-\frac1{3^n-1}\quad(x\neq0),
\]
which is \eqref{eq:rho-minus-chi}. Substituting into the first Weyl
moment formula, we obtain
\[
 u=3^{-2n}\sum_{x\in V_n}\chi_{\rho_-}(x)^3
 =\frac1{3^{2n}}\Bigl(1-\frac{3^{2n}-1}{(3^n-1)^3}\Bigr)
 =\frac{3^n-3}{3^n(3^n-1)^2}.
\]
For $v$, expanding $\rho_-$ and using $JW_xJ=W_{-x}$ gives
\[
\Tr(W_x\rho_-W_x\rho_-)
=\frac{3^n\delta_{x,0}+3^n-2}{(3^n-1)^2}.
\]
Hence
\[
 v=3^{-2n}\sum_{x\in V_n}\Tr(W_x\rho_-W_x\rho_-)\chi_{\rho_-}(x)
 =-\frac{3^n-3}{3^n(3^n-1)^2}.
\]
For the Wigner function, $$\mathcal W_{\rho_-}(x)
=\bigl(\Tr A_x-\Tr(A_xJ)\bigr)/\bigl(3^n(3^n-1)\bigr).$$
Here $\Tr A_x=\Tr J=1$ and $A_xJ=W_{2x}$, so
$\Tr(A_xJ)=3^n\delta_{x,0}$. These identities give
\eqref{eq:rho-minus-W}.
Finally,
\[
3^n\sum_{x\in V_n}\mathcal W_{\rho_-}(x)^3
 =-\frac1{3^{2n}}+\frac{3^n+1}{3^{2n}(3^n-1)^2}
 =-\frac{3^n-3}{3^n(3^n-1)^2},
\]
in agreement with the value of $v$ above.
\end{proof}

\begin{example}\label{EX}
For $n=2$, \eqref{eq:negative-parity} gives
\(t_2 =\tfrac14\), \(t_3=\tfrac1{16}\),
\(u=-v=\tfrac1{96},\)
and substituting into \eqref{eq:weights} yields the sector weights
\begin{equation}\label{eq:D9-weights}
 (w_A,w_B,w_C,w_D,w_E)
 =\Bigl(0,\ \tfrac5{16},\ \tfrac1{96},\ \tfrac5{96},\ \tfrac58\Bigr),
 \quad \sum_jw_j=1 .
\end{equation}
For $n\geq2$ one has $v<0$, so $\rho_-$ is an explicit mixed state with a
negative cubic Wigner moment. By \eqref{eq:wigner}, this witnesses
Wigner negativity of $\mathcal W_{\rho_-}$.
\end{example}

\begin{proposition}\label{prop:depolarized}
For a unit vector $|\psi\rangle\in\HH$, let
\begin{equation*}
 \rho_\lambda
 =\lambda|\psi\rangle\langle\psi|+(1-\lambda)\frac{I}{3^n},
 \quad 0\leq\lambda\leq1,
\end{equation*}
and write $\tau=|\psi\rangle\langle\psi|$ and
$q=\Tr\bigl(P\tau^{\otimes3}\bigr)$.
Then
\begin{align}
 t_2&=\lambda^2+\frac{1-\lambda^2}{3^n},\nonumber\\
 t_3&=\lambda^3+\frac{3\lambda^2(1-\lambda)}{3^n}
     +\frac{(1-\lambda)^2(1+2\lambda)}{3^{2n}},\nonumber\\
 u&=\lambda^3q+\frac{1-\lambda^3}{3^{2n}},\label{eq:dep-u}\\
 v&=\lambda^3q+\frac{3\lambda^2(1-\lambda)}{3^{2n}}
     +\frac{(1-\lambda)^2(1+2\lambda)}{3^{3n}}.\label{eq:dep-v}
\end{align}
\end{proposition}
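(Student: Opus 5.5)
We expand $\rho_\lambda^{\otimes3}$ multilinearly. Write $\mu=(1-\lambda)3^{-n}$, so $\rho_\lambda=\lambda\tau+\mu I$. For $t_2$ and $t_3$ we use that $\tau$ is a rank-one projection commuting with $I$. The eigenvalues of $\rho_\lambda$ are $\lambda+\mu$ (once) and $\mu$ (with multiplicity $3^n-1$). Hence $t_k=(\lambda+\mu)^k+(3^n-1)\mu^k$, and expanding binomially gives the displayed formulas. For instance, $t_2=\lambda^2+2\lambda\mu+3^n\mu^2=\lambda^2+(1-\lambda^2)3^{-n}$. The cubic case is the same bookkeeping: $t_3=\lambda^3+3\lambda^2\mu+3\lambda\mu^2 3^{0}\cdot 3^n/3^n+3^n\mu^3$, which collects into the stated expression.

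For $u$ and $v$ we write
\[
\rho_\lambda^{\otimes3}=\sum_{k=0}^{3}\lambda^k\mu^{3-k}\sum_{|\Sigma|=k}\tau_\Sigma ,
\]
where $\tau_\Sigma$ places $\tau$ in the copies indexed by $\Sigma\subseteq\{1,2,3\}$ and $I$ elsewhere. We then evaluate $\Tr(P\,\tau_\Sigma)$ and $\Tr(PS\,\tau_\Sigma)$ for each $\Sigma$. The key tool is the Weyl form $P=3^{-2n}\sum_xW_x^{\otimes3}$ of \eqref{rep:css-weyl-average}. The term $\Sigma=\{1,2,3\}$ gives $\Tr(P\tau^{\otimes3})=q$, and $\Tr(PS\tau^{\otimes3})=q$ as well, because $S\tau^{\otimes3}=\tau^{\otimes3}$. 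For any $\Sigma$ that contains an identity factor, the factor $\Tr(W_x)=3^n\delta_{x,0}$ kills every $x\neq0$ in $\Tr(P\tau_\Sigma)$. Only $x=0$ survives, giving $3^{-2n}\cdot 3^{n(3-k)}$. Summing, we get
\[
u=\lambda^3q+3^{-2n}\sum_{k<3}\binom3k\lambda^k(1-\lambda)^{3-k},
\]
and the binomial sum equals $1-\lambda^3$, which is \eqref{eq:dep-u}.

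For $v$ we use the swap identity $\Tr(S(B_1\otimes B_2\otimes B_3))=\Tr(B_1B_2)\Tr(B_3)$, now with $B_i=W_x\cdot(\tau\text{ or }I)$. When the third copy carries $I$, the factor $\Tr(W_x)$ again forces $x=0$, which gives $3^{-2n}\cdot3^n\Tr(\tau_1\tau_2)$ in the obvious notation. When the third copy carries $\tau$ but at least one of the first two carries $I$, we get $\Tr(W_x\tau W_x)\Tr(W_x\tau)$ or $\Tr(W_x^2)\Tr(W_x\tau)$. Here $\Tr(W_x\tau W_x)=\Tr(W_{2x}\tau)\,\omega^{0}$ by \eqref{rep:weyl-identities} (note $[x,x]=0$), and $W_{2x}=W_{-x}=W_x^\dagger$. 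The resulting sums are therefore $\sum_x|\chi_\tau(x)|^2=3^n\Tr\tau^2=3^n$ by Weyl orthogonality. The case $\Sigma=\{3\}$ gives $3^{-2n}\sum_x\Tr(W_x^2)\chi_\tau(x)$, and $\Tr(W_{2x})$ forces $x=0$, giving $3^{-2n}\cdot 3^n$.

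We will now tabulate the contributions. For $k=0$ the contribution is $\mu^3\Tr(PS)=\mu^3$. For $k=1$ each of the three placements contributes $3^{-n}$, and for $k=2$ each of the three placements contributes $3^{-n}$ (e.g.\ $\{1,2\}$ gives $3^{-2n}3^n\Tr\tau^2$, and $\{1,3\}$ gives $3^{-2n}\cdot3^n$). Substituting $\mu=(1-\lambda)3^{-n}$ gives
\[
v=\lambda^3q+3\lambda^2(1-\lambda)3^{-2n}+\bigl(3\lambda(1-\lambda)^2+(1-\lambda)^3\bigr)3^{-3n},
\]
and $(1-\lambda)^2(3\lambda+1-\lambda)=(1-\lambda)^2(1+2\lambda)$ matches \eqref{eq:dep-v}.

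The only delicate point is this case-by-case evaluation of the $k=1,2$ terms for $v$. We must check that each Weyl sum collapses correctly, using that $[x,x]=0$ so that $W_xW_x=W_{2x}$ carries no phase, and $\Tr(W_{2x})=3^n\delta_{x,0}$ since $2$ is invertible. As a consistency check, at $\lambda=0$ the formulas reproduce the maximally mixed values computed in Subsection~\ref{rep:explicit-checks}.
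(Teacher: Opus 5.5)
Your proof is correct. It shares the paper's skeleton: expand $\rho_\lambda^{\otimes3}$ into eight terms according to which copies carry $\tau$. You evaluate those terms with a different tool, however.

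\begin{itemize}
\item The paper uses the one-site partial traces $\Tr_3(p)=\tfrac13 I\otimes I$, $\Tr_{23}(p)=I$, $\Tr_3(ps)=\tfrac13 s_{12}$ and $\Tr_{23}(ps)=\tfrac13 I$. It tensors these over the sites and treats every position of the maximally mixed factor in the same way.
\item You use the Weyl average $P=3^{-2n}\sum_xW_x^{\otimes3}$, the swap trace identity, $\Tr W_x=3^n\delta_{x,0}$, $W_xW_x=W_{-x}=W_x^\dagger$, and the Parseval identity $\sum_x|\chi_\tau(x)|^2=3^n$.
\end{itemize}

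The paper's route is shorter once those four finite partial traces are checked. For $v$, however, it implicitly needs $PS$ to be invariant under all copy permutations, because only $\Tr_3$ and $\Tr_{23}$ are listed while the identity factor can sit in any copy. This invariance does hold: every odd permutation acts on the code space exactly as $S$ does, so $PV_{(13)}=PV_{(23)}=PS$.

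Your placement-by-placement evaluation needs no such symmetry. It also shows explicitly where $u$ and $v$ differ:
\begin{itemize}
\item When at most one copy carries $\tau$, the $x=0$ collapse for $PS$ gives one fewer factor of $3^n$ than for $P$.
\item The placements $\Sigma=\{1,3\}$ and $\Sigma=\{2,3\}$ become Parseval sums. They give the same $3^{-n}$ as the corresponding terms of $u$.
\end{itemize}

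The factor $3^{0}\cdot3^n/3^n$ in your displayed $t_3$ line equals $1$ and should simply be dropped.
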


\begin{proof}
Write $\rho_\lambda=\lambda\tau+(1-\lambda)3^{-n}I$ and expand the
powers, using $\tau^k=\tau$ and $\operatorname{Tr}\tau=1$.
For the two moment formulas we need the partial traces of the
one-site operators
\begin{equation*}
 \Tr_3(p)=\frac{I\otimes I}{3},\quad
 \Tr_{23}(p)=I,\quad
 \Tr_3(ps)=\frac{s_{12}}{3},\quad
 \Tr_{23}(ps)=\frac{I}{3},
\end{equation*}
where $s_{12}$ swaps two qutrits and $s=s_{12}\otimes I$ acts on
three qutrits; $\Tr_3$ traces out the third copy and $\Tr_{23}$ traces out
the last two. Tensoring over the sites gives the corresponding
identities for $P$ and $PS$.
For $u$ we expand $\rho_\lambda^{\otimes3}$ into eight terms. A term
with $j$ factors of $\tau$ contributes
\[
 \lambda^j(1-\lambda)^{3-j}
 \Tr\bigl(P\,\tau^{\otimes j}\otimes(3^{-n}I)^{\otimes(3-j)}\bigr).
\]
For $j=3$ this is $\lambda^3q$. For $j=2$, tracing out the maximally mixed factor
gives a factor $3^{-2n}$ together with the sum over the three positions
of the maximally mixed factor, contributing $3\lambda^2(1-\lambda)3^{-2n}$. For $j=1$ all
three positions contribute the same factor $3^{-2n}$, giving
$3\lambda(1-\lambda)^23^{-2n}$. Finally $j=0$ contributes
$(1-\lambda)^33^{-2n}$. Thus the terms with $j<3$ sum to
\[
 \frac{3\lambda^2(1-\lambda)+3\lambda(1-\lambda)^2+(1-\lambda)^3}
 {3^{2n}}
 =\frac{1-\lambda^3}{3^{2n}}.
\]
Adding $\lambda^3q$ proves \eqref{eq:dep-u}.
For $v$, the same expansion is applied to $PS$ rather than to $P$. The
cases $j=3$ and $j=0$ contribute $\lambda^3q$ and
$(1-\lambda)^33^{-3n}$; the case $j=2$ contributes
$3\lambda^2(1-\lambda)3^{-2n}$; and the case $j=1$ contributes
$3\lambda(1-\lambda)^23^{-3n}$. Combining the $j=1$ and $j=0$
contributions gives \eqref{eq:dep-v}.
\end{proof}

\begin{remark}
\label{rem:multiplicativity}
If $\rho=\rho_1\otimes\rho_2$ is a tensor product across disjoint sets
of qutrits, then
\[
 P=P_1\otimes P_2,
 \quad
 PS=(P_1S_1)\otimes(P_2S_2),
\]
and therefore
\begin{equation*}
 u(\rho)=u(\rho_1)\,u(\rho_2),
 \quad
 v(\rho)=v(\rho_1)\,v(\rho_2).
\end{equation*}
This follows from the tensor product structure of the Weyl operators,
the parity operator, and the copy permutation.
\end{remark}

\begin{remark}\label{rem:n1-checks}
The Weyl moment formulas \eqref{eq:weyl-moments}, the moment formulas
\eqref{eq:negative-parity}, Proposition~\ref{prop:depolarized}, and
the polynomial identities \eqref{eq:weights} hold for every $n\geq1$.
The numerical table \eqref{eq:D9-weights} concerns $n=2$.
For $n=1$, $\rho_-$ has rank one and $u=v=0$, so its cubic
Wigner moment does not witness its Wigner negativity.
\end{remark}
\section{Pure tensor cubes}\label{sec:pure}
Throughout this section, $n\geq1$. Unless stated otherwise,
$\rho=|\psi\rangle\langle\psi|$ is a pure state on $\HH$.
We use the notation of Section~\ref{sec:arbitrary}. In particular, $t_2,t_3,u,v$ are as in
\eqref{eq:four}. Numbered references to \cite{zmy24} in this section
use the numbering of arXiv:2410.13575v1.
\begin{proposition}\label{prop:pure-reduction}
For a pure state $\rho=|\psi\rangle\langle\psi|$,
\begin{equation}\label{eq:pure}
 t_2=t_3=1,\quad u=v=q:=\Tr\bigl(P\rho^{\otimes3}\bigr)
\end{equation}
and
\begin{equation}\label{eq:pure-weights}
 (w_A,w_B,w_C,w_D,w_E)=(q,\,1-q,\,0,\,0,\,0),\quad 0\leq q\leq1 .
\end{equation}
For a general density operator $\rho$ and the input $\rho^{\otimes3}$,
$w_E=0$ holds if and only if $\rho$ is pure.
\end{proposition}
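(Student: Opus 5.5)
The plan is to read off all claims from Theorem~\ref{thm:mixed} and the definitions in \eqref{eq:four}. No new representation theory is needed.

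For a pure state $\rho=|\psi\rangle\langle\psi|$ we have $\rho^k=\rho$, so $t_2=t_3=\Tr\rho=1$. For $u=v$, note that $\rho^{\otimes3}=|\psi\rangle^{\otimes3}\langle\psi|^{\otimes3}$ and that $S$ fixes the symmetric vector $|\psi\rangle^{\otimes3}$. Hence $S\rho^{\otimes3}=\rho^{\otimes3}$, and therefore
\[
v=\Tr(PS\rho^{\otimes3})=\Tr(P\rho^{\otimes3})=u=:q .
\]
Substituting $t_2=t_3=1$ and $u=v=q$ into \eqref{eq:weights} gives
\[
w_A=q,\quad w_B=\tfrac{6}{6}-q=1-q,\quad w_C=0,\quad w_D=\tfrac{1-3+2}{6}-0=0,\quad w_E=\tfrac{2(1-1)}{3}=0,
\]
which is \eqref{eq:pure-weights}. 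The range $0\le q\le1$ holds because $q$ is the expectation of the projection $P$ in a state. Equivalently, $w_A,w_B\ge0$ since $Q_A,Q_B\succeq0$. The weight formulas hold for every $n\ge1$ by Remark~\ref{rem:n1-exception}, so no restriction on $n$ is needed.

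For the last assertion, Theorem~\ref{thm:mixed} gives $w_E=\tfrac23(1-t_3)$, so $w_E=0$ if and only if $t_3=\sum_i\lambda_i^3=1$, where the $\lambda_i\ge0$ are the eigenvalues of $\rho$ and $\sum_i\lambda_i=1$. Since $\lambda_i^3\le\lambda_i$, with equality exactly when $\lambda_i\in\{0,1\}$, we have $\sum_i\lambda_i^3\le\sum_i\lambda_i=1$. Equality forces every $\lambda_i\in\{0,1\}$, and the trace condition then leaves exactly one eigenvalue equal to $1$, i.e.\ $\rho$ is pure. The converse is the first part.

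The only step needing any care is the equality case $t_3=1\Rightarrow$ pure, and the elementary eigenvalue inequality above settles it. Everything else is direct substitution into \eqref{eq:weights}.
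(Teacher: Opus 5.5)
Your proposal is correct and follows essentially the same route as the paper: you substitute $t_2=t_3=1$ and $u=v=q$ (via $S|\psi\rangle^{\otimes3}=|\psi\rangle^{\otimes3}$) into \eqref{eq:weights}, then characterize $w_E=0$ through $t_3=1$ and an eigenvalue inequality. The only difference is that the paper passes through the chain $\lambda_i^3\leq\lambda_i^2\leq\lambda_i$ while you use $\lambda_i^3\leq\lambda_i$ directly, which works equally well since $0\leq\lambda_i\leq1$.
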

\begin{proof}
Let $\rho=|\psi\rangle\langle\psi|$ be pure, write
$|\Psi\rangle=|\psi\rangle^{\otimes3}$ and
$\rho^{\otimes3}=|\Psi\rangle\langle\Psi|$. Since $\rho^k=\rho$ for every
$k\geq1$, it follows that
\[
 t_2=\Tr(\rho^2)=1,\quad t_3=\Tr(\rho^3)=1,\quad
 u=\Tr(P\rho^{\otimes3})=q
\]
by the definition of $q$ in \eqref{eq:pure}. The transposition
$S=V_{(12)}$ permutes the three tensor factors, and $|\Psi\rangle$ is
invariant under every such permutation. Thus $S|\Psi\rangle=|\Psi\rangle$, which
implies that
\[v=\Tr(PS\rho^{\otimes3})=\Tr(P\rho^{\otimes3})=u=q .\]
Substituting $t_2=t_3=1$ and $u=v=q$ into \eqref{eq:weights} gives
 \eqref{eq:pure-weights}.
The bound $0\leq q\leq1$ holds because
$P$ is a projection and $\rho^{\otimes3}$ is a state.

It remains to prove the equivalence with purity. If $\rho$ is pure, then
$t_3=1$ and the last line of \eqref{eq:weights} gives $w_E=0$. Conversely,
\eqref{eq:weights} is stated for every density operator, so
$w_E=2(1-t_3)/3$ in general, and $w_E=0$ is equivalent to $t_3=1$. Let
$\lambda_1,\dots,\lambda_{3^n}$ be the eigenvalues of $\rho$. Then $\lambda_i\geq0$ and $\sum_i\lambda_i=\Tr(\rho)=1$,
while $\lambda_i\leq1$ because $\rho\preceq I$. Thus
$\lambda_i^3\leq\lambda_i^2\leq\lambda_i$ for every $i$  and
\[
 t_3=\sum_{i=1}^{3^n}\lambda_i^3
 \ \leq\ t_2=\sum_{i=1}^{3^n}\lambda_i^2
 \ \leq\ \sum_{i=1}^{3^n}\lambda_i=1 .
\]
If $w_E=0$, then $t_3=1$ and both inequalities are equalities. The
equality $\sum_i\lambda_i^2=\sum_i\lambda_i$ forces
$\lambda_i^2=\lambda_i$ for every $i$. Together with $\sum_i\lambda_i=1$ this shows
that exactly one eigenvalue equals one and all the others vanish, so
$\rho$ is a rank-one projection, that is, a pure state.
\end{proof}

The range below recovers
\cite[Proposition~4 and Lemma~20]{zmy24}; see also
Remark~\ref{rem:overlap-zmy} for the normalization and the family
attaining all values.
\begin{theorem}\label{thm:range}
For every pure $n$-qutrit state $\psi$, we have
$ 0\leq q\leq\frac1{3^n} .$
Every value in $[0,1/3^n]$ occurs, and equality $q=1/3^n$ holds if and
only if $\psi$ is a stabilizer state.
\end{theorem}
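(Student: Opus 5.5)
The upper bound comes from Corollary~\ref{cor:moment-constraints} specialized to pure states. By Proposition~\ref{prop:pure-reduction}, $t_2=1$ and $u=q$, so the corollary gives $0\leq q\leq t_2/3^n=3^{-n}$. Equivalently, by Proposition~\ref{prop:weyl-wigner-moments},
\[
 q=3^{-2n}\sum_{x\in V_n}\chi_\rho(x)^3
 \leq 3^{-2n}\sum_{x\in V_n}|\chi_\rho(x)|^3
 \leq 3^{-2n}\sum_{x\in V_n}|\chi_\rho(x)|^2=3^{-n},
\]
where $|\chi_\rho(x)|\leq1$ and $\sum_x|\chi_\rho(x)|^2=3^n$.

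\emph{Equality case.} Equality in the second inequality forces $|\chi_\rho(x)|\in\{0,1\}$ for every $x$. Since $\sum_x|\chi_\rho(x)|^2=3^n$, the support $L=\{x:|\chi_\rho(x)|=1\}$ has exactly $3^n$ elements. If $|\Tr(W_x\rho)|=1$ for a pure $\rho$, then $\psi$ is an eigenvector of the unitary $W_x$, with eigenvalue $\omega^{k}$ because $W_x^3=I$. Eigenvectors of $W_x$ and $W_y$ generate eigenvalues of $W_xW_y$, and $W_xW_yW_x^\dagger W_y^\dagger=\omega^{[x,y]}$ must fix $\psi$. Hence $L$ is closed under addition and is isotropic. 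An isotropic subspace of size $3^n$ is Lagrangian, so $\psi$ is a joint eigenvector of the maximal abelian group $\{W_x:x\in L\}$; that is, $\psi$ is a stabilizer state.

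Conversely, take a stabilizer state, say $|0\rangle^{\otimes n}$. Then $\chi_\rho(x)=1$ on $L=\{(a,0)\}$ and vanishes elsewhere, so $q=3^{-2n}\cdot3^n=3^{-n}$. By the Clifford invariance of $u$ (Remark~\ref{rem:weyl-cubes}(ii)) and the transitivity of the Clifford group on stabilizer states, every stabilizer state attains $q=3^{-n}$. Equality in the first inequality (real cubes) also holds automatically once the moduli are in $\{0,1\}$, because $\chi_\rho(x)^3=1$ on $L$: the eigenvalue is a cube root of unity.

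\emph{Every value occurs.} Since $q$ is a continuous function on the connected space of pure states, it suffices to exhibit one pure state with $q=0$ and one with $q=3^{-n}$; the intermediate value theorem then fills the interval. For the minimum, use multiplicativity (Remark~\ref{rem:multiplicativity}): if some single-qutrit pure state $\phi$ has $q(\phi)=0$, then $\phi\otimes|0\rangle^{\otimes(n-1)}$ has $q=0$. For $n=1$ I would try the strange state $|\phi\rangle=(|1\rangle-|2\rangle)/\sqrt2$. This is the rank-one $\rho_-$ for $n=1$, and Remark~\ref{rem:n1-checks} already records $u=0$ for it. The Weyl moment formula confirms this: $\chi(0)=1$, $\chi(x)=-\tfrac12$ for $x\neq0$, so $q=\tfrac19(1-8/8)=0$.

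\emph{Main obstacle.} The delicate step is the converse in the equality case: that $|\chi_\rho(x)|=1$ on a set of size exactly $3^n$ forces a stabilizer state. The phase argument above, showing that the support is an isotropic subgroup, has to be carried out carefully.
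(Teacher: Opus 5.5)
Your proposal is correct, and the upper bound and the forward direction of the equality case match the paper's proof. Both use the same chain of Weyl-moment inequalities. Termwise equality gives $|\chi_x|\in\{0,1\}$ on a support of size $3^n$. The group commutator then forces isotropy, and closure under addition makes the support a Lagrangian.

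Two steps differ.

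For stabilizer $\Rightarrow q=3^{-n}$, the paper handles an arbitrary stabilizer state with Lagrangian $M$ directly. On $M$, $\chi_x=c_x$ is a cube root of unity. Off $M$, $\chi_x=0$, because for some $y\in M$ with $[x,y]\neq0$ the vectors $W_x|\psi\rangle$ and $|\psi\rangle$ are eigenvectors of $W_y$ with distinct eigenvalues. You instead compute only at $|0\rangle^{\otimes n}$ and transport the value using Clifford invariance of $q$ and transitivity of the Clifford group on stabilizer states. That is valid, but it imports a fact the paper never proves: transitivity of $\operatorname{Sp}(2n,3)$ on Lagrangians, plus transitivity of Weyl translations on each joint eigenbasis. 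The paper's orthogonality argument is self-contained and no longer.

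For attainment, the paper writes down the family $\psi_\theta=(\cos\theta|0\rangle-\sin\theta|1\rangle)\otimes|0\rangle^{\otimes(n-1)}$, with $q(\psi_\theta)=(\cos^3\theta-\sin^3\theta)^2/3^n$ strictly decreasing on $[0,\pi/4]$. You use connectedness of the pure states and the intermediate value theorem between a stabilizer state and $\frac{|1\rangle-|2\rangle}{\sqrt2}\otimes|0\rangle^{\otimes(n-1)}$, reusing the $n=1$ case of $\rho_-$ and multiplicativity. Your zero endpoint is the image under $X\otimes I$ of the paper's $\theta=\pi/4$ endpoint, so both rest on the same strange state.

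Your route needs no new computation but is non-constructive. The paper's explicit monotone family is what it later uses to pick a concrete state with $q=q_*$ in the $3$-design discussion.
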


\begin{proof}
Write $\rho=|\psi\rangle\langle\psi|$ and
\begin{align}\label{Echi}
\chi_x=\langle\psi|W_x|\psi\rangle
\end{align}
for $x\in V_n$. Two ingredients are
used repeatedly. The Weyl expansion of the code projector $P$ in
\eqref{rep:css-weyl-average} gives
\begin{equation}\label{eq:range-q-Weyl}
 q=\Tr\bigl(P\rho^{\otimes3}\bigr)
 =3^{-2n}\sum_{x\in V_n}\chi_x^3 ,
\end{equation}
the sum being real because it equals $3^{2n}q$. The Weyl
operators satisfy the dual basis identity
$\sum_{x\in V_n}W_x\otimes W_x^\dagger=3^n\mathrm{SWAP}$, and
$\Tr\bigl(\mathrm{SWAP}(\rho\otimes\rho)\bigr)=\Tr(\rho^2)$, where
\[
\mathrm{SWAP}(|a\rangle\otimes|b\rangle):=|b\rangle\otimes|a\rangle.
\]
Therefore
\begin{equation}\label{eq:purity-parseval}
 \sum_{x\in V_n}|\chi_x|^2
 =\sum_{x\in V_n}\Tr(W_x\rho)\Tr(W_x^\dagger\rho)
 =3^n\Tr(\rho^2)=3^n ,
\end{equation}
where the last equality uses purity. Moreover, $W_x$ is unitary, so
$|\chi_x|\leq1$ for every $x$.
Since $P\succeq0$ and
$\rho^{\otimes3}\succeq0$, we have $q\geq0$. Because the sum in
\eqref{eq:range-q-Weyl} is real, $\mathrm{Re}(z^3)\leq|z|^3$ and
$|\chi_x|^3\leq|\chi_x|^2$ for $|\chi_x|\leq1$,
\begin{equation}\label{eq:q-bound}
 3^{2n}q=\sum_{x\in V_n}\operatorname{Re}\bigl(\chi_x^3\bigr)
 \leq\sum_{x\in V_n}|\chi_x|^3
 \leq\sum_{x\in V_n}|\chi_x|^2=3^n ,
\end{equation}
which is $q\leq1/3^n$.

 Suppose
$q=1/3^n$. Then both inequalities of \eqref{eq:q-bound} are equalities
term by term. Equality in the first one gives
$\operatorname{Re}(\chi_x^3)=|\chi_x|^3$ for every $x$, and equality in
the second gives $|\chi_x|^3=|\chi_x|^2$ for every $x$. Writing
$\chi_x=r_xe^{i\varphi_x}$ with $r_x=|\chi_x|$, these conditions give
$r_x\in\{0,1\}$ and, when $r_x=1$, $\cos(3\varphi_x)=1$.
Thus
$\chi_x\in\{0,1,\omega,\omega^2\}$ for every $x\in V_n$.
By \eqref{eq:purity-parseval}, the set
$M=\{x\in V_n:\,|\chi_x|=1\}$
has exactly $3^n$ elements. For $x\in M$, equality in the
Cauchy--Schwarz inequality gives
$W_x|\psi\rangle=\chi_x|\psi\rangle$.
If $x,y\in M$, the scalar commutator relation
$W_xW_y=\omega^{[x,y]}W_yW_x$, applied to $|\psi\rangle$, forces
$[x,y]=0$. The Weyl multiplication rule then gives
\[
 W_{x+y}|\psi\rangle=W_xW_y|\psi\rangle
 =\chi_x\chi_y|\psi\rangle.
\]
Thus $x+y\in M$. Since $0\in M$ and the field is $\F_3$, the set
$M$ is an isotropic linear subspace. Its order is $3^n$, so it is
maximal isotropic. The joint eigenspaces of the Weyl operators labelled
by $M$ are one dimensional and consist of stabilizer states; hence
$\psi$ is a stabilizer state.
Conversely, let $\psi$ be a
stabilizer state and $M\subset V_n$ its maximal isotropic label
set. Then $W_x|\psi\rangle=c_x|\psi\rangle$ with $c_x\in\{1,\omega,
\omega^2\}$ for $x\in M$ and $|M|=3^n$. For $x\in M$ we have
$\chi_x=c_x$, so $|\chi_x|=1$ and $\chi_x^3=1$, hence
$\operatorname{Re}(\chi_x^3)=1=|\chi_x|^3$ and $|\chi_x|^3=|\chi_x|^2$.
For $x\notin M$ we claim $\chi_x=0$. Since $M$ is maximal isotropic, its
symplectic complement is $M$ itself, so there is $y\in M$ with
$[x,y]\neq0$. Then
\[
 W_y\bigl(W_x|\psi\rangle\bigr)=\omega^{[y,x]}W_xW_y|\psi\rangle
 =\omega^{[y,x]}c_y\bigl(W_x|\psi\rangle\bigr),
\]
while $W_y|\psi\rangle=c_y|\psi\rangle$. The two eigenvalues
$\omega^{[y,x]}c_y$ and $c_y$ differ because $\omega^{[y,x]}\neq1$, so
the eigenvectors $W_x|\psi\rangle$ and $|\psi\rangle$ of the unitary
$W_y$ are orthogonal and $\chi_x=0$. Thus every term of
\eqref{eq:q-bound} is an equality in this case, and
\[
 3^{2n}q=\sum_{x\in V_n}|\chi_x|^3=\sum_{x\in M}1=3^n ,
\]
that is, $q=1/3^n$.

 For $0\leq\theta\leq\pi/4$, we let
\begin{equation}\label{eq:family-theta}
 |\psi_\theta\rangle
 =\bigl(\cos\theta\,|0\rangle-\sin\theta\,|1\rangle\bigr)
  \otimes|0^{\otimes(n-1)}\rangle ,
 \quad
 |a_\theta\rangle=\cos\theta\,|0\rangle-\sin\theta\,|1\rangle .
\end{equation}
The following construction is given in
\cite[Section~X.C]{zmy24}.
The code states factor over the $n$ sites:
\[
\bigl|\phi_{\boldsymbol x}\bigr\rangle
=\bigotimes_{i=1}^{n}|\phi_{x_i}\rangle,
\quad P=\bigotimes_{i=1}^{n}P_i,
\]
where $P_i=\sum_{x\in\F_3}|\phi_x\rangle\langle\phi_x|$ acts on the
three copies at site $i$. For a product state this gives the factorization
\begin{equation}\label{eq:range-factorization}
 q(\psi_\theta)=p(a_\theta)\,p(|0\rangle)^{n-1},\quad
 p(a)=\langle a^{\otimes3}|P_i|a^{\otimes3}\rangle
 =\sum_{x\in\F_3}\bigl|\langle\phi_x|a^{\otimes3}\rangle\bigr|^2 .
\end{equation}
On one site,
\[
\langle\phi_x|a^{\otimes3}\rangle
=3^{-1/2}\sum_{y\in\F_3}a_{x+y}a_{y-x}a_y,
\]
with $a_0=\cos\theta$,
$a_1=-\sin\theta$ and $a_2=0$. Then
\[
\left\{
  \begin{aligned}
  & \sum_y a_y^3=\cos^3\theta-\sin^3\theta,\  \text{if}\  x=0, \\
   & a_1a_2a_0+a_2a_0a_1+a_0a_1a_2=0,\  \text{if}\  x=1,\\
   & a_2a_1a_0+a_0a_2a_1+a_1a_0a_2=0,\ \text{if}\  x=2.
  \end{aligned}
\right.
 \]
Hence $p(a_\theta)=(\cos^3\theta-\sin^3\theta)^2/3$. For the
state $|0\rangle$, the only nonvanishing overlap is
$\langle\phi_0|000\rangle=3^{-1/2}$. Then $p(|0\rangle)=1/3$. Substituting
into \eqref{eq:range-factorization} gives
\begin{equation}\label{eq:family}
 q(\psi_\theta)=\frac{(\cos^3\theta-\sin^3\theta)^2}{3^n} ,
\end{equation}
which equals $1/3^n$ at $\theta=0$ and $0$ at $\theta=\pi/4$. The right
side of \eqref{eq:family} is continuous on $[0,\pi/4]$, and on the open
interval its derivative is
\[
 -\frac6{3^n}\bigl(\cos^3\theta-\sin^3\theta\bigr)\sin\theta\cos\theta
 \bigl(\cos\theta+\sin\theta\bigr)<0 ,
\]
so it is strictly decreasing there. Hence
$\theta\mapsto q(\psi_\theta)$ maps $[0,\pi/4]$ bijectively onto
$[0,1/3^n]$, and every value in that interval occurs.
\end{proof}
\begin{remark}\label{rem:overlap-zmy}
The bounds of $q$ agree with
\cite[Proposition~4 and Lemma~20]{zmy24}. In the qutrit case, the
expectations of the two noninvertible commutant operators used there
equal $3^nq$. The family \eqref{eq:family-theta} and the resulting
construction of $3$-designs also appear in
\cite[Section~X.C]{zmy24}. These results are included for completeness.
\end{remark}
\subsection{A robust stabilizer-fidelity bound}

Let $F_{\mathrm{stab}}(\psi)=\max_T|\langle T|\psi\rangle|^2$, where
$T$ ranges over pure stabilizer states of $n$ qutrits, and put
$K=2+\sqrt2$. The proof below follows the use of commuting Weyl operators
in the stabilizer testing argument of \cite[Section~3.2]{gnw21}.

\begin{proposition}\label{prop:robust}
For every pure $n$-qutrit state $\psi$, we have
\begin{equation}\label{eq:robust}
 F_{\mathrm{stab}}(\psi)\geq1-K\,(1-3^nq).
\end{equation}
In particular, if $F_{\mathrm{stab}}(\psi)\leq1-\zeta$ then
$q\leq(1-\zeta/K)/3^n$.
\end{proposition}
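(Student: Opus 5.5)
The plan is to adapt the stabilizer-testing argument of \cite[Section~3.2]{gnw21} to qutrits, working throughout with the characteristic function $\chi_x=\langle\psi|W_x|\psi\rangle$ from \eqref{Echi}. Write $\varepsilon=1-3^nq$. We may assume $\varepsilon<1/K$, since otherwise the right side of \eqref{eq:robust} is negative and there is nothing to prove.

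\emph{Step 1: reduce to a probability average.} By \eqref{eq:purity-parseval}, the numbers $p_x=|\chi_x|^2/3^n$ form a probability distribution on $V_n$, and each satisfies $p_x\le 3^{-n}$. Starting from \eqref{eq:range-q-Weyl} and arguing as in \eqref{eq:q-bound}, I get
\[
 1-\varepsilon=3^nq\le 3^{-n}\sum_x|\chi_x|^3=\sum_x p_x|\chi_x| .
\]
Hence $\mathbb E_p[1-|\chi_x|]\le\varepsilon$. By Markov's inequality, the heavy set $M_\delta=\{x:|\chi_x|^2>1-\delta\}$ then carries $p$-mass at least $1-\varepsilon/(1-\sqrt{1-\delta})$, where the threshold $\delta$ is a parameter to be optimized at the end.

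\emph{Step 2: the heavy set is isotropic and additive.} For $x\in M_\delta$ the state is an approximate eigenvector: $\|W_x\psi-\chi_x\psi\|^2=1-|\chi_x|^2<\delta$. For $x,y\in M_\delta$ I will expand $W_xW_y\psi$ in the two possible orders. This gives $|\omega^{[x,y]}-1|\,|\chi_x\chi_y|\le 2\sqrt\delta$. Because $|\omega^k-1|=\sqrt3$ for $k\ne0$, a small enough $\delta$ forces $[x,y]=0$. The same expansion gives $|\chi_{x+y}-\omega^{c}\chi_x\chi_y|\le 2\sqrt\delta$, so $x+y$ is again heavy up to a degraded threshold.

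I expect the main obstacle to be keeping the threshold from degrading under repeated addition. What I would try first is to avoid closing under addition altogether, using a counting argument instead. The mass of $M_\delta$ exceeds $1-\varepsilon/(1-\sqrt{1-\delta})$, and since $p_x\le 3^{-n}$ this forces $|M_\delta|>3^n\cdot(\text{mass})$. Once the mass exceeds $1/3$, $M_\delta$ is a pairwise-commuting set with more than $3^{n-1}$ elements. The subgroup it generates is then isotropic (pairwise commuting elements generate an isotropic span), and since isotropic subspaces have order $3^k$ with $k\le n$, it has order exactly $3^n$. So it is a maximal isotropic subspace $L$.

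\emph{Step 3: fidelity with a stabilizer state.} For a maximal isotropic $L$ and a character $c$ on $L$ chosen to match the phases of $\chi$ on $M_\delta$, the projector onto the stabilizer state is $\Pi=3^{-n}\sum_{x\in L}\overline{c_x}W_x$. This gives
\[
 F_{\mathrm{stab}}(\psi)\ge\langle\psi|\Pi|\psi\rangle=3^{-n}\sum_{x\in L}\overline{c_x}\chi_x .
\]
I will then bound this sum below using $\operatorname{Re}(\overline{c_x}\chi_x)\ge |\chi_x|^2-O(\sqrt{\delta})$ on $M_\delta$, together with the mass estimate from Step 1. Optimizing over $\delta$ should produce the constant $K=2+\sqrt2$: the $\sqrt2$ comes from the Markov step with $1-\sqrt{1-\delta}$ at the balancing threshold $\delta=1/2$, and the $2$ from the phase-matching loss. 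If this bookkeeping does not close, my fallback is to follow \cite{gnw21} verbatim, replacing their anticommutation bound $|\chi_x|^2+|\chi_y|^2\le1$ by the qutrit analogue derived from $|\omega-1|=\sqrt3$. The final assertion of the proposition then follows by rearranging \eqref{eq:robust}.
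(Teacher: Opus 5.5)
Your Step 1 is correct: it is the paper's estimate $1-3^nq\ge(1-1/\sqrt2)\sum_{x\notin H}p_x$ with $H=\{x:|\chi_x|^2>1/2\}$. Note, however, that $1/(1-1/\sqrt2)=2+\sqrt2$ exactly. The whole constant $K$ is used up by this Markov step at threshold $\delta=1/2$, so nothing is left for a ``phase-matching loss,'' and every later step must be lossless at that threshold.

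\textbf{Step 2 does not reach that threshold.} For $x,y\in M_{1/2}$ you only know $|\chi_x\chi_y|>1/2$. Since $\sqrt3\cdot\tfrac12<2\sqrt{1/2}$, your approximate-eigenvector inequality does not force $[x,y]=0$. The refined form $|\omega^{[x,y]}-1|\,|\chi_x\chi_y|\le2\sqrt{(1-|\chi_x|^2)(1-|\chi_y|^2)}$ forces commutation only among labels with $|\chi_x|^2>4-2\sqrt3\approx0.536$. Through your Markov step this would at best give the constant $2+\sqrt3$, not $2+\sqrt2$.

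The missing ingredient is the exact uncertainty relation the paper proves: if $[x,y]\neq0$ then $|\chi_x|^2+|\chi_y|^2\le1$. A Clifford unitary sends the two labels to $Z$ and $X^{\pm1}$ on the first qutrit without changing $|\chi|$. The single-qutrit Parseval identity for the reduced state $\tau$ then gives $1+2|\operatorname{Tr}(Z\tau)|^2+2|\operatorname{Tr}(X\tau)|^2\le3\operatorname{Tr}(\tau^2)\le3$. This bound does not come from $|\omega-1|=\sqrt3$. With it, $H$ is exactly pairwise commuting, and its span lies inside some maximal isotropic $M$. Your counting argument is then unnecessary; in any case it applies only when the mass of $M_\delta$ exceeds $1/3$, not on the whole range $K\varepsilon<1$.

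\textbf{Step 3 does not close as planned.}
\begin{enumerate}
\item A per-label loss $O(\sqrt\delta)$ at fixed $\delta$, summed over up to $3^n$ labels and divided by $3^n$, leaves an additive error of order $\sqrt\delta$. This does not vanish as $\varepsilon\to0$, so it cannot give $1-K\varepsilon$.
\item Labels in $L\setminus M_\delta$ contribute terms of uncontrolled sign. Cauchy--Schwarz bounds them only by $\sqrt{\sum_{x\notin M_\delta}p_x}$, a square-root loss.
\item You have not shown that a character matching the phases of $\chi$ on $M_\delta$ exists.
\end{enumerate}
The paper avoids phases altogether. Let $\{|T_j\rangle\}$ be the common eigenbasis of the $W_x$ with $x\in M$, and put $r_j=|\langle T_j|\psi\rangle|^2$. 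Fourier orthogonality on $M$ gives $\sum_{x\in M}|\chi_x|^2=3^n\sum_jr_j^2$. A probability vector satisfies $\max_jr_j\ge\sum_jr_j^2$, so
\[
F_{\mathrm{stab}}(\psi)\ge\max_jr_j\ge\sum_jr_j^2=\sum_{x\in M}p_x\ge\sum_{x\in H}p_x\ge1-K(1-3^nq).
\]
Your projector $\Pi$, with $c$ the eigencharacter of the best $T_j$, gives the same bound. That choice, however, is justified by this averaging identity, not by matching phases.
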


\begin{proof}
We apply the strategy described above to the deficit $1-3^nq$ by
identifying a commuting set of Weyl operators with large expectation
values and bounding the probability mass outside this set. Let $\chi_x$ be as in \eqref{Echi} and
$p_x=|\chi_x|^2/3^n$. Then, by \eqref{eq:purity-parseval}, the numbers $p_x$
form a probability distribution on $V_n$.

The proof has three steps.

\textbf{Step 1.} If $W_x,W_y$ do not commute,
a Clifford unitary implementing a suitable symplectic transformation
sends their labels to those of $Z$ and $X$ or $X^{-1}$ acting on the
first qutrit. Since $X$ and $X^{-1}$ have conjugate expectation values,
it suffices to consider the
first qutrit reduced state $\tau$. The Weyl Parseval identity
gives
\[
1+2|\Tr(Z\tau)|^2+2|\Tr(X\tau)|^2
\leq\sum_{a,b\in\mathbb F_3}|\Tr(W_{(a,b)}\tau)|^2
=3\Tr(\tau^2)\leq3.
\]
The factors on the remaining sites are identities, and Clifford
phases do not change the absolute values. Then
$|\chi_x|^2+|\chi_y|^2\leq1$, where $
|\chi_x|=|\operatorname{Tr}(Z\tau)|$ and
$|\chi_y|=|\operatorname{Tr}(X\tau)|$. Set
\begin{equation*}
 H=\{x\in V_n:|\chi_x|^2>1/2\}.
\end{equation*}
Then the Weyl operators labelled by elements of \(H\) commute pairwise.
 Its linear span is isotropic
and hence is contained in a maximal isotropic subspace $M$ of $V_n$.

\textbf{Step 2.} Since
$3^{2n}q=\sum_x\operatorname{Re}\chi_x^3$ and
$\sum_x|\chi_x|^2=3^n$, it follows that
\begin{equation*}
1-3^nq
 =\frac1{3^n}\sum_{x\in V_n}\Bigl(|\chi_x|^2-\operatorname{Re}\chi_x^3\Bigr)
 =\sum_{\substack{x\in V_n\\\chi_x\ne0}}
 p_x\left(1-\frac{\operatorname{Re}\chi_x^3}{|\chi_x|^2}\right).
\end{equation*}
The omitted terms vanish because $p_x=0$ when $\chi_x=0$. For
$|\chi_x|\leq1$ one has $\operatorname{Re}\chi_x^3\leq|\chi_x|^3$, so
each summand is at least $p_x(1-|\chi_x|)$. Moreover, for
$x\notin H$ one has $|\chi_x|\leq1/\sqrt2$, and for every complex
number $w$ with $|w|\leq1/\sqrt2$,
\begin{equation*}
 |w|^2-\operatorname{Re}(w^3)
 \geq\bigl(1-\tfrac1{\sqrt2}\bigr)|w|^2 .
\end{equation*}
In fact, writing $r=|w|\leq1/\sqrt2$, one has
$\operatorname{Re}(w^3)\leq r^3$, and hence
$r^2-\operatorname{Re}(w^3)\geq r^2(1-r)
\geq(1-1/\sqrt2)r^2$. It follows that
\begin{equation}\label{eq:Delta-bound}
1-3^nq\geq\bigl(1-\tfrac1{\sqrt2}\bigr)\sum_{x\notin H}p_x .
\end{equation}

\textbf{Step 3.} Let
$\{|T_j\rangle\}_{j=1}^{3^n}$ be the common eigenbasis of the commuting
Weyl operators labelled by $M$, and put
$r_j=|\langle T_j|\psi\rangle|^2$. Each $T_j$ is a stabilizer state.
Finite Fourier orthogonality on $M$ gives
\begin{equation*}
 \sum_{x\in M}|\chi_x|^2=3^n\sum_{j=1}^{3^n}r_j^2.
\end{equation*}
By this, we obtain
\[
 \sum_{j=1}^{3^n}r_j^2=\frac1{3^n}\sum_{x\in M}|\chi_x|^2
 =\sum_{x\in M}p_x\geq\sum_{x\in H}p_x .
\]
Since a probability distribution satisfies
$\max_jr_j\geq\sum_jr_j^2$, it follows that
\begin{equation*}
 F_{\mathrm{stab}}(\psi)\geq\sum_{j=1}^{3^n}r_j^2\geq\sum_{x\in H}p_x
 =1-\sum_{x\notin H}p_x ,
\end{equation*}
which, together with \eqref{eq:Delta-bound}, implies that
\[
 1-F_{\mathrm{stab}}(\psi)
 \leq\sum_{x\notin H}p_x
 \leq\frac{1-3^nq}{1-1/\sqrt2}=K(1-3^nq).
\]
This is
\eqref{eq:robust}. If $F_{\mathrm{stab}}\leq1-\zeta$, combining this
with $1-K(1-3^nq)\leq F_{\mathrm{stab}}$ gives
$1-3^nq\geq\zeta/K$, proving the final assertion.
\end{proof}
\begin{remark}
Proposition~\ref{prop:robust} shows that a pure state whose value of $3^nq$ is close to one has high fidelity with a stabilizer state.
\end{remark}
\subsection{Full separability of symmetric invariant states}

For $0\leq q\leq1$ define the density operator
\begin{equation*}
 \Omega(q)=\frac{q}{R_A}Q_A+\frac{1-q}{R_B}Q_B ,
\end{equation*}
where $R_A=\operatorname{rank}Q_A$ and $R_B=\operatorname{rank}Q_B$.

\begin{theorem}\label{thm:separable}
The state $\Omega(q)$ is fully separable across the three copies
$\HH:\HH:\HH$ if and only if $q\leq1/3^n$. More generally, every fully
separable three-copy state $\sigma$ satisfies
$\Tr(P\sigma)\leq1/3^n$.
\end{theorem}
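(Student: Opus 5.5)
The plan is to prove the general bound first and then treat the two directions for $\Omega(q)$ separately. For the general statement, let $\sigma=\sum_k\mu_k\,\alpha_k\otimes\beta_k\otimes\gamma_k$ be fully separable, with pure product terms. Since $\Tr(P\sigma)$ is linear in $\sigma$, it suffices to show $\Tr\bigl(P(\alpha\otimes\beta\otimes\gamma)\bigr)\leq3^{-n}$ for pure states $\alpha,\beta,\gamma$. Using \eqref{rep:css-weyl-average},
\[
\Tr\bigl(P(\alpha\otimes\beta\otimes\gamma)\bigr)=3^{-2n}\sum_{x\in V_n}\chi_\alpha(x)\chi_\beta(x)\chi_\gamma(x).
\]
We bound $|\chi_\gamma(x)|\leq1$ and apply Cauchy--Schwarz together with the Parseval identity \eqref{eq:purity-parseval}, $\sum_x|\chi_\alpha(x)|^2=3^n\Tr(\alpha^2)\leq3^n$, and the same for $\beta$. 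This gives $|\cdot|\leq3^{-2n}\cdot3^n=3^{-n}$. The identity also holds for mixed product factors, so we do not even need pure decompositions.

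For the ``only if'' part, note that $Q_A=P\Pi_s\leq P$ and $PQ_B=P(\Pi_s-P\Pi_s)=0$. Hence $\Tr(P\Omega(q))=q$, and separability forces $q\leq3^{-n}$ by the bound just proved.

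For the ``if'' part, we use twirling. Let $\mathcal E$ be the joint twirl \eqref{eq:twirl}. It maps fully separable states to fully separable states, because $\mathcal{C}^{\otimes3}$ and $V_\pi$ preserve product structure, and separable states form a convex set. For a pure $\psi$, Proposition~\ref{prop:pure-reduction} gives the weights $(q(\psi),1-q(\psi),0,0,0)$ for $\psi^{\otimes3}$. Therefore $\mathcal E(\psi^{\otimes3})=\Omega(q(\psi))$, which is separable. By Theorem~\ref{thm:range}, every value $q\in[0,3^{-n}]$ is attained, for instance by the family \eqref{eq:family-theta}. This already settles the claim, and convexity is not needed.

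The only delicate step is checking that the twirl preserves full separability: this holds because $\mathcal{C}^{\otimes3}$ acts locally on each copy and $V_\pi$ merely permutes the copies. Computing $\Tr(P\Omega(q))$ also requires the relation $PQ_B=0$, which follows from \eqref{eq:mixed-P-splitting}.
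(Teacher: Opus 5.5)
Your proof is correct and follows the paper's argument: the Weyl expansion of $P$ bounds product states, convexity extends the bound, $\Tr(P\Omega(q))=q$ gives necessity, and sufficiency comes from twirling a pure cube whose $q$ is supplied by Theorem~\ref{thm:range}. The only variation is that you bound $|\chi_\gamma|\le1$ and apply Cauchy--Schwarz to the other two factors, where the paper uses H\"older with exponents $(3,3,3)$ and then $|\chi|^3\le|\chi|^2$; both give $3^{-n}$, and note that your twirl step does rely on convexity of the separable set, exactly as the paper's does.
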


\begin{proof}
Let $\rho_1\otimes\rho_2\otimes\rho_3$ be a product state. The Weyl expansion \eqref{rep:css-weyl-average} gives
\[
 \Tr\bigl(P(\rho_1\otimes\rho_2\otimes\rho_3)\bigr)
 =3^{-2n}\sum_{x\in V_n}\prod_{i=1}^3\Tr(W_x\rho_i).
\]
H\"older's inequality with exponents $(3,3,3)$ yields
\[
 \Bigl|\sum_{x\in V_n}\prod_{i=1}^3\Tr(W_x\rho_i)\Bigr|
 \leq\prod_{i=1}^3\Bigl(\sum_{x\in V_n}|\Tr(W_x\rho_i)|^3\Bigr)^{1/3}.
\]
For each $i$, $|\Tr(W_x\rho_i)|\leq1$, and Weyl orthogonality gives
\[
 \sum_x|\Tr(W_x\rho_i)|^3
 \leq\sum_x|\Tr(W_x\rho_i)|^2
 =3^n\Tr(\rho_i^2)\leq3^n.
\]
Thus,
\[
 \Tr\bigl(P(\rho_1\otimes\rho_2\otimes\rho_3)\bigr)
 \leq3^{-2n}\cdot3^n=1/3^n.
\]
Convexity extends the bound to all fully separable states, and
$\Tr(P\Omega(q))=q$ gives the necessity.

Conversely, suppose $0\leq q\leq1/3^n$. By Theorem~\ref{thm:range}, there
is a pure state $\psi$ such that
\[\Tr(P(|\psi\rangle\langle\psi|)^{\otimes3})=q.\]
The weights in \eqref{eq:pure-weights} show that the third Clifford
moment of this pure state is $\Omega(q)$. Hence
\[
 \Omega(q)=\mathbb E_{\mathcal C}
 \bigl[(\mathcal C|\psi\rangle\langle\psi|\mathcal C^\dagger)^{\otimes3}\bigr],
\]
where the average is uniform over the projective Clifford group. Each
summand is a product of the same state on the three copies and is
therefore fully separable, and the set of fully separable states is
convex.
\end{proof}

\begin{remark}\label{rem:separability-copies}
\begin{itemize}
  \item[\rm(i)] By Theorem~\ref{thm:range}, the states obtained by
Clifford twirling pure tensor cubes are exactly $\Omega(q)$ with
$0\leq q\leq3^{-n}$.
For the witness $I/3^n-P$, pass to the more general product
$\sigma_1\otimes\sigma_2\otimes\sigma_3$ of states of the three copies.
As shown in the proof of Theorem~\ref{thm:separable}, its expectation of
$P$ is at most $3^{-n}$, and every fully separable state, being a convex
combination of such products, shares that bound. Hence the operator
\(I/3^n-P\) detects failure of full separability for arbitrary input $\sigma$ whenever its expectation is negative.
A negative value of $\Tr((I/3^n-P)\sigma)$ therefore certifies that
$\sigma$ is not fully separable. For the family the value is
explicit,
\(\Tr ((I/3^n-P)\Omega(q))=3^{-n}-q ,\)
so every member with $q>3^{-n}$ is detected.
  \item[\rm(ii)] Here separability refers to the
division into the three copies, each regarded as a single system, and not
to the division of the physical sites inside a copy. Full separability
means that
\[
 \Omega(q)=\sum_kp_k\,
 \sigma_k^{(1)}\otimes\sigma_k^{(2)}\otimes\sigma_k^{(3)},
 \quad p_k\geq0,\quad \sum_kp_k=1,
\]
where each density operator $\sigma_k^{(i)}$ acts on the whole $i$th
copy, so that entanglement among the physical sites within a copy is
allowed. For $n=2$ the two sites of a copy are denoted by $A$ and $B$,
and the division is $(A_1B_1):(A_2B_2):(A_3B_3)$. The bipartite state
$|\psi\rangle_{AB}=\frac{1}{\sqrt3}(|00\rangle+|11\rangle+|22\rangle)$ is
entangled across $A:B$ inside one copy, while
$|\psi\rangle_{A_1B_1}\otimes|\psi\rangle_{A_2B_2}
\otimes|\psi\rangle_{A_3B_3}$ is a product state across the three copies.
The example only illustrates the division. It is not needed for the
bounds above.
  \item[\rm(iii)] The operator acting on the three copies is the
tensor cube $\mathcal{C}^{\otimes3}=\mathcal{C}\otimes \mathcal{C}\otimes \mathcal{C}$, so every term
$(\mathcal{C}\rho \mathcal{C}^\dagger)^{\otimes3}$ of the average is a product state across
the three copies and the convex average
\(\mathbb E_{\mathcal C}[(\mathcal{C}\rho \mathcal{C}^\dagger)^{\otimes3}]\)
is fully separable across the three copies. Every tensor cube is
invariant under conjugation by copy permutations,
$V_\pi\rho^{\otimes3}V_\pi^\dagger=\rho^{\otimes3}$, so its Clifford twirl
agrees with the joint twirl computed in Theorem~\ref{thm:mixed}. For a
pure tensor cube that twirl belongs to the family $\Omega(q)$ and
realizes exactly the range $0\leq q\leq3^{-n}$. For a mixed
$\rho$ it does not, because
\(
 w_E=\tfrac23\bigl(1-\Tr(\rho^3)\bigr)>0 ,
\)
whereas every member of the family has $w_E=0$. The twirl of a mixed
tensor cube is therefore fully separable across the three copies and yet
lies outside the family.
\end{itemize}
\end{remark}

\subsection{Complex projective 3-designs}

Let $s=\dim\Sym^3\HH$ be as in \eqref{rep:classical-dimensions}, so that
$s=R_A+R_B$ because $\Pi_s=Q_A+Q_B$. The third moment of Haar-random pure states is
the normalized projector $\Pi_s/s$ onto the symmetric subspace.

Proposition~\ref{prop:design} records \cite[Proposition~6]{zmy24} in the
notation of this paper.

\begin{proposition}\label{prop:design}
The $A$-weight of the Haar third moment is
\begin{equation}\label{eq:design}
 q_*=\frac{R_A}{s}=\frac3{3^n(3^n+2)} .
\end{equation}
For a pure state $\psi$, the uniform ensemble on its projective Clifford
orbit is a complex projective $3$-design if and only if $q=q_*$. Moreover,
\begin{align}
 &\Dtr\!\Bigl(\Omega(q),\frac{\Pi_s}{s}\Bigr)
=|q-q_*|,\label{eq:design-distance}\\
 &\Tr\bigl(\Omega(q)^2\bigr)
=\frac{q^2}{R_A}+\frac{(1-q)^2}{R_B}
 =\frac1s+\frac{s}{R_AR_B}(q-q_*)^2 .\label{eq:frame}
\end{align}
\end{proposition}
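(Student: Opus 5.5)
The plan is to reduce everything to the two-sector simplex spanned by $Q_A/R_A$ and $Q_B/R_B$. First, the Haar third moment of pure states is $\Pi_s/s=(Q_A+Q_B)/s$ because $\Pi_s=Q_A+Q_B$ by \eqref{rep:five-projectors}. Its $A$-weight is therefore $\Tr(Q_A\Pi_s/s)=R_A/s$. Substituting $R_A=(3^n+1)/2$ and $s=\binom{3^n+2}{3}=3^n(3^n+1)(3^n+2)/6$ gives
\[
q_*=\frac{3}{3^n(3^n+2)},
\]
which is \eqref{eq:design}.

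For the design criterion, note that the uniform ensemble on the projective Clifford orbit of $\psi$ has third moment
\[
\mathbb E_{\mathcal C}\bigl[(\mathcal C\rho\mathcal C^\dagger)^{\otimes3}\bigr]=\Omega(q),
\]
by Theorem~\ref{thm:mixed} and the pure weights \eqref{eq:pure-weights}. This is exactly the argument already used in the proof of Theorem~\ref{thm:separable}. A complex projective $3$-design means, by definition, that this moment equals $\Pi_s/s$. Both operators are linear combinations of $Q_A$ and $Q_B$, which are nonzero and orthogonal when $n\geq1$ since $R_A,R_B>0$. So equality holds if and only if the $Q_A$ coefficients agree, that is, $q/R_A=1/s$, or $q=q_*$. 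Once the $A$-coefficients agree, the $B$-coefficients agree automatically, because both states have trace one.

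For \eqref{eq:design-distance}, apply the block-diagonal trace-distance identity \eqref{eq:tv} to two invariant states with weight vectors $(q,1-q,0,0,0)$ and $(q_*,1-q_*,0,0,0)$. This gives $\tfrac12(|q-q_*|+|(1-q)-(1-q_*)|)=|q-q_*|$.

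For \eqref{eq:frame}, the first equality is the formula $\Tr(\sigma_{\boldsymbol w}^2)=\sum_j w_j^2/r_j$ from Subsection~\ref{sec:spectral}. The second equality is a one-variable quadratic identity, and it is the only step that needs real care. I will write $f(q)=q^2/R_A+(1-q)^2/R_B$. Its minimizer is where $q/R_A=(1-q)/R_B$, which gives $q=R_A/(R_A+R_B)=q_*$. At that point the minimum value is $f(q_*)=1/(R_A+R_B)=1/s$. The leading coefficient is $1/R_A+1/R_B=s/(R_AR_B)$. Since a quadratic equals its minimum plus its leading coefficient times the squared distance from the minimizer, this yields
\[
f(q)=\frac1s+\frac{s}{R_AR_B}(q-q_*)^2 .
\]
The main obstacle is only notational: making sure the definition of a complex projective $3$-design used here is equality of third moments with $\Pi_s/s$, and that the ensemble averaged over $G_n$ (rather than over $\Cl(n,3)$ with phases) gives the same moment. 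The latter holds because conjugation ignores scalars.
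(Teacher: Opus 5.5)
Your proposal is correct and follows the paper's proof essentially step for step. You identify the orbit's third moment with $\Omega(q)$ via Theorem~\ref{thm:mixed} and \eqref{eq:pure-weights}, compare the $Q_A,Q_B$ coefficients against $\Pi_s/s=(Q_A+Q_B)/s$, and evaluate the trace norm and purity blockwise; your completion of the square for \eqref{eq:frame} simply writes out the algebra that the paper leaves implicit.
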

\begin{proof}
By Theorem~\ref{thm:mixed}, Remark~\ref{rem:n1-exception}, and the
pure-state identities~\eqref{eq:pure}, this twirl equals $\Omega(q)$ with
$q=\Tr(P\rho^{\otimes3})$.
A complex projective $3$-design is, by
definition, an ensemble whose third moment equals the Haar third moment.
Hence the orbit is a $3$-design if and only if $\Omega(q)=\frac{\Pi_s}{s}$, that is, if and only if \({q}/{R_A}=1/s\) and
\( (1-q)/{R_B}=1/s\). The criterion is therefore the single
condition $q=q_*$, where
\[ q_*=\frac{R_A}{s} =\frac3{3^n(3^n+2)} ,\]
which is \eqref{eq:design}. The $A$-weight of $\Pi_s/s$ is
$\Tr(Q_A\Pi_s)/s=\Tr(Q_A)/s=R_A/s$, because
$Q_A\Pi_s=Q_A$.  Hence
\[
 \bigl\|\Omega(q)-\Pi_s/s\bigr\|_1
 =R_A\Bigl|\frac q{R_A}-\frac1s\Bigr|
 +R_B\Bigl|\frac{1-q}{R_B}-\frac1s\Bigr|
 =|q-q_*|+|q_*-q|=2|q-q_*| .
\]
Taking half gives \eqref{eq:design-distance}.
By the orthogonality of $Q_A$ and
$Q_B$, $R_A+R_B=s$ and
$q_*=R_A/s$, we obtain \eqref{eq:frame}.
\end{proof}
\begin{remark}
\begin{itemize}
  \item[\rm(i)] For the uniform ensemble on the projective Clifford orbit of
$|\psi\rangle$, the quantity $\Tr(\Omega(q)^2)$ is its third frame
potential. Indeed, if the distinct orbit states are
$\{|\psi_1\rangle,\ldots,|\psi_N\rangle\}$, then
\[
 \Omega(q)
 =
 \frac1N\sum_{i=1}^N
 (|\psi_i\rangle\langle\psi_i|)^{\otimes3},
\]
and hence
\[
 \Tr\bigl(\Omega(q)^2\bigr)
 =
 \frac1{N^2}\sum_{i,j=1}^N
 |\langle\psi_i|\psi_j\rangle|^6.
\]
Thus \eqref{eq:frame} shows that the third frame potential is minimized
precisely when $q=q_*$. Its minimum value is
\[
 \frac1s
 =
 \Tr\left(\left(\frac{\Pi_s}{s}\right)^2\right),
\] which is the Haar value.
  \item[\rm(ii)] For every $n\geq1$ there is a pure state $\psi$ with $q(\psi)=q_*$.
Concretely, in the family \eqref{eq:family-theta} choose $\theta$ so that
\( (\cos^3\theta-\sin^3\theta)^2=\frac{3}{3^n+2},\)
which is possible because the left side ranges over $[0,1]$ while
$3/(3^n+2)\in(0,1]$ for $n\geq1$. Moreover the third moments of the
Clifford orbits of any two pure states are at trace distance
$\Dtr(\Omega(q),\Omega(q'))=|q-q'|\leq1/3^n$. Equality of $q$ determines
equality of these third moments, but does not determine the Clifford
orbit in general.
\end{itemize}
\end{remark}

\section{Measurements and statistically valid tests}
\label{sec:measurements}

Throughout this section we assume $n\geq2$. Under this
assumption all five sector ranks are nonzero (see \eqref{rep:rank-calculation} above).

\subsection{Jointly invariant quantum Boolean functions}

Following~\cite{mo10}, we call a Hermitian unitary $F$ a
quantum Boolean function. We use this definition on $\mathcal K$
and view $F$ as an observable with outcomes in $\{\pm1\}$.
Call $F$ jointly invariant if it commutes with both the
tensor-cube Clifford action and the copy permutations.
By the joint commutant computation, we have
\begin{equation}\label{eq:boolean}
 F_{\boldsymbol s}=\sum_{j\in\{A,B,C,D,E\}}s_jQ_j,
 \quad \boldsymbol s=(s_A,\dots,s_E)\in\{\pm1\}^5 .
\end{equation}
There are $2^5=32$ such observables. Conjugation by $F$ and by
$-F$ gives the same channel $\sigma\mapsto F\sigma F^\dagger$, and two
unitaries induce the same channel precisely when they differ by a
global phase. Hence the 32 sign vectors fall into the 16 pairs
$\{\boldsymbol s,-\boldsymbol s\}$ of sign vectors that agree up to a
global sign, and no further identification occurs. Indeed, if
$F_{\boldsymbol s'}=e^{i\theta}F_{\boldsymbol s}$, then restricting both
sides to the range of $Q_j$, which is nonzero for every $j$ by the rank
assumption, gives $s'_j=e^{i\theta}s_j$ for every $j$. Both sides being
$\pm1$, this forces $e^{i\theta}=\pm1$ and
$\boldsymbol s'=\pm\boldsymbol s$. There are therefore exactly
$2^4=16$ distinct unitary conjugation channels. As
signed observables, $F_{\boldsymbol s}$ and $-F_{\boldsymbol s}$
exchange the outcome labels.

For every density operator $\sigma$ on $\mathcal K$, by \eqref{eq:boolean}, we obtain
\begin{equation}\label{eq:boolean-expectation}
 \Tr(F_{\boldsymbol s}\sigma)=\sum_js_jw_j(\sigma),
 \quad w_j(\sigma)=\Tr(Q_j\sigma).
\end{equation}
Hence a single sector measurement supplies data for all 32 observables
by classical postprocessing.

Clifford invariance alone gives a strictly larger family. On the
two-dimensional multiplicity space of the $E$ sector it permits an
arbitrary Hermitian unitary. The quantum Boolean functions are
\[
 \bigoplus_{j=A,B,C,D}s_jI_{r_j}\ \oplus\ (I_E\otimes B),
\]
where $B$ is a Hermitian unitary on $\mathbb C^2$. The finite count above
therefore uses permutation invariance as well.

\subsection{Sector measurement and access assumptions}

The following sequential measurement has effects
$\{Q_A,Q_B,Q_C,Q_D,Q_E\}$.
\begin{enumerate}
\item[\rm(i)] Measure the eigenvalue of $T=V_{(123)}$ by exact phase estimation
      with an ancilla of dimension three. Either nontrivial eigenvalue
      identifies the sector $E$, because the eigenvalues of $V_{(123)}$
      are $1,\omega,\omega^2$ and
      \begin{equation*}
       E_0=\frac{I+T+T^2}{3}
      \end{equation*}
      is the projection onto the subspace of eigenvalue one.
\item[\rm(ii)] On the eigenvalue one branch, measure the swap
$S=V_{(12)}$
and the mutually commuting CSS checks
$X_i^{\otimes3}$ and $Z_i^{\otimes3}$ for $1\leq i\leq n.$
Declare that the checks pass if every check outcome equals one;
otherwise, declare that they fail. These two events have projections
$P$ and $I-P,$
respectively. Assign the sector label according to
\[
\begin{array}{c|cc}
 & \text{all checks pass} & \text{at least one check fails}\\ \hline
S=+1 & A & B\\
S=-1 & C & D
\end{array}
\]
\end{enumerate}
\begin{lemma}\label{lem:sector-instrument}
The effects of the sequential procedure are exactly
$Q_A,\dots,Q_E$.
\end{lemma}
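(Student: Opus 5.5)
The plan is to compute the effect of each outcome of the sequential procedure as a product of commuting projections, and then to identify these products with the sector projections using the operator identities already established. Step (i) measures $T$ exactly. Its three spectral projections are $E_0=(I+T+T^2)/3$ and the two projections onto the eigenvalues $\omega$ and $\omega^2$. By \eqref{rep:Pi_m-from-T}, the sum of the last two is $\Pi_m$. Hence the effect of the outcome ``nontrivial eigenvalue'' is $\Pi_m=Q_E$ by \eqref{rep:five-projectors}, and the eigenvalue-one branch leaves the post-measurement state $E_0\sigma E_0$.

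On the branch $E_0$, step (ii) measures $S$ and the CSS checks. All of $S$, $X_i^{\otimes3}$ and $Z_i^{\otimes3}$ commute with $T$, and they commute with one another. Indeed, $P$ commutes with every $V_\pi$. Each check is a tensor cube of a single-site operator, so it is permutation invariant, and its eigenprojections commute with $V_\pi$. The checks commute among themselves because they act on distinct sites or are the commuting pair $X^{\otimes3},Z^{\otimes3}$ at one site, as in Subsection~\ref{rep:css-subsection}. The event ``all checks pass'' is the joint $+1$ eigenspace. Its projection is $\bigotimes_i\frac19\sum_{a,b}(Z_i^{\otimes3})^a(X_i^{\otimes3})^b=p^{\otimes n}=P$, and the complementary event has projection $I-P$.

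Since all the operators involved commute, the effect of a joint outcome is the product of the corresponding projections. This gives four effects: $E_0\tfrac{I+S}{2}P$, $E_0\tfrac{I+S}{2}(I-P)$, $E_0\tfrac{I-S}{2}P$ and $E_0\tfrac{I-S}{2}(I-P)$. The key identities are $E_0\tfrac{I\pm S}{2}=\Pi_s$ and $E_0\tfrac{I\pm S}{2}=\Pi_a$ respectively. These follow from $\Pi_s=\frac16\sum_\pi V_\pi=\frac{I+T+T^2}{3}\cdot\frac{I+S}{2}$, which holds because every element of $S_3$ factors uniquely as $T^kS^\epsilon$, and similarly $\Pi_a=\frac{I+T+T^2}{3}\cdot\frac{I-S}{2}$, since the sign is trivial on $T$ and $-1$ on $S$.

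By \eqref{rep:five-projectors}, the four effects are therefore $\Pi_sP=Q_A$, $\Pi_s(I-P)=Q_B$, $\Pi_aP=Q_C$ and $\Pi_a(I-P)=Q_D$, which matches the table. Together with $Q_E$, these five effects sum to $I$, confirming that the procedure implements exactly the sector measurement. The main point requiring care is the commutation underlying the product formula: the later measurements act on the post-measurement state, so I need the projections to commute, and I must verify that the checks commute with $S$ and $T$. For this I would use that each check is of the form $B^{\otimes3}$, which commutes with every $V_\pi$.
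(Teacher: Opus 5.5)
Your route is the paper's: you measure $T$, identify the nontrivial-eigenvalue effect with $\Pi_m=Q_E$, prove $\Pi_s=E_0\frac{I+S}{2}$ and $\Pi_a=E_0\frac{I-S}{2}$ from the factorization of $S_3$ as $T^kS^\epsilon$, identify the ``all checks pass'' projection with $P$, and read off $Q_A,\dots,Q_D$.

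One justification is wrong and must be corrected. You assert that $S$ commutes with $T$, and your product formula rests on this, but $STS^{-1}=T^{-1}\neq T$.

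The product formula needs the weaker statement that $S$ commutes with the spectral projection $E_0=\frac{I+T+T^2}{3}$. This is genuinely needed: without it the outcome effect would be $E_0\frac{I+S}{2}P\,E_0$ rather than $E_0\frac{I+S}{2}P$. The weaker statement holds because
\[
SE_0S^{-1}=\frac{I+T^{-1}+T^{-2}}{3}=E_0,
\]
and this is the first thing the paper checks. It also follows from your own identity $\Pi_s=E_0\frac{I+S}{2}$: if a product of two Hermitian operators is Hermitian, the factors commute.

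Also fix the typo where both key identities carry $\pm$. They should read $E_0\frac{I+S}{2}=\Pi_s$ and $E_0\frac{I-S}{2}=\Pi_a$.
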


\begin{proof}
Since $STS^{-1}=T^{-1}$ and
$S^{-1}=S$, it follows that $ST^kS^{-1}=T^{-k}$ for every $k\in\mathbb{Z}$. Hence
\[
SE_0S^{-1}=\frac{I+T^{-1}+T^{-2}}{3}=\frac{I+T^2+T}{3}=E_0 .
\]
Thus $S$ commutes with $E_0$.
Moreover, the even permutations of $S_3$ are represented by
$I,T,T^2$ and the odd ones by $S,TS,T^2S$, the latter set being the same as
$S,ST,ST^2$ up to the convention chosen for the composition of permutations.
Since $T^3=I$, we have
\[
\sum_{\sigma\ \mathrm{even}}V_\sigma=I+T+T^2=3E_0,\quad
\sum_{\sigma\ \mathrm{odd}}V_\sigma=(I+T+T^2)S=3E_0S=3SE_0 .
\]
Combining this with \eqref{eq:mixed-central-projections}, we see that
\[
\Pi_s=\frac16\bigl(I+T+T^2+S+TS+T^2S\bigr)=E_0\frac{I+S}{2}
\]
and
\[
\Pi_a=\frac16\bigl(I+T+T^2-S-TS-T^2S\bigr)=E_0\frac{I-S}{2}.
\]
The CSS checks commute with each other, with $S$, and with all copy
permutations, and their joint $+1$ projection is the projector $P$ as in
\eqref{rep:css-weyl-average}. By this and \eqref{rep:five-projectors} with the definitions
$Q_A=P\Pi_s$, $Q_B=\Pi_s-Q_A$, $Q_C=P\Pi_a$, $Q_D=\Pi_a-Q_C$,
$Q_E=\Pi_m$, we obtain the asserted effects.
\end{proof}
The implementation assumes known controlled permutations, ancillary
systems, and bounded-arity gates, and it permits interactions between
copies. It uses $O(n)$ elementary operations per trial. We do not
assume that Clifford gates alone suffice. Recording the detailed check
outcomes refines the instrument but preserves the coarse sector
probabilities. To implement the unitary \eqref{eq:boolean} itself one
must compute the labels coherently, apply the corresponding signs, and
uncompute. Measuring the labels and applying classical phases does not
implement the same unitary channel.

All accuracy and gap parameters in the testing statements below are
positive, and $0<\delta<1$. Trials use independently prepared, identical
inputs. In the arbitrary input model one trial consumes one copy of
$\sigma$. In the tensor cube model it consumes three independent copies
of $\rho$, prepared as $\rho^{\otimes3}$. Three correlated systems with
marginal state $\rho$ do not generally provide this input. Likewise the
ensemble third moment
$\mathbb E_\psi[(|\psi\rangle\langle\psi|)^{\otimes3}]$
differs in general from
$(\mathbb E_\psi|\psi\rangle\langle\psi|)^{\otimes3}$. The preparation
model must specify which state is supplied.

\subsection{Estimation and testing moment equality}

Let $J_1,\dots,J_N$ be independent sector outcomes with
$\Pr(J_i=j)=w_j$, and let $\widehat w$ be their empirical frequency vector.
The same outcomes are reused for all estimators below. For independent
random variables with values in $[a,b]$, Hoeffding's
inequality~\cite{hoeffding63} gives
\begin{equation}\label{eq:hoeffding}
 \Pr\Bigl(\bigl|\overline X-\mathbb E\overline X\bigr|\geq\epsilon\Bigr)
 \leq2\exp\Bigl(-\frac{2N\epsilon^2}{(b-a)^2}\Bigr).
\end{equation}

\begin{enumerate}
\renewcommand{\theenumi}{\roman{enumi}}
\renewcommand{\labelenumi}{\textnormal{(\theenumi)}}
\item\label{item:weights}
  \textbf{Sector weights.} Let $X_i=\mathbf{1}[J_i=j]$, $a=0$, $b=1$ and
  $\mathbb E X_i=w_j$. If
  \begin{equation*}
   N\geq\Bigl\lceil\frac{25}{8\epsilon^2}\log\frac{10}{\delta}\Bigr\rceil,
  \end{equation*}
  then with probability at least $1-\delta$ one has
  $\frac12\|\widehat w-w\|_1\leq\epsilon$. Indeed, applying
\eqref{eq:hoeffding} to each indicator $\boldsymbol 1_{\{J_i=j\}}$ with
$a=0$, $b=1$ and accuracy $\eta=2\epsilon/5$, and taking a union bound over the
five sectors, gives
\[
 \Pr\Bigl(\max_{j}\bigl|\widehat w_j-w_j\bigr|>\eta\Bigr)
 \le 10\exp\Bigl(-2N\eta^2\Bigr).
\]
Thus $N\geq\bigl\lceil\frac{25}{8\epsilon^2}\log\frac{10}{\delta}\bigr\rceil$
suffices and
\[\frac12 \|\widehat w-w\|_{1}
 =\frac12\sum_{j}\bigl|\widehat w_j-w_j\bigr|
 \le\frac52\eta=\epsilon.
\]
The reconstructed state
  \begin{equation*}
   \widehat\sigma_{\mathrm{twirl}}
   =\sum_j\frac{\widehat w_j}{r_j}Q_j
  \end{equation*}
  then satisfies
  $\Dtr(\widehat\sigma_{\mathrm{twirl}},\mathcal E(\sigma))
  =\frac12\|\widehat w-w\|_1$. This reconstructs the twirl, and the
  input itself only under a joint invariance promise.

\item\label{item:boolean}
\textbf{Expectations of quantum Boolean functions.} Use the outcome function
$s_{J_i}$, whose range has length at most two. Thus $a=-1$, $b=1$ in
\eqref{eq:hoeffding}. For each sign vector $\boldsymbol s\in\{\pm1\}^5$ the
estimator $\overline X_{\boldsymbol s}=\frac1N\sum_i s_{J_i}$ satisfies
\[
 \Pr\Bigl(\bigl|\overline X_{\boldsymbol s}
 -\Tr(F_{\boldsymbol s}\sigma)\bigr|>\epsilon\Bigr)
 \le2\exp\Bigl(-\frac{N\epsilon^2}{2}\Bigr),
\]
and a union bound over the $32$ sign vectors gives
\[
 \Pr\Bigl(\exists\,\boldsymbol s:\
 \bigl|\overline X_{\boldsymbol s}-\Tr(F_{\boldsymbol s}\sigma)\bigr|>\epsilon\Bigr)
 \le64\exp\Bigl(-\frac{N\epsilon^2}{2}\Bigr).
\]
Hence, if
\begin{equation*}
 N\geq\Bigl\lceil\frac{2}{\epsilon^2}\log\frac{64}{\delta}\Bigr\rceil,
\end{equation*}
then every expectation in \eqref{eq:boolean-expectation} is estimated to
additive error $\epsilon$, simultaneously with probability at least
$1-\delta$. No invariance promise on the input is needed, since Hoeffding uses
only independence and boundedness of the sampled outcomes.

\item\label{item:cube}
  \textbf{Tensor cube inputs.} Since the mean of $\boldsymbol 1_{\{J_i=j\}}$ is
$w_j$, an outcome function with coefficients $c_j$ is unbiased for
$\sum_jc_jw_j$. In the tensor cube model the sector weights satisfy the
inverse relations, which express $t_2,t_3,u,v$ as four such combinations. The
following outcome functions are therefore unbiased for $t_2,t_3,u,v$.
\begin{equation*}
 \begin{array}{c|ccccc|c}
  & A & B & C & D & E & \text{range length}\\ \hline
  t_2 & 1 & 1 & -1 & -1 & 0 & 2\\
  t_3 & 1 & 1 & 1 & 1 & -1/2 & 3/2\\
  u   & 1 & 0 & 1 & 0 & 0 & 1\\
  v   & 1 & 0 & -1 & 0 & 0 & 2
 \end{array}
\end{equation*}
Writing $L_i$ for the range length of the $i$-th outcome function and
allocating failure probability $\delta/4$ to each of the four estimators,
Hoeffding's inequality gives the requirement
$N\geq\frac{L_i^2}{2\eta^2}\log\frac{8}{\delta}$ for the $i$-th of them. The
largest range length is $2$, attained by $t_2$ and by $v$, hence
\begin{equation}\label{eq:N-cube}
 N\geq\Bigl\lceil\frac{2}{\eta^2}\log\frac{8}{\delta}\Bigr\rceil
\end{equation}
suffices for all four quantities to be estimated to additive error $\eta$ with
probability at least $1-\delta$, using $3N$ copies of $\rho$. For $t_3$ the
range length is $3/2$, so Hoeffding's inequality requires only
$N\geq\frac{9}{8\eta^2}\log(8/\delta)$ for its failure probability to be at
most $\delta/4$. This is implied by \eqref{eq:N-cube}, and the other range
lengths are at most two.

\item\label{item:pure}
\textbf{Pure tensor cubes.} Only the sectors $A$ and $B$ occur, and
$\boldsymbol 1_{\{J_i=A\}}$ is Bernoulli with mean
$q=\Tr(P(|\psi\rangle\langle\psi|)^{\otimes3})$. Thus $a=0$, $b=1$ in
\eqref{eq:hoeffding}, and a single application of the inequality, with
failure probability budget $\delta$, gives
\[
 \Pr\bigl(|\widehat q-q|\geq\eta\bigr)\le2\exp\bigl(-2N\eta^2\bigr).
\]
Hence, if
\begin{equation}\label{eq:N-pure}
 N\geq\Bigl\lceil\frac{1}{2\eta^2}\log\frac2\delta\Bigr\rceil,
\end{equation}
then $q$ is estimated to additive error $\eta$ with probability at least
$1-\delta$, again using $3N$ copies. No union bound is needed here because a
single quantity is estimated.
\end{enumerate}

For comparison with a known vector $w^{(0)}$ consider the promise hypotheses
$H_0$ given by $w=w^{(0)}$ and $H_1$ given by
$\frac12\|w-w^{(0)}\|_1\geq\epsilon$. Estimate $w$ to total variation accuracy
$\epsilon/4$, which by the sector weight bound means per-coordinate accuracy
$\eta=2(\epsilon/4)/5=\epsilon/10$, and reject when
$\frac12\|\widehat w-w^{(0)}\|_1>\epsilon/2$. Thus
\[
 N\geq\Bigl\lceil\frac{1}{2\eta^2}\log\frac{10}{\delta}\Bigr\rceil
 =\Bigl\lceil\frac{50}{\epsilon^2}\log\frac{10}{\delta}\Bigr\rceil .
\]
On the event $\frac12\|\widehat w-w\|_1\leq\epsilon/4$, whose probability is at
least $1-\delta$, the triangle inequality gives
$\frac12\|\widehat w-w^{(0)}\|_1\leq\epsilon/4<\epsilon/2$ under $H_0$ and
$\frac12\|\widehat w-w^{(0)}\|_1\geq3\epsilon/4>\epsilon/2$ under $H_1$, so each
error probability is at most $\delta$. No guarantee is imposed between the two
hypotheses. This procedure tests equality of twirled states. It does not test
equality or Clifford equivalence of the original inputs.

\subsection{Testing invariance without random Clifford gates}

Define the squared Hilbert--Schmidt distance from the invariant subspace by
\begin{equation*}
 S_G(\sigma)
 =\|\sigma-\mathcal E(\sigma)\|_2^2
 =\Tr(\sigma^2)-\sum_j\frac{w_j(\sigma)^2}{r_j}.
\end{equation*}
The second identity holds because $\mathcal E$ is the orthogonal
projection onto the jointly invariant subspace for the Hilbert--Schmidt
inner product. Hence $S_G(\sigma)\geq0$ with equality exactly when
$\sigma$ is jointly invariant.

We recall the instrument formalism \cite{daviesLewis70}, using
completely positive maps as in \cite{kraus83}.
\begin{definition}\label{def:instrument}
Let $K$ be a finite outcome set and let $\mathcal V$ be a finite-dimensional
Hilbert space. A quantum instrument on $\mathcal V$ with outcome set $K$
is a family $\{\mathcal I_k\}_{k\in K}$ of completely positive linear maps on
$\mathrm{End}(\mathcal V)$ such that $\sum_k\mathcal I_k$ is trace preserving.
Its outcome probabilities and post-measurement states for an input
density operator $\rho$ on $\mathcal V$ are
\[
p_k=\operatorname{Tr}\bigl(\mathcal I_k(\rho)\bigr),\quad
\rho_k=\frac{\mathcal I_k(\rho)}{p_k}\quad(p_k>0),
\]
and its nonselective map is $\Phi=\sum_k\mathcal I_k$.
\end{definition}

\begin{definition}\label{def:luders}
Let $\{P_k\}_{k\in K}$ be a projection-valued measure on $\mathcal V$, that is,
\[
P_k=P_k^\dagger=P_k^2,\quad P_kP_l=\delta_{kl}P_k,\quad \sum_kP_k=I .
\]
The L\"uders instrument associated with $\{P_k\}$ is the instrument
$\mathcal I_k(\rho)=P_k\rho P_k$; see \cite{luders51}.
Thus the operation for outcome $k$ has the single Kraus operator $P_k$.
Its outcome probabilities, post-measurement states and nonselective map are
\[
p_k=\operatorname{Tr}(P_k\rho),\quad
\rho_k=\frac{P_k\rho P_k}{\operatorname{Tr}(P_k\rho)}\quad(p_k>0),\quad
\Phi(\rho)=\sum_kP_k\rho P_k.
\]
The map $\Phi$ is the pinching map associated with $\{P_k\}$.
\end{definition}
The following standard properties follow directly from these definitions.
\begin{lemma}\label{prop:luders}
For the L\"uders instrument in Definition~\ref{def:luders}, the following hold.
\begin{enumerate}
\item[\rm(i)] If the outcome $k$ is obtained,
then an immediate repetition of the same measurement yields $k$ with
probability one and leaves the post-measurement state unchanged.
\item[\rm(ii)] $\Phi$ is unital, completely positive, trace preserving and
idempotent. Its fixed point algebra is the commutant
$\{P_k\}'=\{A:[A,P_k]=0\ \text{for all }k\}$, and $\Phi$ satisfies the
bimodule property $\Phi(ARB)=A\,\Phi(R)\,B$ for all $A,B\in\{P_k\}'$ and
$R\in\mathrm{End}(\mathcal V)$.
Thus $\Phi$ is the trace preserving conditional expectation onto the
commutant of the projection-valued measure.
\item[\rm(iii)] The dual map acts by $\Phi^*(R)=\sum_kP_kRP_k$, and
$\Phi^*(R)=R$ if and only if $[R,P_k]=0$ for all $k$.
\end{enumerate}
\end{lemma}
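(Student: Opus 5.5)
Each part follows by direct computation from Definition~\ref{def:luders}, using only the projection-valued measure relations $P_kP_l=\delta_{kl}P_k$, $P_k=P_k^\dagger$ and $\sum_kP_k=I$. I will prove (iii) first, since it feeds into (ii).

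For (i), suppose outcome $k$ occurred, so the post-measurement state is $\rho_k=P_k\rho P_k/p_k$. Repeating the measurement gives outcome $l$ with probability $\Tr(P_l\rho_k)=\Tr(P_lP_k\rho P_k)/p_k$. By orthogonality this equals $\delta_{lk}\Tr(P_k\rho P_k)/p_k=\delta_{lk}$. The new post-measurement state is $P_k\rho_kP_k/1=\rho_k$, because $P_k^2=P_k$.

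For (iii), the dual map is computed from $\Tr(R\,P_k\rho P_k)=\Tr(P_kRP_k\,\rho)$, which gives $\Phi^*(R)=\sum_kP_kRP_k$.
\begin{itemize}
\item[\rm(a)] If $[R,P_k]=0$ for all $k$, then $\sum_kP_kRP_k=\sum_kRP_k=R$.
\item[\rm(b)] Conversely, suppose $\Phi^*(R)=R$. Multiplying on the left by $P_l$ and on the right by $P_m$ gives $P_lRP_m=\delta_{lm}P_lRP_l$. Hence $R=\sum_{l,m}P_lRP_m=\sum_lP_lRP_l$ is block diagonal. Then $P_kR=P_kRP_k=RP_k$, so $R$ commutes with every $P_k$.
\end{itemize}

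For (ii), each property is checked separately.
\begin{itemize}
\item[\rm(a)] Complete positivity holds because $\Phi$ is a sum of Kraus conjugations.
\item[\rm(b)] Trace preservation follows from $\sum_kP_k^2=I$, and unitality from $\Phi(I)=\sum_kP_k=I$.
\item[\rm(c)] Idempotence follows from $P_lP_k\rho P_kP_l=\delta_{kl}P_k\rho P_k$.
\item[\rm(d)] Since $\Phi=\Phi^*$ as formulas, the fixed-point characterization is exactly the argument in (iii).
\item[\rm(e)] For the bimodule property, take $A,B$ commuting with all $P_k$. Then $P_kARBP_k=AP_kRP_kB$, and summing over $k$ gives $\Phi(ARB)=A\Phi(R)B$.
\end{itemize}
These facts together say that $\Phi$ is a trace-preserving conditional expectation onto $\{P_k\}'$. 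Its range lies in the commutant because $P_l\Phi(R)=P_lRP_l=\Phi(R)P_l$, and it restricts to the identity there.

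None of these steps presents a real obstacle. The one place needing care is the converse direction of the fixed-point characterization, which requires the off-diagonal block argument rather than only a trace identity.
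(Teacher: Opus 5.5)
Your proposal is correct and follows essentially the same direct computation as the paper: orthogonality of the $P_k$ for (i) and idempotence, the Kraus form for complete positivity, $\sum_kP_k=I$ for unitality and trace preservation, cyclicity of the trace for $\Phi^*=\Phi$, and moving $A,B$ past each $P_k$ for the bimodule property. The one difference is the converse of the fixed-point characterization. You use the block decomposition $R=\sum_{l,m}P_lRP_m$, whereas the paper just observes that $P_l\Phi(R)=P_lRP_l=\Phi(R)P_l$, so any fixed point $R=\Phi(R)$ already commutes with every $P_l$. You record that same observation at the end, so the off-diagonal block argument, while valid, is not actually needed.
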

\begin{proof}
For $p_k>0$, the post-measurement state satisfies
$P_l\rho_kP_l=\delta_{kl}\rho_k$, proving (i).
The Kraus representation gives complete positivity, while
$\sum_kP_k^2=I$ gives unitality and trace preservation.
Orthogonality of the projections gives $\Phi^2=\Phi$ and
$P_l\Phi(R)=P_lRP_l=\Phi(R)P_l$.
Conversely, if $R$ commutes with every $P_l$, then $\Phi(R)=R$.
Moving $A$ and $B$ past each $P_k$ proves the bimodule property and hence (ii).
Finally, cyclicity of the trace gives $\Phi^*(R)=\sum_kP_kRP_k=\Phi(R)$,
so the fixed point characterization proves (iii).
\end{proof}
By a swap measurement on two registers with the same Hilbert space, we mean the projective measurement of the swap operator $\mathrm{SWAP}$, defined by
\[
\mathrm{SWAP}(|a\rangle\otimes|b\rangle)=|b\rangle\otimes|a\rangle.
\]
Its outcomes are $\pm1$, with corresponding projections $(I\pm\mathrm{SWAP})/2$. On an input state $\tau$, the mean outcome is $\operatorname{Tr}(\mathrm{SWAP}\tau)$. In particular, on input $\tau_1\otimes\tau_2$, it equals $\operatorname{Tr}(\tau_1\tau_2)$.

\begin{proposition}\label{prop:invariance-test}
Let
\(
r_j=\operatorname{rank}Q_j\) and \(r_{\min}=\min_jr_j=\frac{3^n-1}{2}\). The promise hypotheses $S_G(\sigma)=0$ and $S_G(\sigma)\geq\alpha$ can be
distinguished with each error probability at most $\delta$ using $2N$ copies
of $\sigma$, where
\begin{equation*}
 N\geq\Bigl\lceil\frac{2(2+1/r_{\min})^2}{\alpha^2}\log\frac2\delta
 \Bigr\rceil .
\end{equation*}
The procedure uses a global SWAP measurement with the L\"uders instrument,
followed by sector measurements, and requires no random Clifford gates.
\end{proposition}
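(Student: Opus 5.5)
We build an unbiased estimator of $S_G(\sigma)=\Tr(\sigma^2)-\sum_j w_j(\sigma)^2/r_j$ from pairs of copies and then threshold it. Take the $2N$ copies as $N$ independent pairs $\sigma\otimes\sigma$ on $\KK\otimes\KK$. On each pair we first perform the global swap measurement, with the L\"uders instrument, exchanging the two registers $\KK$. Its outcome $a\in\{\pm1\}$ has mean $\Tr(\mathrm{SWAP}(\sigma\otimes\sigma))=\Tr(\sigma^2)$. Afterwards we apply the sector measurement of Lemma~\ref{lem:sector-instrument} to each register, obtaining labels $J,J'\in\{A,\dots,E\}$. We use the single-trial statistic
\[
 Y=a-\frac{\mathbf 1[J=J']}{r_J}.
\]

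The key step is to show that $\mathbb E Y=S_G(\sigma)$. The issue is that the sector measurement acts on the post-swap state rather than on $\sigma\otimes\sigma$. The correction term corresponds to the operator $G=\sum_j Q_j\otimes Q_j/r_j$, which commutes with $\mathrm{SWAP}$. We compute the joint expectation of $a\cdot 1$ and of the sector indicators. By Lemma~\ref{prop:luders}(ii), for $G$ in the commutant of $\{(I\pm\mathrm{SWAP})/2\}$, the L\"uders map preserves $\Tr(G\,\cdot)$. Hence the expectation of the sector term equals $\Tr(G(\sigma\otimes\sigma))=\sum_j w_j^2/r_j$. The swap outcome's mean is unaffected by the subsequent measurement. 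Together these give $\mathbb E Y=\Tr(\sigma^2)-\sum_jw_j^2/r_j=S_G(\sigma)$.

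For the concentration step, note that $Y$ takes values in $[-1-1/r_{\min},\,1]$, a range of length at most $2+1/r_{\min}$; here $r_{\min}=r_C=(3^n-1)/2$ by \eqref{rep:rank-calculation}. Let $\overline Y$ be the average over the $N$ independent trials. Hoeffding's inequality \eqref{eq:hoeffding} with $\epsilon=\alpha/2$ gives
\[
\Pr(|\overline Y-S_G|\geq\alpha/2)\le 2\exp\!\bigl(-N\alpha^2/(2(2+1/r_{\min})^2)\bigr)\le\delta
\]
under the stated bound on $N$. We reject invariance when $\overline Y\geq\alpha/2$. Under $S_G=0$ the false rejection probability is at most $\delta$, and under $S_G\geq\alpha$ the false acceptance probability is also at most $\delta$. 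No Clifford gates are used at any point.

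The main obstacle is the unbiasedness computation for the sequential procedure: one must check that the sector operators $Q_j\otimes Q_k$ commute with $\mathrm{SWAP}$ in the relevant symmetric combination. Individually they do not, since $\mathrm{SWAP}(Q_j\otimes Q_k)\mathrm{SWAP}=Q_k\otimes Q_j$. The expectation of the indicator $\mathbf 1[J=J']$ weighted by $1/r_J$ only involves the swap-invariant operator $G$, which resolves this. If needed, a cruder fallback is to compute the two expectations on disjoint pairs, at the cost of worse constants.
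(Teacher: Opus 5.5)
Your proposal is correct and follows the paper's proof essentially step for step. You use the same pairwise statistic $X-\mathbf 1_{\{K=J\}}/r_J$, the same swap-invariant operator $\sum_j Q_j\otimes Q_j/r_j$ whose expectation survives the L\"uders pinching, the same Hoeffding bound with range $2+1/r_{\min}$, and the same threshold $\alpha/2$. The paper phrases the unbiasedness step via the dual map $\Phi^*(R)=R$ of Lemma~\ref{prop:luders}(iii), which is equivalent to your use of (ii) together with trace preservation. The disjoint-pairs fallback you mention is unnecessary, because swap invariance of that operator already settles unbiasedness.
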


\begin{proof}
On each independent pair $\sigma\otimes\sigma$, measure SWAP by the
two-outcome L\"uders instrument with Kraus operators
\[
P_+=\frac{I+\mathrm{SWAP}}2,\quad P_-=\frac{I-\mathrm{SWAP}}2,
\]
the orthogonal projections onto $\mathrm{Sym}^2(\KK)$ and $\wedge^2\KK$,
and record $X=+1$ for $P_+$ and $X=-1$ for $P_-$.
The controlled SWAP ancilla circuit in \cite{bcww01} implements this instrument. By
$\operatorname{Tr}(\mathrm{SWAP}(A\otimes B))=\operatorname{Tr}(AB)$, we obtain
\begin{equation}\label{eq:EX}
 \mathbb E X=\operatorname{Tr}\bigl((P_+-P_-)(\sigma\otimes\sigma)\bigr)
 =\operatorname{Tr}\bigl(\mathrm{SWAP}(\sigma\otimes\sigma)\bigr)
 =\operatorname{Tr}(\sigma^2).
\end{equation}
 Next
measure the sector of each of the two systems, obtaining labels $J$ and $K$,
and put
\[
Z=\frac{\boldsymbol 1_{\{K=J\}}}{r_J},\quad H=X-Z .
\]
Let $\Phi(\tau)=P_+\tau P_++P_-\tau P_-$ be the nonselective map of the
first measurement, as in Definition~\ref{def:luders}, and set
\[
R=\sum_j\frac{Q_j\otimes Q_j}{r_j}.
\]
Since $\mathrm{SWAP}(\mathcal{A}\otimes \mathcal{A})\mathrm{SWAP}=\mathcal{A}\otimes \mathcal{A}$ for every operator
$\mathcal{A}$, one has $\mathrm{SWAP}\,R\,\mathrm{SWAP}=R$, hence $[R,\mathrm{SWAP}]=0$. By this and Lemma \ref{prop:luders}, we have
$\Phi^*(R)=R$. Averaging the conditional distribution of $(J,K)$ over
the recorded outcome of the first instrument therefore gives
\begin{equation}\label{eq:EZ}
 \mathbb E Z=\operatorname{Tr}\bigl(R\,\Phi(\sigma\otimes\sigma)\bigr)
 =\operatorname{Tr}\bigl(\Phi^*(R)(\sigma\otimes\sigma)\bigr)
 =\operatorname{Tr}\bigl(R(\sigma\otimes\sigma)\bigr)
 =\sum_j\frac{w_j(\sigma)^2}{r_j} .
\end{equation}
No independence of $X$ and $Z$ within a trial is used, only the independence
of the trials. Combining \eqref{eq:EX} and
\eqref{eq:EZ}, we know that
\begin{equation*}
 \mathbb E H=\mathbb E X-\mathbb E Z
 =\operatorname{Tr}(\sigma^2)-\sum_j\frac{w_j(\sigma)^2}{r_j}=S_G(\sigma).
\end{equation*}
Since $X\in\{-1,1\}$ and $0\leq Z\leq1/r_{\min}$, it follows that
\begin{equation*}
 H\in\Bigl[-1-\frac1{r_{\min}},\ 1\Bigr],
 \quad b-a=2+\frac1{r_{\min}} .
\end{equation*}
Let $\overline H$ be the
empirical mean of $H$ over the $N$ independent pairs and reject the invariant
hypothesis when $\overline H>\alpha/2$. Under $H_0$ one has
$\mathbb EH=S_G=0$, so rejection implies
$|\overline H-\mathbb EH|\geq\alpha/2$. Under $H_1$ one has
$\mathbb EH=S_G\geq\alpha$, so not rejecting implies
$\overline H\leq\alpha/2$, hence again
$|\overline H-\mathbb EH|\geq\alpha/2$. By \eqref{eq:hoeffding} with
$b-a=2+1/r_{\min}$ and accuracy $\alpha/2$, the two error probabilities are
therefore both at most
\[
 2\exp\Bigl(-\frac{2N(\alpha/2)^2}{(2+1/r_{\min})^2}\Bigr),
\]
which is at most $\delta$ precisely when
\[
 N\geq\frac{2(2+1/r_{\min})^2}{\alpha^2}\log\frac2\delta .
\]
Each trial consumes two copies of $\sigma$, whence $2N$ copies in total, and
no random Clifford gate is used.
\end{proof}
The measurement order is essential. Measuring the separate sectors
first can remove cross-sector coherences and alter a subsequent purity
estimate. An alternative uses independent pairs for the purity and for
the sector collision estimates. This quantity controls trace distance
only through the dimension-dependent bound
\begin{equation*}
 \Dtr\bigl(\sigma,\mathcal E(\sigma)\bigr)
 \leq\frac12\sqrt{3^{3n}S_G(\sigma)},
\end{equation*}
which follows from $\|\cdot\|_1\leq\sqrt{d}\|\cdot\|_2$ in dimension
$d=3^{3n}$. Thus this is not a dimension-independent trace distance
test.

For tensor cube inputs,
$S_G(\rho^{\otimes3})=t_2^3-\sum_jw_j^2/r_j$, and each paired trial
uses six copies of $\rho$. Moreover $\rho^{\otimes3}$ is Clifford
invariant if and only if $\rho=I/3^n$. Partial tracing the invariance
identity gives $\mathcal{C}\rho \mathcal{C}^\dagger=\rho$ for every Clifford operator, and
the Pauli operators have scalar commutant. Under the tensor cube
promise, sector data therefore determine invariance through
$t_2=1/3^n$, although finite sample testing requires a promise gap. For
arbitrary inputs, sector probabilities alone do not determine
invariance. The swap measurement supplies the additional purity
information used above.

\subsection{Stabilizer testing, soundness and dimension dependence}
\label{rep:stab-testing}

Throughout this subsection the input is promised to be pure, and
$F_{\mathrm{stab}}(\psi)$ denotes its maximum squared overlap with a
stabilizer state.

The equality $q=1/3^n$ from Theorem~\ref{thm:range} characterizes
stabilizer pure states, but measuring $P$ accepts a stabilizer cube
with probability $1/3^n$ rather than with probability one. The robust
estimate of Proposition~\ref{prop:robust}, written as
\begin{equation*}
 1-F_{\mathrm{stab}}(\psi)\leq K(1-3^nq),\quad K=2+\sqrt2,
\end{equation*}
converts the characterization into a promise test. It yields the
implication
\begin{equation*}
 F_{\mathrm{stab}}(\psi)\leq1-\zeta
 \quad\Longrightarrow\quad
 q\leq\frac1{3^n}-\frac{\zeta}{K3^n}.
\end{equation*}
Estimate $q$ by the indicator of the $A$ sector and accept the
stabilizer hypothesis when
$\widehat q>\frac1{3^n}-\frac{\zeta}{2K3^n}$. By \eqref{eq:N-pure}, the bound
\begin{equation*}
 N\geq\Bigl\lceil\frac{2K^23^{2n}}{\zeta^2}\log\frac2\delta\Bigr\rceil
\end{equation*}
on the number of independent three-copy batches bounds both the
soundness and completeness error probabilities by $\delta$. This bound
is not optimal for rare event estimation, but it makes its dimension
dependence explicit.

A POVM is a family of positive operators $\{E_a\}_a$ satisfying $\sum_a E_a=I$. Each operator $E_a$ is called the effect associated with outcome $a$ and determines its probability through
$
\Pr(a\mid\tau)=\operatorname{Tr}(E_a\tau).
$
In the sector measurement above, the effects are the mutually orthogonal projections $Q_A,\ldots,Q_E$ (see Lemma \ref{lem:sector-instrument}).

Now we consider tests that measure each of $N$ independent three-copy inputs
separately, with every retained POVM effect of every batch Clifford
invariant. On a pure cube every such POVM is a classical postprocessing of
the two-sector distribution $(q,1-q)$, because the symmetric subspace carries
only the two inequivalent Clifford sectors $A$ and $B$, each with multiplicity
one (see Proposition \ref{prop:pure-reduction}).
\begin{proposition}\label{prop:lower}
Fix any pure state in the promised alternative, and let $q_1$ be its
parameter. Then $0\leq q_1\leq3^{-n}$, and the error probabilities for this
state and a stabilizer state satisfy
\begin{equation}\label{eq:two-point}
 P_{\mathrm{err}}^{(0)}+P_{\mathrm{err}}^{(1)}
 \geq\bigl(1-3^{-n}\bigr)^N .
\end{equation}
Thus, bounding each error probability by $\delta<1/2$ requires
\begin{equation*}
 N\geq-\frac{\log\bigl(1/(2\delta)\bigr)}{\log\bigl(1-3^{-n}\bigr)}
\end{equation*}
and  $N=\Omega(3^n\log(1/\delta))$ for $0<\delta\leq1/4$.
\end{proposition}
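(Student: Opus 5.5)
The plan is to reduce the whole $N$-batch protocol to a classical two-point problem and then apply a standard lower bound for it. The bound $0\le q_1\le 3^{-n}$ is just Theorem~\ref{thm:range}. A stabilizer state has $q_0=3^{-n}$ by the same theorem.

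\textbf{Step 1: reduce each batch to a classical channel.} Each batch receives $(|\psi\rangle\langle\psi|)^{\otimes3}$, which is supported in $\Pi_s\KK=Q_A\KK\oplus Q_B\KK$. A Clifford-invariant effect $E$ lies in $\End_{\Cl(n,3)}(\KK)$, which by Theorem~\ref{rep:five-irreducibles} is $\C^4\oplus M_2(\C)$. Hence $\Pi_sE\Pi_s=e_AQ_A+e_BQ_B$ with $0\le e_A,e_B\le1$. Its probability on a pure cube is therefore $e_Aq+e_B(1-q)$, as in Proposition~\ref{prop:pure-reduction}. So each batch is a classical channel applied to a Bernoulli variable $Y$ with $\Pr(Y=A)=q$.

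\textbf{Step 2: pass to a product experiment.} I will allow adaptivity across batches, since the choice of POVM may depend on earlier outcomes. The full transcript distribution is then a postprocessing, possibly randomized, of the i.i.d.\ sample $Y_1,\dots,Y_N$. The decision rule is measurable on that transcript. Under the stabilizer state the samples are i.i.d.\ Bernoulli$(3^{-n})$, and under the alternative they are i.i.d.\ Bernoulli$(q_1)$.

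\textbf{Step 3: bound the sum of errors by overlap.} For any test, $P^{(0)}_{\mathrm{err}}+P^{(1)}_{\mathrm{err}}\ge\sum_y\min(P_0(y),P_1(y))$. Postprocessing only increases this overlap, so it suffices to bound it for the $Y$'s. I restrict the sum to the single outcome $y=(B,\dots,B)$, which gives $\min\bigl((1-3^{-n})^N,(1-q_1)^N\bigr)=(1-3^{-n})^N$ since $q_1\le3^{-n}$. This is \eqref{eq:two-point}. The deterministic-rule case is immediate. The one careful point, and the main obstacle I expect, is the randomized-rule case: I would write the acceptance probability as a function $f(y)\in[0,1]$ and use $f P_1+(1-f)P_0\ge\min(P_0,P_1)$ pointwise.

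\textbf{Step 4: extract the sample-size bound.} If both errors are at most $\delta$, then $2\delta\ge(1-3^{-n})^N$. Taking logarithms gives $N\ge\log(1/(2\delta))/(-\log(1-3^{-n}))$. For $\delta\le1/4$ we have $\log(1/(2\delta))\ge\tfrac12\log(1/\delta)$. Combined with $-\log(1-x)\le 2x$ for $x\le1/3$, this yields $N=\Omega(3^n\log(1/\delta))$.
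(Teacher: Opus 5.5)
Your proposal is correct and follows essentially the same route as the paper. Both arguments reduce each Clifford-invariant batch measurement on a pure cube to a postprocessing of the Bernoulli$(q)$ sector variable. Both then lower-bound the overlap $\sum_y\min\{P_0(y),P_1(y)\}$ by the all-$B$ record using $q_1\le 3^{-n}$, apply data processing and the Le Cam two-point bound, and take logarithms. Your commutant argument for $\Pi_sE\Pi_s=e_AQ_A+e_BQ_B$ and your allowance for adaptively chosen POVMs (via the kernel $\prod_k\kappa_k(a_k\mid y_k;a_{<k})$) only make explicit what the paper asserts in the paragraph preceding the proposition.
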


\begin{proof}
By Theorem \ref{thm:range}, we know that $0\leq q_1\leq3^{-n}$.
Let $P_0$ and $P_1$ be the distributions of the sector record under the
stabilizer input and under the fixed alternative. For a stabilizer input the
record consisting entirely of $B$ outcomes has probability
$\bigl(1-3^{-n}\bigr)^N$, since the sector $A$ occurs with probability
$3^{-n}$ on a single cube, whereas under the alternative that same record has
probability $(1-q_1)^N$. Because $q_1\leq3^{-n}$ and $q\mapsto(1-q)^N$ is
decreasing,
\[
 (1-q_1)^N\geq\bigl(1-3^{-n}\bigr)^N .
\]
Let
\[
 \|P_0-P_1\|_{\mathrm{TV}}
 =\frac12\|P_0-P_1\|_1
 =\frac12\sum_r\bigl|P_0(r)-P_1(r)\bigr|.
\]
Then
$1-\|P_0-P_1\|_{\mathrm{TV}}=\sum_r\min\{P_0(r),P_1(r)\}$.
In fact,
for real $a,b$ one has $(a-b)^+=\frac{a-b+|a-b|}{2}$ and
$\min\{a,b\}=\frac{a+b-|a-b|}{2}$. Since
$\sum_r(P_0(r)-P_1(r))=0$, the positive and negative parts of
$P_0-P_1$ have equal total mass, hence
\[
 \sum_r\bigl(P_0(r)-P_1(r)\bigr)^+
 =\frac12\sum_r\bigl|P_0(r)-P_1(r)\bigr|
 =\|P_0-P_1\|_{\mathrm{TV}} .
\]
By $\sum_rP_0(r)=1$ and $a-(a-b)^+=\min\{a,b\}$, we have
\[
 \sum_r\min\{P_0(r),P_1(r)\}
 =\sum_rP_0(r)-\sum_r\bigl(P_0(r)-P_1(r)\bigr)^+
 =1-\|P_0-P_1\|_{\mathrm{TV}} .
\]
Then
\[
 \sum_r\min\{P_0(r),P_1(r)\}\ \geq\ \bigl(1-3^{-n}\bigr)^N
\]
and
\begin{equation}\label{eq:TV-bound}
 \|P_0-P_1\|_{\mathrm{TV}}\ \leq\ 1-\bigl(1-3^{-n}\bigr)^N .
\end{equation}
Let $\kappa$ be the classical postprocessing, that is the stochastic map that
the test applies to the sector record. For every event $B$ of the processed
record,
\[
 (\kappa P_0)(B)-(\kappa P_1)(B)
 =\sum_r\kappa(B\mid r)\bigl(P_0(r)-P_1(r)\bigr)
 \leq\sum_r\bigl(P_0(r)-P_1(r)\bigr)^+
 =\|P_0-P_1\|_{\mathrm{TV}},
\]
because $0\leq\kappa(B\mid r)\leq1$. Taking the supremum over $B$ gives the data
processing inequality
\begin{equation}\label{eq:DPI}
 \|\kappa P_0-\kappa P_1\|_{\mathrm{TV}}\leq\|P_0-P_1\|_{\mathrm{TV}} .
\end{equation}
Let $\varphi$ be a possibly randomized decision rule on the processed
record, and let
$P_{\mathrm{err}}^{(0)}=\Pr_{\kappa P_0}(\varphi=1)$ and
$P_{\mathrm{err}}^{(1)}=\Pr_{\kappa P_1}(\varphi=0)$ be its error probabilities.
Le Cam's two-point bound (see the proof of
\cite[Theorem~2.2(i)]{tsybakov2009}) gives
\begin{equation}\label{eq:lecam}
 P_{\mathrm{err}}^{(0)}+P_{\mathrm{err}}^{(1)}
 \geq1-\|\kappa P_0-\kappa P_1\|_{\mathrm{TV}} ,
\end{equation}
whose proof is as follows. It suffices to consider a deterministic test, since
a randomized test is a convex combination of deterministic ones and the sum of
the two error probabilities is linear in the test. For a deterministic test,
let $A$ be the region on which the null hypothesis is accepted, so that
$P_{\mathrm{err}}^{(0)}=1-(\kappa P_0)(A)$ and
$P_{\mathrm{err}}^{(1)}=(\kappa P_1)(A)$. Then
\[
 P_{\mathrm{err}}^{(0)}+P_{\mathrm{err}}^{(1)}
 =1-\bigl((\kappa P_0)(A)-(\kappa P_1)(A)\bigr)
 \geq 1-\|\kappa P_0-\kappa P_1\|_{\mathrm{TV}},
\]
which is \eqref{eq:lecam}. The bound is attained by the test that accepts the
null hypothesis on $A=\{y:(\kappa P_0)(y)>(\kappa P_1)(y)\}$,
where $y$ denotes the processed record.
From \eqref{eq:lecam}, \eqref{eq:DPI} and \eqref{eq:TV-bound}, it follows that
\[
 P_{\mathrm{err}}^{(0)}+P_{\mathrm{err}}^{(1)}
 \geq1-\|P_0-P_1\|_{\mathrm{TV}}
 \geq\bigl(1-3^{-n}\bigr)^N ,
\]
which is \eqref{eq:two-point}. If each error probability is at most $\delta$,
then $2\delta\geq(1-3^{-n})^N$, and taking logarithms gives
\[
 N\geq\frac{\log\bigl(1/(2\delta)\bigr)}{-\log\bigl(1-3^{-n}\bigr)} .
\]
Since $-\log(1-3^{-n})\leq\frac{ 3^{-n}}{1-3^{-n}}$ and
$-\log(1-3^{-n})\geq3^{-n}$ for $n\ge1$, and since
$\log\bigl(1/(2\delta)\bigr)\geq\frac12\log(1/\delta)$ whenever
$\delta\leq1/4$, the last display implies
$N=\Omega\bigl(3^n\log(1/\delta)\bigr)$ for $0<\delta\leq1/4$.
\end{proof}

This lower bound requires every retained outcome effect to be Clifford
invariant, as for the coarse sector measurement. Keeping detailed CSS
syndrome records generally falls outside this model, and the bound does
not cover arbitrary collective measurements on all $3N$ copies.
Moreover the normalized quantity $3^nq$ is not a measurement probability
on arbitrary three-copy inputs, since $3^nP$ is not a POVM effect.

\begin{proposition}
\label{prop:no-perfect}
Any test on three copies of a pure state that accepts every stabilizer
state with certainty also accepts every pure state with certainty. This
holds without an invariance assumption on the measurement.
\end{proposition}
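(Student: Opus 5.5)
A test on three copies is a two-outcome POVM $\{E,I-E\}$ on $\KK$, with acceptance probability $\Tr(E\,|\psi\rangle\langle\psi|^{\otimes3})$. We may assume $0\preceq E\preceq I$ and put $F=I-E\succeq0$. Perfect completeness says $\langle T^{\otimes3}|F|T^{\otimes3}\rangle=0$ for every stabilizer state $T$. Since $F\succeq0$, this gives $F|T\rangle^{\otimes3}=0$. Hence $F$ vanishes on the linear span $\mathcal S$ of all vectors $|T\rangle^{\otimes3}$. If we show that $\mathcal S$ equals the whole symmetric subspace $\Sym^3\HH$, then $F|\psi\rangle^{\otimes3}=0$ for every $|\psi\rangle\in\HH$, so every pure state is accepted with certainty. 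No invariance of $E$ enters this argument.

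To show $\mathcal S=\Sym^3\HH$, note first that $\mathcal S$ is a subspace of $\Sym^3\HH$. It is invariant under every $\mathcal C^{\otimes3}$, because Clifford unitaries permute stabilizer states up to phase and a phase only rescales $|T\rangle^{\otimes3}$. By Theorem~\ref{rep:five-irreducibles}, $\Sym^3\HH=\operatorname{ran}Q_A\oplus\operatorname{ran}Q_B$ is a sum of two inequivalent irreducible Clifford modules, each with multiplicity one. The only invariant subspaces are therefore $0$, $A$, $B$ and $A\oplus B$. It remains to exclude $\mathcal S\subseteq A$ and $\mathcal S\subseteq B$.

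For this we use the weights from Proposition~\ref{prop:pure-reduction} together with Theorem~\ref{thm:range}. A stabilizer cube $|T\rangle^{\otimes3}$ has $w_A=q=3^{-n}$ and $w_B=1-3^{-n}$. Both are strictly positive when $n\geq1$, so the vector has nonzero components in both $A$ and $B$. A single such vector therefore lies in neither sector, and hence $\mathcal S=A\oplus B=\Sym^3\HH$.

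For $n=1$ the same argument applies, using Remark~\ref{rem:n1-exception}, where $\Sym^3$ still splits as $A\oplus B$ with ranks $2$ and $8$, inequivalent and multiplicity-free.

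The step that needs the most care is justifying that the only Clifford-invariant subspaces of $\Sym^3\HH$ are sums of $A$ and $B$. This relies on multiplicity-freeness and inequivalence, which make the commutant restricted to $\Sym^3\HH$ equal to $\C Q_A\oplus\C Q_B$. The projection onto $\mathcal S$ lies in that commutant, so it must be one of $0,Q_A,Q_B,Q_A+Q_B$.
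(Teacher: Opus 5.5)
Your proof is correct and is essentially the paper's argument. The paper shows that the stabilizer third moment $M_{\mathrm{stab}}^{(3)}=\frac{1}{3^n r_A}Q_A+\frac{1-3^{-n}}{r_B}Q_B$ is strictly positive on $\Sym^3\HH$, and then deduces $(I-M)\Pi_s=0$ from $\Tr[(I-M)M_{\mathrm{stab}}^{(3)}]=0$. Since the support of that operator is exactly your span $\mathcal S$, this rests on the same two facts you use: the multiplicity-free splitting $\Sym^3\HH=A\oplus B$ and the strictly positive stabilizer weights $3^{-n}$ and $1-3^{-n}$. The only difference is packaging: you argue through the lattice of invariant subspaces, while the paper argues through the averaged operator.
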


\begin{proof}
Let
\begin{equation*}
 M_{\mathrm{stab}}^{(3)}
 =\mathbb E_{\phi\,\mathrm{stab}}
 \bigl[(|\phi\rangle\langle\phi|)^{\otimes3}\bigr].
\end{equation*}
By the sector decomposition with $q=1/3^n$, we have
\begin{equation*}
 M_{\mathrm{stab}}^{(3)}
 =\frac{1}{3^n r_A}Q_A+\frac{1-1/3^n}{r_B}Q_B .
\end{equation*}
Both coefficients are positive and $Q_A+Q_B=\Pi_s$, so
$M_{\mathrm{stab}}^{(3)}$ is strictly positive on the symmetric subspace.
If an effect $0\leq M\leq I$ accepts every stabilizer cube
with certainty, then
$\Tr[(I-M)M_{\mathrm{stab}}^{(3)}]=0$. Positivity forces
$(I-M)\Pi_s=0$, so $M$ acts as the identity on the symmetric subspace
and therefore accepts every pure cube. This is also a consequence of
\cite[Theorem~5.6 and Example~4.26]{gnw21}, which identify the span
of qutrit stabilizer tensor cubes with the symmetric subspace.
\end{proof}

For comparison, the four-copy qutrit test of
\cite[Theorem~3.11]{gnw21} has perfect completeness and, under a
fidelity promise stated in the same convention as the present paper,
rejection probability at least $\zeta/72$. Repeating this test on
independent batches and rejecting as soon as a batch rejects therefore
requires at most
\begin{equation}\label{eq:N-fourcopy}
 \Bigl\lceil\frac{72}{\zeta}\log\frac1\delta\Bigr\rceil
\end{equation}
four-copy batches to achieve soundness error at most $\delta$. This is
a different measurement, supported by a separate soundness theorem. It
does not follow from the classification of the 32 quantum Boolean
functions on three copies.

\begin{remark}\label{rem:constant-convention}
Theorem~3.11 of~\cite{gnw21} uses the hypothesis
$F_{\mathrm{stab}}\leq1-\varepsilon^2$ and gives rejection probability
at least $C_{d,s}\varepsilon^2$, where
$C_{d,s}=[1-(1-1/(4d^2))^{s-1}]/2$ and where the test operator of that
theorem acts on $2s$ copies. For $d=3$, $s=2$ and $\zeta=\varepsilon^2$
the test is the four-copy qutrit test, with $C_{3,2}=1/72$ and
rejection probability at least $\zeta/72$, which gives \eqref{eq:N-fourcopy}.
\end{remark}

\section{Testing unknown quantum Boolean functions}\label{sec:operators}

\subsection{Testing membership of an unknown quantum Boolean function}
\label{subsec:optest-membership}

As in the preceding section we assume $n\geq2$, so that all five sector
ranks $r_j$ of \eqref{rep:rank-calculation} are positive. State
testing and operator testing require different access models.
Here the unknown object is a unitary oracle $f$ on $\mathcal K$ that
also satisfies $f=f^\dagger$. Being both unitary and Hermitian, $f$ has
eigenvalues in $\{\pm1\}$ and therefore satisfies $f^2=I$. Define
\[
 \|\mathcal{A}\|_{2,n}^2=3^{-n}\operatorname{Tr}(\mathcal{A}^\dagger\mathcal{A}),\quad
 \mathbf{B}:=\left\{\sum_{j\in\{A,B,C,D,E\}}s_jQ_j:\,
                         s_j\in\{-1,1\}\right\}.
\]
Thus $ \mathbf{B}$ is the set of  all jointly invariant quantum Boolean
functions. Conversely, a Hermitian unitary in the commutant
$\bigoplus_j\mathbb C Q_j$ has coefficient $s_j=\pm1$ in every sector.
Every $Q_j$ is nonzero because $n\geq2$, so $F^2=\sum_jQ_j=I$
for every $F\in\mathbf{B}$ and $\mathbf{B}$ is closed under
$F\mapsto-F$.

Let $G=G_n\times S_3$, where $G_n$ is the projective Clifford group as in
\eqref{EG1}, and choose a unitary representative $\mathcal C_c$ of each
$c\in G_n$. For $g=(c,\pi)\in G$, write
$U(g):=\mathcal C_c^{\otimes3}V_\pi$. Only the conjugation action of
$U(g)$ is used, and that action does not depend on the phase of $\mathcal C_c$.
Because every joint sector carries a single copy of its irreducible
representation, the conjugation average is the orthogonal projection
onto the commutant of $G$ and equals
\[
 \mathcal E(\mathcal{A})=\mathbb E_{g\in G}U(g)\mathcal{A}U(g)^\dagger
       =\sum_j\frac{\operatorname{Tr}(Q_j\mathcal{A})}{r_j}Q_j .
\]
We assume independent uniform samples of $g$, preparation of
$|\Phi_{3^{n}}\rangle:=3^{-n/2}\sum_{k=1}^{3^{n}}|k\rangle|k\rangle$,
and controlled implementations of the known unitaries
\[
 K_g=U(g)\otimes\overline{U(g)}.
\]
Complex conjugation is taken in the computational basis used to
define $|\Phi_{3^n}\rangle$. This controlled operation is independent of the
phase of the chosen Clifford representative. Replacing $U(g)$ by
$e^{i\phi}U(g)$ leaves $K_g$ exactly unchanged. Controlled access to the
unknown $f$ is not required. One ordinary query prepares
\[
 |f\rangle=(f\otimes I)|\Phi_{3^n}\rangle.
\]

\begin{theorem}
\label{thm:optest-membership}
Let
\[
\mathcal{A}=\mathcal E(f)=\sum_ja_jQ_j,\quad
 a_j=\operatorname{Tr}(Q_jf)/r_j,\quad
 v=1-3^{-n}\operatorname{Tr}(\mathcal{A}^2).
\]
Then $a_j\in[-1,1]$ and
\begin{equation}\label{eq:optest-rounding}
 v\leq\min_{F\in\mathbf{B}}\|f-F\|_{2,n}^2
   =2-\frac2{3^{n}}\sum_jr_j|a_j|\leq2v .
\end{equation}
Moreover, $v=0$ if and only if $f\in\mathbf{B}$. For $\epsilon>0$ and
$0<\delta<1$, the promises $f\in\mathbf{B}$ and
$\min_{F\in\mathbf{B}}\|f-F\|_{2,n}\geq\epsilon$
can be distinguished with perfect completeness and false acceptance
probability at most $\delta$ using
\begin{equation}\label{eq:optest-membership-count}
 N=\left\lceil4\epsilon^{-2}\log\frac1\delta\right\rceil
\end{equation}
ordinary oracle calls to $f$. Each call uses a fresh preparation. For the
promises $v=0$ and $v\geq\gamma$, a Hoeffding version uses
$\lceil32\gamma^{-2}\log(2/\delta)\rceil$ calls.
\end{theorem}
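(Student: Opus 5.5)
The plan is to reduce everything to the scalar data $(r_j,a_j)$ through the conditional expectation $\mathcal E$, and then realize the test as a Hadamard test on the Choi vector $|f\rangle$. I use the paper's normalization of $\|\cdot\|_{2,n}$ throughout, under which the displayed identity reads $\|I\|_{2,n}^2=1$, i.e.\ $3^{-n}\sum_jr_j=1$. With this normalization $\|f\|_{2,n}^2=1$, because $f^2=I$.

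\textbf{Step 1: the bound $|a_j|\le1$.} Since $\|f\|_{\mathrm{op}}=1$, we have $|\operatorname{Tr}(Q_jf)|\le\operatorname{Tr}Q_j=r_j$, so $|a_j|\le1$.

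\textbf{Step 2: the minimum over $\mathbf B$.} For $F=\sum_js_jQ_j\in\mathbf B$, expand
\[
\|f-F\|_{2,n}^2=3^{-n}\bigl(\operatorname{Tr}f^2+\operatorname{Tr}F^2-2\operatorname{Tr}(fF)\bigr).
\]
Here $\operatorname{Tr}(fF)=\sum_js_jr_ja_j$, so the expression is $2-\tfrac{2}{3^n}\sum_js_jr_ja_j$. It is minimized by $s_j=\operatorname{sgn}a_j$, which gives the middle expression in \eqref{eq:optest-rounding}.

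\textbf{Step 3: the two inequalities.} Write $x=3^{-n}\sum_jr_j|a_j|$ and $y=3^{-n}\sum_jr_ja_j^2=3^{-n}\operatorname{Tr}(\mathcal A^2)$, so that $v=1-y$.

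For the upper bound, $|a_j|\le1$ gives $a_j^2\le|a_j|$. Hence $y\le x$ and $2-2x\le2-2y=2v$.

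For the lower bound, the weights $3^{-n}r_j$ sum to one, so Cauchy--Schwarz gives $x^2\le y$. Therefore
\[
(2-2x)-(1-y)=1-2x+y\ge(1-x)^2\ge0.
\]

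\textbf{Step 4: the case $v=0$.} Because $\mathcal E$ is the Hilbert--Schmidt orthogonal projection onto the commutant, $v=\|f\|_{2,n}^2-\|\mathcal A\|_{2,n}^2=\|f-\mathcal E(f)\|_{2,n}^2$. So $v=0$ forces $f=\mathcal A$ to lie in the commutant. A Hermitian unitary in $\bigoplus_j\mathbb CQ_j$ lies in $\mathbf B$. Conversely, if $f\in\mathbf B$ then $\mathcal E(f)=f$ and $v=0$.

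\textbf{Step 5: the test.} The key identity is
\[
K_g|f\rangle=\bigl(U(g)fU(g)^\dagger\otimes I\bigr)|\Phi_{3^n}\rangle=|U(g)fU(g)^\dagger\rangle,
\]
which follows from the transpose trick $(I\otimes\overline U)|\Phi\rangle=(U^\dagger{}^{T}\ldots)$, i.e.\ $(X\otimes\overline{U})|\Phi\rangle=(XU^\dagger\otimes I)|\Phi\rangle$ for unitary $U$ after transposition. Together with $\langle A|B\rangle\propto\operatorname{Tr}(A^\dagger B)$, averaging over $g$ gives
\[
\mathbb E_g\langle f|K_g|f\rangle=\frac{\operatorname{Tr}(f\,\mathcal E(f))}{\operatorname{Tr}(f^2)}=\frac{\operatorname{Tr}(\mathcal A^2)}{\operatorname{Tr}(f^2)}=1-v.
\]
This quantity is real and lies in $[0,1]$.

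In each round I prepare $|f\rangle$ with one ordinary query, sample $g$, and run a Hadamard test with the controlled $K_g$. I accept on ancilla outcome $+$. The acceptance probability is $\tfrac12(1+\mathbb E_g\operatorname{Re}\langle f|K_g|f\rangle)=1-v/2$.

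Completeness is perfect. If $f\in\mathbf B$, then $f$ commutes with every $U(g)$, so $K_g|f\rangle=|f\rangle$ for every $g$ and the ancilla always reads $+$.

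For soundness, $\min_F\|f-F\|^2_{2,n}\ge\epsilon^2$ together with \eqref{eq:optest-rounding} gives $v\ge\epsilon^2/2$. So each round rejects with probability at least $\epsilon^2/4$. I reject if any round rejects, and then false acceptance has probability at most $(1-\epsilon^2/4)^N\le e^{-N\epsilon^2/4}\le\delta$ for $N$ as in \eqref{eq:optest-membership-count}.

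For the Hoeffding version, I record $X=\pm1$ per round, which has mean $1-v$. I threshold the empirical mean at $1-\gamma/2$ and apply \eqref{eq:hoeffding} with range $2$ and accuracy $\gamma/2$. Both error probabilities are then at most $2\exp(-N\gamma^2/8)$, which is at most $\delta$ once $N\ge\lceil8\gamma^{-2}\log(2/\delta)\rceil$, and in particular for the stated $\lceil32\gamma^{-2}\log(2/\delta)\rceil$ calls. The stated count is therefore sufficient, with room to spare; it would also follow from the cruder accuracy $\gamma/4$.

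\textbf{Main obstacle.} The delicate step is Step~5, specifically verifying $K_g|f\rangle=|U(g)fU(g)^\dagger\rangle$. This requires checking that complex conjugation is taken in the same computational basis used to define $|\Phi_{3^n}\rangle$, and that $K_g$ is phase-independent, as the paper states just before the theorem. It also requires matching the dimension of $|\Phi_{3^n}\rangle$ to the space on which $f$ acts.
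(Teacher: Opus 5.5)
Your proposal is correct and follows the paper's proof essentially step for step. The shared steps are the trace bound on $a_j$, sign rounding via $\operatorname{Tr}(fF)=\sum_j s_jr_ja_j$, the averaged scalar inequalities, Pythagoras for $v=0$, and the Hadamard test on $|f\rangle$ with rejection on the first negative outcome. Your Cauchy--Schwarz route to the lower bound is only a cosmetic variation. Your Hoeffding accuracy $\gamma/2$ gives the sharper count $\lceil 8\gamma^{-2}\log(2/\delta)\rceil$, where the paper uses the conservative accuracy $\gamma/4$; this is also cosmetic.

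As you noticed, both your argument and the paper's tacitly require reading $3^n$ as $\dim\mathcal K=3^{3n}$ in $\|\cdot\|_{2,n}$, in $v$, and in $|\Phi_{3^n}\rangle$. The reason is that $\sum_j r_j=\operatorname{Tr}(f^2)=3^{3n}$, so the paper's assertion $\sum_j r_j=3^n$ is a slip. With that reading everything goes through.
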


\begin{proof}
The twirl $\mathcal E$ is the orthogonal projection onto the commutant of
$G$ with respect to the Hilbert--Schmidt inner product, and $\mathcal{A}$ is a
Hermitian contraction. Indeed
\[
 \|\mathcal{A}\|_\infty\leq\mathbb E_g\|U(g)fU(g)^\dagger\|_\infty
 =\|f\|_\infty\leq1,
\]
where
\(\|\cdot\|_\infty\) is the Schatten $\infty$-norm.
Since $Q_j$ is a projection of rank $r_j$, it follows that
\[
 |a_j|=\frac{|\operatorname{Tr}(Q_jfQ_j)|}{r_j}
 \leq\|Q_jfQ_j\|_\infty\leq1 ,
\]
and $\sum_jr_j=3^n$, while $\operatorname{Tr}(f^2)=\operatorname{Tr}(F^2)=3^n$
for every $F\in\mathbf{B}$.
For $F=\sum_js_jQ_j$ the projection property gives
$\operatorname{Tr}(fF)=\operatorname{Tr}(\mathcal{A}F)$, and hence
\[
 \|f-F\|_{2,n}^2
      =3^{-n}\bigl(\operatorname{Tr}(f^2)-2\operatorname{Tr}(fF)
                    +\operatorname{Tr}(F^2)\bigr)
      =2-\frac2{3^{n}}\sum_jr_ja_js_j.
\]
Choosing $s_j=\operatorname{sgn}(a_j)$, with
$\operatorname{sgn}(0)=1$, gives the minimum
$2-\frac2{3^n}\sum_jr_j|a_j|$. Applying
\[
 1-x^2\leq2(1-x)\leq2(1-x^2),\quad 0\leq x\leq1,
\]
to $x=|a_j|$ and averaging with the weights $r_j/3^n$, which sum to one,
gives
\[
 1-3^{-n}\sum_jr_ja_j^2
 \ \leq\ 2-\frac2{3^{n}}\sum_jr_j|a_j|
 \ \leq\ 2-\frac2{3^{n}}\sum_jr_ja_j^2 .
\]
The left side is $v$ and the right side is $2v$, which proves
\eqref{eq:optest-rounding}. Moreover $v=0$ says
$\|\mathcal{A}\|_{2,n}=\|f\|_{2,n}=1$, which for the orthogonal projection
$\mathcal E$ holds exactly when $\mathcal{A}=f$. Since a Hermitian unitary in the
commutant of $G$ is $\sum_js_jQ_j$ with $s_j=\pm1$, this is exactly
$f\in\mathbf{B}$.

For independently sampled $g$, initialize an ancilla in $|+\rangle$,
apply the controlled unitary $K_g$ to $|f\rangle$, and measure the
ancilla in the Pauli-$X$ basis. Conditionally on $g$, the resulting sign
$X\in\{-1,1\}$ has mean
$\operatorname{Re}\langle f|K_g|f\rangle$.
The vectorization identity
\( (U\otimes\overline U)|f\rangle=|UfU^\dagger\rangle\)
gives $\mathbb E_gK_g|f\rangle=|\mathcal E(f)\rangle$, so averaging over
$g$ as well,
\[
 \mathbb EX
 =3^{-n}\operatorname{Re}\operatorname{Tr}
                  (f^\dagger\mathcal E(f))
 =3^{-n}\operatorname{Tr}(f\mathcal{A})
 =3^{-n}\operatorname{Tr}(\mathcal{A}^2)=1-v,
\]
the third identity using that $\mathcal E$ is self-adjoint and
$\mathcal E(f)=\mathcal{A}$.
Hence $\Pr(X=-1)=v/2$. If $f\in\mathbf{B}$, then $f$ commutes with every
$U(g)$, so $K_g|f\rangle=|f\rangle$ for every $g$ and a negative
outcome is impossible. Under the distance promise,
\eqref{eq:optest-rounding} gives $v\geq\epsilon^2/2$, so each trial is
negative with probability at least $\epsilon^2/4$. Rejecting on the first
negative outcome consequently has perfect completeness and false
acceptance probability at most
$(1-\epsilon^2/4)^N\leq e^{-N\epsilon^2/4}$, proving
\eqref{eq:optest-membership-count}.
For the last assertion, estimate $v=1-\mathbb EX$ by $1-\overline X$ to
accuracy $\gamma/4$ and use the threshold $\gamma/2$. Hoeffding's
inequality for $X\in[-1,1]$, whose range has length two, bounds the error
probability of each decision by $2\exp(-N\gamma^2/32)$, so the stated
count $\lceil32\gamma^{-2}\log(2/\delta)\rceil$ suffices.
\end{proof}

The statistic is unchanged under $f\mapsto-f$, as are membership in and
distance to $\mathbf{B}$. Thus the test is consistent with the
indistinguishability of the ordinary unitary oracles $f$ and $-f$.

The condition $f^2=I$ is an input promise throughout. The protocol does
not test this condition for an unrestricted unknown unitary. The query
counts also exclude the costs of preparing the maximally entangled state,
of sampling $G$, and of implementing the known controlled operations.

\subsection{Testing equality with a fixed quantum Boolean function}
\label{subsec:optest-target}

A different task fixes $F=\sum_js_jQ_j\in\mathbf{B}$ and asks
whether an unknown quantum Boolean function $h$ equals $F$.
We now assume access to the labeled projective measurement
of $h$, with effects $(I\pm h)/2$, or to a controlled
implementation of $h$ with a specified phase reference. The latter
implements the former by a Hadamard ancilla circuit. Ordinary access to
the unitary channel of $h$ alone cannot distinguish $h=F$ from $h=-F$.

\begin{proposition}
\label{prop:optest-target}
Assume $h=h^\dagger$, $h^2=I$, and the access specified above.
The promises $h=F$ and $\|h-F\|_{2,n}\geq\epsilon$ can be
distinguished with perfect completeness and false acceptance
probability at most $\delta$ using
\( N=\lceil4\epsilon^{-2}\log(1/\delta)\rceil\)
independent trials, each of which consists of one known sector
measurement and one measurement of $h$. A two-sided Hoeffding
formulation uses $\lceil32\epsilon^{-4}\log(2/\delta)\rceil$ trials.
\end{proposition}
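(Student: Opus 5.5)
Each trial draws a uniformly random sector label from the maximally mixed state and measures $h$ on that sector. Concretely, one trial prepares the maximally mixed state $I/3^{3n}$ on $\KK$; for instance, prepare half of a maximally entangled pair and discard the partner, or choose a uniformly random computational basis vector. Then apply the L\"uders instrument of the sector measurement of Lemma~\ref{lem:sector-instrument}, obtaining a label $J$ and post-measurement state $Q_J/r_J$. The sector measurement is known. On this post-measurement state, perform the labeled measurement of $h$ with effects $(I\pm h)/2$, obtaining $Y\in\{\pm1\}$. Reject when $Y\neq s_J$. Completeness is immediate: if $h=F$, then on the range of $Q_J$ the operator $h$ acts as $s_JI$, so $Y=s_J$ with certainty.

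For soundness I would compute the rejection probability. Conditionally on $J=j$, which occurs with probability $r_j/3^{3n}$, we have $\Pr(Y\neq s_j)=\Tr\bigl((I-s_jh)Q_j\bigr)/(2r_j)$. Summing over $j$ gives
\[
\Pr(\text{reject})=\frac{1}{2\cdot 3^{3n}}\sum_j\Tr\bigl(Q_j(I-s_jh)\bigr)=\frac{1}{2\cdot3^{3n}}\bigl(\Tr I-\Tr(Fh)\bigr).
\]
Since $h^2=F^2=I$, we have $\Tr\bigl((h-F)^2\bigr)=2\Tr I-2\Tr(Fh)$. Hence the rejection probability is exactly $\frac14$ times the normalized squared Hilbert--Schmidt distance $\Tr\bigl((h-F)^2\bigr)/\Tr I$.

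The main point needing care is the normalization. The paper's $\|\cdot\|_{2,n}$ uses the prefactor $3^{-n}$ and $\sum_jr_j$ was written as $3^n$ above, so I would interpret the normalized trace so that $\|I\|_{2,n}=1$, i.e.\ divide by $\Tr I$. Under that convention the rejection probability is $\|h-F\|_{2,n}^2/4\ge\epsilon^2/4$. Repeating $N$ independent trials and rejecting on the first disagreement then gives false acceptance probability at most $(1-\epsilon^2/4)^N\le e^{-N\epsilon^2/4}\le\delta$ for $N=\lceil4\epsilon^{-2}\log(1/\delta)\rceil$.

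For the two-sided Hoeffding version, I would estimate $p=\Pr(\text{reject})$ by the empirical frequency of rejections. The two promises are $p=0$ and $p\ge\epsilon^2/4$, so I would use the threshold $\epsilon^2/8$ and required accuracy $\epsilon^2/8$. Hoeffding's inequality \eqref{eq:hoeffding} for $[0,1]$-valued indicators gives error probability at most $2\exp(-2N\epsilon^4/64)=2\exp(-N\epsilon^4/32)$, which yields the stated count $\lceil32\epsilon^{-4}\log(2/\delta)\rceil$.

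Finally, I would remark why labeled access is needed. The statistic distinguishes $h$ from $-h$ through the sign $s_J$, which the unitary channel of $h$ alone cannot do. This is consistent with the access assumption in the statement.
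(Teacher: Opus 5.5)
Your proposal is correct and follows the paper's proof essentially step for step: maximally mixed input, the coarse sector label, a measurement of $h$ compared with $s_j$, the identity $\Pr(\text{reject})=\tfrac14\|h-F\|_{2,n}^2$, rejection on the first disagreement, and the same Hoeffding constant (your $[0,1]$ indicators at accuracy $\epsilon^2/8$ are the paper's $X=s_jY\in[-1,1]$ at accuracy $\epsilon^2/4$ in different variables). Your normalization remark is also right and worth keeping: the paper's proof prepares ``$I/3^n$'' and uses $\operatorname{Tr}(h^2)=3^n$, silently treating $3^n$ as $\dim\KK=3^{3n}$. So the statement holds with $\|\cdot\|_{2,n}$ normalized by $\operatorname{Tr}I_{\KK}$, as you propose; with the literal prefactor $3^{-n}$ the stated count would fail, for example when $h$ differs from $F$ by a sign flip on a single vector.
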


\begin{proof}
Prepare the maximally mixed state $I/3^n$, perform the known sector
measurement, and retain its coarse outcome $j$. This outcome has
probability $r_j/3^n$ and prepares $Q_j/r_j$ after forgetting the finer
records. For the specified projective implementation, if $E_a$ are the
orthogonal finer projectors in branch $j$, this follows from
\[
 \sum_{a\in j}E_a(I/3^n)E_a=Q_j/3^n.
\]
Measure $h$, record $Y\in\{-1,1\}$, set $X=s_jY$, and let
$c=\mathbb EX$. Then
\[
 c=3^{-n}\operatorname{Tr}(Fh),\quad
 \|h-F\|_{2,n}^2=2(1-c),\quad
 \Pr(X=-1)=\tfrac14\|h-F\|_{2,n}^2 .
\]
Here the first identity uses $\mathbb E[Y\mid j]=\operatorname{Tr}(hQ_j)/r_j$
together with the sector probabilities $r_j/3^n$, and the second uses
$\operatorname{Tr}(h^2)=\operatorname{Tr}(F^2)=3^n$. Under $h=F$ all
outcomes have $X=1$. Under the second promise each fresh trial is
negative with probability at least $\epsilon^2/4$. Rejecting on the
first negative outcome proves the first assertion. Alternatively the
two promises give $c=1$ and $c\leq1-\epsilon^2/2$, so Hoeffding's
inequality with estimation accuracy $\epsilon^2/4$ and threshold
$1-\epsilon^2/4$ bounds the error probability of each decision by
$2\exp(-N\epsilon^4/32)$ and hence gives the stated count.
\end{proof}

The preparation samples the sectors with probabilities $r_j/3^n$. It does
not require discarding trials until a rare sector appears. Nevertheless a
sign change confined to sector $j$ has normalized squared distance
$4r_j/3^n$, so resolving that change can require many trials. The promise
is formulated in the normalized Hilbert--Schmidt distance. Since that
distance can be as small as $2/\sqrt 3^n$ for operators whose distance in
operator norm is two, a promise in operator norm alone would give a
dimension-dependent cost, and no such bound is claimed here. The known
sector operations take $O(n)$ bounded-arity gates per trial under the
measurement model already stated, in addition to the cost of accessing
$h$.

\clearpage
\appendix
\section{Algebraic independence on a four-parameter family of density operators}
\label{app:independence}

The purpose of this appendix is to give a self-contained proof of the
last assertion of Theorem~\ref{thm:polynomials}.  Throughout, $p$ and
$P=p^{\otimes n}$ are the CSS projectors of
\eqref{rep:css-definition} and \eqref{rep:css-weyl-average}, and
$|\phi_x\rangle$, $x\in\F_3$, are the three-qutrit code vectors of
\eqref{rep:css-definition}.  We also use the single-qutrit operators
$Z^aX^b$ ($a,b\in\F_3$), which differ from $W_{(a,b)}$ only by
scalar cube roots of unity. The code projector satisfies
$p=\tfrac19\sum_{a,b\in\F_3}(Z^{\otimes3})^a(X^{\otimes3})^b$.
For a unit vector $|\psi\rangle\in\C^3$, let
$\kappa(\psi)=\Tr\bigl(p|\psi\rangle\langle\psi|^{\otimes3}\bigr)$.

\begin{proposition}\label{prop:independence}
For $n\geq2$ the functions $t_2,t_3,u,v$ on the affine space of trace-one Hermitian matrices are
algebraically independent.
\end{proposition}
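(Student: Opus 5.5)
The plan is the standard Jacobian criterion. If a nonzero polynomial $\Phi$ satisfies $\Phi(t_2,t_3,u,v)=0$ on the trace-one affine space, it vanishes in particular on every smooth four-parameter family $\theta\mapsto\rho(\theta)$. If the map $\theta\mapsto(t_2,t_3,u,v)(\rho(\theta))$ has nonzero Jacobian determinant at some point, the inverse function theorem gives an open subset of $\mathbb R^4$ in the image. A polynomial vanishing on an open set is zero, which is a contradiction. So it suffices to exhibit, for $n=2$, one explicit family and one point with nonzero Jacobian, and then to reduce general $n$ to $n=2$.

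For $n=2$ I would take product states $\rho=\rho_1\otimes\rho_2$ of two qutrits, chosen so that the four invariants become computable in closed form. By Remark~\ref{rem:multiplicativity}, $u$ and $v$ factor over sites, and $t_2,t_3$ are multiplicative as well. I would use single-qutrit density matrices diagonal in the computational basis, or mixtures $\lambda|\psi\rangle\langle\psi|+(1-\lambda)I/3$. For these, Proposition~\ref{prop:depolarized} gives $t_2,t_3,u,v$ as explicit polynomials in $\lambda$ and $\kappa(\psi)$. For instance, one site can be $\lambda_1|a_\theta\rangle\langle a_\theta|+(1-\lambda_1)I/3$ with $\kappa(a_\theta)=(\cos^3\theta-\sin^3\theta)^2/3$ from the proof of Theorem~\ref{thm:range}, and the other site can be $\lambda_2|a_\phi\rangle\langle a_\phi|+(1-\lambda_2)I/3$. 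This gives parameters $(\lambda_1,\theta,\lambda_2,\phi)$. Each invariant is then a product of two one-site rational expressions. I would evaluate the $4\times4$ Jacobian at a specific rational-friendly point, for example $\lambda_i\in\{1/2,1/3\}$ with $\theta,\phi$ chosen so that $\kappa$ is rational. Equivalently, I would use $\kappa$ directly as a parameter, since any value in $[0,1/3]$ is attained. The goal is an exact rational nonzero determinant.

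The main obstacle is degeneracy. On a single site, $t_2,t_3$ depend only on $\lambda$, and $u,v$ depend on $(\lambda,\kappa)$ only through $\lambda^3\kappa$ plus terms in $\lambda$ alone. So one site contributes only two effective parameters, and the product structure may make the Jacobian rank-deficient. For example, $u$ and $v$ may be nearly affinely related to each other. If that happens, I would break the product structure by replacing one site with a non-pure-plus-white-noise state: a general diagonal state $\mathrm{diag}(x,y,1-x-y)$ has independent $t_2,t_3$, and its $u,v$ follow from the Weyl formulas \eqref{eq:weyl-moments}. Alternatively, I would use a rank-two mixture of two stabilizer states. The determinant is a finite symbolic computation, and I would verify it exactly.

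Finally, for $n>2$, I would take $\rho\otimes (I_3/3)^{\otimes(n-2)}$. By multiplicativity, each of $t_2,t_3,u,v$ gets multiplied by a fixed nonzero constant (the values for the maximally mixed state in Section~\ref{rep:explicit-checks}). The Jacobian is therefore scaled by a nonzero factor, and independence persists.
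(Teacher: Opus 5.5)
Your proposal is correct and follows the paper's own proof: the paper uses the family $\rho_1\otimes\rho_2$ with $\rho_i=\frac{1-\lambda_i}{3}I_3+\lambda_i|\psi_i\rangle\langle\psi_i|$, takes $(\lambda_1,\lambda_2,\kappa_1,\kappa_2)$ as coordinates, and reduces general $n$ to $n=2$ by tensoring with $(I_3/3)^{\otimes(n-2)}$, exactly as you propose. Your degeneracy worry does not arise. Since $t_2,t_3$ do not depend on the $\kappa_i$, the Jacobian is block-triangular, and at $(\lambda_1,\lambda_2,\kappa_1,\kappa_2)=(1/2,1/3,1/4,1/4)$ its determinant is $\det J_{11}\cdot\lambda_1^3\lambda_2^3(A_2B_1-A_1B_2)=71/774840978\neq0$, where $A_i=\Tr(p\rho_i^{\otimes3})$ and $B_i=\Tr(pV_{(12)}\rho_i^{\otimes3})$ are the one-site values of $u$ and $v$, so no fallback family is needed.
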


\begin{proof}
For $k=n-2$, we let
$\rho'=\rho\otimes(I_3/3)^{\otimes k}$ for a two-qutrit state $\rho$.
Using $\Tr((I_3/3)^m)=3^{1-m}$ for $m=2,3$, we obtain
\[
 \Tr(\rho'^2)=3^{-k}\Tr(\rho^2),\quad
 \Tr(\rho'^3)=9^{-k}\Tr(\rho^3).
\]
The identity $P^{(n)}=P^{(2)}\otimes p^{\otimes k}$ and the traces
\[
 \Tr\bigl(p\,(I_3/3)^{\otimes3}\bigr)=\frac19,\quad
 \Tr\bigl(pV_{(12)}(I_3/3)^{\otimes3}\bigr)=\frac1{27}
\]
give \( u(\rho')=9^{-k}u(\rho)\) and \(v(\rho')=27^{-k}v(\rho).\)
Thus on this family the four invariants are obtained from their $n=2$
counterparts by the invertible diagonal substitution
$(t_2,t_3,u,v)\mapsto(3^{-k}t_2,9^{-k}t_3,9^{-k}u,27^{-k}v)$.
For each fixed $n\geq2$, a nonzero polynomial relation among the four
invariants would therefore induce a nonzero polynomial relation for
$n=2$. It suffices to prove the proposition in that case.

For $i=1,2$ let
\begin{equation*}
 0<\lambda_i<1,\quad
 |\psi_i\rangle=\cos\theta_i|0\rangle+\sin\theta_i|1\rangle,
 \quad 0<\theta_i<\pi/4,
\end{equation*}
and define the single-qutrit density operators
\[
 \rho_i=\frac{1-\lambda_i}{3}I_3+\lambda_i|\psi_i\rangle\langle\psi_i|,
 \quad
 \rho=\rho_1\otimes\rho_2 .
\]
Write
\[
 c_i=\cos\theta_i,\quad s_i=\sin\theta_i,\quad
 \kappa_i:=\Tr\bigl(p|\psi_i\rangle\langle\psi_i|^{\otimes3}\bigr)
 =\frac{(c_i^3+s_i^3)^2}{3},
\]
the last identity because the only code vector with nonzero overlap
with $|\psi_i\rangle^{\otimes3}$ is $|\phi_0\rangle$, giving
$\langle\phi_0|\psi_i^{\otimes3}\rangle=(c_i^3+s_i^3)/\sqrt3$.  On
$0<\theta<\pi/4$ the function
$\theta\mapsto(c^3+s^3)^2/3$ is continuous and strictly decreasing from
$1/3$ to $1/6$ (its derivative is
$2(c^3+s^3)\cdot3cs(s-c)/3<0$), so $\kappa_i$ ranges over the open
interval $(1/6,1/3)$. In particular, the value $1/4$ is attained, at the
unique $\theta_*$ with $(c_*^3+s_*^3)^2=3/4$.
Hence $(\lambda_1,\lambda_2,\kappa_1,\kappa_2)$ ranges over an open
subset of $(0,1)^2\times(1/6,1/3)^2$ that contains
$(1/2,1/3,1/4,1/4)$.
The eigenvalues of
$\rho_i$ are $(1+2\lambda_i)/3$ (once) and $(1-\lambda_i)/3$ (twice). Hence
\[
 x_i:=\Tr(\rho_i^2)=\frac{(1+2\lambda_i)^2+2(1-\lambda_i)^2}{9}
 =\frac{1+2\lambda_i^2}{3}
\]
and
\[
 y_i:=\Tr(\rho_i^3)=\frac{(1+2\lambda_i)^3+2(1-\lambda_i)^3}{27}
 =\frac{1+6\lambda_i^2+2\lambda_i^3}{9}.
\]
Since $\rho=\rho_1\otimes\rho_2$, it follows that
\begin{equation}\label{eq:app-spectral}
 t_2=\Tr(\rho^2)=x_1x_2,\quad t_3=\Tr(\rho^3)=y_1y_2 .
\end{equation}
By
$p=\tfrac19\sum_{a,b}(Z^{\otimes3})^a(X^{\otimes3})^b$ and $\Tr(Z^aX^b\rho_i)=\lambda_i\langle\psi_i|Z^aX^b|\psi_i\rangle$ for
$(a,b)\neq(0,0)$ while it equals $1$ for $(a,b)=(0,0)$, we obtain
\begin{equation}\label{eq:app-A}
 A(\lambda_i,\kappa_i):=\Tr\bigl(p\rho_i^{\otimes3}\bigr)
 =\frac19\left(1+\lambda_i^3\sum_{(a,b)\neq(0,0)}
   \langle\psi_i|Z^aX^b|\psi_i\rangle^3\right)
 =\frac{1-\lambda_i^3}{9}+\lambda_i^3\kappa_i ,
\end{equation}
where the last step uses
$\sum_{(a,b)}\langle\psi_i|Z^aX^b|\psi_i\rangle^3=9\kappa_i$ and the
fact that the $(a,b)=(0,0)$ term contributes $1$.

For the swap we use the identity
$\Tr(B_1\otimes B_2\otimes B_3\,V_{(12)})
=\Tr(B_1B_2)\Tr(B_3)$ with $B_j=Z^aX^b\rho_i$, together with
$(Z^aX^b)^2=\omega^{-ab}Z^{2a}X^{2b}$.  Since $\Tr((Z^aX^b)^2)=0$ for
$(a,b)\neq(0,0)$, the eight nontrivial labels contribute
\[
 \begin{aligned}
 &\Tr\bigl((Z^aX^b\rho_i)^2\bigr)\Tr(Z^aX^b\rho_i)\\
 &\quad=\Bigl(\frac{2\lambda_i(1-\lambda_i)}{3}
   \langle\psi_i|(Z^aX^b)^2|\psi_i\rangle
   +\lambda_i^2\langle\psi_i|Z^aX^b|\psi_i\rangle^2\Bigr)
    \lambda_i\langle\psi_i|Z^aX^b|\psi_i\rangle .
 \end{aligned}
\]
We also use the identity
\begin{equation}\label{eq:app-Sigma}
 \sum_{(a,b)\neq(0,0)}
 \langle\psi|Z^aX^b|\psi\rangle\,\langle\psi|(Z^aX^b)^2|\psi\rangle=2
 \quad\text{for every unit vector }\ |\psi\rangle=c|0\rangle+s|1\rangle,
\end{equation}
which follows from Weyl orthogonality. Indeed, by $(Z^aX^b)^3=I$, we see that
$(Z^aX^b)^2=(Z^aX^b)^\dagger$, which implies that
\[
 \sum_{(a,b)\neq(0,0)}|\langle\psi|Z^aX^b|\psi\rangle|^2
 =3\Tr\bigl((|\psi\rangle\langle\psi|)^2\bigr)-1=2.
\]
Therefore,
\begin{equation*}
 9\,B(\lambda_i,\kappa_i):=9\,\Tr\bigl(p\,V_{(12)}\rho_i^{\otimes3}\bigr)
 =\frac{1+2\lambda_i^2}{3}+\frac{4\lambda_i^2(1-\lambda_i)}{3}
 +\lambda_i^3(9\kappa_i-1),
\end{equation*}
where the three terms come from the trivial label
($\Tr(\rho_i^2)=(1+2\lambda_i^2)/3$), from the identity
\eqref{eq:app-Sigma} and
$\sum_{(a,b)\neq(0,0)}\langle\psi_i|Z^aX^b|\psi_i\rangle^3=9\kappa_i-1$.
Comparing with \eqref{eq:app-A}, we obtain
\begin{equation*}
 B(\lambda_i,\kappa_i)=A(\lambda_i,\kappa_i)-2d(\lambda_i),
 \quad
 d(\lambda)=\frac{(1-\lambda)^2(1+2\lambda)}{27},
\end{equation*}
as one checks by expanding
$(1-\lambda^3)/9-2(1-3\lambda^2+2\lambda^3)/27
=(1+6\lambda^2-7\lambda^3)/27$.
Finally, $P=p^{\otimes2}$ and
$V_{(12)}=V_{(12)}^{(1)}\otimes V_{(12)}^{(2)}$ (the copy swap acts on
the two sites separately), so
\begin{equation}\label{eq:app-uv}
 u=A(\lambda_1,\kappa_1)A(\lambda_2,\kappa_2),\quad
 v=B(\lambda_1,\kappa_1)B(\lambda_2,\kappa_2).
\end{equation}
 By \eqref{eq:app-spectral} and
\eqref{eq:app-uv}, the map
$\Phi(\lambda_1,\lambda_2,\kappa_1,\kappa_2)
=(t_2,t_3,u,v)$ has the Jacobian
\[
 J=\begin{pmatrix}
  \dfrac{\partial(t_2,t_3)}{\partial(\lambda_1,\lambda_2)} & 0\\[6pt]
  \dfrac{\partial(u,v)}{\partial(\lambda_1,\lambda_2)} &
  \dfrac{\partial(u,v)}{\partial(\kappa_1,\kappa_2)}
 \end{pmatrix},
\]
because $t_2$ and $t_3$ do not depend on $\kappa_1,\kappa_2$; the two
diagonal blocks are
\[
 J_{11}=\begin{pmatrix}
  x'(\lambda_1)x(\lambda_2) & x(\lambda_1)x'(\lambda_2)\\
  y'(\lambda_1)y(\lambda_2) & y(\lambda_1)y'(\lambda_2)
 \end{pmatrix},
 \quad
 J_{22}=\begin{pmatrix}
  \lambda_1^3A_2 & A_1\lambda_2^3\\
  \lambda_1^3B_2 & B_1\lambda_2^3
 \end{pmatrix},
\]
with $A_i=A(\lambda_i,\kappa_i)$, $B_i=B(\lambda_i,\kappa_i)$,
$x'(\lambda)=4\lambda/3$, $y'(\lambda)=(4\lambda+2\lambda^2)/3$;
here $\partial_\kappa A=\partial_\kappa B=\lambda^3$ was used.  Hence
\[
 \det J=(\det J_{11})(\det J_{22}),\quad
 \det J_{22}=\lambda_1^3\lambda_2^3\,(A_2B_1-A_1B_2).
\]
At the parameter point
$(\lambda_1,\lambda_2,\kappa_1,\kappa_2)=(1/2,1/3,1/4,1/4)$ one has
\begin{align*}
x(1/2)&=\tfrac12,\quad x(1/3)=\tfrac{11}{27},\quad
 x'(1/2)=\tfrac23,\quad x'(1/3)=\tfrac49,\\
y(1/2)&=\tfrac{11}{36},\quad y(1/3)=\tfrac{47}{243},\quad
 y'(1/2)=\tfrac56,\quad y'(1/3)=\tfrac{14}{27},\\
A_1&=\tfrac{37}{288},\quad A_2=\tfrac{113}{972},\quad
 B_1=\tfrac{79}{864},\quad B_2=\tfrac{179}{2916},
\end{align*}
and therefore
\[
 \det J_{11}
 =\frac{847}{19683}-\frac{235}{6561}=\frac{142}{19683},
\]
\[ \det J_{22}
 =\frac{1}{8}\cdot\frac{1}{27}\cdot\frac{2}{729}
 =\frac{1}{78732}
\]
and
\[
 \det J=\frac{142}{19683}\cdot\frac{1}{78732}
 =\frac{71}{774840978}\neq0 .
\]
 Since $\det J\neq0$ at
$(1/2,1/3,1/4,1/4)$, the inverse function theorem shows that $\Phi$ maps
some neighbourhood of that point diffeomorphically onto an open subset
$U\subseteq\mathbb{R}^4$.  If a nonzero polynomial $Q$ in four variables
satisfied $Q(t_2,t_3,u,v)=0$ for all states of the family, then
$Q\circ\Phi\equiv0$ near the parameter point, hence $Q\equiv0$ on the
open set $U$ and therefore $Q\equiv0$.  This proves the proposition for
$n=2$, and the initial reduction extends the conclusion to all $n\geq2$.
\end{proof}

\section{The complete character table for two qutrits}
\label{app:character-table-n2}
We use the projective Clifford group
$G_2=\mathbb F_3^4\rtimes\operatorname{Sp}(4,3)$, of order
$4\,199\,040$. Its ordinary character table is already available in the
GAP Character Table Library~\cite{ctbllib25} under the identifier
\texttt{3\^{}4.2U4(2)}. The $58\times58$ table below was computed from the
affine permutation action on $81$ points in GAP~4.15.1 \cite{GAP25}.
All the GAP code is available in the accompanying GitHub repository \cite{github_luan26}.
The accompanying file \texttt{qutrit\_n2.g} fixes the row and column labels, which need not
coincide with those in other versions of the table. Each row is an
ordinary irreducible character. Exact row orthogonality and the sum of
squared degrees were checked in GAP and independently by integer
arithmetic in $\mathbb Z[\zeta]$, where $\zeta=e^{2\pi i/3}$.

An entry $\langle a,b\rangle$ denotes $a+b\zeta$; an integer entry has
its usual meaning. All entries are exact, and the first column gives the
character degrees. The table is printed in six consecutive column panels
with the same row labels throughout.

\subsection*{Class metadata}
The identity class is $K_1$. For each class $K_j$, the table below gives
the element order $o_j$, class size $s_j$, and centralizer order $c_j$.
The machine readable GAP file contains group generators, class
representatives, and the power maps for the primes dividing $|G_2|$.
\begingroup\small
\begin{longtable}{r|rrr@{  \qquad}r|rrr}
$j$&$o_j$&$s_j$&$c_j$&$j$&$o_j$&$s_j$&$c_j$\\\hline\endhead
1 & 1 & 1 & 4199040 & 30 & 12 & 174960 & 24\\
2 & 3 & 80 & 52488 & 31 & 3 & 17280 & 243\\
3 & 2 & 81 & 51840 & 32 & 3 & 8640 & 486\\
4 & 6 & 6480 & 648 & 33 & 3 & 8640 & 486\\
5 & 2 & 810 & 5184 & 34 & 3 & 4320 & 972\\
6 & 6 & 3240 & 1296 & 35 & 6 & 38880 & 108\\
7 & 3 & 2160 & 1944 & 36 & 18 & 233280 & 18\\
8 & 3 & 960 & 4374 & 37 & 9 & 77760 & 54\\
9 & 3 & 120 & 34992 & 38 & 9 & 155520 & 27\\
10 & 6 & 3240 & 1296 & 39 & 9 & 155520 & 27\\
11 & 3 & 960 & 4374 & 40 & 9 & 77760 & 54\\
12 & 3 & 2160 & 1944 & 41 & 18 & 233280 & 18\\
13 & 3 & 120 & 34992 & 42 & 4 & 43740 & 96\\
14 & 12 & 38880 & 108 & 43 & 3 & 2160 & 1944\\
15 & 4 & 4860 & 864 & 44 & 3 & 8640 & 486\\
16 & 4 & 43740 & 96 & 45 & 3 & 8640 & 486\\
17 & 6 & 25920 & 162 & 46 & 6 & 19440 & 216\\
18 & 6 & 3240 & 1296 & 47 & 12 & 349920 & 12\\
19 & 6 & 19440 & 216 & 48 & 6 & 77760 & 54\\
20 & 6 & 9720 & 432 & 49 & 6 & 38880 & 108\\
21 & 6 & 25920 & 162 & 50 & 6 & 77760 & 54\\
22 & 6 & 3240 & 1296 & 51 & 6 & 38880 & 108\\
23 & 6 & 19440 & 216 & 52 & 6 & 77760 & 54\\
24 & 6 & 9720 & 432 & 53 & 6 & 38880 & 108\\
25 & 12 & 174960 & 24 & 54 & 6 & 77760 & 54\\
26 & 12 & 116640 & 36 & 55 & 6 & 38880 & 108\\
27 & 12 & 58320 & 72 & 56 & 8 & 524880 & 8\\
28 & 12 & 116640 & 36 & 57 & 10 & 419904 & 10\\
29 & 12 & 58320 & 72 & 58 & 5 & 419904 & 10\\
\end{longtable}
\endgroup
\clearpage
\charactertableheading{Character values: columns \texorpdfstring{$K_{1},\ldots,K_{10}$}{K1--K10}}
\begingroup
\fontsize{8}{9.4}\selectfont\setlength{\tabcolsep}{2.1pt}
\renewcommand{\arraystretch}{1.02}
\begin{longtable}{r|rrrrrrrrrr}
 & $K_{1}$ & $K_{2}$ & $K_{3}$ & $K_{4}$ & $K_{5}$ & $K_{6}$ & $K_{7}$ & $K_{8}$ & $K_{9}$ & $K_{10}$\\\hline\endhead
$\chi_{1}$ & $1$ & $1$ & $1$ & $1$ & $1$ & $1$ & $1$ & $1$ & $1$ & $1$\\
$\chi_{2}$ & $4$ & $4$ & $-4$ & $0$ & $0$ & $\langle 2,3\rangle$ & $\langle -2,-3\rangle$ & $\langle -2,-3\rangle$ & $\langle -2,-3\rangle$ & $\langle -1,-3\rangle$\\
$\chi_{3}$ & $4$ & $4$ & $-4$ & $0$ & $0$ & $\langle -1,-3\rangle$ & $\langle 1,3\rangle$ & $\langle 1,3\rangle$ & $\langle 1,3\rangle$ & $\langle 2,3\rangle$\\
$\chi_{4}$ & $5$ & $5$ & $5$ & $-3$ & $-3$ & $\langle -1,-3\rangle$ & $\langle -1,-3\rangle$ & $\langle -1,-3\rangle$ & $\langle -1,-3\rangle$ & $\langle 2,3\rangle$\\
$\chi_{5}$ & $5$ & $5$ & $5$ & $-3$ & $-3$ & $\langle 2,3\rangle$ & $\langle 2,3\rangle$ & $\langle 2,3\rangle$ & $\langle 2,3\rangle$ & $\langle -1,-3\rangle$\\
$\chi_{6}$ & $6$ & $6$ & $6$ & $-2$ & $-2$ & $-3$ & $-3$ & $-3$ & $-3$ & $-3$\\
$\chi_{7}$ & $10$ & $10$ & $10$ & $2$ & $2$ & $\langle -2,3\rangle$ & $\langle -2,3\rangle$ & $\langle -2,3\rangle$ & $\langle -2,3\rangle$ & $\langle -5,-3\rangle$\\
$\chi_{8}$ & $10$ & $10$ & $10$ & $2$ & $2$ & $\langle -5,-3\rangle$ & $\langle -5,-3\rangle$ & $\langle -5,-3\rangle$ & $\langle -5,-3\rangle$ & $\langle -2,3\rangle$\\
$\chi_{9}$ & $15$ & $15$ & $15$ & $7$ & $7$ & $-3$ & $-3$ & $-3$ & $-3$ & $-3$\\
$\chi_{10}$ & $15$ & $15$ & $15$ & $-1$ & $-1$ & $6$ & $6$ & $6$ & $6$ & $6$\\
$\chi_{11}$ & $20$ & $20$ & $-20$ & $0$ & $0$ & $7$ & $-7$ & $-7$ & $-7$ & $7$\\
$\chi_{12}$ & $20$ & $20$ & $20$ & $4$ & $4$ & $2$ & $2$ & $2$ & $2$ & $2$\\
$\chi_{13}$ & $20$ & $20$ & $-20$ & $0$ & $0$ & $\langle -8,-3\rangle$ & $\langle 8,3\rangle$ & $\langle 8,3\rangle$ & $\langle 8,3\rangle$ & $\langle -5,3\rangle$\\
$\chi_{14}$ & $20$ & $20$ & $-20$ & $0$ & $0$ & $\langle -5,3\rangle$ & $\langle 5,-3\rangle$ & $\langle 5,-3\rangle$ & $\langle 5,-3\rangle$ & $\langle -8,-3\rangle$\\
$\chi_{15}$ & $20$ & $20$ & $-20$ & $0$ & $0$ & $\langle -5,-6\rangle$ & $\langle 5,6\rangle$ & $\langle 5,6\rangle$ & $\langle 5,6\rangle$ & $\langle 1,6\rangle$\\
$\chi_{16}$ & $20$ & $20$ & $-20$ & $0$ & $0$ & $\langle 1,6\rangle$ & $\langle -1,-6\rangle$ & $\langle -1,-6\rangle$ & $\langle -1,-6\rangle$ & $\langle -5,-6\rangle$\\
$\chi_{17}$ & $24$ & $24$ & $24$ & $8$ & $8$ & $6$ & $6$ & $6$ & $6$ & $6$\\
$\chi_{18}$ & $30$ & $30$ & $30$ & $-10$ & $-10$ & $3$ & $3$ & $3$ & $3$ & $3$\\
$\chi_{19}$ & $30$ & $30$ & $30$ & $6$ & $6$ & $\langle 3,9\rangle$ & $\langle 3,9\rangle$ & $\langle 3,9\rangle$ & $\langle 3,9\rangle$ & $\langle -6,-9\rangle$\\
$\chi_{20}$ & $30$ & $30$ & $30$ & $6$ & $6$ & $\langle -6,-9\rangle$ & $\langle -6,-9\rangle$ & $\langle -6,-9\rangle$ & $\langle -6,-9\rangle$ & $\langle 3,9\rangle$\\
$\chi_{21}$ & $36$ & $36$ & $-36$ & $0$ & $0$ & $\langle 9,9\rangle$ & $\langle -9,-9\rangle$ & $\langle -9,-9\rangle$ & $\langle -9,-9\rangle$ & $\langle 0,-9\rangle$\\
$\chi_{22}$ & $36$ & $36$ & $-36$ & $0$ & $0$ & $\langle 0,-9\rangle$ & $\langle 0,9\rangle$ & $\langle 0,9\rangle$ & $\langle 0,9\rangle$ & $\langle 9,9\rangle$\\
$\chi_{23}$ & $40$ & $40$ & $40$ & $-8$ & $-8$ & $\langle -8,-6\rangle$ & $\langle -8,-6\rangle$ & $\langle -8,-6\rangle$ & $\langle -8,-6\rangle$ & $\langle -2,6\rangle$\\
$\chi_{24}$ & $40$ & $40$ & $40$ & $-8$ & $-8$ & $\langle -2,6\rangle$ & $\langle -2,6\rangle$ & $\langle -2,6\rangle$ & $\langle -2,6\rangle$ & $\langle -8,-6\rangle$\\
$\chi_{25}$ & $45$ & $45$ & $45$ & $-3$ & $-3$ & $\langle 9,9\rangle$ & $\langle 9,9\rangle$ & $\langle 9,9\rangle$ & $\langle 9,9\rangle$ & $\langle 0,-9\rangle$\\
$\chi_{26}$ & $45$ & $45$ & $45$ & $-3$ & $-3$ & $\langle 0,-9\rangle$ & $\langle 0,-9\rangle$ & $\langle 0,-9\rangle$ & $\langle 0,-9\rangle$ & $\langle 9,9\rangle$\\
$\chi_{27}$ & $60$ & $60$ & $-60$ & $0$ & $0$ & $3$ & $-3$ & $-3$ & $-3$ & $3$\\
$\chi_{28}$ & $60$ & $60$ & $60$ & $-4$ & $-4$ & $6$ & $6$ & $6$ & $6$ & $6$\\
$\chi_{29}$ & $60$ & $60$ & $-60$ & $0$ & $0$ & $\langle 3,9\rangle$ & $\langle -3,-9\rangle$ & $\langle -3,-9\rangle$ & $\langle -3,-9\rangle$ & $\langle -6,-9\rangle$\\
$\chi_{30}$ & $60$ & $60$ & $-60$ & $0$ & $0$ & $\langle -6,-9\rangle$ & $\langle 6,9\rangle$ & $\langle 6,9\rangle$ & $\langle 6,9\rangle$ & $\langle 3,9\rangle$\\
$\chi_{31}$ & $64$ & $64$ & $-64$ & $0$ & $0$ & $8$ & $-8$ & $-8$ & $-8$ & $8$\\
$\chi_{32}$ & $64$ & $64$ & $64$ & $0$ & $0$ & $-8$ & $-8$ & $-8$ & $-8$ & $-8$\\
$\chi_{33}$ & $80$ & $-1$ & $0$ & $-1$ & $8$ & $0$ & $-1$ & $-1$ & $26$ & $0$\\
$\chi_{34}$ & $80$ & $80$ & $-80$ & $0$ & $0$ & $-8$ & $8$ & $8$ & $8$ & $-8$\\
$\chi_{35}$ & $80$ & $-1$ & $0$ & $-1$ & $8$ & $0$ & $-1$ & $\langle 5,3\rangle$ & $\langle -22,-24\rangle$ & $0$\\
$\chi_{36}$ & $80$ & $-1$ & $0$ & $-1$ & $8$ & $0$ & $-1$ & $\langle 2,-3\rangle$ & $\langle 2,24\rangle$ & $0$\\
$\chi_{37}$ & $81$ & $81$ & $81$ & $9$ & $9$ & $0$ & $0$ & $0$ & $0$ & $0$\\
$\chi_{38}$ & $160$ & $-2$ & $0$ & $2$ & $-16$ & $0$ & $-2$ & $7$ & $-20$ & $0$\\
$\chi_{39}$ & $160$ & $-2$ & $0$ & $2$ & $-16$ & $0$ & $-2$ & $\langle 4,3\rangle$ & $\langle 4,-24\rangle$ & $0$\\
$\chi_{40}$ & $160$ & $-2$ & $0$ & $2$ & $-16$ & $0$ & $-2$ & $\langle 1,-3\rangle$ & $\langle 28,24\rangle$ & $0$\\
$\chi_{41}$ & $240$ & $-3$ & $0$ & $-3$ & $24$ & $0$ & $-3$ & $6$ & $6$ & $0$\\
$\chi_{42}$ & $240$ & $-3$ & $0$ & $1$ & $-8$ & $0$ & $\langle 3,3\rangle$ & $\langle -3,-9\rangle$ & $\langle -30,18\rangle$ & $0$\\
$\chi_{43}$ & $240$ & $-3$ & $0$ & $1$ & $-8$ & $0$ & $\langle 0,-3\rangle$ & $\langle 6,9\rangle$ & $\langle -48,-18\rangle$ & $0$\\
$\chi_{44}$ & $240$ & $-3$ & $0$ & $1$ & $-8$ & $0$ & $\langle 3,3\rangle$ & $\langle -12,-9\rangle$ & $\langle 42,18\rangle$ & $0$\\
$\chi_{45}$ & $240$ & $-3$ & $0$ & $1$ & $-8$ & $0$ & $\langle 0,-3\rangle$ & $\langle -3,9\rangle$ & $\langle 24,-18\rangle$ & $0$\\
$\chi_{46}$ & $240$ & $-3$ & $0$ & $1$ & $-8$ & $0$ & $\langle 3,3\rangle$ & $-3$ & $\langle -30,-54\rangle$ & $0$\\
$\chi_{47}$ & $240$ & $-3$ & $0$ & $1$ & $-8$ & $0$ & $\langle 0,-3\rangle$ & $-3$ & $\langle 24,54\rangle$ & $0$\\
$\chi_{48}$ & $480$ & $-6$ & $0$ & $-2$ & $16$ & $0$ & $\langle 6,6\rangle$ & $\langle -15,-9\rangle$ & $\langle 12,-36\rangle$ & $0$\\
$\chi_{49}$ & $480$ & $-6$ & $0$ & $-2$ & $16$ & $0$ & $\langle 0,-6\rangle$ & $\langle -6,9\rangle$ & $\langle 48,36\rangle$ & $0$\\
$\chi_{50}$ & $480$ & $-6$ & $0$ & $-2$ & $16$ & $0$ & $\langle 0,-6\rangle$ & $\langle 3,9\rangle$ & $\langle -24,36\rangle$ & $0$\\
$\chi_{51}$ & $480$ & $-6$ & $0$ & $-2$ & $16$ & $0$ & $\langle 6,6\rangle$ & $\langle -6,-9\rangle$ & $\langle -60,-36\rangle$ & $0$\\
$\chi_{52}$ & $480$ & $-6$ & $0$ & $-2$ & $16$ & $0$ & $\langle 0,-6\rangle$ & $\langle 3,18\rangle$ & $\langle -24,-36\rangle$ & $0$\\
$\chi_{53}$ & $480$ & $-6$ & $0$ & $-2$ & $16$ & $0$ & $\langle 6,6\rangle$ & $\langle -15,-18\rangle$ & $\langle 12,36\rangle$ & $0$\\
$\chi_{54}$ & $640$ & $-8$ & $0$ & $0$ & $0$ & $0$ & $-8$ & $10$ & $64$ & $0$\\
$\chi_{55}$ & $640$ & $-8$ & $0$ & $0$ & $0$ & $0$ & $-8$ & $\langle 16,-6\rangle$ & $\langle 16,48\rangle$ & $0$\\
$\chi_{56}$ & $640$ & $-8$ & $0$ & $0$ & $0$ & $0$ & $-8$ & $\langle 22,6\rangle$ & $\langle -32,-48\rangle$ & $0$\\
$\chi_{57}$ & $720$ & $-9$ & $0$ & $3$ & $-24$ & $0$ & $\langle 9,9\rangle$ & $\langle -18,-18\rangle$ & $\langle -18,-18\rangle$ & $0$\\
$\chi_{58}$ & $720$ & $-9$ & $0$ & $3$ & $-24$ & $0$ & $\langle 0,-9\rangle$ & $\langle 0,18\rangle$ & $\langle 0,18\rangle$ & $0$\\
\end{longtable}
\endgroup
\clearpage
\charactertableheading{Character values: columns \texorpdfstring{$K_{11},\ldots,K_{20}$}{K11--K20}}
\begingroup
\fontsize{8}{9.4}\selectfont\setlength{\tabcolsep}{2.1pt}
\renewcommand{\arraystretch}{1.02}
\begin{longtable}{r|rrrrrrrrrr}
 & $K_{11}$ & $K_{12}$ & $K_{13}$ & $K_{14}$ & $K_{15}$ & $K_{16}$ & $K_{17}$ & $K_{18}$ & $K_{19}$ & $K_{20}$\\\hline\endhead
$\chi_{1}$ & $1$ & $1$ & $1$ & $1$ & $1$ & $1$ & $1$ & $1$ & $1$ & $1$\\
$\chi_{2}$ & $\langle 1,3\rangle$ & $\langle 1,3\rangle$ & $\langle 1,3\rangle$ & $2$ & $2$ & $-2$ & $\langle -1,1\rangle$ & $\langle -1,1\rangle$ & $\langle 1,-1\rangle$ & $\langle 1,-1\rangle$\\
$\chi_{3}$ & $\langle -2,-3\rangle$ & $\langle -2,-3\rangle$ & $\langle -2,-3\rangle$ & $2$ & $2$ & $-2$ & $\langle -2,-1\rangle$ & $\langle -2,-1\rangle$ & $\langle 2,1\rangle$ & $\langle 2,1\rangle$\\
$\chi_{4}$ & $\langle 2,3\rangle$ & $\langle 2,3\rangle$ & $\langle 2,3\rangle$ & $1$ & $1$ & $1$ & $\langle -2,-1\rangle$ & $\langle -2,-1\rangle$ & $\langle -2,-1\rangle$ & $\langle -2,-1\rangle$\\
$\chi_{5}$ & $\langle -1,-3\rangle$ & $\langle -1,-3\rangle$ & $\langle -1,-3\rangle$ & $1$ & $1$ & $1$ & $\langle -1,1\rangle$ & $\langle -1,1\rangle$ & $\langle -1,1\rangle$ & $\langle -1,1\rangle$\\
$\chi_{6}$ & $-3$ & $-3$ & $-3$ & $2$ & $2$ & $2$ & $1$ & $1$ & $1$ & $1$\\
$\chi_{7}$ & $\langle -5,-3\rangle$ & $\langle -5,-3\rangle$ & $\langle -5,-3\rangle$ & $2$ & $2$ & $2$ & $\langle -1,-3\rangle$ & $\langle -1,-3\rangle$ & $\langle -1,-3\rangle$ & $\langle -1,-3\rangle$\\
$\chi_{8}$ & $\langle -2,3\rangle$ & $\langle -2,3\rangle$ & $\langle -2,3\rangle$ & $2$ & $2$ & $2$ & $\langle 2,3\rangle$ & $\langle 2,3\rangle$ & $\langle 2,3\rangle$ & $\langle 2,3\rangle$\\
$\chi_{9}$ & $-3$ & $-3$ & $-3$ & $-1$ & $-1$ & $-1$ & $1$ & $1$ & $1$ & $1$\\
$\chi_{10}$ & $6$ & $6$ & $6$ & $3$ & $3$ & $3$ & $2$ & $2$ & $2$ & $2$\\
$\chi_{11}$ & $-7$ & $-7$ & $-7$ & $2$ & $2$ & $-2$ & $3$ & $3$ & $-3$ & $-3$\\
$\chi_{12}$ & $2$ & $2$ & $2$ & $0$ & $0$ & $0$ & $-2$ & $-2$ & $-2$ & $-2$\\
$\chi_{13}$ & $\langle 5,-3\rangle$ & $\langle 5,-3\rangle$ & $\langle 5,-3\rangle$ & $2$ & $2$ & $-2$ & $\langle 3,3\rangle$ & $\langle 3,3\rangle$ & $\langle -3,-3\rangle$ & $\langle -3,-3\rangle$\\
$\chi_{14}$ & $\langle 8,3\rangle$ & $\langle 8,3\rangle$ & $\langle 8,3\rangle$ & $2$ & $2$ & $-2$ & $\langle 0,-3\rangle$ & $\langle 0,-3\rangle$ & $\langle 0,3\rangle$ & $\langle 0,3\rangle$\\
$\chi_{15}$ & $\langle -1,-6\rangle$ & $\langle -1,-6\rangle$ & $\langle -1,-6\rangle$ & $2$ & $2$ & $-2$ & $\langle 1,2\rangle$ & $\langle 1,2\rangle$ & $\langle -1,-2\rangle$ & $\langle -1,-2\rangle$\\
$\chi_{16}$ & $\langle 5,6\rangle$ & $\langle 5,6\rangle$ & $\langle 5,6\rangle$ & $2$ & $2$ & $-2$ & $\langle -1,-2\rangle$ & $\langle -1,-2\rangle$ & $\langle 1,2\rangle$ & $\langle 1,2\rangle$\\
$\chi_{17}$ & $6$ & $6$ & $6$ & $0$ & $0$ & $0$ & $2$ & $2$ & $2$ & $2$\\
$\chi_{18}$ & $3$ & $3$ & $3$ & $-2$ & $-2$ & $-2$ & $-1$ & $-1$ & $-1$ & $-1$\\
$\chi_{19}$ & $\langle -6,-9\rangle$ & $\langle -6,-9\rangle$ & $\langle -6,-9\rangle$ & $2$ & $2$ & $2$ & $\langle -2,-1\rangle$ & $\langle -2,-1\rangle$ & $\langle -2,-1\rangle$ & $\langle -2,-1\rangle$\\
$\chi_{20}$ & $\langle 3,9\rangle$ & $\langle 3,9\rangle$ & $\langle 3,9\rangle$ & $2$ & $2$ & $2$ & $\langle -1,1\rangle$ & $\langle -1,1\rangle$ & $\langle -1,1\rangle$ & $\langle -1,1\rangle$\\
$\chi_{21}$ & $\langle 0,9\rangle$ & $\langle 0,9\rangle$ & $\langle 0,9\rangle$ & $2$ & $2$ & $-2$ & $\langle 0,3\rangle$ & $\langle 0,3\rangle$ & $\langle 0,-3\rangle$ & $\langle 0,-3\rangle$\\
$\chi_{22}$ & $\langle -9,-9\rangle$ & $\langle -9,-9\rangle$ & $\langle -9,-9\rangle$ & $2$ & $2$ & $-2$ & $\langle -3,-3\rangle$ & $\langle -3,-3\rangle$ & $\langle 3,3\rangle$ & $\langle 3,3\rangle$\\
$\chi_{23}$ & $\langle -2,6\rangle$ & $\langle -2,6\rangle$ & $\langle -2,6\rangle$ & $0$ & $0$ & $0$ & $\langle 2,2\rangle$ & $\langle 2,2\rangle$ & $\langle 2,2\rangle$ & $\langle 2,2\rangle$\\
$\chi_{24}$ & $\langle -8,-6\rangle$ & $\langle -8,-6\rangle$ & $\langle -8,-6\rangle$ & $0$ & $0$ & $0$ & $\langle 0,-2\rangle$ & $\langle 0,-2\rangle$ & $\langle 0,-2\rangle$ & $\langle 0,-2\rangle$\\
$\chi_{25}$ & $\langle 0,-9\rangle$ & $\langle 0,-9\rangle$ & $\langle 0,-9\rangle$ & $1$ & $1$ & $1$ & $\langle 0,3\rangle$ & $\langle 0,3\rangle$ & $\langle 0,3\rangle$ & $\langle 0,3\rangle$\\
$\chi_{26}$ & $\langle 9,9\rangle$ & $\langle 9,9\rangle$ & $\langle 9,9\rangle$ & $1$ & $1$ & $1$ & $\langle -3,-3\rangle$ & $\langle -3,-3\rangle$ & $\langle -3,-3\rangle$ & $\langle -3,-3\rangle$\\
$\chi_{27}$ & $-3$ & $-3$ & $-3$ & $-2$ & $-2$ & $2$ & $3$ & $3$ & $-3$ & $-3$\\
$\chi_{28}$ & $6$ & $6$ & $6$ & $0$ & $0$ & $0$ & $2$ & $2$ & $2$ & $2$\\
$\chi_{29}$ & $\langle 6,9\rangle$ & $\langle 6,9\rangle$ & $\langle 6,9\rangle$ & $-2$ & $-2$ & $2$ & $\langle -2,-1\rangle$ & $\langle -2,-1\rangle$ & $\langle 2,1\rangle$ & $\langle 2,1\rangle$\\
$\chi_{30}$ & $\langle -3,-9\rangle$ & $\langle -3,-9\rangle$ & $\langle -3,-9\rangle$ & $-2$ & $-2$ & $2$ & $\langle -1,1\rangle$ & $\langle -1,1\rangle$ & $\langle 1,-1\rangle$ & $\langle 1,-1\rangle$\\
$\chi_{31}$ & $-8$ & $-8$ & $-8$ & $0$ & $0$ & $0$ & $0$ & $0$ & $0$ & $0$\\
$\chi_{32}$ & $-8$ & $-8$ & $-8$ & $0$ & $0$ & $0$ & $0$ & $0$ & $0$ & $0$\\
$\chi_{33}$ & $-1$ & $-1$ & $26$ & $-1$ & $8$ & $0$ & $-1$ & $8$ & $-1$ & $2$\\
$\chi_{34}$ & $8$ & $8$ & $8$ & $0$ & $0$ & $0$ & $0$ & $0$ & $0$ & $0$\\
$\chi_{35}$ & $\langle 2,-3\rangle$ & $-1$ & $\langle 2,24\rangle$ & $-1$ & $8$ & $0$ & $\langle 0,-1\rangle$ & $\langle 0,8\rangle$ & $-1$ & $2$\\
$\chi_{36}$ & $\langle 5,3\rangle$ & $-1$ & $\langle -22,-24\rangle$ & $-1$ & $8$ & $0$ & $\langle 1,1\rangle$ & $\langle -8,-8\rangle$ & $-1$ & $2$\\
$\chi_{37}$ & $0$ & $0$ & $0$ & $-3$ & $-3$ & $-3$ & $0$ & $0$ & $0$ & $0$\\
$\chi_{38}$ & $7$ & $-2$ & $-20$ & $0$ & $0$ & $0$ & $-1$ & $8$ & $2$ & $-4$\\
$\chi_{39}$ & $\langle 1,-3\rangle$ & $-2$ & $\langle 28,24\rangle$ & $0$ & $0$ & $0$ & $\langle 1,1\rangle$ & $\langle -8,-8\rangle$ & $2$ & $-4$\\
$\chi_{40}$ & $\langle 4,3\rangle$ & $-2$ & $\langle 4,-24\rangle$ & $0$ & $0$ & $0$ & $\langle 0,-1\rangle$ & $\langle 0,8\rangle$ & $2$ & $-4$\\
$\chi_{41}$ & $6$ & $-3$ & $6$ & $1$ & $-8$ & $0$ & $0$ & $0$ & $-3$ & $6$\\
$\chi_{42}$ & $\langle 6,9\rangle$ & $\langle 0,-3\rangle$ & $\langle -48,-18\rangle$ & $-1$ & $8$ & $0$ & $\langle 0,1\rangle$ & $\langle 0,-8\rangle$ & $\langle 0,1\rangle$ & $\langle 0,-2\rangle$\\
$\chi_{43}$ & $\langle -3,-9\rangle$ & $\langle 3,3\rangle$ & $\langle -30,18\rangle$ & $-1$ & $8$ & $0$ & $\langle -1,-1\rangle$ & $\langle 8,8\rangle$ & $\langle -1,-1\rangle$ & $\langle 2,2\rangle$\\
$\chi_{44}$ & $\langle -3,9\rangle$ & $\langle 0,-3\rangle$ & $\langle 24,-18\rangle$ & $-1$ & $8$ & $0$ & $\langle -1,-1\rangle$ & $\langle 8,8\rangle$ & $\langle 0,1\rangle$ & $\langle 0,-2\rangle$\\
$\chi_{45}$ & $\langle -12,-9\rangle$ & $\langle 3,3\rangle$ & $\langle 42,18\rangle$ & $-1$ & $8$ & $0$ & $\langle 0,1\rangle$ & $\langle 0,-8\rangle$ & $\langle -1,-1\rangle$ & $\langle 2,2\rangle$\\
$\chi_{46}$ & $-3$ & $\langle 0,-3\rangle$ & $\langle 24,54\rangle$ & $-1$ & $8$ & $0$ & $1$ & $-8$ & $\langle 0,1\rangle$ & $\langle 0,-2\rangle$\\
$\chi_{47}$ & $-3$ & $\langle 3,3\rangle$ & $\langle -30,-54\rangle$ & $-1$ & $8$ & $0$ & $1$ & $-8$ & $\langle -1,-1\rangle$ & $\langle 2,2\rangle$\\
$\chi_{48}$ & $\langle -6,9\rangle$ & $\langle 0,-6\rangle$ & $\langle 48,36\rangle$ & $0$ & $0$ & $0$ & $\langle 0,1\rangle$ & $\langle 0,-8\rangle$ & $\langle 0,-2\rangle$ & $\langle 0,4\rangle$\\
$\chi_{49}$ & $\langle -15,-9\rangle$ & $\langle 6,6\rangle$ & $\langle 12,-36\rangle$ & $0$ & $0$ & $0$ & $\langle -1,-1\rangle$ & $\langle 8,8\rangle$ & $\langle 2,2\rangle$ & $\langle -4,-4\rangle$\\
$\chi_{50}$ & $\langle -6,-9\rangle$ & $\langle 6,6\rangle$ & $\langle -60,-36\rangle$ & $0$ & $0$ & $0$ & $\langle 0,1\rangle$ & $\langle 0,-8\rangle$ & $\langle 2,2\rangle$ & $\langle -4,-4\rangle$\\
$\chi_{51}$ & $\langle 3,9\rangle$ & $\langle 0,-6\rangle$ & $\langle -24,36\rangle$ & $0$ & $0$ & $0$ & $\langle -1,-1\rangle$ & $\langle 8,8\rangle$ & $\langle 0,-2\rangle$ & $\langle 0,4\rangle$\\
$\chi_{52}$ & $\langle -15,-18\rangle$ & $\langle 6,6\rangle$ & $\langle 12,36\rangle$ & $0$ & $0$ & $0$ & $1$ & $-8$ & $\langle 2,2\rangle$ & $\langle -4,-4\rangle$\\
$\chi_{53}$ & $\langle 3,18\rangle$ & $\langle 0,-6\rangle$ & $\langle -24,-36\rangle$ & $0$ & $0$ & $0$ & $1$ & $-8$ & $\langle 0,-2\rangle$ & $\langle 0,4\rangle$\\
$\chi_{54}$ & $10$ & $-8$ & $64$ & $0$ & $0$ & $0$ & $0$ & $0$ & $0$ & $0$\\
$\chi_{55}$ & $\langle 22,6\rangle$ & $-8$ & $\langle -32,-48\rangle$ & $0$ & $0$ & $0$ & $0$ & $0$ & $0$ & $0$\\
$\chi_{56}$ & $\langle 16,-6\rangle$ & $-8$ & $\langle 16,48\rangle$ & $0$ & $0$ & $0$ & $0$ & $0$ & $0$ & $0$\\
$\chi_{57}$ & $\langle 0,18\rangle$ & $\langle 0,-9\rangle$ & $\langle 0,18\rangle$ & $1$ & $-8$ & $0$ & $0$ & $0$ & $\langle 0,3\rangle$ & $\langle 0,-6\rangle$\\
$\chi_{58}$ & $\langle -18,-18\rangle$ & $\langle 9,9\rangle$ & $\langle -18,-18\rangle$ & $1$ & $-8$ & $0$ & $0$ & $0$ & $\langle -3,-3\rangle$ & $\langle 6,6\rangle$\\
\end{longtable}
\endgroup
\clearpage
\charactertableheading{Character values: columns \texorpdfstring{$K_{21},\ldots,K_{30}$}{K21--K30}}
\begingroup
\fontsize{8}{9.4}\selectfont\setlength{\tabcolsep}{2.1pt}
\renewcommand{\arraystretch}{1.02}
\begin{longtable}{r|rrrrrrrrrr}
 & $K_{21}$ & $K_{22}$ & $K_{23}$ & $K_{24}$ & $K_{25}$ & $K_{26}$ & $K_{27}$ & $K_{28}$ & $K_{29}$ & $K_{30}$\\\hline\endhead
$\chi_{1}$ & $1$ & $1$ & $1$ & $1$ & $1$ & $1$ & $1$ & $1$ & $1$ & $1$\\
$\chi_{2}$ & $\langle -2,-1\rangle$ & $\langle -2,-1\rangle$ & $\langle 2,1\rangle$ & $\langle 2,1\rangle$ & $\langle 0,1\rangle$ & $\langle 0,-1\rangle$ & $\langle 0,-1\rangle$ & $\langle 1,1\rangle$ & $\langle 1,1\rangle$ & $\langle -1,-1\rangle$\\
$\chi_{3}$ & $\langle -1,1\rangle$ & $\langle -1,1\rangle$ & $\langle 1,-1\rangle$ & $\langle 1,-1\rangle$ & $\langle -1,-1\rangle$ & $\langle 1,1\rangle$ & $\langle 1,1\rangle$ & $\langle 0,-1\rangle$ & $\langle 0,-1\rangle$ & $\langle 0,1\rangle$\\
$\chi_{4}$ & $\langle -1,1\rangle$ & $\langle -1,1\rangle$ & $\langle -1,1\rangle$ & $\langle -1,1\rangle$ & $\langle -1,-1\rangle$ & $\langle -1,-1\rangle$ & $\langle -1,-1\rangle$ & $\langle 0,1\rangle$ & $\langle 0,1\rangle$ & $\langle 0,1\rangle$\\
$\chi_{5}$ & $\langle -2,-1\rangle$ & $\langle -2,-1\rangle$ & $\langle -2,-1\rangle$ & $\langle -2,-1\rangle$ & $\langle 0,1\rangle$ & $\langle 0,1\rangle$ & $\langle 0,1\rangle$ & $\langle -1,-1\rangle$ & $\langle -1,-1\rangle$ & $\langle -1,-1\rangle$\\
$\chi_{6}$ & $1$ & $1$ & $1$ & $1$ & $-1$ & $-1$ & $-1$ & $-1$ & $-1$ & $-1$\\
$\chi_{7}$ & $\langle 2,3\rangle$ & $\langle 2,3\rangle$ & $\langle 2,3\rangle$ & $\langle 2,3\rangle$ & $\langle 0,-1\rangle$ & $\langle 0,-1\rangle$ & $\langle 0,-1\rangle$ & $\langle 1,1\rangle$ & $\langle 1,1\rangle$ & $\langle 1,1\rangle$\\
$\chi_{8}$ & $\langle -1,-3\rangle$ & $\langle -1,-3\rangle$ & $\langle -1,-3\rangle$ & $\langle -1,-3\rangle$ & $\langle 1,1\rangle$ & $\langle 1,1\rangle$ & $\langle 1,1\rangle$ & $\langle 0,-1\rangle$ & $\langle 0,-1\rangle$ & $\langle 0,-1\rangle$\\
$\chi_{9}$ & $1$ & $1$ & $1$ & $1$ & $-1$ & $-1$ & $-1$ & $-1$ & $-1$ & $-1$\\
$\chi_{10}$ & $2$ & $2$ & $2$ & $2$ & $0$ & $0$ & $0$ & $0$ & $0$ & $0$\\
$\chi_{11}$ & $3$ & $3$ & $-3$ & $-3$ & $1$ & $-1$ & $-1$ & $-1$ & $-1$ & $1$\\
$\chi_{12}$ & $-2$ & $-2$ & $-2$ & $-2$ & $0$ & $0$ & $0$ & $0$ & $0$ & $0$\\
$\chi_{13}$ & $\langle 0,-3\rangle$ & $\langle 0,-3\rangle$ & $\langle 0,3\rangle$ & $\langle 0,3\rangle$ & $\langle 0,1\rangle$ & $\langle 0,-1\rangle$ & $\langle 0,-1\rangle$ & $\langle 1,1\rangle$ & $\langle 1,1\rangle$ & $\langle -1,-1\rangle$\\
$\chi_{14}$ & $\langle 3,3\rangle$ & $\langle 3,3\rangle$ & $\langle -3,-3\rangle$ & $\langle -3,-3\rangle$ & $\langle -1,-1\rangle$ & $\langle 1,1\rangle$ & $\langle 1,1\rangle$ & $\langle 0,-1\rangle$ & $\langle 0,-1\rangle$ & $\langle 0,1\rangle$\\
$\chi_{15}$ & $\langle -1,-2\rangle$ & $\langle -1,-2\rangle$ & $\langle 1,2\rangle$ & $\langle 1,2\rangle$ & $1$ & $-1$ & $-1$ & $-1$ & $-1$ & $1$\\
$\chi_{16}$ & $\langle 1,2\rangle$ & $\langle 1,2\rangle$ & $\langle -1,-2\rangle$ & $\langle -1,-2\rangle$ & $1$ & $-1$ & $-1$ & $-1$ & $-1$ & $1$\\
$\chi_{17}$ & $2$ & $2$ & $2$ & $2$ & $0$ & $0$ & $0$ & $0$ & $0$ & $0$\\
$\chi_{18}$ & $-1$ & $-1$ & $-1$ & $-1$ & $1$ & $1$ & $1$ & $1$ & $1$ & $1$\\
$\chi_{19}$ & $\langle -1,1\rangle$ & $\langle -1,1\rangle$ & $\langle -1,1\rangle$ & $\langle -1,1\rangle$ & $\langle 1,1\rangle$ & $\langle 1,1\rangle$ & $\langle 1,1\rangle$ & $\langle 0,-1\rangle$ & $\langle 0,-1\rangle$ & $\langle 0,-1\rangle$\\
$\chi_{20}$ & $\langle -2,-1\rangle$ & $\langle -2,-1\rangle$ & $\langle -2,-1\rangle$ & $\langle -2,-1\rangle$ & $\langle 0,-1\rangle$ & $\langle 0,-1\rangle$ & $\langle 0,-1\rangle$ & $\langle 1,1\rangle$ & $\langle 1,1\rangle$ & $\langle 1,1\rangle$\\
$\chi_{21}$ & $\langle -3,-3\rangle$ & $\langle -3,-3\rangle$ & $\langle 3,3\rangle$ & $\langle 3,3\rangle$ & $\langle -1,-1\rangle$ & $\langle 1,1\rangle$ & $\langle 1,1\rangle$ & $\langle 0,-1\rangle$ & $\langle 0,-1\rangle$ & $\langle 0,1\rangle$\\
$\chi_{22}$ & $\langle 0,3\rangle$ & $\langle 0,3\rangle$ & $\langle 0,-3\rangle$ & $\langle 0,-3\rangle$ & $\langle 0,1\rangle$ & $\langle 0,-1\rangle$ & $\langle 0,-1\rangle$ & $\langle 1,1\rangle$ & $\langle 1,1\rangle$ & $\langle -1,-1\rangle$\\
$\chi_{23}$ & $\langle 0,-2\rangle$ & $\langle 0,-2\rangle$ & $\langle 0,-2\rangle$ & $\langle 0,-2\rangle$ & $0$ & $0$ & $0$ & $0$ & $0$ & $0$\\
$\chi_{24}$ & $\langle 2,2\rangle$ & $\langle 2,2\rangle$ & $\langle 2,2\rangle$ & $\langle 2,2\rangle$ & $0$ & $0$ & $0$ & $0$ & $0$ & $0$\\
$\chi_{25}$ & $\langle -3,-3\rangle$ & $\langle -3,-3\rangle$ & $\langle -3,-3\rangle$ & $\langle -3,-3\rangle$ & $\langle -1,-1\rangle$ & $\langle -1,-1\rangle$ & $\langle -1,-1\rangle$ & $\langle 0,1\rangle$ & $\langle 0,1\rangle$ & $\langle 0,1\rangle$\\
$\chi_{26}$ & $\langle 0,3\rangle$ & $\langle 0,3\rangle$ & $\langle 0,3\rangle$ & $\langle 0,3\rangle$ & $\langle 0,1\rangle$ & $\langle 0,1\rangle$ & $\langle 0,1\rangle$ & $\langle -1,-1\rangle$ & $\langle -1,-1\rangle$ & $\langle -1,-1\rangle$\\
$\chi_{27}$ & $3$ & $3$ & $-3$ & $-3$ & $-1$ & $1$ & $1$ & $1$ & $1$ & $-1$\\
$\chi_{28}$ & $2$ & $2$ & $2$ & $2$ & $0$ & $0$ & $0$ & $0$ & $0$ & $0$\\
$\chi_{29}$ & $\langle -1,1\rangle$ & $\langle -1,1\rangle$ & $\langle 1,-1\rangle$ & $\langle 1,-1\rangle$ & $\langle 1,1\rangle$ & $\langle -1,-1\rangle$ & $\langle -1,-1\rangle$ & $\langle 0,1\rangle$ & $\langle 0,1\rangle$ & $\langle 0,-1\rangle$\\
$\chi_{30}$ & $\langle -2,-1\rangle$ & $\langle -2,-1\rangle$ & $\langle 2,1\rangle$ & $\langle 2,1\rangle$ & $\langle 0,-1\rangle$ & $\langle 0,1\rangle$ & $\langle 0,1\rangle$ & $\langle -1,-1\rangle$ & $\langle -1,-1\rangle$ & $\langle 1,1\rangle$\\
$\chi_{31}$ & $0$ & $0$ & $0$ & $0$ & $0$ & $0$ & $0$ & $0$ & $0$ & $0$\\
$\chi_{32}$ & $0$ & $0$ & $0$ & $0$ & $0$ & $0$ & $0$ & $0$ & $0$ & $0$\\
$\chi_{33}$ & $-1$ & $8$ & $-1$ & $2$ & $0$ & $-1$ & $2$ & $-1$ & $2$ & $0$\\
$\chi_{34}$ & $0$ & $0$ & $0$ & $0$ & $0$ & $0$ & $0$ & $0$ & $0$ & $0$\\
$\chi_{35}$ & $\langle 1,1\rangle$ & $\langle -8,-8\rangle$ & $-1$ & $2$ & $0$ & $-1$ & $2$ & $-1$ & $2$ & $0$\\
$\chi_{36}$ & $\langle 0,-1\rangle$ & $\langle 0,8\rangle$ & $-1$ & $2$ & $0$ & $-1$ & $2$ & $-1$ & $2$ & $0$\\
$\chi_{37}$ & $0$ & $0$ & $0$ & $0$ & $0$ & $0$ & $0$ & $0$ & $0$ & $0$\\
$\chi_{38}$ & $-1$ & $8$ & $2$ & $-4$ & $0$ & $0$ & $0$ & $0$ & $0$ & $0$\\
$\chi_{39}$ & $\langle 0,-1\rangle$ & $\langle 0,8\rangle$ & $2$ & $-4$ & $0$ & $0$ & $0$ & $0$ & $0$ & $0$\\
$\chi_{40}$ & $\langle 1,1\rangle$ & $\langle -8,-8\rangle$ & $2$ & $-4$ & $0$ & $0$ & $0$ & $0$ & $0$ & $0$\\
$\chi_{41}$ & $0$ & $0$ & $-3$ & $6$ & $0$ & $1$ & $-2$ & $1$ & $-2$ & $0$\\
$\chi_{42}$ & $\langle -1,-1\rangle$ & $\langle 8,8\rangle$ & $\langle -1,-1\rangle$ & $\langle 2,2\rangle$ & $0$ & $\langle 1,1\rangle$ & $\langle -2,-2\rangle$ & $\langle 0,-1\rangle$ & $\langle 0,2\rangle$ & $0$\\
$\chi_{43}$ & $\langle 0,1\rangle$ & $\langle 0,-8\rangle$ & $\langle 0,1\rangle$ & $\langle 0,-2\rangle$ & $0$ & $\langle 0,-1\rangle$ & $\langle 0,2\rangle$ & $\langle 1,1\rangle$ & $\langle -2,-2\rangle$ & $0$\\
$\chi_{44}$ & $\langle 0,1\rangle$ & $\langle 0,-8\rangle$ & $\langle -1,-1\rangle$ & $\langle 2,2\rangle$ & $0$ & $\langle 1,1\rangle$ & $\langle -2,-2\rangle$ & $\langle 0,-1\rangle$ & $\langle 0,2\rangle$ & $0$\\
$\chi_{45}$ & $\langle -1,-1\rangle$ & $\langle 8,8\rangle$ & $\langle 0,1\rangle$ & $\langle 0,-2\rangle$ & $0$ & $\langle 0,-1\rangle$ & $\langle 0,2\rangle$ & $\langle 1,1\rangle$ & $\langle -2,-2\rangle$ & $0$\\
$\chi_{46}$ & $1$ & $-8$ & $\langle -1,-1\rangle$ & $\langle 2,2\rangle$ & $0$ & $\langle 1,1\rangle$ & $\langle -2,-2\rangle$ & $\langle 0,-1\rangle$ & $\langle 0,2\rangle$ & $0$\\
$\chi_{47}$ & $1$ & $-8$ & $\langle 0,1\rangle$ & $\langle 0,-2\rangle$ & $0$ & $\langle 0,-1\rangle$ & $\langle 0,2\rangle$ & $\langle 1,1\rangle$ & $\langle -2,-2\rangle$ & $0$\\
$\chi_{48}$ & $\langle -1,-1\rangle$ & $\langle 8,8\rangle$ & $\langle 2,2\rangle$ & $\langle -4,-4\rangle$ & $0$ & $0$ & $0$ & $0$ & $0$ & $0$\\
$\chi_{49}$ & $\langle 0,1\rangle$ & $\langle 0,-8\rangle$ & $\langle 0,-2\rangle$ & $\langle 0,4\rangle$ & $0$ & $0$ & $0$ & $0$ & $0$ & $0$\\
$\chi_{50}$ & $\langle -1,-1\rangle$ & $\langle 8,8\rangle$ & $\langle 0,-2\rangle$ & $\langle 0,4\rangle$ & $0$ & $0$ & $0$ & $0$ & $0$ & $0$\\
$\chi_{51}$ & $\langle 0,1\rangle$ & $\langle 0,-8\rangle$ & $\langle 2,2\rangle$ & $\langle -4,-4\rangle$ & $0$ & $0$ & $0$ & $0$ & $0$ & $0$\\
$\chi_{52}$ & $1$ & $-8$ & $\langle 0,-2\rangle$ & $\langle 0,4\rangle$ & $0$ & $0$ & $0$ & $0$ & $0$ & $0$\\
$\chi_{53}$ & $1$ & $-8$ & $\langle 2,2\rangle$ & $\langle -4,-4\rangle$ & $0$ & $0$ & $0$ & $0$ & $0$ & $0$\\
$\chi_{54}$ & $0$ & $0$ & $0$ & $0$ & $0$ & $0$ & $0$ & $0$ & $0$ & $0$\\
$\chi_{55}$ & $0$ & $0$ & $0$ & $0$ & $0$ & $0$ & $0$ & $0$ & $0$ & $0$\\
$\chi_{56}$ & $0$ & $0$ & $0$ & $0$ & $0$ & $0$ & $0$ & $0$ & $0$ & $0$\\
$\chi_{57}$ & $0$ & $0$ & $\langle -3,-3\rangle$ & $\langle 6,6\rangle$ & $0$ & $\langle -1,-1\rangle$ & $\langle 2,2\rangle$ & $\langle 0,1\rangle$ & $\langle 0,-2\rangle$ & $0$\\
$\chi_{58}$ & $0$ & $0$ & $\langle 0,3\rangle$ & $\langle 0,-6\rangle$ & $0$ & $\langle 0,1\rangle$ & $\langle 0,-2\rangle$ & $\langle -1,-1\rangle$ & $\langle 2,2\rangle$ & $0$\\
\end{longtable}
\endgroup
\clearpage
\charactertableheading{Character values: columns \texorpdfstring{$K_{31},\ldots,K_{40}$}{K31--K40}}
\begingroup
\fontsize{8}{9.4}\selectfont\setlength{\tabcolsep}{2.1pt}
\renewcommand{\arraystretch}{1.02}
\begin{longtable}{r|rrrrrrrrrr}
 & $K_{31}$ & $K_{32}$ & $K_{33}$ & $K_{34}$ & $K_{35}$ & $K_{36}$ & $K_{37}$ & $K_{38}$ & $K_{39}$ & $K_{40}$\\\hline\endhead
$\chi_{1}$ & $1$ & $1$ & $1$ & $1$ & $1$ & $1$ & $1$ & $1$ & $1$ & $1$\\
$\chi_{2}$ & $1$ & $1$ & $1$ & $1$ & $-1$ & $\langle 0,-1\rangle$ & $\langle 0,1\rangle$ & $\langle 0,1\rangle$ & $\langle -1,-1\rangle$ & $\langle -1,-1\rangle$\\
$\chi_{3}$ & $1$ & $1$ & $1$ & $1$ & $-1$ & $\langle 1,1\rangle$ & $\langle -1,-1\rangle$ & $\langle -1,-1\rangle$ & $\langle 0,1\rangle$ & $\langle 0,1\rangle$\\
$\chi_{4}$ & $2$ & $2$ & $2$ & $2$ & $2$ & $\langle 1,1\rangle$ & $\langle 1,1\rangle$ & $\langle 1,1\rangle$ & $\langle 0,-1\rangle$ & $\langle 0,-1\rangle$\\
$\chi_{5}$ & $2$ & $2$ & $2$ & $2$ & $2$ & $\langle 0,-1\rangle$ & $\langle 0,-1\rangle$ & $\langle 0,-1\rangle$ & $\langle 1,1\rangle$ & $\langle 1,1\rangle$\\
$\chi_{6}$ & $0$ & $0$ & $0$ & $0$ & $0$ & $0$ & $0$ & $0$ & $0$ & $0$\\
$\chi_{7}$ & $1$ & $1$ & $1$ & $1$ & $1$ & $\langle -1,-1\rangle$ & $\langle -1,-1\rangle$ & $\langle -1,-1\rangle$ & $\langle 0,1\rangle$ & $\langle 0,1\rangle$\\
$\chi_{8}$ & $1$ & $1$ & $1$ & $1$ & $1$ & $\langle 0,1\rangle$ & $\langle 0,1\rangle$ & $\langle 0,1\rangle$ & $\langle -1,-1\rangle$ & $\langle -1,-1\rangle$\\
$\chi_{9}$ & $3$ & $3$ & $3$ & $3$ & $3$ & $0$ & $0$ & $0$ & $0$ & $0$\\
$\chi_{10}$ & $0$ & $0$ & $0$ & $0$ & $0$ & $0$ & $0$ & $0$ & $0$ & $0$\\
$\chi_{11}$ & $2$ & $2$ & $2$ & $2$ & $-2$ & $1$ & $-1$ & $-1$ & $-1$ & $-1$\\
$\chi_{12}$ & $-1$ & $-1$ & $-1$ & $-1$ & $-1$ & $-1$ & $-1$ & $-1$ & $-1$ & $-1$\\
$\chi_{13}$ & $2$ & $2$ & $2$ & $2$ & $-2$ & $\langle 0,1\rangle$ & $\langle 0,-1\rangle$ & $\langle 0,-1\rangle$ & $\langle 1,1\rangle$ & $\langle 1,1\rangle$\\
$\chi_{14}$ & $2$ & $2$ & $2$ & $2$ & $-2$ & $\langle -1,-1\rangle$ & $\langle 1,1\rangle$ & $\langle 1,1\rangle$ & $\langle 0,-1\rangle$ & $\langle 0,-1\rangle$\\
$\chi_{15}$ & $-1$ & $-1$ & $-1$ & $-1$ & $1$ & $\langle -1,-1\rangle$ & $\langle 1,1\rangle$ & $\langle 1,1\rangle$ & $\langle 0,-1\rangle$ & $\langle 0,-1\rangle$\\
$\chi_{16}$ & $-1$ & $-1$ & $-1$ & $-1$ & $1$ & $\langle 0,1\rangle$ & $\langle 0,-1\rangle$ & $\langle 0,-1\rangle$ & $\langle 1,1\rangle$ & $\langle 1,1\rangle$\\
$\chi_{17}$ & $3$ & $3$ & $3$ & $3$ & $3$ & $0$ & $0$ & $0$ & $0$ & $0$\\
$\chi_{18}$ & $3$ & $3$ & $3$ & $3$ & $3$ & $0$ & $0$ & $0$ & $0$ & $0$\\
$\chi_{19}$ & $0$ & $0$ & $0$ & $0$ & $0$ & $0$ & $0$ & $0$ & $0$ & $0$\\
$\chi_{20}$ & $0$ & $0$ & $0$ & $0$ & $0$ & $0$ & $0$ & $0$ & $0$ & $0$\\
$\chi_{21}$ & $0$ & $0$ & $0$ & $0$ & $0$ & $0$ & $0$ & $0$ & $0$ & $0$\\
$\chi_{22}$ & $0$ & $0$ & $0$ & $0$ & $0$ & $0$ & $0$ & $0$ & $0$ & $0$\\
$\chi_{23}$ & $1$ & $1$ & $1$ & $1$ & $1$ & $\langle -1,-1\rangle$ & $\langle -1,-1\rangle$ & $\langle -1,-1\rangle$ & $\langle 0,1\rangle$ & $\langle 0,1\rangle$\\
$\chi_{24}$ & $1$ & $1$ & $1$ & $1$ & $1$ & $\langle 0,1\rangle$ & $\langle 0,1\rangle$ & $\langle 0,1\rangle$ & $\langle -1,-1\rangle$ & $\langle -1,-1\rangle$\\
$\chi_{25}$ & $0$ & $0$ & $0$ & $0$ & $0$ & $0$ & $0$ & $0$ & $0$ & $0$\\
$\chi_{26}$ & $0$ & $0$ & $0$ & $0$ & $0$ & $0$ & $0$ & $0$ & $0$ & $0$\\
$\chi_{27}$ & $0$ & $0$ & $0$ & $0$ & $0$ & $0$ & $0$ & $0$ & $0$ & $0$\\
$\chi_{28}$ & $-3$ & $-3$ & $-3$ & $-3$ & $-3$ & $0$ & $0$ & $0$ & $0$ & $0$\\
$\chi_{29}$ & $3$ & $3$ & $3$ & $3$ & $-3$ & $0$ & $0$ & $0$ & $0$ & $0$\\
$\chi_{30}$ & $3$ & $3$ & $3$ & $3$ & $-3$ & $0$ & $0$ & $0$ & $0$ & $0$\\
$\chi_{31}$ & $-2$ & $-2$ & $-2$ & $-2$ & $2$ & $-1$ & $1$ & $1$ & $1$ & $1$\\
$\chi_{32}$ & $-2$ & $-2$ & $-2$ & $-2$ & $-2$ & $1$ & $1$ & $1$ & $1$ & $1$\\
$\chi_{33}$ & $-1$ & $-1$ & $-1$ & $8$ & $0$ & $0$ & $2$ & $-1$ & $-1$ & $2$\\
$\chi_{34}$ & $-4$ & $-4$ & $-4$ & $-4$ & $4$ & $1$ & $-1$ & $-1$ & $-1$ & $-1$\\
$\chi_{35}$ & $2$ & $\langle -4,-3\rangle$ & $\langle -1,3\rangle$ & $2$ & $0$ & $0$ & $\langle 0,2\rangle$ & $\langle 0,-1\rangle$ & $\langle 1,1\rangle$ & $\langle -2,-2\rangle$\\
$\chi_{36}$ & $2$ & $\langle -1,3\rangle$ & $\langle -4,-3\rangle$ & $2$ & $0$ & $0$ & $\langle -2,-2\rangle$ & $\langle 1,1\rangle$ & $\langle 0,-1\rangle$ & $\langle 0,2\rangle$\\
$\chi_{37}$ & $0$ & $0$ & $0$ & $0$ & $0$ & $0$ & $0$ & $0$ & $0$ & $0$\\
$\chi_{38}$ & $4$ & $-5$ & $-5$ & $4$ & $0$ & $0$ & $-2$ & $1$ & $1$ & $-2$\\
$\chi_{39}$ & $1$ & $\langle -5,-3\rangle$ & $\langle -2,3\rangle$ & $10$ & $0$ & $0$ & $\langle 2,2\rangle$ & $\langle -1,-1\rangle$ & $\langle 0,1\rangle$ & $\langle 0,-2\rangle$\\
$\chi_{40}$ & $1$ & $\langle -2,3\rangle$ & $\langle -5,-3\rangle$ & $10$ & $0$ & $0$ & $\langle 0,-2\rangle$ & $\langle 0,1\rangle$ & $\langle -1,-1\rangle$ & $\langle 2,2\rangle$\\
$\chi_{41}$ & $3$ & $-6$ & $-6$ & $12$ & $0$ & $0$ & $0$ & $0$ & $0$ & $0$\\
$\chi_{42}$ & $0$ & $\langle -3,-6\rangle$ & $\langle 3,6\rangle$ & $0$ & $0$ & $0$ & $0$ & $0$ & $0$ & $0$\\
$\chi_{43}$ & $0$ & $\langle 3,6\rangle$ & $\langle -3,-6\rangle$ & $0$ & $0$ & $0$ & $0$ & $0$ & $0$ & $0$\\
$\chi_{44}$ & $3$ & $\langle 0,3\rangle$ & $\langle -3,-3\rangle$ & $-6$ & $0$ & $0$ & $0$ & $0$ & $0$ & $0$\\
$\chi_{45}$ & $3$ & $\langle -3,-3\rangle$ & $\langle 0,3\rangle$ & $-6$ & $0$ & $0$ & $0$ & $0$ & $0$ & $0$\\
$\chi_{46}$ & $-3$ & $\langle 3,3\rangle$ & $\langle 0,-3\rangle$ & $6$ & $0$ & $0$ & $0$ & $0$ & $0$ & $0$\\
$\chi_{47}$ & $-3$ & $\langle 0,-3\rangle$ & $\langle 3,3\rangle$ & $6$ & $0$ & $0$ & $0$ & $0$ & $0$ & $0$\\
$\chi_{48}$ & $0$ & $\langle 3,6\rangle$ & $\langle -3,-6\rangle$ & $0$ & $0$ & $0$ & $0$ & $0$ & $0$ & $0$\\
$\chi_{49}$ & $0$ & $\langle -3,-6\rangle$ & $\langle 3,6\rangle$ & $0$ & $0$ & $0$ & $0$ & $0$ & $0$ & $0$\\
$\chi_{50}$ & $-3$ & $\langle 3,3\rangle$ & $\langle 0,-3\rangle$ & $6$ & $0$ & $0$ & $0$ & $0$ & $0$ & $0$\\
$\chi_{51}$ & $-3$ & $\langle 0,-3\rangle$ & $\langle 3,3\rangle$ & $6$ & $0$ & $0$ & $0$ & $0$ & $0$ & $0$\\
$\chi_{52}$ & $3$ & $\langle 0,3\rangle$ & $\langle -3,-3\rangle$ & $-6$ & $0$ & $0$ & $0$ & $0$ & $0$ & $0$\\
$\chi_{53}$ & $3$ & $\langle -3,-3\rangle$ & $\langle 0,3\rangle$ & $-6$ & $0$ & $0$ & $0$ & $0$ & $0$ & $0$\\
$\chi_{54}$ & $-5$ & $4$ & $4$ & $4$ & $0$ & $0$ & $-2$ & $1$ & $1$ & $-2$\\
$\chi_{55}$ & $1$ & $\langle 4,6\rangle$ & $\langle -2,-6\rangle$ & $-8$ & $0$ & $0$ & $\langle 2,2\rangle$ & $\langle -1,-1\rangle$ & $\langle 0,1\rangle$ & $\langle 0,-2\rangle$\\
$\chi_{56}$ & $1$ & $\langle -2,-6\rangle$ & $\langle 4,6\rangle$ & $-8$ & $0$ & $0$ & $\langle 0,-2\rangle$ & $\langle 0,1\rangle$ & $\langle -1,-1\rangle$ & $\langle 2,2\rangle$\\
$\chi_{57}$ & $0$ & $0$ & $0$ & $0$ & $0$ & $0$ & $0$ & $0$ & $0$ & $0$\\
$\chi_{58}$ & $0$ & $0$ & $0$ & $0$ & $0$ & $0$ & $0$ & $0$ & $0$ & $0$\\
\end{longtable}
\endgroup
\clearpage
\charactertableheading{Character values: columns \texorpdfstring{$K_{41},\ldots,K_{50}$}{K41--K50}}
\begingroup
\fontsize{8}{9.4}\selectfont\setlength{\tabcolsep}{2.1pt}
\renewcommand{\arraystretch}{1.02}
\begin{longtable}{r|rrrrrrrrrr}
 & $K_{41}$ & $K_{42}$ & $K_{43}$ & $K_{44}$ & $K_{45}$ & $K_{46}$ & $K_{47}$ & $K_{48}$ & $K_{49}$ & $K_{50}$\\\hline\endhead
$\chi_{1}$ & $1$ & $1$ & $1$ & $1$ & $1$ & $1$ & $1$ & $1$ & $1$ & $1$\\
$\chi_{2}$ & $\langle 1,1\rangle$ & $0$ & $-2$ & $-2$ & $-2$ & $2$ & $0$ & $\langle -1,-2\rangle$ & $\langle -1,-2\rangle$ & $\langle 1,2\rangle$\\
$\chi_{3}$ & $\langle 0,-1\rangle$ & $0$ & $-2$ & $-2$ & $-2$ & $2$ & $0$ & $\langle 1,2\rangle$ & $\langle 1,2\rangle$ & $\langle -1,-2\rangle$\\
$\chi_{4}$ & $\langle 0,-1\rangle$ & $1$ & $-1$ & $-1$ & $-1$ & $-1$ & $1$ & $0$ & $0$ & $0$\\
$\chi_{5}$ & $\langle 1,1\rangle$ & $1$ & $-1$ & $-1$ & $-1$ & $-1$ & $1$ & $0$ & $0$ & $0$\\
$\chi_{6}$ & $0$ & $2$ & $3$ & $3$ & $3$ & $3$ & $-1$ & $-2$ & $-2$ & $-2$\\
$\chi_{7}$ & $\langle 0,1\rangle$ & $-2$ & $1$ & $1$ & $1$ & $1$ & $1$ & $-1$ & $-1$ & $-1$\\
$\chi_{8}$ & $\langle -1,-1\rangle$ & $-2$ & $1$ & $1$ & $1$ & $1$ & $1$ & $-1$ & $-1$ & $-1$\\
$\chi_{9}$ & $0$ & $3$ & $0$ & $0$ & $0$ & $0$ & $0$ & $1$ & $1$ & $1$\\
$\chi_{10}$ & $0$ & $-1$ & $3$ & $3$ & $3$ & $3$ & $-1$ & $2$ & $2$ & $2$\\
$\chi_{11}$ & $1$ & $0$ & $2$ & $2$ & $2$ & $-2$ & $0$ & $0$ & $0$ & $0$\\
$\chi_{12}$ & $-1$ & $4$ & $5$ & $5$ & $5$ & $5$ & $1$ & $1$ & $1$ & $1$\\
$\chi_{13}$ & $\langle -1,-1\rangle$ & $0$ & $2$ & $2$ & $2$ & $-2$ & $0$ & $0$ & $0$ & $0$\\
$\chi_{14}$ & $\langle 0,1\rangle$ & $0$ & $2$ & $2$ & $2$ & $-2$ & $0$ & $0$ & $0$ & $0$\\
$\chi_{15}$ & $\langle 0,1\rangle$ & $0$ & $-4$ & $-4$ & $-4$ & $4$ & $0$ & $\langle 1,2\rangle$ & $\langle 1,2\rangle$ & $\langle -1,-2\rangle$\\
$\chi_{16}$ & $\langle -1,-1\rangle$ & $0$ & $-4$ & $-4$ & $-4$ & $4$ & $0$ & $\langle -1,-2\rangle$ & $\langle -1,-2\rangle$ & $\langle 1,2\rangle$\\
$\chi_{17}$ & $0$ & $0$ & $0$ & $0$ & $0$ & $0$ & $0$ & $-1$ & $-1$ & $-1$\\
$\chi_{18}$ & $0$ & $2$ & $3$ & $3$ & $3$ & $3$ & $-1$ & $-1$ & $-1$ & $-1$\\
$\chi_{19}$ & $0$ & $2$ & $-3$ & $-3$ & $-3$ & $-3$ & $-1$ & $0$ & $0$ & $0$\\
$\chi_{20}$ & $0$ & $2$ & $-3$ & $-3$ & $-3$ & $-3$ & $-1$ & $0$ & $0$ & $0$\\
$\chi_{21}$ & $0$ & $0$ & $0$ & $0$ & $0$ & $0$ & $0$ & $0$ & $0$ & $0$\\
$\chi_{22}$ & $0$ & $0$ & $0$ & $0$ & $0$ & $0$ & $0$ & $0$ & $0$ & $0$\\
$\chi_{23}$ & $\langle 0,1\rangle$ & $0$ & $-2$ & $-2$ & $-2$ & $-2$ & $0$ & $1$ & $1$ & $1$\\
$\chi_{24}$ & $\langle -1,-1\rangle$ & $0$ & $-2$ & $-2$ & $-2$ & $-2$ & $0$ & $1$ & $1$ & $1$\\
$\chi_{25}$ & $0$ & $-3$ & $0$ & $0$ & $0$ & $0$ & $0$ & $0$ & $0$ & $0$\\
$\chi_{26}$ & $0$ & $-3$ & $0$ & $0$ & $0$ & $0$ & $0$ & $0$ & $0$ & $0$\\
$\chi_{27}$ & $0$ & $0$ & $-6$ & $-6$ & $-6$ & $6$ & $0$ & $0$ & $0$ & $0$\\
$\chi_{28}$ & $0$ & $4$ & $-3$ & $-3$ & $-3$ & $-3$ & $1$ & $-1$ & $-1$ & $-1$\\
$\chi_{29}$ & $0$ & $0$ & $0$ & $0$ & $0$ & $0$ & $0$ & $\langle 1,2\rangle$ & $\langle 1,2\rangle$ & $\langle -1,-2\rangle$\\
$\chi_{30}$ & $0$ & $0$ & $0$ & $0$ & $0$ & $0$ & $0$ & $\langle -1,-2\rangle$ & $\langle -1,-2\rangle$ & $\langle 1,2\rangle$\\
$\chi_{31}$ & $-1$ & $0$ & $4$ & $4$ & $4$ & $-4$ & $0$ & $0$ & $0$ & $0$\\
$\chi_{32}$ & $1$ & $0$ & $4$ & $4$ & $4$ & $4$ & $0$ & $0$ & $0$ & $0$\\
$\chi_{33}$ & $0$ & $0$ & $8$ & $-1$ & $-1$ & $0$ & $0$ & $-1$ & $2$ & $-1$\\
$\chi_{34}$ & $1$ & $0$ & $2$ & $2$ & $2$ & $-2$ & $0$ & $0$ & $0$ & $0$\\
$\chi_{35}$ & $0$ & $0$ & $-4$ & $\langle -1,-3\rangle$ & $\langle 2,3\rangle$ & $0$ & $0$ & $\langle 1,1\rangle$ & $\langle -2,-2\rangle$ & $\langle 0,-1\rangle$\\
$\chi_{36}$ & $0$ & $0$ & $-4$ & $\langle 2,3\rangle$ & $\langle -1,-3\rangle$ & $0$ & $0$ & $\langle 0,-1\rangle$ & $\langle 0,2\rangle$ & $\langle 1,1\rangle$\\
$\chi_{37}$ & $0$ & $-3$ & $0$ & $0$ & $0$ & $0$ & $0$ & $0$ & $0$ & $0$\\
$\chi_{38}$ & $0$ & $0$ & $-8$ & $1$ & $1$ & $0$ & $0$ & $-1$ & $2$ & $-1$\\
$\chi_{39}$ & $0$ & $0$ & $4$ & $\langle -2,-3\rangle$ & $\langle 1,3\rangle$ & $0$ & $0$ & $\langle 0,-1\rangle$ & $\langle 0,2\rangle$ & $\langle 1,1\rangle$\\
$\chi_{40}$ & $0$ & $0$ & $4$ & $\langle 1,3\rangle$ & $\langle -2,-3\rangle$ & $0$ & $0$ & $\langle 1,1\rangle$ & $\langle -2,-2\rangle$ & $\langle 0,-1\rangle$\\
$\chi_{41}$ & $0$ & $0$ & $0$ & $0$ & $0$ & $0$ & $0$ & $0$ & $0$ & $0$\\
$\chi_{42}$ & $0$ & $0$ & $12$ & $\langle 0,3\rangle$ & $\langle -3,-3\rangle$ & $0$ & $0$ & $1$ & $-2$ & $1$\\
$\chi_{43}$ & $0$ & $0$ & $12$ & $\langle -3,-3\rangle$ & $\langle 0,3\rangle$ & $0$ & $0$ & $1$ & $-2$ & $1$\\
$\chi_{44}$ & $0$ & $0$ & $0$ & $\langle -3,-6\rangle$ & $\langle 3,6\rangle$ & $0$ & $0$ & $\langle -1,-1\rangle$ & $\langle 2,2\rangle$ & $\langle 0,1\rangle$\\
$\chi_{45}$ & $0$ & $0$ & $0$ & $\langle 3,6\rangle$ & $\langle -3,-6\rangle$ & $0$ & $0$ & $\langle 0,1\rangle$ & $\langle 0,-2\rangle$ & $\langle -1,-1\rangle$\\
$\chi_{46}$ & $0$ & $0$ & $-12$ & $\langle 3,3\rangle$ & $\langle 0,-3\rangle$ & $0$ & $0$ & $\langle 0,1\rangle$ & $\langle 0,-2\rangle$ & $\langle -1,-1\rangle$\\
$\chi_{47}$ & $0$ & $0$ & $-12$ & $\langle 0,-3\rangle$ & $\langle 3,3\rangle$ & $0$ & $0$ & $\langle -1,-1\rangle$ & $\langle 2,2\rangle$ & $\langle 0,1\rangle$\\
$\chi_{48}$ & $0$ & $0$ & $-12$ & $\langle 0,-3\rangle$ & $\langle 3,3\rangle$ & $0$ & $0$ & $1$ & $-2$ & $1$\\
$\chi_{49}$ & $0$ & $0$ & $-12$ & $\langle 3,3\rangle$ & $\langle 0,-3\rangle$ & $0$ & $0$ & $1$ & $-2$ & $1$\\
$\chi_{50}$ & $0$ & $0$ & $0$ & $\langle -3,-6\rangle$ & $\langle 3,6\rangle$ & $0$ & $0$ & $\langle 0,1\rangle$ & $\langle 0,-2\rangle$ & $\langle -1,-1\rangle$\\
$\chi_{51}$ & $0$ & $0$ & $0$ & $\langle 3,6\rangle$ & $\langle -3,-6\rangle$ & $0$ & $0$ & $\langle -1,-1\rangle$ & $\langle 2,2\rangle$ & $\langle 0,1\rangle$\\
$\chi_{52}$ & $0$ & $0$ & $12$ & $\langle 0,3\rangle$ & $\langle -3,-3\rangle$ & $0$ & $0$ & $\langle -1,-1\rangle$ & $\langle 2,2\rangle$ & $\langle 0,1\rangle$\\
$\chi_{53}$ & $0$ & $0$ & $12$ & $\langle -3,-3\rangle$ & $\langle 0,3\rangle$ & $0$ & $0$ & $\langle 0,1\rangle$ & $\langle 0,-2\rangle$ & $\langle -1,-1\rangle$\\
$\chi_{54}$ & $0$ & $0$ & $16$ & $-2$ & $-2$ & $0$ & $0$ & $0$ & $0$ & $0$\\
$\chi_{55}$ & $0$ & $0$ & $-8$ & $\langle 4,6\rangle$ & $\langle -2,-6\rangle$ & $0$ & $0$ & $0$ & $0$ & $0$\\
$\chi_{56}$ & $0$ & $0$ & $-8$ & $\langle -2,-6\rangle$ & $\langle 4,6\rangle$ & $0$ & $0$ & $0$ & $0$ & $0$\\
$\chi_{57}$ & $0$ & $0$ & $0$ & $0$ & $0$ & $0$ & $0$ & $0$ & $0$ & $0$\\
$\chi_{58}$ & $0$ & $0$ & $0$ & $0$ & $0$ & $0$ & $0$ & $0$ & $0$ & $0$\\
\end{longtable}
\endgroup
\clearpage
\charactertableheading{Character values: columns \texorpdfstring{$K_{51},\ldots,K_{58}$}{K51--K58}}
\begingroup
\fontsize{8}{9.4}\selectfont\setlength{\tabcolsep}{2.1pt}
\renewcommand{\arraystretch}{1.02}
\begin{longtable}{r|rrrrrrrr}
 & $K_{51}$ & $K_{52}$ & $K_{53}$ & $K_{54}$ & $K_{55}$ & $K_{56}$ & $K_{57}$ & $K_{58}$\\\hline\endhead
$\chi_{1}$ & $1$ & $1$ & $1$ & $1$ & $1$ & $1$ & $1$ & $1$\\
$\chi_{2}$ & $\langle 1,2\rangle$ & $0$ & $0$ & $0$ & $0$ & $0$ & $1$ & $-1$\\
$\chi_{3}$ & $\langle -1,-2\rangle$ & $0$ & $0$ & $0$ & $0$ & $0$ & $1$ & $-1$\\
$\chi_{4}$ & $0$ & $\langle 1,2\rangle$ & $\langle 1,2\rangle$ & $\langle -1,-2\rangle$ & $\langle -1,-2\rangle$ & $-1$ & $0$ & $0$\\
$\chi_{5}$ & $0$ & $\langle -1,-2\rangle$ & $\langle -1,-2\rangle$ & $\langle 1,2\rangle$ & $\langle 1,2\rangle$ & $-1$ & $0$ & $0$\\
$\chi_{6}$ & $-2$ & $1$ & $1$ & $1$ & $1$ & $0$ & $1$ & $1$\\
$\chi_{7}$ & $-1$ & $-1$ & $-1$ & $-1$ & $-1$ & $0$ & $0$ & $0$\\
$\chi_{8}$ & $-1$ & $-1$ & $-1$ & $-1$ & $-1$ & $0$ & $0$ & $0$\\
$\chi_{9}$ & $1$ & $-2$ & $-2$ & $-2$ & $-2$ & $1$ & $0$ & $0$\\
$\chi_{10}$ & $2$ & $-1$ & $-1$ & $-1$ & $-1$ & $-1$ & $0$ & $0$\\
$\chi_{11}$ & $0$ & $0$ & $0$ & $0$ & $0$ & $0$ & $0$ & $0$\\
$\chi_{12}$ & $1$ & $1$ & $1$ & $1$ & $1$ & $0$ & $0$ & $0$\\
$\chi_{13}$ & $0$ & $0$ & $0$ & $0$ & $0$ & $0$ & $0$ & $0$\\
$\chi_{14}$ & $0$ & $0$ & $0$ & $0$ & $0$ & $0$ & $0$ & $0$\\
$\chi_{15}$ & $\langle -1,-2\rangle$ & $0$ & $0$ & $0$ & $0$ & $0$ & $0$ & $0$\\
$\chi_{16}$ & $\langle 1,2\rangle$ & $0$ & $0$ & $0$ & $0$ & $0$ & $0$ & $0$\\
$\chi_{17}$ & $-1$ & $2$ & $2$ & $2$ & $2$ & $0$ & $-1$ & $-1$\\
$\chi_{18}$ & $-1$ & $-1$ & $-1$ & $-1$ & $-1$ & $0$ & $0$ & $0$\\
$\chi_{19}$ & $0$ & $\langle 1,2\rangle$ & $\langle 1,2\rangle$ & $\langle -1,-2\rangle$ & $\langle -1,-2\rangle$ & $0$ & $0$ & $0$\\
$\chi_{20}$ & $0$ & $\langle -1,-2\rangle$ & $\langle -1,-2\rangle$ & $\langle 1,2\rangle$ & $\langle 1,2\rangle$ & $0$ & $0$ & $0$\\
$\chi_{21}$ & $0$ & $0$ & $0$ & $0$ & $0$ & $0$ & $-1$ & $1$\\
$\chi_{22}$ & $0$ & $0$ & $0$ & $0$ & $0$ & $0$ & $-1$ & $1$\\
$\chi_{23}$ & $1$ & $\langle 2,2\rangle$ & $\langle 2,2\rangle$ & $\langle 0,-2\rangle$ & $\langle 0,-2\rangle$ & $0$ & $0$ & $0$\\
$\chi_{24}$ & $1$ & $\langle 0,-2\rangle$ & $\langle 0,-2\rangle$ & $\langle 2,2\rangle$ & $\langle 2,2\rangle$ & $0$ & $0$ & $0$\\
$\chi_{25}$ & $0$ & $0$ & $0$ & $0$ & $0$ & $1$ & $0$ & $0$\\
$\chi_{26}$ & $0$ & $0$ & $0$ & $0$ & $0$ & $1$ & $0$ & $0$\\
$\chi_{27}$ & $0$ & $0$ & $0$ & $0$ & $0$ & $0$ & $0$ & $0$\\
$\chi_{28}$ & $-1$ & $-1$ & $-1$ & $-1$ & $-1$ & $0$ & $0$ & $0$\\
$\chi_{29}$ & $\langle -1,-2\rangle$ & $0$ & $0$ & $0$ & $0$ & $0$ & $0$ & $0$\\
$\chi_{30}$ & $\langle 1,2\rangle$ & $0$ & $0$ & $0$ & $0$ & $0$ & $0$ & $0$\\
$\chi_{31}$ & $0$ & $0$ & $0$ & $0$ & $0$ & $0$ & $1$ & $-1$\\
$\chi_{32}$ & $0$ & $0$ & $0$ & $0$ & $0$ & $0$ & $-1$ & $-1$\\
$\chi_{33}$ & $2$ & $-1$ & $2$ & $-1$ & $2$ & $0$ & $0$ & $0$\\
$\chi_{34}$ & $0$ & $0$ & $0$ & $0$ & $0$ & $0$ & $0$ & $0$\\
$\chi_{35}$ & $\langle 0,2\rangle$ & $\langle 1,1\rangle$ & $\langle -2,-2\rangle$ & $\langle 0,-1\rangle$ & $\langle 0,2\rangle$ & $0$ & $0$ & $0$\\
$\chi_{36}$ & $\langle -2,-2\rangle$ & $\langle 0,-1\rangle$ & $\langle 0,2\rangle$ & $\langle 1,1\rangle$ & $\langle -2,-2\rangle$ & $0$ & $0$ & $0$\\
$\chi_{37}$ & $0$ & $0$ & $0$ & $0$ & $0$ & $-1$ & $1$ & $1$\\
$\chi_{38}$ & $2$ & $-1$ & $2$ & $-1$ & $2$ & $0$ & $0$ & $0$\\
$\chi_{39}$ & $\langle -2,-2\rangle$ & $\langle 0,-1\rangle$ & $\langle 0,2\rangle$ & $\langle 1,1\rangle$ & $\langle -2,-2\rangle$ & $0$ & $0$ & $0$\\
$\chi_{40}$ & $\langle 0,2\rangle$ & $\langle 1,1\rangle$ & $\langle -2,-2\rangle$ & $\langle 0,-1\rangle$ & $\langle 0,2\rangle$ & $0$ & $0$ & $0$\\
$\chi_{41}$ & $0$ & $0$ & $0$ & $0$ & $0$ & $0$ & $0$ & $0$\\
$\chi_{42}$ & $-2$ & $\langle 0,1\rangle$ & $\langle 0,-2\rangle$ & $\langle -1,-1\rangle$ & $\langle 2,2\rangle$ & $0$ & $0$ & $0$\\
$\chi_{43}$ & $-2$ & $\langle -1,-1\rangle$ & $\langle 2,2\rangle$ & $\langle 0,1\rangle$ & $\langle 0,-2\rangle$ & $0$ & $0$ & $0$\\
$\chi_{44}$ & $\langle 0,-2\rangle$ & $1$ & $-2$ & $1$ & $-2$ & $0$ & $0$ & $0$\\
$\chi_{45}$ & $\langle 2,2\rangle$ & $1$ & $-2$ & $1$ & $-2$ & $0$ & $0$ & $0$\\
$\chi_{46}$ & $\langle 2,2\rangle$ & $\langle -1,-1\rangle$ & $\langle 2,2\rangle$ & $\langle 0,1\rangle$ & $\langle 0,-2\rangle$ & $0$ & $0$ & $0$\\
$\chi_{47}$ & $\langle 0,-2\rangle$ & $\langle 0,1\rangle$ & $\langle 0,-2\rangle$ & $\langle -1,-1\rangle$ & $\langle 2,2\rangle$ & $0$ & $0$ & $0$\\
$\chi_{48}$ & $-2$ & $\langle 0,1\rangle$ & $\langle 0,-2\rangle$ & $\langle -1,-1\rangle$ & $\langle 2,2\rangle$ & $0$ & $0$ & $0$\\
$\chi_{49}$ & $-2$ & $\langle -1,-1\rangle$ & $\langle 2,2\rangle$ & $\langle 0,1\rangle$ & $\langle 0,-2\rangle$ & $0$ & $0$ & $0$\\
$\chi_{50}$ & $\langle 2,2\rangle$ & $1$ & $-2$ & $1$ & $-2$ & $0$ & $0$ & $0$\\
$\chi_{51}$ & $\langle 0,-2\rangle$ & $1$ & $-2$ & $1$ & $-2$ & $0$ & $0$ & $0$\\
$\chi_{52}$ & $\langle 0,-2\rangle$ & $\langle 0,1\rangle$ & $\langle 0,-2\rangle$ & $\langle -1,-1\rangle$ & $\langle 2,2\rangle$ & $0$ & $0$ & $0$\\
$\chi_{53}$ & $\langle 2,2\rangle$ & $\langle -1,-1\rangle$ & $\langle 2,2\rangle$ & $\langle 0,1\rangle$ & $\langle 0,-2\rangle$ & $0$ & $0$ & $0$\\
$\chi_{54}$ & $0$ & $0$ & $0$ & $0$ & $0$ & $0$ & $0$ & $0$\\
$\chi_{55}$ & $0$ & $0$ & $0$ & $0$ & $0$ & $0$ & $0$ & $0$\\
$\chi_{56}$ & $0$ & $0$ & $0$ & $0$ & $0$ & $0$ & $0$ & $0$\\
$\chi_{57}$ & $0$ & $0$ & $0$ & $0$ & $0$ & $0$ & $0$ & $0$\\
$\chi_{58}$ & $0$ & $0$ & $0$ & $0$ & $0$ & $0$ & $0$ & $0$\\
\end{longtable}
\endgroup

\section*{AI disclosure}
During the preparation of this work, the authors used ChatGPT 6 Astra for language editing and for improving the exposition of the arguments.
Mathematical ideas and proof strategies were developed by the authors.

\section*{Acknowledgements}
X. Fan is supported by the National Natural Science Foundation of China (Grant No. 12561018). Y. Luan is supported by the Shandong Provincial Natural Science Foundation (Grant No. ZR2025MS36).

\bibliographystyle{amsplain}
\bibliography{FHL-2026}

\end{document}